\author{Tomasz Badowski}
\title{Variance-based sensitivity analysis and orthogonal approximations for stochastic models}
\date{Warsaw, August 2013}
\keywords{reakcja chemiczna, estymator, proces Markowa, Monte Carlo, rzut ortogonalny, analiza wra\.zliwo\'sci, 
  statystyka, proces stochastyczny, symulacje stochastyczne}
\newcommand{\Z}{\mathbb{Z}}
\newcommand{\N}{\mathbb{N}}
\newcommand{\R}{\mathbb{R}}
\newcommand{\I}{\mathbb{1}}
\newcommand{\PR}{\mathbb{P}}
\newcommand{\wt}[1]{\widetilde{{#1}}}
\newcommand{\wh}[1]{\widehat{{#1}}}
\newcommand{\mc}[1]{\mathcal{{#1}}}
\newtheorem{theorem}{Theorem}
\newtheorem{lemma}[theorem]{Lemma}
\newtheorem{defin}[theorem]{Definition}
\newtheorem{constr}{Construction}
\DeclareMathOperator{\A}{A}
\DeclareMathOperator{\Var}{Var}
\DeclareMathOperator{\id}{id}
\DeclareMathOperator{\tr}{tr}
\DeclareMathOperator{\Sym}{Sym}
\DeclareMathOperator{\hist}{hist}
\newcommand{\E}{\mathbb{E}}
\DeclareMathOperator{\U}{U}
\DeclareMathOperator{\Exp}{Exp}
\DeclareMathOperator{\ND}{N}
\DeclareMathOperator{\corr}{corr}
\DeclareMathOperator{\corrL}{corrL}
\DeclareMathOperator{\Cov}{Cov}
\DeclareMathOperator{\err}{err}
\DeclareMathOperator{\ave}{Ave}
\DeclareMathOperator{\msd}{msd}
\begin{document}
\maketitle
\begin{abstract}
We develop new unbiased estimators of a number of quantities defined for functions of conditional moments,
like conditional expectations and variances, of functions of two independent random variables given the first variable,
including certain outputs of stochastic models given the models parameters.
These quantities include variance-based sensitivity indices, mean squared
error of approximation with functions of the first variable, orthogonal projection
coefficients, and newly defined nonlinearity coefficients.
We define the above estimators and analyze their performance in Monte Carlo
procedures using generalized concept of an estimation scheme and its inefficiency constant.
In numerical simulations of chemical reaction networks,
using the Gillespie's direct and random time change methods,
the new schemes for sensitivity indices of conditional expectations in some cases outperformed the
ones proposed previously, and variances of some estimators significantly
depended on the simulation method being applied.

\end{abstract}
\tableofcontents
\chapter*{Introduction}
\addcontentsline{toc}{chapter}{Introduction}
Stochastic models have proven to be useful for describing a variety of physical systems, like 
chemical reaction networks involving few particle numbers of certain species \cite{Pahle2009, vanKampen_B07}, 
including gene regulatory networks \cite{Arkin1999, Rao_Wolf_Arkin_2002} and signaling pathways \cite{Lipniacki2007, Tay_2010}. 
A popular stochastic model for a well-stirred chemical reaction network is a continuous-time  
Markov chain model of reaction network dynamics (MR) \cite{mcquarrie}, which can be simulated for example using the 
 Gillespie's direct (GD) method \cite{Gillespie1976}. A number of other 
stochastic formalisms have also been used to model chemical reactions, like chemical Langevin equation, 
$\tau$-leaping, or hybrid stochastic-deterministic methods \cite{Pahle2009}. 

Sensitivity analysis is a procedure yielding sensitivity indices, which can be thought of as   
certain measures of importance of arguments in influencing the values of functions. 
As such functions one often takes outputs of deterministic 
models whose arguments are model parameters, e. g. in ordinary differential equation models of chemical kinetics \cite{Pahle2009, goutsias_2010} 
one can consider concentrations of different chemical species 
at a given moment of time in function of kinetic rates and initial concentrations of the species. In stochastic models, the 
stochastic outputs for given parameters are not constants but random variables with distribution specified by the parameters.  
For instance for an MR a stochastic output can be the number of particles of selected species at a given 
moment of time, and 
the parameters can be the initial particle numbers and kinetic rates. 
Thus as functions for the sensitivity analysis in stochastic models one usually 
considers parameters of conditional distribution of stochastic outputs,
like conditional expectation \cite{Rathinam_2010}, variance, \cite{Barmassa} or histograms \cite{Degasperi2008}, given the model parameters. 
Sensitivity analysis has been used in a variety of fields, including 
chemical kinetics 
\cite{Rabitz_Kramer_Dacol_1983, Turanyi_1990,Saltelli2005}, biochemical reaction networks \cite{Van_Riel_2006, Zhang2009}, 
nuclear safety \cite{Iooss2008}, environmental science \cite{Tarantola_Giglioli_Jesinghaus_Saltelli_2002}, 
and molecular dynamics \cite{Cooke_Schmidler_2008}, applications in chemical kinetics including 
parameter estimation \cite{Kim_Spencer_Albeck_Burke_Sorger_Gaudet_Kim_2010, Juillet2009} 
and model simplification \cite{Cristaldi_2011, Okino_Mavrovouniotis_1998, Liu_Swihart_Neelamegham_2005, Degenring2004}.

Variance-based sensitivity analysis (VBSA) is a well-established sensitivity analysis method, 
dating back to applications in chemical kinetics in the seventies \cite{Saltelli2008, Cukier1973}. \mbox{Variance-ba-}
sed sensitivity indices 
provide quantitative 
answers to questions like \textit{what average reduction of uncertainty of the model output, measured by its variance, can be achieved 
if some uncertain model parameters are determined e. g. in an experiment}, or \textit{what average error is caused by 
fixing a parameter for instance to simplify the model} \cite{Sobol2007, badowski2011}. Recently, VBSA has supported  
parameter estimation in a linear compartmental biochemical model \cite{Juillet2009} and 
simplification of a model of synthesis of an antiparasitic drug Ivermectin \cite{Cristaldi_2011}. 

In many applications outputs of physical models are being approximated by linear combinations of functions of few model parameters. 
Such approximations are used for example  
 for potential energies in molecular dynamics simulations \cite{lelievre2010free}. 
A number of approximations of this type 
has become known under the common name of high dimensional model representations \cite{rabitz_1999}.
Approximations using linear combinations of products of the first few 
orthogonal functions of model parameters, like polynomials or trigonometric functions, 
have proven to accurately imitate outputs of a number of complex models 
with many parameters, and the coefficients of the approximating linear combinations have been used to estimate
variance-based sensitivity indices  as an alternative strategy to their 
Monte Carlo estimation on which we focus on in this work \cite{Cukier1973, Li2002b, Li2002, blatman_2010, goutsias_2010}.
In the case of orthogonal polynomials being used, such approximations have also been called 
polynomial chaos expansions \cite{Khammash_2012, blatman_2010}. 
Correlation coefficient is a popular measure of the strength of linear relationship between random variables 
\cite{terrell2010mathematical,  bickel2000mathematical}. 

With a few exceptions, the above mentioned indices and coefficients have predominantly 
been used for the analysis of  input-output relationships in deterministic models, but 
they also have a potential for analogous applications to stochastic models. 
For instance VBSA can be useful for determining parameters whose measurement 
would on average most reduce the uncertainty of a given parameter of conditional distribution of the model output, like conditional 
expectation or variance, while polynomial approximations 
and correlations can provide useful information about the relationship  between variance or mean particle numbers and model parameters, 
which has been of great interest e. g. in the analysis of gene circuits \cite{Stelling2010, Becksei_2000, vanOud2002}, 

Outputs of stochastic models used in computer simulations can be represented as functions of two independent random variables - 
the first being random vector of parameters of the model and the second a 
sequence of random variables used to generate the random trajectories of its stochastic
process. 
In the first work \cite{Degasperi2008} in which VBSA of parameters of conditional distribution of stochastic outputs was considered, 
conditional histograms were chosen as outputs for sensitivity analysis and 
a grid-based method was used providing no error estimates
of the results \cite{badowski2011}. 

In our Master's Thesis in Computer Science \cite{badowski2011} we proposed unbiased estimators of variance-based sensitivity 
indices of conditional expectations of functions of two independent random variables given the first random variable, 
which can be used in a MC procedure yielding error estimates. 
In a numerical experiment such procedure led to lower mean squared error of approximation of the indices
 than a grid-based method analogous to that in \cite{Degasperi2008}. 

In this thesis we provide for the first time unbiased estimators of variance-based sensitivity indices of 
a large class of functions of conditional moments, including all conditional moments and central moments, like conditional variance, of 
functions of two independent random variables given the first variable. 
We also introduce 
new unbiased estimators of sensitivity indices of conditional expectations.   
Furthermore, we derive first unbiased estimators of means of functions of conditional moments, of products and 
covariances of these moments 
with functions of the first variable, and estimators of 
normalized sensitivity indices and correlation coefficients 
of functions of conditional moments and the first variable. 
We also introduce different unbiased estimators of coefficients 
of orthogonal projection of functions of conditional moments onto 
linear combinations of orthogonal functions of the first variable. 
We show that in a Hilbert space, 
 squared error of approximation with a linear combination of orthonormal elements using 
unbiased estimates of orthogonal projection coefficients, 
averaged over distribution of the estimates, is a sum of
variances of the estimators plus squared
error of the approximation with orthogonal projection onto span of the elements.
We use this fact to numerically compare average
mean squared errors of approximation 
of conditional expectations and variances of stochastic models outputs by linear combinations of orthogonal functions of model parameters 
with coefficients obtained using different estimators. 
We also provide unbiased estimators of mean squared errors of approximation of functions of conditional moments using functions
of the first variable, which can be used for the above approximations with linear combinations of orthogonal functions with fixed coefficients.
Approximations of conditional expectations and variances of outputs of stochastic models using orthogonal polynomials of model parameters
have already been constructed in \cite{Khammash_2012} using double-loop sampling
and convex optimization techniques. 
As we discuss in a detail in Conclusions,  
an interesting idea for the future research would be to compare the error 
 of different methods of approximation 
of functions of conditional moments 
using orthogonal functions of the first variable, like those from \cite{Khammash_2012} and this work. 
We also define nonlinearity coefficients of random independent arguments of a function, which 
can be used for obtaining lower bounds 
on probabilities of certain localizations of functions values changes, corresponding to some perturbations 
of their independent arguments. We also provide unbiased estimators of these coefficients.  

In \cite{badowski2011} we introduced the concept of an estimation scheme which is useful for defining 
generalized estimators acting not only on random variables, but also on functions, as the ones discussed in this work.
We also defined inefficiency constant of such a scheme, 
equal to the product of variance of the corresponding estimator and the number of function evaluations needed to compute it, 
so that the ratio of such constants for different schemes is equal to the ratio of variances of the final MC estimators 
for the same number of 
function evaluations carried out in MC procedures using the schemes. Thus it  
can be used for quantifying the inefficiency of using unbiased estimation schemes in MC procedures 
if function evaluations are the most time-consuming elements of the procedures. 

Here we formalize and generalize the above concepts of a scheme and its inefficiency constant to be useful for defining 
and comparing efficiency of the corresponding estimators of estimands 
depending on many functions and having vector-valued outputs, like vector of coefficients of orthogonal projection of a function 
of a random variable onto orthogonal functions of the variable. 
One of the defined schemes, called $SVar$, allows for simultaneous estimation of  
most of the above mentioned sensitivity indices and coefficients for conditional expectation and variance, 
including all sensitivity indices with respect to individual coordinates of the first variable 
and orthogonal projection coefficients onto these coordinates and constant vectors. 
We derive 
a number of inequalities between the inefficiency constants of the introduced schemes and schemes from \cite{badowski2011}. 
For instance we show that subschemes of $SVar$ for estimation of variance-based sensitivity indices 
of conditional expectation 
can have no more than four times higher and three times lower inefficiency constants than the best schemes 
for these indices from \cite{badowski2011}. 

We carried out numerical experiments testing estimators introduced in this work for the case of
conditional expectations and variances of particle numbers at a given moment of time in a 
MR simulated using 
GD \cite{Gillespie1976} and random time change (RTC) \cite{Rathinam_2010} methods. In some of our numerical experiments 
the subschemes of scheme $SVar$ for estimation of sensitivity indices of conditional expectation with respect to certain model parameters  
 had over two times lower 
inefficiency constants than the best schemes from \cite{badowski2011}. Furthermore,  
the order of estimators of orthogonal projection coefficients 
with respect to the average mean squared errors of approximations of conditional expectations and variances constructed using them 
varied from model to model. 
The numerical experiments also demonstrated significant dependence of variances of some of the introduced estimators 
on whether the GD or the RTC method is used, and on the order of reactions in the GD method. 
We discuss the relationship of these effects with 
analogous phenomena observed for different estimators in \cite{badowski2011} and \cite{Rathinam_2010}.  

The structure of this work is as follows. In Chapter \ref{overview} we give an overwiev of less common definitions 
and results from the literature, mostly from our previous master's thesis \cite{badowski2011}. Throughout this chapter we frequently
make improvements in the definitions, correct errors in the constructions or theorems, generalize the latter,
and provide more precise and comprehensive descriptions than in \cite{badowski2011}.
Chapter \ref{chapOwnRes} presents our new results. 
More common mathematical definitions and complex proofs or calculations are provided in the appendices. 
Readers not acquainted with probability theory are referred to standard textbooks like \cite{Durrett} and \cite{billingsley1979}.
Some basic definitions from this theory are also provided in Appendix \ref{appMath}.  

\chapter{Literature overview \label{overview}}

\section{Chemical reaction network}\label{secCRN}
In this section we repeat selected definitions  
from Section 1.1 of \cite{badowski2011}, improving some of them, in particular simplifying the formal 
definition of a chemical reaction network $RN$ and 
specifying the domain of reaction rates.
We shall model the time evolution of a reaction network by a 
continuous-time Markov chain defined in the next section. 
Suppose that we are given $N \in \N_+$ chemical species with symbols $X_1, \ldots, X_N$. The state of $RN$  
at a given moment of time is described by a vector of natural numbers 
$x = (x_1,\ldots,x_N)$ from the state space $E = \N^N$, 
where the $i$th coordinate of $x$ describes the number of particles of 
the $i$th species. 
$L$ chemical reactions $(R_1,\ldots,R_L)$ can occur, the $l$th reaction being described by a stoichiometric formula 
\begin{equation}\label{reac} 
\underline{s}_{l,1} X_{1} + ... + \underline{s}_{l,N} X_{N} \rightarrow  \overline{s}_{l,1} X_{1} + ... + \overline{s}_{l,N} X_{N}. 
\end{equation}
We call vector $\underline{s}_l = (\underline{s}_{l,i})_{i=1}^{N}$ 
the stoichiometric vector of reactants and $ \overline{s}_l = (\overline{s}_{l,i})_{i=1}^{N}$ of products 
of the $l$th reaction. 
In this whole work for $n \in \N_+$ we denote $I_n = \{1,\ldots, n\}$ 
and for vectors $v,w \in \R^n$, we write $v \geq w$ if $v_i \geq w_i$, $i \in I_n$ 
(where the last notation means for each $i \in I_n$). 
We require that $\underline{s}_l \geq 0 $ and $\overline{s}_l \geq 0$, where by $0$ we mean here $(0,\ldots,0) \in \R^N$. 
We define transition vector of the $l$th reaction as $s_{l} =  \overline{s_l} -  \underline{s_l}$. 
In the model of dynamics of reaction network 
discussed in the next section occurrence of the $l$th reaction will make the system at state $x$ move to state $x +  s_l$. 
There is given a set  $B_{RN} \in \R^m$
for some $m \in \N_+$, called the set of admissible reaction rates of $RN$. 
We define a measurable space $\mc{S}_{RN}=(B_{RN},\mc{B}(B_{RN}))$ (see Section \ref{appMath}).
For each $l \in I_l$, there is given a real 
nonnegative function $a_l$ measurable on $\mc{S}_{RN,E} = (B_{RN,E},\mc{B}_{RN,E}) = \mc{S}_{RN}\otimes\mc{S}(E)$,
called reaction rate of the $l$th reaction. 
Intuitively speaking, in the mathematical model we discuss in the next section $a_l(k,x)$ 
describes how quickly the $l$th reaction is proceeding in the state $x$ and for the rate constants $k$. 
We require that $a_l(k,x)  = 0$ if for some $i\in I_N$, $x_i < \underline{s}_i $, which  means that 
there are too few particles of a certain reactant in 
the system for the reaction to occur. For example in the stochastic version of mass action kinetics \cite{Kurtz1986}, 
for each $k = (k_i)_{i=1}^{L} \geq 0$ and $l \in I_L$, 
\begin{equation}\label{alkx} 
a_l(k,x) = k_l\prod_{i=1}^N{x_i \choose \underline{s}_{l,i} }, 
\end{equation} 
which is rate constant of the $l$th reaction times 
the number of possible ways in which the reactants can collide for the $l$th reaction to happen. 
Formally, we define the chemical reaction $R_l$, $l \in I_L$, to be a triple 
\begin{equation} 
R_l(k) = (a_l(k,\cdot), \underline{s}_l, \overline{s}_l), 
\end{equation} 
and the chemical reaction network $RN$ is defined as a sequence of reactions 
\begin{equation}\label{NTfun} 
RN(k) = (R_l(k))_{l=1}^L, 
\end{equation} 
both being functions of rate constants $k \in B_{RN}$. 

\section{\label{secMRCP}Continuous-time Markov chain model of reaction network dynamics with constant parameters (MRCP)}
See Appendix \ref{appHMC} for an introduction to stochastic processes, including continuous-time homogeneous Markov chains (HMC). 
Below we repeat the definition of a discrete stochastic chemical reaction network with constant parameters from our
previous work \cite{badowski2011}, 
calling it this time continuous-time Markov chain model of reaction network dynamics with constant parameters or shortly MRCP. 
Let the reaction network $RN$ and othernotations be as in Section \ref{secCRN}, let  $p = (k,c) \in B_{RN, E}$,
and $T= [0,\infty)$.
\begin{defin}
MRCP corresponding to $RN$ and $p$ is defined as  
a nonexplosive HMC on $E$ with times $T=[0,\infty)$, with deterministic initial distribution $\delta_{c}$ and 
$Q$-matrix with intensities equal to, for each $x,y \in E$, $x \neq y,$ 
\begin{equation}\label{qxyEq}
q_{x,y} = \sum_{l: \  y = x + s_l}\ a_l(k,x), 
\end{equation}
where we used the convention that sum over an empty set is zero. 
\end{defin}
Unfortunately, for some $p$ and $RN$ as above such nonexplosive HMC may not exist \cite{KurtzReview2010}. 
A useful criterion for its existence shall be provided in Section \ref{secMRNew}. 
We denote the distribution of a nonexplosive MRCP corresponding to $p = (k, c)$ and chemical reaction network $RN$ for which it exists 
as $\mu_{MRCP}(RN(k), c)$. 
Let $RN$ be some reaction network as in Section \ref{secCRN} and $p = (k,c) \in B_{RN, E}$ be some its parameters. 
Below we describe two constructions of processes which yield MRCPs corresponding to $RN$ and $p$, if any such  
MRCP exists. The description of these constructions  is similar as in \cite{badowski2011} in Section 1.3, 
but we do it in a more formal way and 
correct a number of oversights we made in \cite{badowski2011}, like 
overlooking the case when the set of reactions with positive rates is empty in the second construction. 
The first construction corresponds to the GD method for simulating MRCP introduced in \cite{Gillespie1976}, while the 
second to the RTC algorithm from 
\cite{Rathinam_2010}, and is a special case of the random time change representation of Markov processes due to Kurtz 
(\cite{Kurtz1986}, Section 6.4). 
In the below constructions we inductively define the initial jump chain $(Z_n)_{n \geq 0} $ and initial jump 
times $(J_n)_{n \geq 0}$. 
Let the initial explosion time $\zeta$ be defined as in (\ref{expTime}) in Appendix \ref{appHMC} using the initial jump times. 
We assume without explicitly writing this in the constructions that $Z_0 = c$ and $J_0 = 0$, and that 
each construction ends by changing, for some arbitrary $c_1 \in E$, and on each elementary event $\omega$ for
 which $\zeta(\omega) < \infty$, all the initial jump chain and times variables with positive indices 
to $c_1$ and $\infty$, respectively, so 
that we receive final jump chain and times $Z_n'$, $J_n'$, that are jump times and chain of some unique 
nonexplosive process $Y$. If there exists 
any MRCP corresponding to $RN$ and $p$, then $\zeta = \infty$ a. s. and $Y$ is such an MRCP. 
For convenience in the constructions below the dependence on the elementary event $\omega$ is omitted. 
\begin{constr}[GD construction]\label{GCon} 
Let $U_1,U_2,\ldots $ be independent identically distributed (i. i. d.),  $U_1 \sim \U(0,1)$, 
and $E_1, E_2,\ldots$ i. i. d., $E_1 \sim \Exp(1)$ (see Appendix \ref{appMath}). 
Suppose that $Z_i$ and $J_i$ have been defined for $i\in \N$. Let 
\begin{equation}
q = \sum_{l=1}^L a_l(k,Z_i).
\end{equation}
If $q = 0$, then we define 
\begin{equation}
 J_{i+1} = \infty,\ Z_{i+1} = Z_{i},
\end{equation}
otherwise we take 
$J_{i+1} = J_i + \frac{E_{i+1}}{q}$, 
and for 
\begin{equation}
 l = \min\{m \in I_{L}: \frac{1}{q}\sum_{n=1}^{m} a_{n}(k)(Z_i) \geq \U_{i}\},
\end{equation}
we set 
\begin{equation}
Z_{i + 1} = x + s_l.
\end{equation}
\end{constr} 
For $k \in B_{RN}$ and $x \in E$, we denote 
$B(k, x) = \{l \in I_L:\ a_l(k, x) > 0\}$ - the 
set of indices of reactions with positive rates in state $x$ and for the rate constants $k$. 
\begin{constr}[RTC construction]\label{RTCCon} 
Let us consider $L$ independent Poisson processes $(N_l)_{l=1}^L$ with unit rates (see Appendix \ref{appHMC}).  
The initial jump times and chain in this construction are the jump times and chain of 
any right-continuous process $Y$ satisfying  
\begin{equation}\label{intEqu} 
Y_t =  c + \sum_{l=1}^L s_l N_l(\int_0^t \! a_l(k, Y_s) \, \mathrm{d}s), 
\end{equation} 
for $t < \zeta$ (\cite{Kurtz1986} Section 6 Theorem 4.1 a)). 
Suppose that the $i$th call of function $N_l.next$ returns the $i$th holding time of $N_l$. 
We set for $l\in I_L$ 
\begin{equation}
\tau_{0,l} = N_l.next. 
\end{equation}
Let us assume that  $Z_{i}$, $J_{i}$ and $\{\tau_{i ,l} \}_{l \in I_L}$  have been defined for $i \in \N$. 
If $B(k,Z_i)$ is empty, then we set 
\begin{equation} 
J_{i+1} = \infty,\ Z_{i+1} = Z_i,
\end{equation} 
and finish the inductive step. 
Otherwise, we set 
\begin{equation} 
J_{i+1} = J_i + \min_{l \in B(k,Z_i)} \left\{ \frac{\tau_{i,l}}{a_l(k, Z_i)}  \right\}, 
\end{equation} 
and for a certain $l$ realizing the above minimum,
\begin{equation} 
 Z_{i+1} = Z_i + s_l,\quad \tau_{i+1,l} = N_l.next.
\end{equation} 
Furthermore, for each $m \in B(k,Z_i),\ m \neq l$, we set
\begin{equation}
 \tau_{m,i+1} = \tau_{m,i} - a_m(k, Z_i) (J_{i+1} - J_i)
\end{equation}
and for reaction indices $l \notin B(k, Z_i)$, 
\begin{equation}
 \tau_{l,i+1} = \tau_{l,i}.  
\end{equation}
\end{constr} 

As we discussed in \cite{badowski2011}, 
in all constructions of processes corresponding to some stochastic simulation algorithm one
uses a random variable $R$, which in the algorithm is generated e. g. using a random number generator, 
to build the random trajectories of the process. For instance  
for the first construction of MRCP above we have $R = (U_i, E_i)_{i \geq 0}$, while for the second one $R =  (N_i)_{i=1}^L$. 
As in \cite{badowski2011}, we call $R$ artificial noise variable or simply noise variable. 
For some construction of MRCP as above, let $\mc{S}_R = (B_R,\mc{B}_R)$ be the measurable space 
of possible values of the noise variable from this construction. 
We define a function $h$ from $B_{RN,E}\times B_R$ to $E^T$ to be such that for each $p \in B_{RN, E}$ and $r \in B_R$, 
\begin{equation}\label{MRCPFun}
h(p,r)
\end{equation}
is equal to the trajectory $Y(\omega)$ of the process built in the construction using 
parameters $p$ and for $\omega\in \Omega$ and noise variable $R$ such that 
 $R(\omega)=r$. In particular, process $Y$ created in the construction using some $p$ and $R$ 
is equal to $h(p,R)$ on $\Omega$. 

\section{\label{genParSec}Models with random parameters and their outputs} 
As we discussed in Section 2.1 of \cite{badowski2011}, there are many situations when one may want to treat parameters of a model as 
random variables $(P_i)_{i=1}^N$ rather than constants. Shortly, when the parameters represent uncertain quantities, there are 
two types of such variables distinguished in the literature - stochastic and epistemic ones. Stochastic variables are changeable in the
modelled system, 
like particle numbers in equilibrium distribution of a reaction network, and their 
uncertainty, measured e. g. by their variance, cannot be reduced by gaining further knowledge about the system. Epistemic variables 
are constants in the modelled system, whose exact values are unknown, which is often the case for reaction rates. 
Distribution of epistemic variables reflects our best judgement about their possible values, based e. g. 
on the uncertainty estimates of experimental measurements, and uncertainty of these variables 
can be reduced by gaining further knowledge about the system, 
like performing more precise experiments. 
The definition of a continuous-time Markov chain model of reaction network dynamics (MR) and its construction 
we provide below are more precise and 
general than the ones we proposed in \cite{badowski2011}, e. g. because we specify the domain of 
distribution of parameters in the definition and 
do not require the existence of an MRCP for each value of the parameters. 
Let $RN$, $E$ and $T$ be as in the previous sections and let $\mu_0$ be some probability distribution on $\mc{S}(E^T)$ 
and $\nu$ on $\mc{S}_{RN,E}$. 
Let $\wt{\mu}: B_{RN, E} \times \mc{B}(E^T)\rightarrow \R$ be such that for each 
$p=(k,c) \in B_{RN,E}$ for which an MRCP corresponding 
to $RN$ and $p$ exists $\wt{\mu}(p,\cdot)$ is equal to $\mu_{MRCP}(RN(k), c)$ i. e. the distribution of such MRCP,  
and for other values of $p$ it is equal to $\mu_0$.
\begin{defin}\label{MRdef} 
We say that a pair $M = (P,Y)$ is an MR corresponding
to a chemical reaction network $RN$ and (distribution of parameters) $\nu$,
if for $\nu$ almost every (a. e.) $p$, a
 MRCP corresponding to $p$ and $RN$ exists, $P$ is a random vector with $\mu_P=\nu$,
$Y$ is a right-continuous nonexplosive process on $E$ with times $T$, 
and $\wt{\mu}$ is a conditional distribution (see Definition \ref{defMu}) of $Y$ given $P$. 
$P$ is called the parameters and $Y$ the process of $M$. 
\end{defin}
If for $\nu$ a. e. $p =(k,c)$ MRCP corresponding to $RN$ and parameters $p$ exists, 
then MR $(P,Y)$ corresponding to $\nu$ and $RN$ can be constructed similarly as for 
the previous less general definition of MR in \cite{badowski2011}. 
For some $P = (K, C) \sim \nu$ independent of the artificial noise variable $R$ used by one of the constructions 
of MRCP from the previous section, one sets $k= K(\omega)$ and $c = C(\omega)$ at the beginning of this construction and then proceeds with it. 
Using function $h$ (\ref{MRCPFun}) corresponding to the construction of MRCP, the process of MR we defined above can be written as 
\begin{equation}\label{formP} 
Y = h(P,R). 
\end{equation} 
$Y$ conforms to the definition of a process of MR with parameters $P$ due to Theorem 
\ref{aveSecFin} in Appendix \ref{appMath}. 

Analogously we can define constant and random parameter versions $M=(P,Y)$, as well as constructions 
in form of a function $h(P,R)$ of independent parameters $P$ and some noise term $R$ of other 
stochastic or deterministic models used in computer simulations, where for deterministic 
models $R$ can be chosen constant. In particular this applies to models used for simulation of chemical kinetics, like   
Euler-Maruyama approximation of solutions of the
chemical Langevin equation \cite{Wilkinson2006}, Euler scheme for ordinary differential equations of chemical kinetics, 
or hybrid stochastic-deterministic methods \cite{Pahle2009}. 

By an output of such a model $M = (P,Y)$ 
we mean a random variable $g(M)$ for some function $g$,
measurable from the product measurable space of the image of $M$ to $\R^n$, for some $n \in \N_+$.
For an MR the output can be e. g. the number of particles of the $i$th species at the moment $t$, 
while for a deterministic model of chemical kinetics this can be concentration of some species at a given time. 
One can also consider vector-valued outputs, like vectors of particle numbers of different species
or single-sample histograms of numbers of a given particle which we define below. 
As discussed in the introduction, for outputs of stochastic models, like particle numbers at a given time for MR,
we shall be interested in their certain parameters of conditional distribution, like conditional expectation 
given the model parameters. According to the above definition of an output,
such parameters of conditional distribution are themselves model outputs, 
which can be expressed as functions of only the model parameters. 

Conditional expectation of an integrable random variable $Z$ given a random variable $X$, denoted as $\E(Z|X)$, 
is a random variable $f(X)$ for a certain function $f$, where $f(x)$ can be informally thought of as the mean value of $Z$ on the set $X = x$ 
(see Appendix \ref{appMath} for a precise Definition \ref{condDef}). 
As in \cite{badowski2011}, for $p>0$ we define
 $L^p_n(\mu)$ to be the space of classes of equivalence of the relation of being 
equal $\mu$ a. e. considered on random vectors $X= (X_i)_{i=1}^n$ such that $X_i \in L^p(\mu)$ for each $i \in {I_n}$
 (see Appendix \ref{appMath} 
for more details on $L^p(\mu)$ spaces including the associated notational conventions, which we by analogy extend to $L^p_n(\mu)$ spaces, 
in particular if $\mu=\PR$ is the implicit probability measure, then $L^p_n(\PR)$ is denoted simply as $L^p_n$).  
For an $\R^n$-valued random vector $Z = (Z_i)_{i=1}^n$, we define 
$\E(Z) = (\E(Z_i))_{i=1}^n$. If $Z\in L^1_n$ and $X$ is a random variable, then we define the conditional expectation of $Z$ given $X$ as 
\begin{equation} \label{genCond} 
\E(Z|X) = (\E(Z_i|X))_{i=1}^n. 
\end{equation} 
From the fact that conditional expectation is contraction in $L^p$ (see Theorem \ref{contrac} in Appendix \ref{appMath}) 
it follows that $\E(Z|X) \in L^p_n$ if $Z \in L^p_n$.
For $k \in \N_{+}$ and numbers  $x_{min}$, $x_{max}$ such that $L =  x_{max} -x_{min} >0$, let 
$B_i = \left[x_{min}+ \frac{(i-1)L}{k}, x_{min}+ \frac{iL}{k}\right)$, $i \in I_k$. 
The corresponding histogram function $\hist$ is defined for $x \in \R$ as 
\begin{equation} 
\hist(x) = \left(\I_{B_i}(x)\right)_{i=1}^k. 
\end{equation} 
A (single-sample) histogram corresponding to a real-valued random variable $Z$ is defined as  $\hist(Z)$. 
Note that $\hist(Z) \in L^p_{k}$ for each $p > 0$.  
Conditional histogram of $Z$ given some random variable $X$ is defined as $\E(\hist(Z)|X)$ and mean histogram as $\E(\hist(Z))$. 
For a random vector $X = (X_1,\ldots, X_N)$ and any $J \subset I_N$, we denote $X_J = (X_i)_{i \in J}$.  
For $J=\emptyset$, we define $X_J=\emptyset$. 
For $Z$ integrable, for each  $J \subset K \subset I_N$, we have the following iterated expectation property \cite{Durrett}
\begin{equation}\label{doubleCond}
\E(\E(Z|X_K)|X_J) = \E(Z|X_J),
\end{equation}
where for $Y \in L^1$, by $\E(Y|\emptyset)$ we mean $\E(Y)$.
For a stochastic model $M= (P,Y)$ whose process $Y$ 
has form $h(P,R)$ for some variable $R$ independent of $P$ as in (\ref{formP}), a stochastic
output $g(M)$ is equal to $f(P,R)$ where $f$ is defined by formula 
\begin{equation}\label{obsForm}
f(p,r) = g(p, h(p,r)),\quad p \in B_{RN, E}, r \in B_R.
\end{equation}
In such case, thanks to Theorem \ref{indepCond} we have 
\begin{equation}
\E((f(P,R)|P) = \E(f(p,R))_{p=P}.
\end{equation}

\section{Variance for random vectors}\label{secOrthog} 
In this Section we mainly reformulate some theory from Section 2.5 of \cite{badowski2011}. 
Reader not acquainted with Hilbert space theory is referred to Appendix \ref{appHilb}. 
An example of a Hilbert space is $L^2$ with scalar product given by 
\begin{equation}\label{scall2} 
(X,Y) = \E(XY). 
\end{equation} 
As in \cite{badowski2011}, we denote the norm it induces as $||\cdot ||$ and the metric $d$. 
Let $n \in \N_{+}$, 
$<,>$ be a scalar product in $\R^n$, and $(a_{ij})_{i,j \in I_n}$ be 
the real numbers such that for each $x,y \in \R^n$, 
\begin{equation}\label{scalAij} 
<x,y> = \sum_{i,j\in I_n} a_{ij} x_iy_j. 
\end{equation} 
For the standard scalar product we have $a_{ij} = \delta_{ij}$ ($\delta_{ij}$ being the Kronecker delta). 
The norm induced by $<,>$ is denoted as $|\cdot|$ and the distance as $\wt{d}$. 
$L^2_n$ with scalar product $(,)_{n}$, defined for $X,Y \in L^2_n$ as
\begin{equation}\label{scalarFun} 
(X,Y)_n = \E(<X,Y>)= \sum_{i,j\in I_n}a_{ij}(X_i,Y_j), 
\end{equation} 
is a Hilbert space equal to the direct sum of $L^2$ given by $<,>$  (see definition in Theorem \ref{genDSpace} in Appendix \ref{appHilb}) and 
denoted as $\bigoplus_{<,>}L^2$. 
As in \cite{badowski2011}, the norm induced by the scalar product $(,)_n$ is denoted as  $||\cdot ||_{n}$ and the metric as $d_{n}$. 
For some random variable $X$ and $p > 0$, 
let $L^p_{X}$ be the subspace of $L^p$ consisting of all its classes of random variables 
containing an element $f(X)$ for some measurable real-valued
 function $f$, and $L^p_{n,X}$ be an analogous subspace of $L^p_n$ but for functions 
$f$ with values in $\R^n$. For $p \geq 1$, $L^p_{X}$ is a closed subspace of $L^p$, because from the change of variable Theorem 
\ref{thchvar} the map  $[f(X)]_{\PR} \rightarrow [f]_{\mu_X}$ (see Appendix \ref{appMath}) 
is a linear isometry between $L^p_{X}$ and the complete space $L^p(\mu_X)$. 
In particular $L^2_{X}$ is a Hilbert space and $L^2_{n,X}$ with scalar product $(,)_n$ is equal to the 
direct sum of $L^2_{X}$ given by $<,>$. 
Conditional expectation $\E(\cdot|X)$ is an 
orthogonal projection from $L^2(\PR)$ onto $L^2_{X}$ (see Lemma \ref{condort} in Appendix \ref{appHilb}), 
so that from Theorem \ref{projDS} in Appendix \ref{appHilb} 
it follows that the generalized conditional expectation $\E(\cdot|X)$ given by (\ref{genCond}) 
is orthogonal projection from $L^2_n$ to $L^2_{n,X}$. 
In particular, $\E(Z|X)$ is the best approximation of $Z$ in $L^2_{n,X}$ and the squared error of 
this approximation fulfills 
\begin{equation}\label{condError}
d^2_n(Z, \E(Z|X)) =  ||Z||_n^2 - ||\E(Z|X)||_n^2. 
\end{equation}
Similarly as in \cite{badowski2011}, we define variance of a random vector $Z \in L^2_n$, $n \geq 2$, as follows, using for it
informally the same notation as for one-dimensional variance,
\begin{equation}\label{varGen}
\begin{split}
\Var(Z) &= d^2_n(Z, \E(Z)) = ||Z||^2_n - ||\E(Z)||^2_n \\
&= \E(|Z|^2) - |\E(Z)|^2.
\end{split}
\end{equation}
When the probability measure considered is $\mu$ rather than $\PR$, we write $\Var_{\mu}(Z)$ instead of $\Var(Z)$. 
Standard deviation of $Z$ is defined as  
\begin{equation}
\sigma(Z) = \sqrt{\Var(Z)}. 
\end{equation}
As in \cite{badowski2011}, conditional variance of $Z$ given $X$ is defined as
\begin{equation}\label{condGen}
\begin{split}
\Var(Z|X) &= \E(\wt{d}^2_n(Z, \E(Z|X))|X) \\
&= \E(|Z|^2 + |\E(Z|X)|^2 - 2<Z, \E(Z|X)>|X) \\
&=\E(|Z|^2|X) - |\E(Z|X)|^2,
\end{split}
\end{equation}
where in the second equality we used the fact that $\E(<Z, \E(Z|X)>|X) = |\E(Z|X)|^2$, which follows from Theorem \ref{condexpX} 
from Appendix \ref{appMath} and from (\ref{scalAij}). 
We have
\begin{equation}\label{d2n}
\E(\Var(Z|X)) =  ||Z||^2_n - ||\E(Z|X)||^2_n = d^2_n(Z, \E(Z|X)), 
\end{equation}
where in the first equality we used the iterated expectation property (\ref{doubleCond}) applied to the last term in (\ref{condGen}), 
and in the second equality from (\ref{condError}). 
From (\ref{d2n}) and the third term in (\ref{varGen}) we receive a formula already derived in \cite{badowski2011},
\begin{equation} \label{aveVarError}
\Var(Z) = \E(\Var(Z|X)) + \Var(\E(Z|X)).  
\end{equation}
As we shall prove in Section \ref{secCondMoms} for $f$ measurable such that $f(Z) \in L^2_n$, 
\begin{equation}\label{varmuzx}
\Var(f(Z)|X) = \Var_{\mu_{Z|X}(X,\cdot)}(f). 
\end{equation}

\section{\label{secVBSA}ANOVA decomposition and variance-based sensitivity indices}
In this section we mainly reformulate some definitions and theorems from sections 3.1-3.3 of \cite{badowski2011}. 
Let $X=(X_i)_{i=1}^N$ be a random vector with $N \in \N_{+}$ independent coordinates. 
Let $I = I_N$ and $J \subset I$. We define $\sim J = I \setminus J$, 
and $X_J$ as in Section \ref{genParSec}. 
For $J \neq \emptyset$, we denote $\mu_J=\mu_{X_J}$, and define $L^2_{n,X_J}$ as in Section \ref{secOrthog}. 
For $J=\{i\}$ we write $i$ rather than $\{i\}$ in the above and below introduced notations.
$L^2_{n,X_{\emptyset}}$ is defined to consist of classes from $L^2_{n}$ containing 
constant $\R^n$-valued random vectors. 
For each $J \subset I$, we define $L^2_{n,J}$ to be the subspace of $L^2_{n,X_J}$ 
consisting of its classes containing variables $Z$ such that for each $i \in J$, 
\begin{equation}\label{zeroExp} 
\E(Z|X_{\sim i}) = 0. 
\end{equation} 
Note that $L^2_{n,\emptyset} = L^2_{n,X_{\emptyset}}$ and that 
due to Theorem \ref{indepCond}, for $Z = g(X_J)$ for a measurable function $g$, (\ref{zeroExp}) is equivalent to 
\begin{equation}\label{zeroInt} 
\int \! g(X_{J \setminus \{i\}}, x_i) \, d\mu_i = 0, 
\end{equation}
where we used a convenient notation for integrating $X_i$ out over its distribution. 
From (\ref{zeroExp}) and iterated expectation property it follows that for each nonempty $ J \subset I$ and variable $Z \in L^2_{n,J}$, it holds 
\begin{equation}\label{zj0} 
\E(Z) = 0. 
\end{equation} 
In \cite{badowski2011} we proved as Theorem 5 the following theorem (see Definition \ref{defHilb} of a direct sum in a 
Hilbert space). 
\begin{theorem}
For $n \in \N_+$ and Hilbert space $L^2_{n,X}$ with certain scalar product $(,)_n$   
defined as in Section \ref{secOrthog}, it holds 
\begin{equation}
L^2_{n,X} = \bigoplus_{J \subset I} L^2_{n,J}. 
\end{equation}
\end{theorem}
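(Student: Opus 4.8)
The plan is to realize the claimed equality as an \emph{orthogonal} direct sum in the Hilbert space $L^2_{n,X}$ by producing, for each $Z \in L^2_{n,X}$, an explicit decomposition into pieces lying in the $L^2_{n,J}$, and by checking that these subspaces are closed and pairwise orthogonal. The computational engine will be a commuting relation for the conditional-expectation operators: for all $A, B \subset I$ and $Z \in L^2_{n,X}$,
\begin{equation}\label{commute}
\E(\E(Z|X_A)|X_B) = \E(Z|X_{A \cap B}).
\end{equation}
I would prove (\ref{commute}) by writing $Z = f(X)$ and using Theorem \ref{indepCond} together with independence of the coordinates to represent $\E(Z|X_A)$ as the partial integral of $f$ against $\prod_{j \notin A}\mu_j$; applying $\E(\cdot|X_B)$ then integrates out the coordinates outside $B$, and Fubini's theorem for the product measure shows that integrating out $\sim A$ and then $\sim B$ coincides with integrating out $\sim(A \cap B)$, giving the right-hand side. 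Observe that (\ref{commute}) genuinely extends the iterated expectation property (\ref{doubleCond}), which only addresses nested index sets, and that independence is indispensable here.

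Next I would verify mutual orthogonality. Given $Z \in L^2_{n,J}$ and $W \in L^2_{n,K}$ with $J \neq K$, I pick $i_0$ in the symmetric difference, say $i_0 \in J \setminus K$. Then every component of $W$ is $X_{\sim i_0}$-measurable, while $\E(Z|X_{\sim i_0}) = 0$ by the defining condition (\ref{zeroExp}); so expanding $(Z,W)_n = \sum_{a,b} a_{ab}\,\E(Z_a W_b)$ and conditioning each summand on $X_{\sim i_0}$ pulls $W_b$ outside and annihilates it, whence $(Z,W)_n = 0$. Closedness of each $L^2_{n,J}$ I would obtain by exhibiting it as the intersection of the closed subspace $L^2_{n,X_J}$ with the kernels of the continuous operators $\E(\cdot|X_{\sim i})$, $i \in J$.

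For completeness of the sum I would introduce the ANOVA components by Möbius inversion,
\begin{equation}
Z_J = \sum_{K \subset J}(-1)^{|J \setminus K|}\,\E(Z|X_K),
\end{equation}
with the convention $\E(Z|X_{\emptyset}) = \E(Z)$. Summing over all $J \subset I$ and interchanging the order of summation, the coefficient of $\E(Z|X_K)$ becomes $\sum_{J \supset K}(-1)^{|J\setminus K|}$, which vanishes unless $K = I$, so $\sum_{J \subset I} Z_J = \E(Z|X_I) = Z$ since $Z \in L^2_{n,X}$. To see $Z_J \in L^2_{n,J}$, note first that $Z_J$ is $X_J$-measurable; then, fixing $i \in J$ and splitting the defining sum according to whether $i \in K$, I apply (\ref{commute}) with $B = \sim i$ (so that $\E(\E(Z|X_K)|X_{\sim i}) = \E(Z|X_{K \setminus i})$), reindex both halves over $K' \subset J \setminus i$, and observe that they differ only by an overall sign and therefore cancel, giving $\E(Z_J|X_{\sim i}) = 0$.

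Having shown that every $Z \in L^2_{n,X}$ is a finite sum of elements of the pairwise-orthogonal closed subspaces $L^2_{n,J}$, I conclude that their algebraic sum is all of $L^2_{n,X}$, hence \emph{a fortiori} dense, so by Definition \ref{defHilb} the decomposition $L^2_{n,X} = \bigoplus_{J \subset I} L^2_{n,J}$ follows. The main obstacle is establishing the commuting relation (\ref{commute}): it is the single point at which the independence hypothesis enters decisively, and its justification needs the partial-integration representation of conditional expectation for independent variables rather than the elementary nested iterated-expectation identity.
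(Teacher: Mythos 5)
Your proof is correct: the commuting relation $\E(\E(Z|X_A)|X_B) = \E(Z|X_{A\cap B})$ for independent coordinates, the orthogonality argument via the vanishing conditions (\ref{zeroExp}), closedness as an intersection of $L^2_{n,X_J}$ with kernels of the contraction operators $\E(\cdot|X_{\sim i})$, and the M\"obius-inversion components $Z_J=\sum_{K\subset J}(-1)^{|J\setminus K|}\E(Z|X_K)$ together establish all three conditions of Definition \ref{defHilb}. The paper itself defers the proof to Theorem 5 of \cite{badowski2011}, and what it reports about that proof (the construction of measurable $f_J$ with $f_J(X_J)\in L^2_{n,J}$ and $f(X)=\sum_{J\subset I}f_J(X_J)$) is exactly the standard ANOVA decomposition you build, so your route is essentially the same.
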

In the proof of Theorem 5 in \cite{badowski2011} we also showed that if for some measurable $f$, 
$Z=f(X)\in L^2_{n,X}$, then there exist measurable functions $f_J$ 
such that $f_J(X_J) \in L^2_{n,J}$, $J \subset I$, and 
\begin{equation}\label{anovaDec}
f(X) = \sum_{J\subset I}f_J(X_J).
\end{equation}
Random variables $f_J(X)$, $J \subset I$, are uniquely determined a. s. and we call $(f_J(X_J))_{J \subset I}$ the 
ANOVA decomposition of $f(X)$, as such decompositions 
for the case of $n=1$ were used under this name in the literature (see \cite{badowski2011} for references). 
From (\ref{zeroInt}), (\ref{indepCond}), and Fubini's theorem, for each $K \subset I$ we have 
\begin{equation}\label{subAnova}
\E(f(X)|X_K) = \sum_{J\subset K}f_J(X_J).
\end{equation}
Denoting for $J \subset I$, 
\begin{equation}
V_J = \Var(f_J(X_J)),
\end{equation}
and using (\ref{zj0}), (\ref{anovaDec}), and orthogonality of the elements of ANOVA decomposition, we receive for $D= \Var(f(X))$,
\begin{equation}\label{sumvar}
D  =  \sum_{K \subset I} V_K.
\end{equation} 
For $|J|>1$, $V_J$ has been called an interaction index between the  
variables with indices in $J$ in the literature \cite{Saltelli2005}, and as we proved in \cite{badowski2011} 
$V_J$ it can be interpreted as difference of squared 
errors of the best approximation of $f(X)$ using linear combinations of functions of proper subvectors of $X_J$,
 and of the whole vector $X_J$. 
We define Sobol's indices $S_J =\frac{V_J}{D}$, $J \subset I$. 
We have
\begin{equation}\label{sumSob}
1  =  \sum_{K \subset I} S_K \geq \sum_{i \subset I} S_i,
\end{equation} 
equality in the rhs inequality meaning that 
\begin{equation}
f(X)= \sum_{i=1}^N f_i(X_i).
\end{equation}

For some $n \in \N_{+}$, let $Z \in L^2_n$, $D=\Var(Z) > 0$, and let now $X = (X_i)_{i=1}^N$ be 
a random vector (with not necessarily independent coordinates). 
The main sensitivity index of $Z$ given $X_J$ is defined as 
\begin{equation}\label{VXJ} 
V_{X_J} = \Var(\E(Z|X_J)). 
\end{equation} 
From (\ref{d2n})  and (\ref{aveVarError})
it follows that $D - V_{X_J}$ is equal to the squared error of the best approximation of $Z$ in $L^2_{n,X_J}$. 
Suppose that  $Z= f(X)$  for a certain measurable function $f$. 
The total sensitivity index of $f(X)$ with respect to $X_J$ is defined  as
\begin{equation}\label{VXJtot} 
V_{X_J}^{tot} = D - V_{X_{\sim J}}. 
\end{equation} 
From (\ref{aveVarError}),
\begin{equation}\label{evarz}
 V_{X_J}^{tot} = \E(\Var(f(X)|X_{\sim J})),
\end{equation}
so using further  (\ref{d2n}) we receive that 
$V_{X_J}^{tot}$ is the squared error of the best approximation of $f(X)$ in $L^2_{n,X_{\sim J}}$. 
Sensitivity indices $V_{X_J}$ and $V_{X_J}^{tot}$ divided by $D$ are 
called Sobol's main and total sensitivity indices or normalized sensitivity indices, 
and denoted $S_{X_J}$ and $S_{X_J}^{tot}$. 
Let us assume that coordinates of $X$ are independent so that we can apply the ANOVA decomposition. 
Then $V_i = V_{X_i}$, $i \in I$, and using (\ref{subAnova}) 
and (\ref{sumvar}) we receive that $V_{X_J}$ is 
a sum of all main and interaction indices $V_{K}$, $K \subset J$, and  $V_{X_J}^{tot}$ 
is a sum of indices $V_K$, $K \cap J \neq \emptyset$, which 
provides some intuition for the words main and total in the names of the indices and from which it follows that 
\begin{equation}
0\leq S_{X_J} \leq S_{X_J}^{tot} \leq  1.
\end{equation}
Furthermore, we then have from (\ref{evarz}), (\ref{condGen}), and Theorem \ref{indepCond} that
 \begin{equation}
V_{X_J}^{tot} = \E((\Var(f(X_J,z)))_{z=X_{\sim J}}), 
 \end{equation}
so in a sense given by this formula $V_{X_J}^{tot}$ can be thought of as an average variance of $f(X)$ with respect to $X_J$.

Let us consider an output $Z = g(M) \in L^2_n$ of an MR $M=(P,Y)$ with parameters 
$P = (P_i)_{i=1}^{N_P}$ and corresponding to a reaction network $RN$. 
Note that conditional distribution of $M$ given $P$ is specified by Definition \ref{MRdef} 
and thus from formula (\ref{condCond}) and iterated expectation property 
the distributions of $\E(Z|P_J)$ for different $J \subset I$
are specified by $RN$, $g$, and $\mu_P$. 
Thus the main sensitivity index with respect to $P_J$, denoted as $V_{P_J}$,
 $D = \Var(Z)$, and $Ave=\E(Z)$ are all determined by this data. 
As discussed in Section \ref{genParSec}, 
for a given construction of an MR using the noise variable $R$ one can provide construction of $g(M)$ of form $f(P,R)$ 
for which some further sensitivity indices can be considered, like 
\begin{equation}\label{VRTot} 
V_{R}^{tot} = D - V_P.
\end{equation} 
Its value, by inspection of the rhs of (\ref{VRTot}), is also determined by $RN$, $g$, and $\mu_P$, 
and from (\ref{aveVarError}) it is equal to $AveVar = \E(\Var(f(P,R)|P))$.  
We denote the main sensitivity index with respect to $P_J$ of conditional expectation 
$\wt{g}(P) = \E(Z|P)$,
as $VE_{P_J}$ or $\wt{V}_{P_J}$ 
and such total sensitivity index as $VE_{P_J}^{tot}$ or $\wt{V}_{P_J}^{tot}$. 
From the iterated expectation property it follows that $\E(\wt{g}(P)|P_J) = \E(Z|P_J)$, and therefore
\begin{equation}\label{tVCJ}
\wt{V}_{P_J} = \Var(\E (\wt{g}(P)|P_J)) = V_{P_J}
\end{equation}
and
\begin{equation}\label{tildeVP}
\wt{V}_{P_J}^{tot} = \wt{V}_{P} - \wt{V}_{P_{\sim J}} = V_{P} - V_{P_{\sim J}}. 
\end{equation}
For the special case of $J = \{i\}$, we often write $i$ in place of $P_J$ in the above notations. 
Analogous observations about sensitivity indices can be made and notations introduced also for other types of stochastic models. 

\section{Application of VBSA to selection of parameters for determination}
\label{varRedSec}
Certain possible applications of VBSA were described in our previous work \cite{badowski2011} and include 
identifying parameters which can be fixed in order to simplify the model, 
computing measures of average dispersion of stochastic models, as well as planning experiments, but, as discussed in the introduction, 
the indices have been used also for other purposes, like to assist the process of parameter estimation. 
In this section we describe in a detailed and novel way the possibility of application 
of main sensitivity indices to comparing the average decreases of the model output uncertainty
resulting from determination of values of uncertain model parameters, e. g. through a measurement,
which can be useful in planning of experiments.  
%
See Section 3.5 in \cite{badowski2011} or \cite{Saltelli2008}
for alternative descriptions. 
For some $n, N \in \N_+$, let us consider some model $M=(P,Y)$ whose output 
is $g(M) \in L^2_n$ for some measurable function $g$. 
The uncertainty of model output can be quantified using the output variance $D = \Var(g(M))$ for some variance for random vectors as in 
Section \ref{secOrthog}. 
Let us assume that the subvector $P_J$ of parameters $P$ consists of epistemic parameters of the model and we can determine their values exactly, 
for instance by measuring them, which can be a useful idealisation when the uncertainty of these parameters after
 the measurement is negligibly small. 
 For some conditional distribution $\mu_{M|P_J}$ of $M$ given $P_J$, if the determined value of $P_J$ 
is $p_J$, we update $M$ to a new model $M'$ with distribution equal to $\mu_{M|P_J}(p_J, \cdot)$.
In case of $M$ being an MR 
we can take  $M'=(P',Y')$ to be an MR with the same reaction network but distribution of parameters 
 $\mu(P|P_J)(p_J,\cdot)$, which for $P$ with independent parameters can be taken to be the distribution of $P' = (p_J, P_{\sim J})$.  
The variance of output of the new model fulfills 
\begin{equation} 
\Var(g(M')) = \Var_{\mu_{M|P_J}(p_J, \cdot)}(g). 
\end{equation} 
We received $p_J$ as an outcome of determination, e. g. through a measurement, 
of value of the initially uncertain random vector $P_J$, so the expected decrease of variance from the initial one
$D$ can be obtained by averaging over such possible outcomes as follows
\begin{equation}
\E(D - \Var_{\mu_{M|P_J}(P_J, \cdot)}(g)) = D - \E(\Var(g(M)|P_J)) = V_{P_J},
\end{equation}
where in the first equality we used (\ref{varmuzx}) and in the last (\ref{aveVarError}). We received the main sensitivity
index of $g(M)$ given $P_J$, thus 
$S_{P_J}$ tells by what fraction the model output variance is reduced on average if we determine the value of $P_J$. 
Note that since the main sensitivity indices of the output of a model and of its conditional expectation given the parameters are the same 
(see (\ref{tVCJ})),
then so are their average decreases of variances. 
Note also that 
for stochastic outputs which are not functions of the parameters, 
like particle numbers in an MR, even if all the parameters are epistemic and are determined 
there will still be remaining average output variance $D-V_P$. 
From comparing values of $V_{P_J}$ or $S_{P_J}$ for different subvectors $P_J$ consisting of epistemic parameters one can get to know 
determining which of them leads on average to higher reduction of variance of the model output.  
This knowledge can 
assist the decision what parameters should be determined next, e. g. in an experiment, if the goal is to improve the 
precision of the model predictions. 
After some parameters are determined, the above procedure can be repeated with the updated model $M'$ as above. 
For an $\R^n$-valued output $g(M)$ for $n >1$, like a vector of 
different particle numbers or their conditional expectations at a given time, 
the scalar product $<,>$ used in the definition of its variance as in Section \ref{secOrthog} 
can be  given for example by numbers $a_{ij} = c_{i}\delta_{ij}$ as in (\ref{scalAij}), 
where $c_{i}$ is a weight describing how important it is to be able to predict
the $i$th coordinate of $g(M)$ more precisely using the model. 

\section{\label{secSchemesPrev}Estimands on pairs and their unbiased estimation schemes}
Let us recall certain concepts from Section 4.3 of \cite{badowski2011} like 
generalized estimands, which we call here estimands on pairs, 
and their unbiased estimation schemes. We make numerous changes to correct errors 
in the previous definitions and 
to increase their compatibility with future generalizations in Section \ref{secUnbiased}. 
See Appendix \ref{appStatMC} for an introduction to statistics including standard definitions of estimands and estimators.  

For a measure $\mu$ we denote its measurable space as
\begin{equation}\label{smudef}
\mc{S}_{\mu}= (B_\mu, \mc{B}_\mu). 
\end{equation}
Let $N \in \N_+$. For a sequence of measures 
$\mu = (\mu_i)_{i=1}^N$, we denote $\mc{S}_{\mu} = (\mc{S}_{\mu_i})_{i=1}^N$, 
$B_{\mu} = (B_{\mu_i})_{i=1}^N$, and $\mc{B}_{\mu} = (\mc{B}_{\mu_i})_{i=1}^N$.
Furthermore, for a vector $v \in \N_+^{N}$ and a sequence of sets $B = (B_i)_{i=1}^N$, we define
$B^v = \prod_{i=1}^NB_i^{v_i}$, sequence of 
probability distributions $\mu = (\mu_i)_{i=1}^N$, 
$\mu^v = \bigotimes_{i=1}^N\mu_i^{v_i}$, 
and of measurable spaces $S = (S_i)_{i=1}^N$, $\mc{S}^v = \bigotimes_{i=1}^N \mc{S}_i^{v_i}$, 
For a vector $x=((x_{i,j})_{j=1}^{v_i})_{i=1}^N$ we often use a C-like notation
$x_{i,j} = x_i[j-1]$, $j \in I_{v_i}$, $i \in I_N$.
For $N \in \N_+$ let $\mathcal{R}_N$ be the class of all pairs $(\mu,f)$
such that $\mu = (\mu_i)_{i=1}^N$ is a sequence of probability measures
and $f$ is a measurable real-valued function on $\bigotimes_{i=1}^NS_{\mu_i}$.
Subsets $\mc{V}\subset\mathcal{R}_N$ are called admissible pairs with $N$ distributions.   
Set $\mc{V}_1$ 
is defined to consist of all $\mu$ such that there exists an $f$ such that $(\mu,f) \in \mc{V}$ 
and $\mc{V}_2$ is defined to consist of all $f$ for which there exists a $\mu$ such that $(\mu,f) \in \mc{V}$. 
By an estimand on $\mc{V}$ we mean a real-valued function $G$ on it. 
Note that in fact $\mathcal{R}_N$ is too large to be a set - it is a class so that $\mc{V}$ as above  
may also not be a set and thus $G$ may not be a function in the set 
theoretic sense but rather an operation, but we further on ignore such disctinction. In particular we use
 notation $\mc{V} = D_G$ as for domain of a function. 
For some $N\in \N_+$ and $K \subset I_N$,
let us consider the total sensitivity index $V_{X_K}^{tot}$ defined in Section \ref{secVBSA} for $Z=f(X) \in L^2(\PR)$, 
for a random vector $X = (X_i)_{i=1}^N$ with independent coordinates such that $X_i \sim \mu_i$, $i \in I_N$. 
Value of $V_{X_K}^{tot}$ is determined by $\mu= (\mu_i)_{i=1}^N$ and $f$, and thus we can an shall treat $V_{X_K}^{tot}$ 
as an estimand on $\mc{V} = \{((\mu_i)_{i=1}^N,f)\in \mc{R}_N: f \in L^2(\bigotimes_{i=1}^N\mu_i)\}$. 
We analogously define estimands corresponding to the main sensitivity index $V_{X_K}$ 
of $f(X)$ with respect to $X_K$ or variance $D= \Var(f(X))$, both being defined on the same admissible pairs 
as the total sensitivity index, and estimand $Ave = \E(f(X))$ on such pairs but with a less restrictive condition 
$f \in L^1(\bigotimes_{i=1}^N\mu_i)$ in their definition. 
Let $\mc{V}$ be some admissible pairs with $N$ distributions.
Let us define a new as compared to \cite{badowski2011} helper concept of a real-valued statistic $\phi$ for $\mc{V}$
with dimensions of arguments $v \in \N_+^N$. 
Such $\phi$ is defined as a function on $\mc{V}_2$, such that 
for each $\alpha=(\mu, f) \in \mc{V}$, $\phi(f)$ is a real-valued measurable function on $\mc{S}_{\mu}^v$. 
We denote 
\begin{equation}
Q_\alpha(\phi)= Q_{\mu^v}(\phi(f)) 
\end{equation}
for $Q=\E$ or $Q=\Var$ whenever these expressions make sense. 
Let $G$ be an estimand on $\mc{V}$. An unbiased estimator of $G$ is a statistic for 
$\mc{V}$ with some dimensions of arguments $v$ such that for each $\alpha = (\mu,f) \in \mc{V}$,
\begin{equation}
 \E_{\mu^v}(\phi(f)) = G(\alpha),
\end{equation}
i. e. $\phi(f)$ is an unbiased estimator of $G(\alpha)$ for $\mu^v$.  
Let $w \in \N_+^N$.  
We define $I_w = \prod_{i=1}^NI_{w_i}$. For each $x = ((x_{i,j})_{j=1}^{w_i})_{i=1}^N \in B^{w}$ and $v= (v_i)_{i=1}^N \in I_w$, we denote
\begin{equation}
x_v = (x_{i,v_i})_{i=1}^N.
\end{equation}
Let $A$ be nonempty subset of $\N_+^N$, called set of evaluation vectors for $N$. We define 
\begin{equation}
(A)_i = \{j_i: j \in A\},
\end{equation}
\begin{equation}\label{nAi}
n_{A,i} = \max\{k: k \in (A)_i\},
\end{equation}
and $n_A = (n_{A,i})_{i=1}^N$. For $v \in A$ we define evaluation operator or simply evaluation
$g_{\mc{V},A,v}$ to be a real-valued statistic for $\mc{V}$ with dimensions of arguments $n_A$  such that 
for each $(\mu,f) \in \mc{V}$ and $x \in B_\mu^{n_A}$,
\begin{equation}\label{gjDef}
 g_{\mc{V},A,v}(f)(x)  = f(x_v).
\end{equation}
For a nonempty $I \subset \N_+$ and a finite nonempty set $D \subset \N_+^I$, 
let for $j \in I_{|D|}$, $\psi_{D}(j)$ denote the lexicographically $j$th element of $D$. 
Since for $I =\{1\}$ we identify $\N_+^I$ with $\N_+$, in such case $D \subset \N_+$. 
For each set $C$ and its finite subset indexed by $D$, 
$\{y_v\in C: v \in D\}$, we define a vector from $C^{|D|}$ as follows 
\begin{equation}\label{ordnot}
(y_v)_{|v \in D} = (y_{\psi_D(j)})_{j=1}^{|D|}. 
\end{equation} 
We define 
\begin{equation}
g_{\mc{V},A} = (g_{\mc{V},A,j})_{|j \in A}.
\end{equation}
For $\mc{V}$ and $A$ being known from the context, we denote $g_{\mc{V}, A, v}$ shortly as $g_{v}$ 
or using a convenient C-array like notation
\begin{equation}\label{clikegprev} 
g[v_1-1]\ldots[v_l-1]. 
\end{equation} 
A scheme for $N$ is a pair $\kappa=(t, A)$ for some 
set of evaluation vectors $A$ for $N$ as above and $t$ being a real-valued measurable function on $\R^{|A|}$.
A statistic given by $\kappa$ and $\mc{V}$ is defined as
\begin{equation}\label{phiAF} 
\phi_{\kappa,\mc{V}} =  t(g_{\mc{V},A}).
\end{equation} 
Let $G$ be an estimand on $\mc{V}$. $\kappa$ is called an unbiased estimation scheme for $G$
if $\phi_{\kappa,\mc{V}}$ is unbiased estimator of $G$.
Let $G = (G_i)_{i=1}^n$ be a sequence of estimands, each on some (possibly different) admissible pairs but all with the same 
number of distributions $N$.  
Let us assume that $\kappa_i$ is an unbiased estimation scheme for $G_i$, $i \in I_n$, 
in which case we call $\kappa=(\kappa_i)_{i=1}^n$ an unbiased (many-dimensional) estimation scheme for $G$. 
We denote $\wh{G}_{\kappa,i}=\phi_{\kappa_i,D_{G_i}}$, $i \in I_n$. For 
$G$ being known from the context and  $G_i \neq G_j, i \neq j$, $i, j \in I_n$, 
we call $\kappa_i$ the subscheme of $\kappa$ for estimation of $\lambda=G_i$ and 
denote $\wh{G}_{\kappa,i}$ as $\wh{\lambda}_{\kappa}$, $i \in I_n$. 

We further need the following theorem, generalizing Theorem 6 in \cite{badowski2011}. 
\begin{theorem}\label{thCond} 
Let us consider random variables $X = (X_1,X_2)$ and $Y_2$ such that  $Y_2\sim X_2$ and $Y_2$ is independent of $X$. 
Let $g$ and $h$ be measurable real-valued functions such that $g(X),h(X)$, and $g(X)h(X_1,Y_2)$ are integrable. Then it holds 
\begin{equation}\label{condthcond}
\E(g(X)h(X_1,Y_2)|X_1) = \E(g(X)|X_1)\E(h(X)|X_1).
\end{equation}
In particular, applying expected values to both sides of (\ref{condthcond}) and using the iterated expectation property, we have 
\begin{equation}
\E(g(X)h(X_1,Y_2)) = \E(\E(g(X)|X_1)\E(h(X)|X_1)).
\end{equation}
Using this for $g(X) = h(X)$ we receive the well-known formula \cite{Saltelli_2002} 
\begin{equation}\label{ggxy}
\E(g(X)g(X_1,Y_2)) = \E((\E(g(X)|X_1))^2),
\end{equation}
and the fact that
\begin{equation}
\Cov(g(X),g(X_1,Y_2)) = \Var(\E(g(X)|X_1)). 
\end{equation}
\end{theorem}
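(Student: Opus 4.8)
The plan is to reduce everything to the elementary conditioning-under-independence identity recorded in Theorem \ref{indepCond}: for independent $A$ and $B$ one has $\E(F(A,B)\mid A) = \psi(A)$ with $\psi(a) = \E(F(a,B))$. The one ingredient I would make explicit at the outset is that $X_1$ and $X_2$ are themselves independent. This is the standing assumption of the section (the coordinates of $X$ are independent, so any split $X=(X_1,X_2)$ into complementary subvectors gives $X_1 \perp X_2$), and it is genuinely needed — a quick check with $X_1 = X_2$ shows the identity fails without it. Combining $X_1 \perp X_2$ with the hypotheses $Y_2 \perp X$ and $Y_2 \sim X_2$, I would first record that $X_1$, $X_2$, $Y_2$ are mutually independent and that $X_2$ and $Y_2$ share the common law $\mu_{X_2}$ (the joint law of the triple being $\mu_{X_1}\otimes\mu_{X_2}\otimes\mu_{Y_2}$).

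First I would dispatch the two factors on the right. Applying Theorem \ref{indepCond} with $A=X_1$, $B=X_2$ gives $\E(g(X)\mid X_1)=G(X_1)$ and $\E(h(X)\mid X_1)=H(X_1)$, where $G(x_1)=\E(g(x_1,X_2))$ and $H(x_1)=\E(h(x_1,X_2))=\E(h(x_1,Y_2))$, the last equality holding because $Y_2 \sim X_2$. Next, for the left-hand side I would condition on $X_1$ while treating the pair $(X_2,Y_2)$ as the independent block; this is legitimate since the three-way factorization of the joint law yields $(X_2,Y_2)\perp X_1$. Theorem \ref{indepCond} applied to $F(x_1,(x_2,y_2))=g(x_1,x_2)\,h(x_1,y_2)$ then gives $\E(g(X)h(X_1,Y_2)\mid X_1)=\Phi(X_1)$ with $\Phi(x_1)=\E(g(x_1,X_2)h(x_1,Y_2))$.

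Because $X_2\perp Y_2$, this expectation factors as $\E(g(x_1,X_2))\,\E(h(x_1,Y_2))=G(x_1)H(x_1)$, which is exactly the product of the two factors from the previous step; substituting $x_1=X_1$ yields (\ref{condthcond}). The integrability hypothesis on $g(X)h(X_1,Y_2)$ is what licenses both the application of Theorem \ref{indepCond} and the factorization of $\Phi$ (Fubini/Tonelli on the product law, giving a.e.\ finiteness of the single-variable integrals), so I would be careful to invoke it precisely there.

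Finally, the three displayed consequences are immediate bookkeeping. Taking $\E$ of both sides of (\ref{condthcond}) and using the iterated expectation property gives the product-of-conditional-expectations formula; specializing $g=h$ gives (\ref{ggxy}); and the covariance statement follows by subtracting $\E(g(X))\E(g(X_1,Y_2))$ and noting $g(X_1,Y_2)\sim g(X)$ (since $(X_1,Y_2)$ and $(X_1,X_2)$ have the common law $\mu_{X_1}\otimes\mu_{X_2}$), so the two means coincide and the right side collapses to $\E((\E(g\mid X_1))^2)-(\E g)^2=\Var(\E(g\mid X_1))$. The only real subtlety — and the step I would flag most carefully — is the implicit reliance on $X_1\perp X_2$; everything else is a clean double application of Theorem \ref{indepCond} together with Fubini.
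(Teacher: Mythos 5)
Your proof is correct and takes essentially the same route as the paper's: both are a double application of Theorem \ref{indepCond} --- once conditioning on $X_1$ with the pair $(X_2,Y_2)$ as the independent block, once to identify $\E(g(X)|X_1)$ and $\E(h(X)|X_1)$ --- followed by factoring the inner expectation via Fubini and the independence of $X_2$ and $Y_2$. Your explicit flagging of the hypothesis $X_1\perp X_2$ is a genuine improvement rather than a deviation: the paper's statement omits it, but its proof (like yours) tacitly relies on it in both invocations of Theorem \ref{indepCond}, and as your $X_1=X_2$ observation shows, the identity is false without it; in the paper it is supplied implicitly by the surrounding context, where all estimators are evaluated on product-measure observables.
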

\begin{proof}
It holds 
\begin{equation} 
\begin{split}
\E(g(X)h(X_1,Y_2)|X_1) &= (\E(g(x_1, X_2)h(x_1,Y_2)))_{x_1 = X_1} \\
& = (\E(g(x_1, X_2)))_{x_1 = X_1}(\E(h(x_1,Y_2)))_{x_1 = X_1}\\
& = \E(g(X)|X_1)\E(h(X)|X_1),
\end{split}
\end{equation}
where in the first and last equality we used Theorem \ref{indepCond} and in the second independence
of $X_2$ and $Y_2$ and that from Fubini's theorem functions under the expectations are integrable for $\mu_{X_1}$ a. e. $x_1$. 
\end{proof}
From the above theorem it easily follows that for $X$ and $Y_2$ as in it and 
$f(X) \in L^2_n$ with some scalar product as in Section (\ref{secOrthog}), we have 
\begin{equation}\label{thCondVect}
\begin{split}
(f(X) ,f(X_1,Y_2))_n  = ||\E (f(X)|X_{1})||_n^2
\end{split}
\end{equation}
(see (3.41) in \cite{badowski2011} for a proof). 

For example for the estimand $V_1^{tot}$ we introduced earlier in this section for $N=2$, the unbiased estimation 
scheme $a2=(t, A)$ was defined
in \cite{badowski2011} by taking $A= \{(1,1), (2,1)\}$ and
\begin{equation}
t(x_{(1,1)}, x_{(2,1)}) = x_{(1,1)}^2  - x_{(1,1)}x_{(2,1)}.
\end{equation}
Using notation (\ref{clikegprev}), the estimator given by $a_2$ can be written as
\begin{equation}\label{V1a2totg}
\widehat{V}_{1,a2}^{tot} = g[0][0](g[0][0] - g[1][0]).
\end{equation}
The fact that this is an unbiased estimation scheme for $V_1^{tot}$ is a consequence of Theorem \ref{thCond} and the fact that
observable of this estimator corresponding to function $f$ and observable $\wt{X} = (\wt{X}_{1}[j]_{j=0}^{1}, \wt{X}_{2}[0]) \sim \mu^{n_A}$
is
\begin{equation}\label{obsVitot}
f(\wt{X}_1[0],\wt{X}_2[0])(f(\wt{X}_1[0],\wt{X}_2[0]) - f(\wt{X}_1[1],\wt{X}_2[0])).
\end{equation}
Similarly as in \cite{badowski2011} we shall often use formulas for estimators 
like (\ref{V1a2totg}) to concisely define previously undefined schemes, in particular for 
the mentioned formula retrieving scheme $a2$. 
Scheme given by a formula like (\ref{V1a2totg}) for estimator $\wh{\lambda}_{\kappa}$ 
of a certain estimand $\lambda$ on some admissible pairs $\mc{V}$,
is a pair $\kappa = (t, A)$, where $A$ 
consists of indices $v$ of different $g_{v}$ appearing on the 
rhs of the formula, and $t$ acts on its arguments in the same way as the function of different $g_{v}$ given by the rhs of the formula 
does. By estimator defined by such a formula we mean $\phi_{\kappa,\mc{V}}$. 
We can group such received schemes from many formulas for estimators of different estimands in a sequence to get a many-dimensional
estimation scheme for 
a sequence of estimands, an example of which we shall see in the next section. 
In the next section and further on we often 
define estimands $F_i, i \in I_n,$ 
and unbiased estimation schemes $\gamma_i$ for $F_i$, $i \in I_n$, 
and say that many dimensional scheme $\kappa$ consisting of $\gamma_i, i \in I_n,$ 
is unbiased for estimation of a sequence of estimands $G$ consisting of $F_1, \ldots, F_n,$  
without specifying the order of $F_i$ or $\gamma_i$, $i \in I_n$, in sequences $\kappa$ and $G$, so that 
one can assume that for some arbitrary permutation $\pi$ of $I_n$, we have 
$G=(F_{\pi(i)})_{i=1}^n$ 
and $\kappa = (\gamma_{\pi(i)})_{i=1}^n$.


\section{\label{secMany}Schemes for sensitivity indices of conditional expectations}
We recall here the unbiased estimation schemes for sensitivity indices of conditional expectation 
from Section 4.5 of \cite{badowski2011}, 
which will be needed to derive certain new schemes in Section \ref{secPolynEst}. 
Suppose that for $N_P \in \N_+$, 
$P= (P_i)_{i=1}^{N_P}$ is a random vector with independent coordinates and $R$ is a random variable independent of $P$. 
Let us consider a measurable function $f$ from the product measurable space of the image of $(P,R)$ to $\R$. 
$f(P,R)$ can be for instance an 
output of an MR corresponding to some of its constructions as discussed in Section \ref{genParSec}. 
Let us consider quantities $V_k=\wt{V}_k$, $\wt{V}^{tot}_{k}$, $k \in I_{N_P}$, $D$, $V_P$, and $AveVar = V_R^{tot}$ defined for 
$f(P,R) \in L^2$, and $Ave=AveE$ for $f(P,R) \in L^1$, 
in the same way as at the end of Section \ref{secVBSA} treating $Z=f(P,R)$ as an output of an MR. 
Let $\mc{V}$ be admissible pairs consisting of $\alpha_{\mu_P,\mu_R,f} = ((\mu_i)_{i=1}^{N_P+1}, f),$ such that 
$\mu_i \sim P_i, i \in I_{N_P},$ and $\mu_{N_P+1} \sim R$ for different $f,P,$ and $R$ as above. 
We will from now on interpret each of the above sensitivity indices or averages  
as estimands on $\mc{V}$, whose values on each $\alpha_{\mu_P,\mu_R,f}$ as above are the same as previously for the corresponding 
$f$, $P$, and $R$. 
For $i, j \in \N$, we denote $s[i][j] = g[v_1]\ldots[v_{N_P+1}]$ where $v_{N_P +1} = j$ and $v_n = i$ for $n \in I_{N_P}$. 
For $i \in \{0,1\}, j \in \N,$ and $k \in I_{N_P}$, we denote 
$s_{k}[i][j] =g[v_1]\ldots[v_{N_P+1}]$, where $v_{N_P +1} = j$ and for $n \in I_{N_P}$, $n \neq k$, $v_n = i$, while for 
$n = k$, $v_n = 1-i$. 
For some $f$, $P$ and $R$ as above, let $\wt{P}= (\wt{P}_k)_{k=1}^{N_P}$ have independent coordinates, where 
$\wt{P}_k= (\wt{P}_{k,i})_{i=1}^2 \sim \mu_{P_k}^2, k \in I_{N_P}$. We denote 
$\wt{P}[i] = (\wt{P}_{k,i})_{k=1}^{N_P}$, $i \in \{0,1\}$. 
Let further for $k \in I_{N_P}$, $\wt{P}_{(k)}[i]$ be equal to vector $\wt{P}[i]$ with $k$th coordinate replaced by $\wt{P}_{k,1-i}$, and 
let $\wt{R} \sim \mu_R^2$ be independent of $\wt{P}$. Assuming admissible pairs 
$\mc{V}$ as for some of the above estimands and the set of evaluation vectors $A$ equal to set of all $v$ from evaluations 
$g_v$ equal to $s[i][j]$ 
and $s_k[i][j]$, $i,j \in \{0,1\}, k \in I_{N_P}$, we have, identifying $(\wt{P}_1,\ldots,\wt{P}_{N_P},\wt{R})$ with $(\wt{P},\wt{R})$, 
\begin{equation}\label{sij}
s[i][j](f)(\wt{P},\wt{R}) = f(\wt{P}[i],\wt{R}[j]), 
\end{equation}
and
\begin{equation}\label{skij}
s_k[i][j](f)(\wt{P},\wt{R}) = f(\wt{P}_{(k)}[i],\wt{R}[j]). 
\end{equation}
Formulas below, defining unbiased estimators of the above estimands are taken from Section 4.5 in \cite{badowski2011}, 
and the fact they are unbiased is 
an easy consequence of formula (\ref{ggxy}) in Theorem \ref{thCond} and formulas (\ref{tVCJ}) and (\ref{tildeVP}). 
We call the scheme these formulas yield scheme $SE$ (in \cite{badowski2011} we called it scheme $E$ but the new name 
is needed for consistency with notations introduced in Section \ref{secPolynEst}). 
\begin{equation}\label{estVkE}
\widehat{V}_{k,SE}  = \frac{1}{4}\sum_{i=0}^{1}(s[i][0] - s_k[i][0])(s_k[1-i][1] - s[1-i][1]),
\end{equation}
\begin{equation}\label{estVkTotE}
\widehat{V}^{tot}_{k,SE} = \frac{1}{4} \sum_{i=0}^{1}(s[i][0] - s_{k}[i][0])(s[i][1] - s_{k}[i][1]),
\end{equation}
\begin{equation}\label{VEst}
\begin{split}
\widehat{D}_{SE} &= \frac{1}{4(N_P + 1)}\sum_{i=0}^{1}\sum_{j=0}^{1}(s[i][j](s[i][j] - s[1-i][1-j]), \\
&+ \sum_{k=1}^{N_P}s_k[i][j](s_k[i][j] - s_k[1-i][1-j])),
\end{split}
\end{equation}
\begin{equation}\label{VP}
\begin{split}
\widehat{V}_{P,SE} &= \frac{1}{4(N_P + 1)}\sum_{i=0}^{1}\sum_{j=0}^{1}(s[i][j](s[i][1-j] - s[1-i][1-j]) \\
& + \sum_{k=1}^{N_P}s_k[i][j](s_k[i][1-j] - s_k[1-i][1-j])),
\end{split}
\end{equation}
\begin{equation}\label{VRtot}
\widehat{AveVar}_{SE} = \widehat{V}_{R,SE}^{tot} = \widehat{D}_{SE} - \widehat{V}_{P,SE},
\end{equation}
\begin{equation}
\widehat{Ave}_{SE} = \frac{1}{4(N_P + 1)}\sum_{i=0}^{1}\sum_{j=0}^{1}(s[i][j] + \sum_{k=1}^{N_P}s_k[i][j]).
\end{equation}
Using the same evaluations we can also construct estimation schemes for many further indices, among others 
for $\wt{V}_{(P_{i},P_{j})}$ and $\widetilde{V}_{(P_{i},P_{j})}^{tot}$ (see \cite{badowski2011}), $i, j \in I_{N_P}$, $i \neq j$.
It is easy to see using Schwartz inequality that it is sufficient that 
$f(P,R) \in L^4$ for the above estimators and further ones in this section to have finite second moments and thus variances 
when applied to the corresponding $f,\wt{P}$, and $\wt{R}$. For $\widehat{Ave}_{SE}$ it is even sufficient that $f(P,R) \in L^2$.
In \cite{badowski2011} we also introduced scheme $EM$ consisting of 
subschemes for estimation of $V_k$, for $k \in I_{N_P}$, 
\begin{equation}\label{estVkEM}
\wh{V}_{k,EM} = \frac{1}{2}(s[0][0] - s_k[0][0])(s_k[1][1] - s[1][1]).
\end{equation}
Similarly as in \cite{badowski2011}, we define scheme $ET$ containing subschemes given by formulas 
\begin{equation}\label{estVkTotET}
\widehat{\widetilde{V}}^{tot}_{k,ET} = \frac{1}{2}(s[0][0] - s_k[0][0])(s[0][1] - s_k[0][1]),\ k \in I_{N_P}.
\end{equation}
As discussed in \cite{badowski2011} schemes in this section 
can be generalized to variables $f(P,R)\in L_n^2$ like 
conditional histograms by using appropriate scalar 
product of vectors instead of function multiplication in the formulas for estimators, 
which is a consequence of expression (\ref{thCondVect}) after the proof of Theorem \ref{thCond}. 

\section{\label{secMCIneff}Inefficiency constants of MC procedures}
See Appendix \ref{appStatMC} for an introduction to Monte Carlo method and associated notations we use, like $Var_s$
and $\Var_f(n)=\Var_f$ for the variances of singles step and final $n$-step MC estimators, respectively, fulfilling
\begin{equation}
 \Var_f = \frac{\Var_s}{n}.
\end{equation}
Let us consider a sequence of MC procedures estimating $\lambda \in \R$, indexed by $n \in \N_+$,  
such that the $n$-th one is an $n$-step MC procedure and its average duration $\tau_f(n)$, e. g. when run on a computer, fulfills
\begin{equation}\label{nfs}
\tau_f(n) = n\tau_s,\quad n \in \N_+,
\end{equation}
where $\tau_s \in \R_+$ is called the average duration of a single MC step of this sequence.
Assumption (\ref{nfs}) is a good approximation for many sequences of MC procedures run on a computer,
especially ones for which the $n$-th procedure consists of $n$ repeated computationally identical
single MC steps, each lasting on average $\tau_s$,
$n \in \N_+$. 
Similarly as in Section 4.2 in \cite{badowski2011} 
we define the inefficiency constant of a sequence of MC procedures as above by formula 
\begin{equation}\label{cdef}
c = \tau_{s} \Var_{s},
\end{equation}
so that from (\ref{varfsn}) and (\ref{nfs}), for each $n \in \N_+$,
\begin{equation}\label{cntau}
c =  \tau_{f}(n) \Var_{f}(n).
\end{equation}
For two different sequences of MC procedures as above for estimating $\lambda$, 
their inefficiency constants can be used for comparing their efficiency \cite{asmussen2007stochastic, badowski2011},
which can be justified by different interpretations of these constants. 
We shall provide below a correction of an interpretation  
from Section 4.2 in \cite{badowski2011} in 
which we used an incorrect asymmetric definition of $\delta$-approximate inequality.
Two new interpretations shall be provided in Section \ref{secStat}. 
See Chapter 3, Section 10 in \cite{asmussen2007stochastic} for yet another interpretation. 
For $x,y\in \R_+$ and $\delta \geq 0$, 
we say that $x$ and $y$ are $\delta$-approximately equal, which we denote as $x \approx_{\delta} y$, if 
$\frac{|x-y|}{\min(|x|,|y|)}\leq \delta$; in particular for $\delta=0$ this is equivalent to  $x = y$. 
If for some sequence of MC procedures as above and another one, 
also for estimating $\lambda$, for which we have the same assumptions and use the same notations but with a prim, 
we have $\delta$-approximate equality of their respective average duration times for some $n$ and $n'$, that is 
\begin{equation}
\tau_{f}(n) \approx_{\delta} \tau_{f}'(n'), 
\end{equation}
then from (\ref{cntau}) the ratio of variances of their respective final MC estimators is $\delta$-approximately equal to the ratio 
of their inefficiency constants, i. e. 
\begin{equation}\label{varAiRatio} 
\frac{\Var_{f}(n)}{\Var_{f}'(n')} =  \frac{c\tau_{f}'(n')}{c'\tau_{f}(n)} \approx_{\delta} \frac{c}{c'}. 
\end{equation} 

\section{\label{secIneffSchemes}Inefficiency constants of schemes}
Let us reformulate the theory of inefficiency constants from sections 4.3 and 4.5 in \cite{badowski2011}
in a more precise way. 
Let us consider a sequence of estimands $G=(G_i)_{i=1}^n$ such that $\mc{V} = \bigcap_{i=1}^n D_{G_i} \neq \emptyset$,
called estimands on common admissible pairs $\mc{V}$ with $N$ distributions. 
We denote $D_G=\mc{V}$. 
Suppose that $\kappa=(\kappa_i)_{i=1}^n= (t_i,A_i)_{i=1}^n$ is an unbiased 
estimation scheme for $G$. $A_\kappa = \bigcup_{i=1}^n A_i$ is called the set of evaluation vectors of $\kappa$.
$\kappa$ can be used to generate estimates of coordinates of $G(\alpha)$ for some $\alpha =(\mu,f) \in \mc{V}$ as follows. 
For a random vector $X \sim \mu^{n_A}$, one 
computes the quantities $g_{\mc{V},A_i,v}(f)(X_{n_{A_i}})=f(X_v)$, $i \in I_n$, $v \in A_{i}$, 
considering that they are equal for the same $v$ and different $i$ so that they are computed only once, and then one 
evaluates $t_i$ on $g_{\mc{V},A_i}(f)(X_{n_{A_i}})$ to get an estimate of $G_i(\alpha)$, $i \in I_n$. 
$|A|$ is the total number of evaluations of $f$ in such a computation.  
If for some $\alpha \in \mc{V}$ we have 
$\Var_{\alpha}(\phi_{\kappa_i,\mc{V}})<\infty$, $i \in I_n,$ then the above computation can be performed to get unbiased estimates of 
coordinates of $G(\alpha)$ in a single step of a MC procedure. 
We define an inefficiency constant $d_{G,i,\kappa}$ of $\kappa$ for estimating $G_i$ as a function $\mc{V}\rightarrow \overline{\R}$
such that
\begin{equation}\label{dIneffprev} 
d_{G,i,\kappa}(\alpha) = \Var_{\alpha}(\phi_{\kappa_i,\mc{V}})|A_{\kappa}|. 
\end{equation} 
Let $\kappa'$ be an unbiased estimation scheme for estimands $G'=(G')_{i=1}^{n'}$ on some common admissible pairs $\mc{V}'$
for which we shall use the same notations as for $\kappa$, $G$, and $\mc{V}$ but with prims. 
Let for some $\alpha=(\mu,f)\in \mc{V}$ , $\alpha' = (\mu',f') \in \mc{V}'$, $i \in I_n$, and $i' \in I_n'$, it hold
$G_{i}(\alpha)=G_{i'}(\alpha')$. 
Suppose that the ratio of positive average durations $\tau_{s}$ to $\tau'_{s}$ of single steps of sequences of 
MC procedures (see Section \ref{secMCIneff})
using $\kappa$ and $\kappa'$, computing $G(\alpha)$ and $G'(\alpha')$ as above 
is for some $\delta>0$, $\delta$-approximately equal to ratio of positive numbers of evaluation vectors $A_\kappa$ and 
$A_\kappa'$ in these schemes, that is
\begin{equation}\label{tauAratio} 
\frac{\tau_s}{\tau'_s} \approx_\delta \frac{|A_{\kappa}|}{|A_{\kappa'}|}. 
\end{equation} 
This can be the case for small $\delta$ e. g. when the most time-consuming part of both sequences of MC procedures 
are calls to implementations of $f$ and $f'$, respectively, taking on average approximately the same time 
to compute. 
As we demonstrate in Section \ref{secImpl}, such approximate proportionality and even its more general 
version discussed in Section \ref{secineffgen}  holds in our numerical experiments 
using different estimation schemes for variance-based sensitivity indices and some further estimands, in which 
$f(P,R)$ and $f'(P,R')$ for some parameters $P$ and noise variables $R$ and $R'$, 
stand for some outputs of an MR, constructed using the GD and RTC methods  or two times one of them (see (\ref{obsForm})). 
From (\ref{tauAratio}), the ratio of inefficiency constant  
\begin{equation}\label{cscheme}
c = \Var_{\alpha}(\phi_{\kappa_i,\mc{V}})\tau_s  
\end{equation}
of a sequence of MC procedures estimating quantities $G_i(\alpha)$, performing computations with 
scheme $\kappa$ (see (\ref{cdef})) to an analogous constant $c'$ for $\kappa'$, computing $G_{i'}(\alpha')$, fulfills (assuming 
both constants are finite), 
\begin{equation} 
\frac{c}{c'} = \frac{\Var_{\alpha}(\phi_{\kappa_i,\mc{V}})\tau_s}{\Var_{\alpha'}(\phi_{\kappa'_{i'},\mc{V}})\tau_s'} 
\approx_{\delta} \frac{d_{G,i,\kappa}(\alpha)}{d_{G,i',\kappa'}(\alpha')}, 
\end{equation} 
which we already noticed in \cite{badowski2011} but with equality rather than 
$\delta$-approximate equality in (\ref{tauAratio}). 
Similarly as for inefficiency constants of sequences of MC procedures in Section \ref{secMCIneff}, one proves that the ratio of positive
 real values of 
inefficiency constants (\ref{dIneffprev}) of $\kappa$ and $\kappa'$ for 
estimating $G_i(\alpha)$ and $G'_i(\alpha')$ as above
is $\delta$-approximately equal to the ratio of variances of the appropriate final MC estimators  
for $\delta$-approximately the same number of $i$th and $i'$th functions evaluations made in the respective MC procedures.
If $G$ is known from the context, and $G_l \neq G_m$ for $l \neq m$, $l,m \in I_n$,
then we denote $d_{G,i,\kappa}$ 
simply as $d_{G_i,\kappa}$ 

\section{\label{secSymIneq}Symmetrisation of schemes and inequalities between inefficiency constants}
Let us recall some definitions and facts from Section 4.4 of \cite{badowski2011} on symmetrisation of schemes, 
changing them for compatibility with future generalizations in sections \ref{secAveSchemes}
and \ref{secGenIneq}. 
Let $\Theta$ be the group of all bijections of $\N_+$. 
For $N \in \N_+$, 
we define $\Theta^N = \{(\pi_i)_{i=1}^N:\pi_i \in \Theta, i \in I_N\}$ and endow it with a structure of a direct product group by defining 
for each $\pi, \pi'\in \Theta^N$ their product as $\pi\pi'=  (\pi_i(\pi_i'))_{i=1}^N$. 
For $\pi \in \Theta^N$ we define $\wh{\pi}:\N_+^N\rightarrow\N_+^N: \wh{\pi}(v)=(\pi_i(v_i))_{i=1}^N$.
Let  $\Pi$ be a subgroup of $\Theta^N$. For $A \subset \N_+^N$, we denote its symmetrisation given by $\Pi$ as
\begin{equation}
\wh{\Pi}[A] = \bigcup_{\pi\in \Pi} \wh{\pi}[A] = \{\wh{\pi}(v): \pi \in \Pi, v \in A\}.
\end{equation}

For a function $t: \R^{|A|} \rightarrow \R$,
its symmetrisation given by $\Pi$ and $A$ is defined to be a function
$\ave_{A,\Pi}(t): \R^{|\wh{\Pi}[A]|}\rightarrow\R$ such that for each $z=(y_j)_{|j \in \wh{\Pi}[A]} \in \R^{|\wh{\Pi}[A]|}$
\begin{equation} \label{symOp}
\ave_{A,\Pi}(t)(z) =  \frac{1}{|\Pi|} \sum_{\pi \in \Pi} t((y_{\wh{\pi}(v)})_{|v \in A})
\end{equation}
(see \ref{ordnot}).
Let $\kappa =(t,A)$ be a scheme for $N$. Its symmetrisation given by $\Pi$ is defined as
\begin{equation}\label{avePiKO}
\ave_{\Pi}(\kappa) = (\ave_{A,\Pi},\wh{\Pi}[A]). 
\end{equation}
Let $\mc{V}$ be admissible pairs with $N$ distributions, $\alpha \in \mc{V}$, $\alpha = (\mu,f) \in \mc{V}$,
and $X \sim \mu^{n_{\wh{\Pi}[A]}}$. Then the corresponding 
observable of and estimator given by a symmetrised scheme fulfills
\begin{equation}
\phi_{\ave_{\Pi}(\kappa),\mc{V}}(f)(X) =  \frac{1}{|\Pi|} \sum_{\pi \in \Pi} t((f(X_\pi(v)))_{v \in A}),
\end{equation}
which for $Y \sim \mu^{n_A}$ is equal to a sum of random variables with the same distribution as $\phi_{\kappa,\mc{V}}(f)(Y)$.
Therefore if $\kappa$ is unbiased for estimation of some estimand $G$ on $\mc{V}$, then so is $\ave_{\Pi}(\kappa)$
and from Lemma \ref{lemVarAve} in Appendix \ref{appStatMC} we have for each $\alpha \in \mc{V}$,
\begin{equation}
 \Var_\alpha(\phi_{\ave_{\Pi}(\kappa),\mc{V}}) \leq  \Var_\alpha(\phi_{\kappa,\mc{V}}).
\end{equation}
Let $I$ be a nonempty subset of  $I_N$. For $\theta \in \Theta$, we define 
$\pi_{N,I,\theta} \in \Theta^N$ to be such that $\pi_{N,I,\theta,i} = \theta$, $i \in I$, and $\pi_{N,I,\theta,i}= \id_{\N_+}$, 
$i \in I_N\setminus I$. 
For $m \in \N$, we call 
\begin{equation}\label{defthetam}
\Theta_m=\{\theta\in\Theta: \theta(i) = i \text{ for } i > m \} 
\end{equation} 
the subgroup of $\Theta$ of permutations of the first $m$ indices. 
Let 
\begin{equation}\label{thetanim}
 \Theta_{N,I,m} =  \{\pi_{N,I,\theta}: \theta \in \Theta_m\}.
\end{equation}
Symmetrisation of a scheme for $N$ w. r. t. 
 $\Theta_{N,I,m}$ is called its symmetrisation in the argument given by $I$,
in $m$ dimensions. If $n_{A,i} = m$, $i \in J$, then we call it simply symmetrisation in 
the argument given by $I$ and if  $n_{A,i} = 1$, $i \in J$, we call 
it symmetrisation from one to $m$ dimensions.
For $I = \{j\}$ we say of symmetrisation in the $j$th argument, in which case
we write $j$ instead of $I$ in the subscript.

After symmetrising the scheme given by (\ref{V1a2totg}) in the first argument in two dimensions as in \cite{badowski2011}
we receive a scheme given by 
\begin{equation}\label{vTotMin} 
\wh{V}_{1,s2}^{tot} = \frac{1}{2}(g[0][0] - g[1][0])^2, 
\end{equation} 
and we conclude that 
\begin{equation}\label{ineqs2a2} 
\Var_{\alpha}(\wh{V}_{1,s2}^{tot}) \leq \Var_{\alpha}(\wh{V}_{1,a2}^{tot}),\  \alpha \in D_{V_{1}}.   
\end{equation}
As both schemes use the same number of function evaluations, we also have 
\begin{equation} 
d_{V_{1}^{tot},s2} \leq d_{V_{1}^{tot},a2}, 
\end{equation} 
which should be understood as holding for each $\alpha \in D_{V_1^{tot}}$. 
From expression (\ref{ggxy}) in Theorem \ref{thCond} we receive that the following formula defines an unbiased estimator 
of the main sensitivity index $V_1$ for $N=2$, already mentioned in \cite{badowski2011}, 
\begin{equation} 
\widehat{V}_{1,a3} = g[0][0](g[0][1] - g[1][1]). 
\end{equation} 
Similarly as in \cite{badowski2011}, from symmetrising its scheme in the first argument we receive a scheme given by the formula 
\begin{equation}\label{V1s4} 
\widehat{V}_{1,s4} =  \frac{1}{2}(g[0][0] - g[1][0])(g[0][1] - g[1][1]), 
\end{equation} 
which uses 4 rather than 3 evaluation vectors, so that their respective inefficiency constants fulfill 
\begin{equation}\label{ineqV1} 
d_{V_1,s4} \leq \frac{4}{3} d_{V_1,a3}. 
\end{equation} 
The following theorem is a slight generalization of Theorem 12 from \cite{badowski2011},  
the proof of which is analogous as in \cite{badowski2011},
and which shall follow from a more general Theorem \ref{thIneqds} in Section \ref{secGenIneq}.  
\begin{theorem}\label{thineqOld}
Let $\kappa_1$ be unbiased estimation scheme of some estimand $G$ on adissible pairs $\mc{V}$ with $N$ distributions,
and let the scheme $\kappa_2$ be created from $\kappa_1$ by its symmetrisation in the argument given by 
$I \subset I_N$ from one to two dimensions. Then 
\begin{equation}\label{d12}
d_{G,\kappa_1} \leq d_{G, \kappa_2} \leq 2 d_{G,\kappa_1}.
\end{equation}
which should be understood as holding for each $\alpha \in \mc{V}$.
\end{theorem}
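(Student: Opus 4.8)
The plan is to reduce the two inequalities to a count of evaluation vectors and a two-sided comparison of variances. Write $\kappa_1 = (t,A)$, set $\Pi = \Theta_{N,I,2}$ so that $\kappa_2 = \ave_{\Pi}(\kappa_1)$, and abbreviate $\Var_j = \Var_\alpha(\phi_{\kappa_j,\mc{V}})$ for a fixed $\alpha = (\mu,f) \in \mc{V}$. Since $d_{G,\kappa_j}(\alpha) = \Var_j\,|A_{\kappa_j}|$, the whole statement will follow once I know $|A_{\kappa_2}|$ in terms of $|A_{\kappa_1}|$ and bound $\Var_2$ from both sides by multiples of $\Var_1$.

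First I would count evaluation vectors. The group $\Pi = \Theta_{N,I,2}$ has exactly two elements: the identity and the element $\pi_{N,I,\theta}$ with $\theta$ the transposition of $1$ and $2$. Because the symmetrisation is \emph{from one} dimension, $n_{A,i}=1$ for $i \in I$, hence every $v \in A$ satisfies $v_i = 1$ for all $i \in I$; applying $\wh\pi$ for the transposition sends each such $v$ to a vector whose $i$th coordinate is $2$ for $i \in I$ and is unchanged elsewhere. As $I \neq \emptyset$, the sets $A$ and $\wh\pi[A]$ are disjoint and equinumerous, so $|A_{\kappa_2}| = |\wh{\Pi}[A]| = 2|A| = 2|A_{\kappa_1}|$.

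Next I would write the symmetrised estimator explicitly. For $X \sim \mu^{n_{\wh{\Pi}[A]}}$ the observable formula following (\ref{avePiKO}) gives $\phi_{\kappa_2,\mc{V}}(f)(X) = \tfrac12(Y_1 + Y_2)$, where $Y_1 = t((f(X_v))_{v \in A})$ uses, in the coordinates belonging to $I$, the samples indexed by $1$, while $Y_2 = t((f(X_{\wh\pi(v)}))_{v \in A})$ uses there those indexed by $2$; both use the same samples in the coordinates outside $I$. Grouping the shared samples into $W$ and the two fresh independent blocks (index $1$ resp.\ $2$ in the positions of $I$) into $U$ resp.\ $V$, with $W,U,V$ independent and $U \sim V$, the identical scheme structure up to the relabeling $1 \mapsto 2$ yields $Y_1 = g(W,U)$ and $Y_2 = g(W,V)$ for one and the same measurable $g$. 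Each $Y_j$ has the law of $\phi_{\kappa_1,\mc{V}}(f)(Y)$ for $Y \sim \mu^{n_A}$, so $\Var(Y_1) = \Var(Y_2) = \Var_1$, and expanding the variance of the average gives
\begin{equation}
\Var_2 = \tfrac12 \Var_1 + \tfrac12 \Cov(Y_1,Y_2).
\end{equation}

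The crux is then to bound $\Cov(Y_1,Y_2)$. The upper side $\Cov(Y_1,Y_2) \le \Var_1$ is Cauchy--Schwarz and recovers the known reduction $\Var_2 \le \Var_1$ (equally a consequence of Lemma \ref{lemVarAve}). The lower side is where the real work lies: conditioning on the shared block $W$ and using that $U$ and $V$ are independent copies given $W$ — precisely the situation of Theorem \ref{thCond} with $W$ in the role of $X_1$ — gives $\Cov(Y_1,Y_2) = \Var(\E(Y_1|W)) \ge 0$. Hence $\tfrac12 \Var_1 \le \Var_2 \le \Var_1$, and multiplying through by $|A_{\kappa_2}| = 2|A_{\kappa_1}|$ turns this into $d_{G,\kappa_1} \le d_{G,\kappa_2} \le 2\,d_{G,\kappa_1}$ for each $\alpha \in \mc{V}$. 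The main obstacle is the nonnegativity of the covariance; once the three-block independence structure $(W,U,V)$ is isolated, it follows directly from the conditional-independence identity of Theorem \ref{thCond}.
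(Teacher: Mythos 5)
Your proof is correct and follows essentially the same route as the paper's: the paper obtains this theorem as the special case $n=1$, $n'=2$, $k=m=1$ of Theorem \ref{thIneqds}, whose proof likewise combines the evaluation-vector count $|A_{\kappa_2}|=2|A_{\kappa_1}|$ with the variance bounds of Theorem \ref{thIneqEsts} and Lemma \ref{lemcomb}, and the heart of Lemma \ref{lemcomb} is exactly your identity $\Cov(Y_1,Y_2)=\Var(\E(Y_1|W))\in[0,\Var_\alpha(\phi_{\kappa_1,\mc{V}})]$ derived from Theorem \ref{thCond} for the shared block $W$ and independent fresh blocks $U$, $V$. The only cosmetic differences are that you instantiate the two-copy average directly instead of passing through the general binomial-coefficient statement, and that you use Cauchy--Schwarz for the upper covariance bound where the paper invokes (\ref{aveVarError}).
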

For an illustration let us consider an estimation scheme for $V^{tot}_1$, for $N=2$,  
given by the formula 
\begin{equation}\label{VTot1s4} 
\widehat{V}^{tot}_{1, s4} =   \frac{1}{4}\sum_{i=0}^1(g[0][i] - g[1][i])^2. 
\end{equation} 
As in \cite{badowski2011} let us notice that 
scheme (\ref{VTot1s4}) is received from (\ref{vTotMin}) by symmetrisation in the second argument from 
one to two dimensions, so from the above theorem
\begin{equation}\label{compDi}
d_{V^{tot}_{1},s2} \leq  d_{V^{tot}_{1},s4}  \leq 2d_{V^{tot}_{1},s2}.
\end{equation}
As we noticed in \cite{badowski2011}, 
scheme given by formula (\ref{estVkTotE}) for $\wh{V}_{k,SE}^{tot}$ 
is symmetrisation of the one given by formula (\ref{estVkTotET}) for 
$\wh{\wt{V}}_{k,ET}^{tot}$ from $1$ to $2$ dimensions in the argument corresponding to $P_{\sim k}$, 
so from the above theorem and the fact that the ratio of number of evaluation 
vectors used by the individual subschemes of $SE$ and $ET$ for the total sensitivity indices 
and the whole schemes is the same (and equal 2), we receive
\begin{equation}\label{VitotComp}
d_{\wt{V}_i^{tot},ET} \leq d_{\wt{V}_i^{tot},SE} \leq 2d_{\wt{V}_i^{tot},ET}.
\end{equation}
In \cite{badowski2011} we also proved the following Theorem 13 which more precisely than originally can be formulated as follows. 
\begin{theorem}\label{thdEMEComp}
For each $\alpha = \alpha_{\mu_P,\mu_R,f}  \in \mc{D}_{V_{k}}$ for some $f$, $P$, $R$ as in Section \ref{secMany} such that 
 $f(P,R) \in L^4$, the 
inefficiency constants of schemes $EM$ and $SE$ for estimation of $V_k$ fulfill, for $N_P > 2$, 
\begin{equation}\label{EMComp}
d_{V_k,EM}(\alpha) \leq d_{V_k,SE}(\alpha) \leq 2d_{V_k,EM}(\alpha). 
\end{equation} 
\end{theorem}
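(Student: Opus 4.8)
The plan is to reduce both inefficiency-constant inequalities to a single sign condition on a covariance, after first pinning down the combinatorics of the two schemes' evaluation vectors. Throughout, write the observables with independent samples: let $a,a'$ be the two copies of $P_k$, let $b,b'$ be the two copies of $P_{\sim k}$, and $c,c'$ the two copies of $R$, so that the subscheme observables are built from the $f$-evaluations $s[i][j]$ and $s_k[i][j]$, $i,j\in\{0,1\}$, $k\in I_{N_P}$. Each $s_k[i][j]$ evaluates $f$ at a parameter state having $N_P-1$ coordinates equal to sample $i$ and the $k$th coordinate equal to sample $1-i$. For $N_P>2$ the majority sample $i$ and the exceptional index $k$ are both recovered from this state, so all the $s_k[i][j]$ are distinct from one another and from the $s[i][j]$; hence $|A_{SE}|=4(N_P+1)$ and $|A_{EM}|=2(N_P+1)$, giving $|A_{SE}|=2|A_{EM}|$. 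This is exactly where the hypothesis $N_P>2$ enters: for $N_P=2$ there is no majority and several $s_k[i][j]$ coincide, spoiling the ratio. The assumption $f(P,R)\in L^4$ guarantees by the Schwarz inequality that all observables lie in $L^2$, so every variance below is finite.

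Next I would exhibit $SE$'s subscheme as a symmetrisation of $EM$'s. Writing $\Delta(b,c)=f(a,b,c)-f(a',b,c)$ (schematically splitting the arguments of $f$ into $P_k$, $P_{\sim k}$, $R$), one checks $\wh V_{k,EM}=\tfrac12\Delta(b,c)\Delta(b',c')=:Y_1$ and, by expanding (\ref{estVkE}), that $\wh V_{k,SE}=\tfrac12(Y_1+Y_2)$ with $Y_2=\tfrac12\Delta(b',c)\Delta(b,c')$. Here $Y_2$ is obtained from $Y_1$ by the permutation $\sigma$ swapping the two parameter samples ($a\leftrightarrow a'$, $b\leftrightarrow b'$) while fixing $c,c'$; that is, $\wh V_{k,SE}=\ave_\Pi(\wh V_{k,EM})$ for the order-two group $\Pi=\Theta_{N,I_{N_P},2}$. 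Consequently $Y_2$ has the same distribution as $Y_1$ (equal mean $V_k$ and equal variance), and the symmetrisation variance inequality of Section~\ref{secSymIneq} yields $\Var_\alpha(\wh V_{k,SE})\le\Var_\alpha(\wh V_{k,EM})$. Multiplying by the evaluation counts from the first paragraph gives the upper bound $d_{V_k,SE}(\alpha)\le 2\,d_{V_k,EM}(\alpha)$.

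For the lower bound I would use $\Var(\tfrac12(Y_1+Y_2))=\tfrac12(\Var(Y_1)+\Cov(Y_1,Y_2))$, which (again via the $2{:}1$ count) reduces $d_{V_k,EM}(\alpha)\le d_{V_k,SE}(\alpha)$ to the single claim $\Cov(Y_1,Y_2)\ge 0$. To prove this I would condition on $(a,a',c,c')$: since $b\perp b'$ with both distributed as $P_{\sim k}$, and $Y_1Y_2=\tfrac14[\Delta(b,c)\Delta(b,c')][\Delta(b',c)\Delta(b',c')]$, I obtain $\E[Y_1Y_2\mid a,a',c,c']=\tfrac14 G(c,c')^2$ with $G(c,c')=\E_{b}[\Delta(b,c)\Delta(b,c')]$. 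Then applying the inequality $\E[Z^2]\ge(\E Z)^2$ successively — first over $(c,c')$, using $\E_{c,c'}G(c,c')=\E_b[h(b)^2]$ with $h(b)=\E_c\Delta(b,c)$; then (after squaring) over $b$, using $\E_b h(b)=\delta(a,a'):=m(a)-m(a')$ where $m=\E(f(P,R)\mid P_k)$; and finally over $(a,a')$ — gives $\E[Y_1Y_2]\ge\tfrac14(\E_{a,a'}\delta^2)^2=V_k^2=(\E Y_1)^2$, i.e.\ $\Cov(Y_1,Y_2)\ge0$, which closes the lower bound.

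The main obstacle is precisely the sign of $\Cov(Y_1,Y_2)$: the two estimators share the parameter samples $a,a'$ and are therefore genuinely dependent, and the generic symmetrisation estimate only yields the useless bound $\Cov\ge-\Var(Y_1)$. The nonnegativity instead rests on the positive-semidefinite Gram-kernel structure $G(c,c')=\langle\Delta(\cdot,c),\Delta(\cdot,c')\rangle_{L^2(\mu_{P_{\sim k}})}$ that the conditioning exposes, after which the nested applications of $\E[Z^2]\ge(\E Z)^2$ complete the argument; the only other delicate point is the careful bookkeeping of which of $a,a',b,b',c,c'$ each evaluation depends on, both for the count in the first paragraph and for the conditioning in the third.
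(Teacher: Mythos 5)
Your proof is correct, and it is necessarily self-contained: the paper does not reprove Theorem \ref{thdEMEComp} but cites Theorem 13 of \cite{badowski2011}, so the only comparison available is with the paper's general machinery, which you invoke exactly where it applies and supplement where it does not. Your bookkeeping is right: for $N_P>2$ the vectors of $s[i][j]$, $s_k[i][j]$ are pairwise distinct, so $|A_{SE}|=4(N_P+1)=2|A_{EM}|$, and the hypothesis $N_P>2$ is indeed used only here (for $N_P=2$ one has $s_1[1-i][j]=s_2[i][j]$, so $|A_{SE}|=8$, $|A_{EM}|=6$, and the ratio drops to $4/3$). The identity $\wh{V}_{k,SE}=\tfrac12(Y_1+Y_2)$ with $Y_1=\wh{V}_{k,EM}=\tfrac12\Delta(b,c)\Delta(b',c')$ and $Y_2=\tfrac12\Delta(b',c)\Delta(b,c')=\A_{\pi}(Y_1)$, where $\pi$ is the sample swap generating $\Theta_{N_P+1,I_{N_P},2}$, correctly exhibits the $SE$ subscheme as a symmetrisation of the $EM$ subscheme, so (\ref{uniformalpha}) (i.e.\ Lemma \ref{lemVarAve}) together with the $2{:}1$ evaluation count gives $d_{V_k,SE}\le 2d_{V_k,EM}$. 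You also correctly identify why the lower bound cannot come from Theorem \ref{thIneqds}: the $EM$ subscheme is not $\Theta_{N,I,n}$-symmetric with the dimensions demanded by the ``from $n$ to $n'$'' hypothesis, so the claim genuinely reduces to $\Cov(Y_1,Y_2)\ge0$, which must be proved by hand. Your conditioning argument does this correctly: given $(a,a',c,c')$ the product $Y_1Y_2$ factors over the independent copies $b,b'$ into $\tfrac14 G(c,c')^2$ with $G(c,c')=\E_b[\Delta(b,c)\Delta(b,c')]$, and the nested applications of $\E[Z^2]\ge(\E Z)^2$ chain legitimately because each intermediate bound is between nonnegative quantities ($\E_b[h^2]\ge\delta^2\ge0$ and $\E_{a,a'}[\delta^2]=2V_k\ge0$), so squaring preserves the inequalities and $\E[Y_1Y_2]\ge\tfrac14\bigl(\E_{a,a'}[\delta^2]\bigr)^2=V_k^2$. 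The $L^4$ assumption enters exactly where you say, to make all second moments finite via the Schwartz inequality, so that every variance and covariance manipulated above is well defined.
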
 

\section{\label{secVarDiff}Variances of estimators using the GD and RTC methods} 
Let $h(p,R)$ denote certain construction of an MRCP corresponding to parameters $p\in \R^n$ and a reaction network $RN$ 
(see (\ref{MRCPFun})). For $f(p,R) = g(h(p,R))\in L^2$ denoting the 
number of particles of certain species at a given moment of time, in \cite{Rathinam_2010} and \cite{Anderson2013} the mean finite difference 
\begin{equation}\label{fidi} 
\frac{1}{s} \E(f(p + se_i, R) - f(p, R)) 
\end{equation}
for some $i \in I_n$ and $s \in \R_+$, 
approximating the partial derivative $\partial_i\E(f(p,R))$, was estimated in a MC procedure evaluating an 
independent copy of $f(p + s e_i,R) - f(p, R)$ in each step. 
Using the RTC construction of the output in such a MC procedure for finite difference has been called common reaction path method, 
while using the GD construction - common random number method \cite{Rathinam_2010}, both names stressing the fact that the same noise variable $R$ is 
used to construct the initial and perturbed outputs in each MC step. 
The estimates of variance of $f(p + se_i, R) - f(p, R)$ determining the variance of the final MC estimators of (\ref{fidi}) 
for the models considered in \cite{Rathinam_2010}
was much lower when performing the simulations with the RTC than the GD method, however in 
Section E of \cite{Anderson2013} an example was provided with an opposite inequality. 
Note that the variance mentioned 
being higher for one of the above methods than the other is equivalent to the following mean squared difference 
\begin{equation}\label{errSqr} 
\msd(p_1, p_2) = \E((f(p_1, R) - f(p_2, R))^2) 
\end{equation} 
being higher for $p_1 =p$, $p_2 = p+se_i$, or $\E(f(p_1, R)f(p_2, R))$ or $\Cov(f(p_1, R),f(p_2, R))$  
 being lower for such $p_1$ and $p_2$ for one method than the other. 
Let us consider an output $f(P,R)=g(h(P,R))$ of an MR $(P,h(P,R))$ constructed using the RTC or the GD method. 
In the numerical experiments in \cite{badowski2011} we observed that for such outputs being particle numbers 
of some species at a given time, the estimates of the variances 
of estimators of main and total sensitivity indices given by schemes $EM$, $ET$, and $SE$, corresponding 
to admissible pairs $\alpha_{\mu_P,\mu_R,f}$ (see Section \ref{secMany}) were in some cases much higher 
when using the RTC than the GD method. The variances of these estimators also varied with the order of 
reactions in the GD method. In an experiment for many births - many deaths model 
which we describe in Section \ref{MBMDPrev}, grouping reactions with similar effect on the considered output together in the 
sequence of reactions 
resulted in lower estimates of variance of the above estimators using GD method than when reactions with different effects 
appeared one after another in the sequence. 
 Note that reordering the reactions in the RTC method results in reordering of the Poisson processes 
used in the construction, which causes no change of variance of the above discussed estimators using this method. 
For some of the above estimators one can show that, given $f(P,R)\in L^2$, 
if for some measurable set $A$ such that $\mu_P(A) = 1$, for all pairs of parameter values $(p_1,p_2) \in A^2,$ 
$\msd(p_1, p_2)$ is not higher for one construction of MR than another, e. g. for the GD method than the RTC method 
or for GD methods but with  different orders of reactions, 
then variance of the considered estimator corresponding to such $f$, $P$, $R$ should be either not lower or not higher 
for one construction than the other. 
For instance, as discussed in \cite{badowski2011}, for estimators $\wh{V}_{k,EM}$ and $\wh{V}^{tot}_{k,ET}$, 
using notations as in Section \ref{secMany} and $\alpha=\alpha_{\mu_P,\mu_R,f}$,  from the equalities 
 \begin{equation} 
 4\E_{\alpha}((\wh{V}_{k,EM})^2) = \E(\msd(\wt{P}_{(k)}[0], \wt{P}[0])\msd(\wt{P}[1], \wt{P}_{(k)}[1]))\\
  \end{equation}
 and 
 \begin{equation}
 4\E_{\alpha}((\wh{V}^{tot}_{k,ET})^2) = \E(\msd^2(\wt{P}_{(k)}[0], \wt{P}[0]))
 \end{equation}
it follows that the inequalities between the variances of estimators $\wh{V}_{k,EM}$ and $\wh{V}^{tot}_{k,ET}$ when using different 
constructions   
should be the same as the inequalities between the quantities $\msd(p_1, p_2)$ for all $p_1$, $p_2$ as above. 

\section{\label{secExSoft}Existing software used}
In \cite{badowski2011} we run experiments using a program written in the C++ programming language and   
a personal computer with 1GB RAM, 2-core 2.10 GHz processor, and Linux operating system. 
We shortly describe this program below, see Section 4.7 in \cite{badowski2011} for details. For the numerical experiments in this work
we made some extensions to this program, as described in Section \ref{secImpl}. 
For specifying the reaction network, distribution of parameters, and output of the MR considered in computations we 
used SBML (Systems Biology Markup Language) \cite{SBML} configuration files.  
We used     
GNU Scientific Library \cite{Galassi_2003} implementation of the 
Mersenne twister random number generator \cite{MATSUMOTO_NISHIMURA_1998}
 and simple implementations of the RTC and GD constructions. 
Values of the model output were generated by running a given simulation algorithm starting with the selected 
parameters and reusing the values of independent copies of the noise term $R$, e. g. for the schemes from Section \ref{secMany} 
when evaluating  $s[i][j]$ and $s_{(k)}[i][j]$ for the same $j$. 
This reusing was implemented 
by storing the noise variables in lists, 
in the GD method using a single list for each noise variable, while in the RTC method a different one for each Poisson process in 
the construction.

\section{Models used}
In this work we use certain mathematical models from the literature, which we briefly describe below.
\subsection{\label{SBPrev}Simple birth (SB)  model}
SB model is a very simple model from \cite{badowski2011}, for which, as opposed to the further models,  
many sensitivity indices and coefficients of our interest can be computed analytically. 
The reaction network of the model consists of a single reaction involving only one species $X$ 
\begin{equation} 
R_1:\ \emptyset \rightarrow  X. 
\end{equation}
The only kinetic rate of this reaction fulfills $a_1(K,x) = K_1 + K_2 + K_3$, where $K = (K_1,K_2, K_3)$ is a random vector 
with independent coordinates with respective distributions $U(0.3,0.9)$, $U(0.85, 1.15)$, and 
$U(0.07, 0.13)$. Random initial number $C$ 
of particles of $X$ is independent of $K$ and has uniform discrete distribution $U_d(30, 90)$. 
As the output for analysis we take the number of particles of species $X$ at time $t=100$. 

\subsection{\label{GTSPrev}Genetic toggle-switch (GTS) model}
Let us consider a model of a genetic toggle switch which is 
a simplified stochastic version of the model from \cite{Gardner2000}, first analyzed in \cite{Rathinam_2010}
and later also in \cite{badowski2011}. 
In the model, two species $U$ and $V$ are produced and degraded in the following four reactions 
\[R_1:\ \emptyset \rightarrow  U, \quad R_2:\ U \rightarrow \emptyset,\] 
\[R_3:\ \emptyset \rightarrow V,\quad R_4:\ V \rightarrow \emptyset.\] 
For $x = (x_1,x_2)$ being the vector of numbers of species $U$ and $V$, and the rate constants vector equal to 
$K = (\alpha_1, \alpha_2, \beta, \gamma)$, the rates of the above reactions are 
\[a_{1}(K,x) = \frac{\alpha_1}{1 + x_2^{\beta}}, \quad a_{2}(K,x) = x_1, \] 
\[a_{3}(K,x) = \frac{\alpha_2}{1 + x_1^{\gamma}}, \quad a_{4}(K,x) = x_2. \] 
The second and fourth rates describe degradation with a speed proportional to the current number of particles of a given species. 
The first and third rates 
describe the fact that each species is a repressor of the promoter transcribing the other species, that is 
it inhibits the production of the opposing repressor by attaching itself to the DNA sequence preceding the region coding the other repressor. 
The value of the rate constants vector $K$ in \cite{Rathinam_2010} was deterministic and equal to $(50, 16, 2.5, 1)$. 
However, similarly as in \cite{badowski2011}, we consider $K$ to be a random vector whose each coordinate which  in \cite{Rathinam_2010} had 
a fixed value $v$ is considered to be a random variable with distribution U($0.8v, 1.2v$) and independent 
of the other coordinates. 
Similarly as in \cite{Rathinam_2010}, the initial particle numbers of both species are zero and we thus consider the vector of random 
parameters to be equal to $K$. As in \cite{Rathinam_2010} and \cite{badowski2011}, the 
model output considered for sensitivity analysis is the number of particles of the species $U$ at time $t=10$. 
Using this model in \cite{badowski2011}
we observed lower variances for the RTC than the GD method for all estimators of main and total sensitivity 
indices of conditional expectation from Section \ref{secMany}.

\subsection{\label{MBMDPrev}Many births - many deaths (MBMD) model}
Let us now consider the MBMD model from \cite{badowski2011}, whose reaction network contains 
one species $X$ and the following $5$ different birth and death reactions can occur 
\begin{equation}
R_{bi}:\ \emptyset \rightarrow  X,\ R_{di}:\   X\rightarrow \emptyset, \quad i \in I_5.  
\end{equation}
The order of these reactions is 
\begin{equation} 
R_i = R_{bi},\ R_{5 + i} = R_{di}, \quad i \in I_5. 
\end{equation} 
The rate constants vector is $K = (K_{b1},\ldots,K_{b5},K_{d1},\ldots,K_{d5})$, and has independent coordinates  
with $K_{di} \sim \U(0.010,\ 0.040)$ and $K_{bi} \sim \U(0.10,\ 0.40)$, $i \in I_5$. 
The rates of birth reactions are $a_{bi}(K, x) = K_{bi}$, and of death reactions $a_{di}(K,x) = K_{di}x$, $i \in I_5$. 
The initial number $C$ of particles of $X$ has distribution $U_d(5, 15)$ and all the parameters are independent. 
The considered output is as in \cite{badowski2011} the number of particles of species $X$ at time $t=5$. 
In the numerical experiments in \cite{badowski2011} we considered three different constructions. 
The first two are the RTC and GD methods for the above initial reaction network and distribution of parameters, 
abbreviated shortly as RTC and GDI methods, while the third construction is 
the GD construction for such distribution of parameters but for a reaction network with reordered sequence of reactions 
\begin{equation}\label{newOrder} 
R_{2i-1} = R_{bi},\ R_{2i} = R_{di},\quad i \in I_5, 
\end{equation} 
abbreviated as GDR method. 
The intuition behind such reordering in \cite{badowski2011} was to increase the frequency of 
switching between the birth and death reactions in a given step of 
the GD construction for different values of the model parameters $p_1$, $p_2$ and the same noise variable so as to 
increase the value of the function $\msd(p_1,p_2)$ defined by (\ref{errSqr}). 
In \cite{badowski2011} the estimates of variances of estimators from the schemes 
$EM$ for the main and and $ET$ for the total sensitivity indices were lowest when using the RTC method, followed 
by the GDI and GDR methods.  

\chapter{\label{chapOwnRes}Own research}
\section{\label{secMRNew}New theorems for MRCP and MR}
Below we provide two new theorems giving criteria for
existence of MRCP and MR and finiteness of higher moments of particle numbers in the latter model. 
For $a, b \in \R^N$, we denote 
$ab = \sum_{i=1}^Na_ib_i$. Let us consider a reaction network $RN$ as
in Section \ref{secCRN}. 
The following theorem, which we prove in Appendix \ref{appProc}, gives a useful sufficient condition for the existence of an MRCP. 
\begin{theorem}\label{thNonexpl} 
Let $k \in B_{RN}$. If there exists a vector  $m =(m_i)_{i=1}^N \in \R^N$ with positive coordinates, such that 
\begin{equation}\label{supmx}
\sup_{x \in E} (\sum_{l=1}^L(ms_la_l(k,x)) - mx) < \infty, 
\end{equation}
then for each $c \in E$, an MRCP corresponding to $RN$ and $p=(c,k)$ exists. 
\end{theorem}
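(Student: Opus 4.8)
The plan is to prove non-explosion by a Foster--Lyapunov argument with the linear test function $V(x) = mx = \sum_{i=1}^N m_i x_i$. It suffices to work with the minimal chain produced by either construction of Section \ref{secMRCP}, given by the jump chain $(Z_n)_{n\geq 0}$, the jump times $(J_n)_{n\geq 0}$, and the explosion time $\zeta$: once I show $\zeta = \infty$ almost surely, the construction delivers a genuine nonexplosive HMC with the prescribed initial distribution $\delta_c$ and $Q$-matrix, hence an MRCP corresponding to $RN$ and $p = (k,c)$. The role of positivity of $m$ is to guarantee that $V \geq 0$ on $E$ and that every sublevel set $F_n = \{x \in E : mx \leq n\}$ is finite, since $mx \leq n$ forces $x_i \leq n/m_i$ for each $i$; thus $V$ is a proper Lyapunov function whose large values detect departure to infinity.

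The first computation is the action of the generator on $V$. Because the off-diagonal intensities are $q_{x,y} = \sum_{l:\, y = x+s_l} a_l(k,x)$ and $V$ increases by $V(x+s_l) - V(x) = m s_l$ along reaction $l$, I get
\begin{equation}
\mc{A}V(x) := \sum_{y \neq x} q_{x,y}(V(y) - V(x)) = \sum_{l=1}^L a_l(k,x)\, m s_l.
\end{equation}
Hypothesis (\ref{supmx}) says precisely that this quantity minus $mx$ is bounded above by some constant $K < \infty$ uniformly in $x \in E$, so I obtain the linear drift bound $\mc{A}V(x) \leq V(x) + K$ for all $x$.

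The main work, which I expect to be the technical heart of the appendix proof, is turning this drift bound into non-explosion. I would introduce the level-crossing times $\tau_n = \inf\{t \geq 0 : m Y_t > n\}$; before $\tau_n$ the process stays in the finite set $F_n$, on which the total jump rate $\sum_l a_l(k,x)$ is bounded, so only finitely many jumps occur on $[0,\tau_n)$ and Dynkin's formula applies without any circularity. Combining Dynkin's formula with the bound $\mc{A}V \leq V + K$ and Gronwall's inequality yields $\E(V(Y_{t \wedge \tau_n})) \leq (V(c) + Kt)\,e^{t} =: C(t)$, uniformly in $n$. Since $V(Y_{\tau_n}) > n$ on $\{\tau_n \leq t\}$ and $V \geq 0$, Markov's inequality gives $\PR(\tau_n \leq t) \leq C(t)/n$. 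Finally $\tau_n \uparrow \zeta$ forces $\{\zeta \leq t\} \subset \{\tau_n \leq t\}$ for every $n$, whence $\PR(\zeta \leq t) \leq C(t)/n \to 0$; as $t$ is arbitrary, $\zeta = \infty$ almost surely. The only delicate points to handle carefully are the legitimacy of Dynkin's formula up to $\tau_n$ (resolved by the finiteness of $F_n$) and the overshoot at the crossing time, for which the one-sided bound $V(Y_{\tau_n}) > n$ is all that is needed.
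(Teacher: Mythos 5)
Your proof is correct, and it rests on the same core idea as the paper's: a Foster--Lyapunov argument with the linear test function $mx$, whose sublevel sets $F_n=\{x\in E: mx\le n\}$ are finite precisely because $m$ has positive coordinates, and whose drift under the generator is controlled by hypothesis (\ref{supmx}). The difference is in execution. The paper's proof in Appendix \ref{appProc} is a two-line reduction to a cited criterion (Theorem \ref{thStroock}, an easy consequence of Theorem 4.3.6 in \cite{stroock05markov}): it takes $F_N=\{x\in E: mx\le N\}$, $\alpha=1$, and, for $A$ the supremum in (\ref{supmx}), the Lyapunov function $u(x)=\max\{A,0\}+mx$ --- the additive constant is absorbed into $u$ so that the affine drift bound becomes the purely multiplicative one $\sum_{j\ne i}q_{ij}(u(j)-u(i))\le \alpha u(i)$ demanded by the cited theorem. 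You instead keep the affine bound $\mc{A}V\le V+K$ and re-derive the non-explosion criterion from scratch via stopping at the level-crossing times $\tau_n$, Dynkin's formula on the finite set $F_n$, Gronwall, and Markov's inequality; in effect you reprove the content of Theorem \ref{thStroock} rather than invoke it. What your route buys is self-containedness and transparency about where each hypothesis enters (in particular your remark that only the one-sided overshoot bound $V(Y_{\tau_n})>n$ is needed); what the paper's route buys is brevity. One small point you should make explicit: the assertion $\tau_n\uparrow\zeta$ (equivalently $\{\zeta\le t\}\subset\{\tau_n\le t\}$ up to a null set) needs the standard observation that the minimal chain cannot explode while remaining in a finite set, since there the holding rates are bounded and the jump times dominate sums of independent exponentials with a common rate; this is the same finiteness-of-$F_n$ fact you already use for Dynkin's formula, but it is doing separate work here and deserves a sentence.
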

Vector $m$ can often be chosen such that for $i \in I_N$, $m_i$ is mass of the $i$th species, 
in which case $ms_l$ is the mass increase in the $l$th reaction and 
$mx$ is the total mass of all species in the system in state $x$. 

For real-valued outputs $g(M)$ in this work 
we often encounter the requirement that $\E(|g(M)|^n)<\infty$ for some $n \in \N_+$. 
For $Y_{t,i}$ denoting the number of particles of the $i$th species at the moment $t$ in a MR, we have a following criterion. 
\begin{theorem}\label{thMomsExist}
Let $\nu$ be a probability distribution on $\mc{S}_{RN,E}$. 
Suppose that there exists a vector $m =(m_i)_{i=1}^N \in \R_+^N$, such that for 
$L_m = \{l \in I_L: s_{l}m > 0\}$, for the function 
\begin{equation} 
A(k) = \max\{ \sup_{x \in E, l \in L_m}(a_l(k, x)), 0\}, 
\end{equation} 
for each $(K, C) \sim \nu$, and certain $n \in \N_{+}$, it holds 
$\E(A(K)^n) < \infty$ and $\E(C_i^n) < \infty$, $i \in I_N$. Then for $\nu$ a. e. $p$, MRCP given by $RN$ and $p$ exists. 
Moreover, for a process $Y$ of an MR corresponding to $RN$ and $\nu$, for each $i \in I_N$ and $t \in T$, it holds $\E(Y_{t,i}^n) < \infty$. 
\end{theorem}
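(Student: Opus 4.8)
The plan is to prove the two assertions in turn: first derive nonexplosion (hence existence of the MRCP) for $\nu$ a.e. $p$ from the drift criterion of Theorem \ref{thNonexpl}, and then obtain the moment bound from the random time change representation (\ref{intEqu}) of Construction \ref{RTCCon}. For the existence part I would first note that $\E(A(K)^n) < \infty$ forces $A(k) < \infty$ for $\nu$ a.e.\ $k$. For any such $k$ I would verify condition (\ref{supmx}) by splitting $\sum_{l=1}^L m s_l a_l(k,x)$ into the reactions $l \in L_m$, for which $m s_l > 0$ and $a_l(k,x) \le A(k)$ by definition of $A$, and those with $m s_l \le 0$, which contribute a nonpositive amount; this gives $\sum_{l=1}^L m s_l a_l(k,x) \le A(k)\sum_{l\in L_m} m s_l$. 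Since $-mx \le 0$ for $m,x \ge 0$, the supremum in (\ref{supmx}) is bounded by the finite constant $A(k)\sum_{l\in L_m} m s_l$, so Theorem \ref{thNonexpl} yields the MRCP (here I use that all $m_i>0$, so Theorem \ref{thNonexpl} applies and the weighted mass controls every coordinate).

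The heart of the argument is the moment bound. Since the moments $\E(Y_{t,i}^n)$ depend only on the conditional law $\mu_{MRCP}(RN(k),c)$, I may assume $Y$ is built from independent unit-rate Poisson processes $(N_l)_{l=1}^L$ as in (\ref{intEqu}); by the nonexplosion just established, this representation is valid for all $t$ a.s. Writing $M_t = m Y_t$ and pairing (\ref{intEqu}) with $m$ gives $M_t = mC + \sum_{l=1}^L (m s_l)\, N_l(\int_0^t a_l(K,Y_s)\,ds)$. I would discard the terms with $m s_l \le 0$, which only decrease $M_t$, and for $l \in L_m$ use the monotonicity of $N_l$ together with $a_l(K,Y_s) \le A(K)$ to bound the time argument by $A(K)t$, obtaining the pathwise domination
\[ 0 \le M_t \le mC + \sum_{l \in L_m}(m s_l)\, N_l(A(K)\, t). \]

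It then remains to bound the $n$-th moment of the right-hand side. Conditioning on $P = (K,C)$, which is independent of the Poisson processes, the right-hand side is a constant plus a nonnegative combination of independent $\mathrm{Poiss}(A(K)t)$ variables, so its $n$-th conditional moment is a polynomial with nonnegative coefficients in $mC$ and $A(K)t$ of degree at most $n$. Taking expectations and using $\E(A(K)^n) < \infty$ together with $\E(C_i^n) < \infty$ — the latter giving $\E((mC)^n) < \infty$ after multinomial expansion and Hölder's inequality on the mixed moments of the $C_i$, and the cross terms $\E((mC)^p A(K)^q)$ with $p+q \le n$ being finite by a further application of Hölder — shows $\E(M_t^n) < \infty$. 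Finally, since each $m_i > 0$ and $Y_{t,i} \ge 0$, we have $m_i Y_{t,i} \le M_t$, hence $\E(Y_{t,i}^n) \le \E(M_t^n)/m_i^n < \infty$.

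I expect the main obstacle to be justifying the pathwise domination cleanly: one must ensure the random time change representation (\ref{intEqu}) holds globally, which is exactly what the nonexplosion step supplies, and handle the exceptional null set where $A(K)=\infty$. Once the domination is in place, the Poisson moment computation and the Hölder estimates are routine.
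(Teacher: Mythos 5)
Your proof is correct and follows essentially the same route as the paper's: existence via the drift criterion of Theorem \ref{thNonexpl}, with the left-hand side of (\ref{supmx}) bounded by a finite multiple of $A(k)$, and the moment bound via the pathwise domination $m_iY_{t,i} \le mY_t \le mC + \sum_{l\in L_m}(ms_l)\,N_l(A(K)t)$ coming from the random time change representation (\ref{intEqu}), followed by Poisson moment estimates under $\E(A(K)^n)<\infty$ and $\E(C_i^n)<\infty$. The only cosmetic difference is that the paper reduces the $n$-th moment of the sum via Minkowski's inequality and computes $\E(N_1(\lambda)^n)$ explicitly through falling factorials, whereas you use a multinomial expansion with H\"older's inequality and invoke the polynomial dependence of Poisson moments on the parameter as a known fact.
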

Proof of the above theorem is provided in Appendix \ref{appProc}.
All the moments of each parameter in the SB, GTS, and MBMD models from sections \ref{SBPrev}, \ref{GTSPrev}, and 
\ref{MBMDPrev} exist, so given the form of reaction rates of these models 
assumptions of Theorem \ref{thMomsExist} are satisfied for all $n \in \N_+$ if we take $m_i$ equal to one for each 
$i$th species. 
Thus all moments of each particle numbers at each time instant in these models exist, 
which makes it possible to use the MC method 
for estimating the sensitivity indices and various coefficient defined further on 
using the schemes from the previous and further sections for output being the particle numbers as above.

\section{\label{secCondMoms}Functions of conditional moments}
For a real-valued random variable $Z$ on the probability space with a probability measure $\mu$ 
and $n \in \N_+$, we define the $n$th moment of $Z$ for $\mu$ to be the 
element of $\overline{\R}$ 
defined as  
\begin{equation}\label{Mn}
M_{n}(\mu,Z) = \E_{\mu}(Z^n) 
\end{equation}
and the $n$th central moment of $Z$ for $\mu$ the element of $\overline{\R}$ defined as  
\begin{equation}\label{CMn}
CM_{n}(\mu,Z) = \E_{\mu}((Z - \E_\mu(Z))^n), 
\end{equation}
whenever these expressions make sense (that is in the second case 
$\E_\mu (|Z|)< \infty$ and in both cases the functions appearing under the outer expectations must have 
their positive or negative parts Lebesgue integrable with respect to $\mu$). 
We shall consider the $n$th moment, denoted as $M_n$, or such 
central moment $CM_n$ to be a certain function $Q$ 
whose domain $D_Q$ are pairs $(\mu,Z)$ for which respective expression (\ref{Mn}) or (\ref{CMn})
makes sense, 
and for each $\alpha=(\mu,Z) \in D_Q$, $Q(\alpha)$, also denoted $Q_\mu(Z)$, is given by the rhs of 
(\ref{Mn}) or (\ref{CMn}), respectively. Both the $n$th moment and central moment $Q$ restricted to the class  
\begin{equation}\label{defMCRn}
\mc{T}_n = \{(\mu,Z):\text{$\mu$ is a probability measure and } Z \in L^n(\mu)\} 
\end{equation}
 is a real-valued function, equal to some measurable function $f_Q:\R^n\rightarrow \R$ 
 applied to a vector of the first $n$ moments of $Z$, each restricted to $\mc{T}_n$, that is 
\begin{equation}\label{qfq}
Q_{|\mc{T}_n}= f_Q((M_{k|\mc{T}_n})_{k=1}^n), 
\end{equation}
or equivalently
\begin{equation}\label{qz}
Q_\mu(Z)  = f_Q((\E_{\mu}(Z^i))_{i=1}^{n}),\ (\mu,Z) \in \mc{T}_n. 
 \end{equation}
The first moment is expectation for which $Q=\E$, $n=1$, and $f_E = \id_\R$. 
The second central moment is variance for which $Q=\Var$, and for each $(\mu,Z) \in \mc{T}_1$, 
 \begin{equation}\label{varDef}
 \Var_{\mu}(Z) = \E_{\mu}((Z - \E_{\mu}(Z))^2) = \E_{\mu}(Z^2) - \E_{\mu}^2(Z).
 \end{equation}
We have (\ref{qfq}) for $n=2$ and 
\begin{equation}\label{deffvar}
f_{Var}(x_1, x_2) = x_2 - x^2_1
\end{equation}
(note that we write $E$ and $Var$ instead of $\E$ and $\Var$ in the subscripts). 
In general, let $Q$ be some function whose domain $D_Q$ contains $\mc{T}_n$ and 
there exists a measurable function $f_Q:\R^n\rightarrow\R$ such that 
(\ref{qfq}) holds. 
This is the case e. g. for $Q$ equal to the $k$th moment or central moment, $k \leq n$, or 
arbitrary product or linear combination of such moments restricted to $\mc{T}_n$.  
Similarly as above for moments, for $\alpha = (\mu, Z) \in D_Q$, $Q(\alpha)$ is also denoted as $Q_{\mu}(Z)$ or simply $Q(Z)$ if $\mu = \PR$. 
Note that the function $f_Q$ for $Q$ as above is unique for $n=1$ while for $n\geq 2$
it is not 
since from $\sqrt{\E(X^2)}\geq\E(|X|) \geq \E(X)$ for $X \in L^2$, 
(see Theorem \ref{leqpq}) the value of $f_Q$ can be changed on some $x \in \R^n$ such that  $x_2< x_1^2$ with (\ref{qfq}) still being true. 
We denote $f_Q$ one of the possible choices of the required function for $Q$, taking for $Q=\Var$, $f_{Var}$ as in (\ref{deffvar}). 
If $Z \in L^n$ and $X$ is a random variable, then 
we define the function of the first $n$ conditional moments of $Z$ given $X$ and corresponding to $Q$ as 
 \begin{equation}\label{qxz}
 Q(Z|X) = f_Q((\E(Z^i|X))_{i=1}^{n}).
 \end{equation}
 In particular the function of conditional moments of $Z$ given $X$ and corresponding to $\Var$ is equal to 
\begin{equation}\label{defCondVar}
 \begin{split}
 \Var(Z|X) &= \E (Z^2|X) - \E^2(Z|X)\\
 &= \E((Z - \E(Z|X))^2|X), 
\end{split}
\end{equation}
 and we call it conditional variance of $Z$ given $X$. 
If $\mu_{Z|X}$ is conditional distribution of $Z$ given $X$ 
and $\phi(Z) \in L^1$ for some measurable function $\phi$, then 
from (\ref{emu}) and (\ref{condCond}) it follows that 
\begin{equation} 
\E(\phi(Z)|X) = \E_{\mu_{Z|X}(X, \cdot)}(\phi). 
\end{equation} 
Thus if $\alpha_i(Z) \in L^1$ for some measurable functions $\alpha_i$, $i \in I_n$, then for any function $\beta:\R^n \rightarrow \R$ 
we have a. s.
\begin{equation}\label{fezx}
\beta((\E(\alpha_i(Z)|X))_{i=1}^{n}) = \beta((\E_{\mu_{Z|X}(X,\cdot)}(\alpha_i))_{i=1}^n).
\end{equation}
In particular for $\phi(Z)\in L^n$, $\beta = f_Q$, and $\alpha_i = \phi^i$, $i \in I_n$, we receive from (\ref{qxz}) and (\ref{fezx}) that  
\begin{equation}\label{condFun}
Q(\phi(Z)|X)= Q_{\mu_{Z|X}(X,\cdot)}(\phi). 
\end{equation}
The formula (\ref{varmuzx}) from the end of Section \ref{secOrthog} 
is obtained for $\alpha_i = f_i$ for $i \in I_n$, $\alpha_{n+1} = |f|^2$, 
and $\beta((x_i)_{i=1}^{n+1}) = x_{n+1}^2 - |(x_i)_{i=1}^n|^2$, 
using (\ref{fezx}) and the last terms in (\ref{varGen}) and (\ref{condGen}). 
When $f(P,R) \in L^n$ for some $P$ and $R$ independent and $f$ measurable (from appropriate product measurable 
space to $\R$), like for stochastic outputs corresponding to certain constructions of MR in Section \ref{genParSec},
 we have an intuitive formula 
 \begin{equation} \label{CQKP}
 \begin{split}
 Q(f(P,R)|P) &= f_Q((\E(f^i(P,R)|P))_{i=1}^n)\\
 &= f_Q(((\E(f^i(p,R)))_{p=P})_{i=1}^n)\\
 &= (Q(f(p,R)))_{p=P},
 \end{split}
 \end{equation}
where in the second equality we used Theorem \ref{indepCond} from Appendix \ref{appMath},
and in the third the fact that from Fubini's theorem \cite{rudin1970}, $f(p,R) \in L^n$ for $\mu_P$ a. e. $p$.
Note that from (\ref{condFun}), for $Z \in L^1$, $\phi = \id_\R$, and when 
$\mu_{Z|X}(x,\cdot)$ exists and is uniquely determined for $\mu_X$ a. e. $x$,
 which holds for a large class of random variables $Z$ and $X$ (see Appendix \ref{appMath}), or from 
(\ref{CQKP}) when $Z=f(P,R)$ and $X=P$ for some $f$,  $P$, and $R$ as above, it follows that 
for different choices of the function $f_Q$ corresponding to $Q$ which we used to define $Q(Z|X)$ in (\ref{qxz}), 
the resulting $Q(Z|X)$ are a. s. equal. 

Let $Z = g(M) \in L^n$ be an output of an MR $M=(P,Y)$ with parameters 
$P = (P_i)_{i=1}^{N_P}$ and corresponding to a reaction network $RN$. 
Since conditional distribution of $M$ given $P$ is specified by Definition \ref{MRdef}, 
from formula (\ref{condFun}) it follows that distributions of functions of certain $n$ first conditional 
moments $Q(Z|P)$, like conditional variance, are determined by $RN$, $g$, and $\mu_P$. 
Therefore, if $Q(Z|P) \in L^1$, then the values of its mean 
\begin{equation}\label{defAveQ}
AveQ = \E(Q(Z|P))
\end{equation}
(which for $Q=E$ is equal to $Ave=\E(Z)$ by iterated expectation property),
and if $Q(Z|P) \in L^2$, also the values of the main sensitivity indices
\begin{equation} 
VQ_{P_J} = \Var(\E(Q(g(Y)|P)|P_J)) 
\end{equation} 
of these functions of conditional moments are determined by this data, and so are the total sensitivity indices 
\begin{equation} 
VQ_{P_J}^{tot} = VQ_{P} - VQ_{P_{\sim J}},
\end{equation} 
where $\sim J = I_{N_P} \setminus J$, $J \subset I_{N_P}$. 
The Sobol's main and total sensitivity indices, created by dividing the above indices by 
$VQ_P$, are denoted as $SQ_{P_J}$ and $SQ_{P_J}^{tot}$, respectively. 
Similarly as for the special case of $Q=\E$ in Section \ref{secVBSA},
for $J = \{i\}$, we usually write $i$ in place of $P_J$ in the above notations.

\section{Covariance and some properties of variance of random vectors}
Covariance of random vectors $U, Z \in L^2_n$ is defined as 
\begin{equation}\label{covGen}
\Cov(U,Z) = (U - \E(U),Z - \E(Z))_n. 
\end{equation}
Let for some $m \in \N_+$, $X_i \in L^2_n,$ $i \in I_m$. We have an easy to prove 
formula
\begin{equation}\label{sumcov}
\Var(\sum_{i=1}^m X_i) = \sum_{i=1}^{m}\Var(X_i) + 2\sum_{1\leq i<j \leq n}\Cov(X_i,X_j),
\end{equation}
which is well-known for $n=1$.
If  $X_i\in L^2_n$, $i \in I_m,$ are i. i. d., then from (\ref{scalarFun}) and (\ref{covGen}),
$\Cov(X_i,X_j)=0$, $i\neq j,$ so from  (\ref{sumcov}) we receive   
\begin{equation}\label{varAveVec}
\Var(\frac{1}{m}\sum_{i=1}^m X_i) = \frac{1}{m}\Var(X_1). 
\end{equation}


\section{Output approximations, correlations, and nonlinearity coefficients}\label{secAppr} 
Let us consider a set $\Phi = \{v_i\}_{i=1}^l$ of $l \in \N_{+}$ linearly independent elements of a Hilbert space $H$ 
with some scalar product $(,)$, inducing norm $||\cdot||$ and distance $d$. The linear subspace 
\begin{equation}
V = \text{span}(\Phi) = \{\sum_{i=1}^l a_iv_i: a_i \in \R, i \in I_l \} 
\end{equation}
is closed in $H$ (\cite{rudin1970} Section 4.15), and thus 
for each $x \in H$ there exists a unique element of $V$ minimizing the distance from $x$ - the orthogonal projection 
$P_V(x)$ of $x$ onto $V$ (see Appendix \ref{appHilb}). Denoting $y_i = (x,v_i)$
and $g_{ij} =(v_i,v_j)$, the coefficients $(b_i)_{i=1}^l$ such that 
\begin{equation}
P_V(x) = \sum_{i=1}^l b_i v_i,
\end{equation}
can be computed from the following set of equations \cite{rudin1970}
\begin{equation}\label{gijEqu} 
\{\sum_{j=1}^l g_{ij}b_j = y_i\}_{i=1}^l. 
\end{equation} 
In particular, if elements of $\Phi$ are orthonormal (see Appendix \ref{appHilb}), 
then from (\ref{gijEqu}) it holds $b_i = y_i$, $i \in I_l$. 
In such case $(b_i)_{i=1}^l$ are known as Fourier's coefficients \cite{rudin1970} of $x$ relative to the elements of 
$\Phi$, and distance between $x$ and  $P_V(x)$ fulfills 
\begin{equation}\label{dxp}
d(x, P_V(x))^2 = ||x||^2 - \sum_{i=1}^l {b_i^2}. 
\end{equation}

Let us consider the special case of $H= L^2_n$ with some scalar product $(,)_n$ corresponding to a 
scalar product $<,>$ on $\R^n$ as in Section \ref{secOrthog},
and let $e_j$, $j \in I_n$, be the elements of some orthonormal base of $\R^n$ with respect to $<,>$, e. g.  
for the standard scalar product we can take the standard base of $\R^n$. For some $k \in \N_+$, let $l=n+k$ and  
$\{v_i \in L^2_n: i \in I_{k+n}\}$ be a nonzero orthogonal set (see Appendix \ref{appHilb}) with $v_{k+i} = e_i$, $i \in I_n$. Then 
\begin{equation}\label{eequzero}
\E(v_i) = 0,\quad i \in I_k,  
\end{equation}
since $(v_i,e_j)_n = <\E(v_i),e_j> = 0$, $j \in I_n$.
We normalize $\Phi$ to get an orthonormal set 
$\Phi'= \{v_i'\}_{i=1}^{k+n}$, $v_i' = \frac{v_i}{\sigma(v_i)}$, $i \in I_k$, $v'_{k+i} = v_{k+i},i \in I_n$.  
Let $(b_i)_{i=1}^{k+n}$ denote the coefficients of
$P_V(x)$ as above relative to $\Phi$, and  $(c_i)_{i=1}^{k+n}$ relative 
to $\Phi'$. We have 
 \begin{equation}\label{hieqp}
c_i = (x,v_i')_n,\quad i \in I_l,
 \end{equation}
\begin{equation}\label{hieq}
b_i = \frac{(x,v_i)_n}{\Var(v_i)}= \frac{c_i}{\sigma(v_i)},\quad i \in I_k,
\end{equation}
and $b_{k+i} = c_{k+i}$, $i \in I_n$.
Let $U, Z \in L^2_n$ have nonzero variances. We define their correlation as 
\begin{equation}
\corr(U,Z) = \frac{\Cov(U,Z)}{\sigma(U)\sigma(Z)}.
\end{equation}
Correlation is a popular measure of strength of the linear relationship between $U$ and $Z$ for $n=1$, due to its properties 
which we discuss and prove below for arbitrary $n$. 
Using (\ref{eequzero}), we have 
$(x,v_i)_n = \Cov(x,v_i), i \in I_k,$
and thus from (\ref{hieqp}), $c_i = \Cov(x,v_i')$, $i \in I_k$. Furthermore, if $\Var(x) > 0$, then
\begin{equation}\label{correqu}
\corr(x,v_i) =\corr(x, v_i') = \frac{c_i}{\sigma(x)} = \frac{b_i\sigma(v_i)}{\sigma(x)},\quad i \in I_k.
\end{equation}
From discussion in Section \ref{secOrthog}, $\E(x)$ is orthogonal projection of $x$ onto span of constant random vectors, so that from 
Lemma \ref{lemP1P2} it easily follows that 
\begin{equation}
\sum_{i=1}^n b_{k+i}e_i = \E(x).
\end{equation}
Thus, from (\ref{dxp}) and (\ref{correqu}) we receive 
\begin{equation}\label{dcorr} 
d(x, c_iv_i' + \E(x))^2 =\Var(x) - c_i^2=  \Var(x)(1 - \corr(x,v_i)^2)\geq 0,\quad i \in I_k.
\end{equation}
In particular, for $U$ and $Z$ as above, taking $k=1$, $x=Z$, and $v_1 = U - \E(U)$, 
and using the fact that $\corr(Z,U) = \corr(x,v_1)$, we receive 
\begin{equation}
-1\leq \corr(Z,U)\leq 1,
\end{equation}
with equality in either of the above inequalities implying the linear relationship 
\begin{equation}
Z= b_1v_1 +\E(Z) = b_1U  - b_1\E(U) + \E(Z),
\end{equation}
with the sign of $b_1$ being due to (\ref{correqu}) the same as of the correlation. 
 In Section \ref{secApprCoeff} we discuss some general methods for estimating the coefficients in the above projection and correlations for 
 the case of $x=f(X)$  
 corresponding to different functions of conditional moments of functions of two independent variables given the first variable, like  
conditional variances of stochastic model outputs given the model parameters, 
and $v_i = \phi_i(X)$, $i \in I_k$, as above, being some functions of the first variable.    
However, in the numerical experiments and the discussion below 
we consider only the coefficients of orthogonal projection 
of $f(X)$ onto span of constant vectors and independent coordinates of $X$, 
which describe the linear part of the relationship of $f(X)$ and the coordinates.  
Let us assume that $Z= f(X) \in L^2_n$, $X \in L^2_N$, and $\Var(X_i) > 0$, $i \in I_N$. 
Elements of the set $\Phi = \{(X_i-\E(X_i))e_j\}_{i\in I_N, j\in I_n}$ are orthogonal, and for 
$X'= \left(\frac{X_i-\E(X_i)}{\sigma(X_i)}\right)_{i=1}^N$, elements of $\Phi' = \{X'_ie_j\}_{i\in I_N, j\in I_n}$ are 
orthonormal with respect to $(,)_n$.  Denoting by $W$ the space of constant $\R^n$-valued random vectors, 
we define  $V = \text{span}(\Phi\cup W) = \text{span}(\Phi' \cup W)$.  
Coefficients of the respective elements of $\Phi$ in the orthogonal projection of $Z$ onto $V$ (in $L^2_n$) fulfill
\begin{equation}
b_{i,j} = \frac{(Z,\Phi_{i,j})_n}{\Var(X_i)} = \frac{\Cov(Z, X_ie_j)}{\Var(X_i)}, 
\end{equation}
and for coefficients of elements of $\Phi'$ in this projection we have
\begin{equation}
c_{i,j} =  (Z,\Phi_{i,j}')_n = \frac{\Cov(Z, X_ie_j)}{\sigma(X_j)}.
\end{equation}
We denote for $i \in I_N$, $c_i = \sum_{j=1}^n c_{i,j}e_j$, for $J \subset I_N$, $c_J = (c_j)_{j\in J}$,  $c = c_{I_N}$, 
\begin{equation}
c_J^2 =  \sum_{i\in J, j \in I_n}c^2_{i,j},
\end{equation}
and analogously for coefficients $b_{i,j}$. 
For $J \subset I_N$, we define the space of functions of $X$ linear in $X_J$ to be 
$V_J = \overline{\text{span}}(L^2_{n,X_{\sim J}} \cup \{\Phi_{i,j}\}_{i\in J,j \in I_n})$,
so that $V_{I_N}=V$. One can easily verify that the orthogonal projection of $Z$ onto functions linear in $X_J$ 
is equal to 
\begin{equation}
P_{V_J}(Z) = \E(Z|X_{\sim J}) + c_{J}X'_J,
\end{equation}
where $c_{J}X'_J = \sum_{i \in J}c_{i}X'_i$. 
We define the nonlinearity coefficient of $Z$ in $X_J$ as  
\begin{equation}\label{DNJ}
\begin{split}
DN_J &= ||Z - P_{V_J}(Z)||_n^2\\
&= ||Z||_n^2 - ||\E(Z|X_{\sim J})||_n^2 - c_{J}^2 \\
&= V_{X_J}^{tot}- c_{J}^2.
\end{split}
\end{equation}
It holds $0 \leq DN_J \leq V_{X_J}^{tot}$, equality on the left meaning that $Z$ is linear in $X_J$ and on the right that $c_J^2 = 0$,
that is knowledge of the linear part of dependence of $Z$ on $X_J$ is of no help in approximating it. 
For $V_{X_J}^{tot}>0$ one can also consider the normalized nonlinearity coefficient 
\begin{equation}
dN_J = \frac{DN_J}{V_{X_J}^{tot}},
 \end{equation} 
which fulfills $0 \leq dN_J \leq 1$, and is equal to the ratio of squared errors of the 
best approximation of $Z$ with functions linear in $X_J$ and another one with 
functions of $X_{\sim J}$. The nonlinearity coefficient of $Z$ in $X$,  
\begin{equation}{\label{dnc}} 
DN = DN_{I_N} = \Var(Z^2) - c^2, 
\end{equation} 
is equal to the squared error of the best approximation of $f(X)$ in $V$, and 
\begin{equation} 
dN = \frac{DN}{\Var(f(X))} 
\end{equation} 
tells what its ratio is to the squared error of the best approximation of $f(X)$ using constant vectors. 
We call $dN$ the relative error of the best linear approximation of $f(X)$. 
We have focused on nonlinearity coefficients, because they appear directly in our estimates of probabilities of 
localizations of functions values changes due to perturbations of their independent arguments, discussed in the next section, 
but similarly one can define linearity coefficients, like such normalized coefficient
\begin{equation}
dL_J = 1 - dN_J = \frac{c_J^2}{V_{X_J}^{tot}}.
 \end{equation} 
Let us define, for $J \subset I_N$, $g_J$ to be a measurable function such that
\begin{equation} 
\begin{split} 
g_{J}(X) &= \sum_{K \subset I_N : K \cap J \neq \emptyset} f_K(X_K) \\ 
&= f(X) - \E(f(X)|X_{\sim J}) 
\end{split}
\end{equation} 
(for $J = \{j\}$ we simply write $j$ in the subscript), where we have used ANOVA decomposition (\ref{anovaDec}). It holds 
\begin{equation} 
||g_{J}(X)||^2_n = V_{X_J}^{tot}. 
\end{equation} 
If $V_i^{tot}\neq 0$ and $n=1$, then let us define a coefficient which we 
call linear correlation of $f(X)$ in $X_i$, and which 
could be used as a measure of strength of linearity of $f(X)$ in $X_i$,
\begin{equation} 
\corrL_i = \corr(g_{i}(X),X_i') = \frac{c_i}{\sqrt{V_i^{tot}}}. 
\end{equation}
We have $\corrL_i^2 = dL_J$ and  $-1 \leq\corrL_i \leq 1$, with either of the equalities in the inequalities meaning that 
$f(X)$ is linear in $X_i$ and it holds
\begin{equation}
f(X)=c_iX_i' + \E(f(X)|X_{\sim J})
\end{equation}
with the sign of $c_i$ being the same as of $\corrL_i$. 

\section{\label{secInterv}Interventions into systems with uncertain parameters} 
Let for some $N,n \in \N_+$ $X= (X_i)_{i=1}^N$ be an $\R^N$-valued random vector and $f$ be a measurable function from $\R^N$ to $\R^n$. 
The change of $f(X)$ due to a perturbation $\Delta \in \R^N$ of $X$ is defined as
\begin{equation}\label{hX} 
h(X)=f(X + \Delta) - f(X),
\end{equation} 
for any measurable function $h$ from $\R^N$ to $\R^n$ such that this equality holds. 
$X$ may be for instance 
uncertain parameters of some model and $f(X)$ can be some its output, like 
a vector of concentrations of some species at some moment of time for a deterministic chemical model, or 
vector of certain conditional moments of different particle numbers at a given time or their conditional histogram 
given the model parameters for a stochastic model. 
Perturbation $\Delta$ of the model parameters 
can imitate adding a given amount of some species to the chemical system, e. g. as a pharmaceutical intervention. 
When planning which uncertain parameters of a model to perturb to receive a desirable effect on the output  
it might be useful to know the 
probability that the change of output will belong to a given area , e. g. be 
positive or negative. We describe here a method for obtaining lower bounds on certain 
such probabilities for appropriate $f$ and $X$, using only total sensitivity indices and orthogonal projection coefficients.  

Let us assume that the coordinates of $Y\in L^2_N$ are independent 
and have uniform or uniform discrete distributions on $\R$, 
and let $B_Y$ be the support of $\mu_Y$ (see Appendix \ref{appMath}).
We assume that $f(Y) \in L^2_n$, random vector $X$ takes values in a 
measurable set $B_X\subset B_Y$ satisfying $\mu_Y(B_X)>0$, and for  
each measurable $D\subset \R^n$, 
\begin{equation}\label{muAssum} 
\mu_{X}(D) = \frac{\mu_Y(D\cap B_X)}{\mu_Y(B_X)}. 
\end{equation}
In particular if $B_X = B_Y$, then $\mu_{X} = \mu_{Y}$. 
For a measurable function 
$s$ such that $s(X)$ is integrable, one can easily prove 
that 
\begin{equation}
\frac{\E(\I_{Y\in B_X}s(Y))}{\mu_Y(B_X)} = \E(s(X)).
\end{equation}
If $s$ is further nonnegative, then we receive
\begin{equation}\label{geqmuy}
\frac{\E(s(Y))}{\mu_Y(B_X)} \geq \E(s(X)).
\end{equation}
For a perturbation $\Delta = (\Delta_i)_{i=1}^N \neq 0$, 
let $J = \{i \in I_N: \Delta_{i} \neq 0\}$ and $\Delta_J = (\Delta_i)_{i\in J}$. 
We denote $A = \{X + \Delta \in B_Y\}$, which is the event that the perturbed arguments are in $B_Y$. 
In particular, if $\Delta + B_X = \{x +\Delta:x\in B_X\} \subset B_Y$, then $\PR(A) = 1$. 
We further use notations introduced in the previous section, like coefficients $b_{i,j}$ and $c_{i,j}$, variables $Y'$,  sequence $\Phi'$, 
product $Y_Jb_J$, nonlinearity coefficient $DN_J$, $g_J(Y)$ etc. defined identically but with $X$ replaced by $Y$ in the definitions. 
Let
\begin{equation}
 \delta(x) = h(x) - b\Delta. 
\end{equation} 
We have 
\begin{equation} 
\I_A \delta(X) = \I_A(g_{J}(X + \Delta) - g_J(X) - b\Delta). 
\end{equation}
For a function 
\begin{equation}\label{axdef}
a(x) = g_J(x) - b_J(x_J - \E(Y_J)),
\end{equation}
it holds
\begin{equation}\label{diffA}
\I_A\delta(X) = \I_A(a(X + \Delta) - a(X)). 
\end{equation}
Using Lemma \ref{lemP1P2} it is easy to prove that $b_J(Y_J-\E(Y_J)) = c_JY'_J$ is an orthogonal projection of $g_J(Y)$ onto 
span($\{Y'_i e_j\}_{i \in J, j\in I_n}$) and $||c_JY'_J||^2_n = c_J^2$, so that from (\ref{dxp}) we have 
\begin{equation}\label{aY2}
||a(Y)||_n^2 = V_{Y_J}^{tot} - c^2_{J} = DN_J.
\end{equation}
We have the following easy generalization of Chebyshev's inequality \cite{billingsley1979}. 
\begin{lemma}\label{Chebyshev}
For  $Z \in L^2_n$, $\epsilon \in \R_{+}$, and each event $B$ it holds 
\begin{equation}
||\I_BZ||_n^2 = \E (\I_B|Z|^2) \geq \PR(B,|Z| \geq \epsilon)\epsilon^2.
\end{equation}
\end{lemma}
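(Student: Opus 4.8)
The plan is to adapt the classical proof of Chebyshev's inequality, exploiting that the $L^2_n$ norm is the expectation of the squared Euclidean-type norm $|\cdot|$ on $\R^n$ induced by $<,>$, so that the whole argument reduces to a Markov-type bound for the real-valued random variable $|Z|^2$.

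First I would verify the stated equality. From the definition (\ref{scalarFun}) of the scalar product $(,)_n$ together with $|W|^2 = <W,W>$, one has $||W||_n^2 = (W,W)_n = \E(|W|^2)$ for every $W \in L^2_n$. Since $|\I_B Z| \leq |Z|$, the vector $\I_B Z$ lies in $L^2_n$, and using bilinearity of $<,>$ and $\I_B^2 = \I_B$ we obtain $|\I_B Z|^2 = \I_B^2 |Z|^2 = \I_B |Z|^2$. Hence $||\I_B Z||_n^2 = \E(\I_B |Z|^2)$, which is the first claimed equality.

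For the inequality I would estimate the integrand pointwise. As $|Z|^2$ and all the indicators are nonnegative, restricting to the smaller event $B \cap \{|Z| \geq \epsilon\}$ and using that $|Z|^2 \geq \epsilon^2$ there gives
\begin{equation}
\I_B |Z|^2 \geq \I_B \I_{|Z| \geq \epsilon} |Z|^2 \geq \epsilon^2 \I_{B \cap \{|Z| \geq \epsilon\}}.
\end{equation}
Taking expectations and using monotonicity of the integral then yields $\E(\I_B |Z|^2) \geq \epsilon^2 \PR(B, |Z| \geq \epsilon)$, which combined with the equality above completes the proof.

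There is no substantial obstacle here: the only step requiring genuine attention is the identification $||\I_B Z||_n^2 = \E(\I_B |Z|^2)$, that is, recognizing that truncating $Z$ to the event $B$ inside the $L^2_n$ norm amounts to integrating the scalar quantity $|Z|^2$ over $B$. Once this reduction to the real-valued random variable $|Z|^2$ has been made, the remainder is exactly the standard Chebyshev/Markov estimate and involves only elementary monotonicity of the expectation.
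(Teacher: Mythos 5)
Your proof is correct, and it is exactly the argument the paper has in mind: the paper states this lemma without proof, calling it an easy generalization of Chebyshev's inequality, and your two steps (the identity $||\I_B Z||_n^2 = \E(\I_B|Z|^2)$ via $(\ref{scalarFun})$ and $\I_B^2=\I_B$, followed by the pointwise bound $\I_B|Z|^2 \geq \epsilon^2\I_{B\cap\{|Z|\geq\epsilon\}}$ and monotonicity of expectation) supply precisely the omitted details. Note also that your identification $|\I_B Z|^2 = \I_B|Z|^2$ correctly handles the general scalar product $<,>$ given by the coefficients $a_{ij}$, not just the standard one, so the argument is complete as stated.
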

Using it we obtain 
\begin{equation}\label{prAdelta} 
\PR(A,|\delta(X)| \geq \epsilon) \leq \frac{||\I_A\delta(X)||^2_n}{\epsilon^2}. 
\end{equation} 
Applying triangle inequality \cite{rudin1970} 
to (\ref{diffA}) we receive 
\begin{equation}\label{deltaXm} 
||\I_A\delta(X)||_n^2 \leq (||\I_A a(X)||_n + ||\I_A a(X + \Delta)||_n)^2. 
\end{equation} 
We estimate 
\begin{equation}\label{Aax} 
\begin{split} 
||\I_A a(X)||^2_n &\leq \E(|a(X)|^2) \leq \frac{\E(|a(Y)|^2)}{\mu_Y(B_X)} \\ 
&  = \frac{||a(Y)||^2_{n}}{\mu_{Y}(B_X)}, 
\end{split} 
\end{equation} 
where in the second inequality we used (\ref{geqmuy}). 
Furthermore, 
\begin{equation}\label{xdbaxd} 
\begin{split}
||\I_{X + \Delta \in B_Y} a(X + \Delta)||^2_n  &\leq \frac{\E(\I_{Y + \Delta \in B_Y}|a(Y + \Delta)|^2)}{\mu_Y(B_X)}\\ 
& \leq \frac{||a(Y)||^2_{n}}{\mu_{Y}(B_X)}, 
\end{split} 
\end{equation} 
where in the first inequality we used (\ref{geqmuy}) and in the last one the assumption of independence and uniform distributions 
of coordinates of $Y$. 
From (\ref{deltaXm}), (\ref{Aax}), and (\ref{xdbaxd}), we receive 
\begin{equation}\label{deltaXmFin} 
||\I_A\delta(X)||_n^2 \leq \frac{4||a(Y)||^2_n}{\mu_Y(B_X)} = \frac{4DN_J}{\mu_Y(B_X)}. 
\end{equation} 
For $p,\ r \in \R^n$, 
we define a ball with center $p$ and radius $r$ as 
\begin{equation}
B(p,r) =  \{x\in\R^N:\wt{d}(x,p)<r \}. 
\end{equation}
We have the following lower bound on the probability that the effect of perturbation lies in a ball 
with center $b_J\Delta_J$ and radius $\epsilon >0$
\begin{equation}\label{longhB}
\begin{split}
\PR(h(X) \in B(b_J\Delta_J,\epsilon)) &= \PR(|\delta(X)| < \epsilon) \leq \PR(A,|\delta(X)| < \epsilon) \\
&= \PR(A) - \PR(A, |\delta(X)| \geq \epsilon) \\
& \geq \PR(A) - \frac{4DN_{J}}{\mu_Y(B_X)\epsilon^2}, 
\end{split}
\end{equation} 
where in the last equality we used (\ref{prAdelta}) and (\ref{deltaXmFin}). In particular   
if $P(A)=1$ and $f(Y)$ is linear in $Y_J$, so that $DN_J=0$, then we receive $h(X) = b_J\Delta_J$, 
which also follows from the fact that in such case $f(Y) = b_JY_J + \E(f(Y)|Y_{\sim J})$. 
If $n = 1$ and $b\Delta$ is positive (negative), then the probability that the effect of perturbation $\Delta$ 
on the output is positive (negative) is bounded from below by 
\begin{equation}\label{probChange}
\PR(h(X) \in B(b\Delta,b\Delta)) \geq \PR(A) - \frac{4DN_J}{ \mu_Y(B_X)(b_J\Delta_J)^2}.
\end{equation}
We apply the above theory to the GTS model at the end of Section \ref{secGTS}. 

\section{\label{secStat}Statistics, Monte Carlo procedures, and inefficiency constants - some new definitions, generalizations 
and interpretations} 
If $\mathcal{P}$ consists of all probability distributions on $\R$ 
with finite $n$th moments for some $n \in \N_+$, then 
for $Q$ whose restriction to $\mc{T}_n$ is a function of the first $n$ so restricted moments as in Section \ref{secCondMoms}, 
we define estimand $G_Q$ on $\mc{P}$ to be such that for each $\mu \in \mc{P}$, 
\begin{equation}\label{defgqmu}
G_Q(\mu) = Q_\mu(\id_\R), 
\end{equation}
or equivalently $G_Q(\mu)=Q(X)$, $X \sim \mu$. 
In particular, for $Q= \E$ and $\Var$ we receive estimands  $G_E$ and $G_{Var}$ from Appendix \ref{appStatMC}.
Degree of an estimand $G$ is defined as the smallest $n \in \N_+$
for which there exists an unbiased estimator of $G$ in $n$ dimensions (see Appendix \ref{appStatMC}), assuming that for some $n$ such 
estimator exists \cite{lehmann1998theory, Halmos_1946}. 
In other words, it is the minimum value of $n$ for which there exists a measurable real-valued $\phi$ on $\mc{S}^n$ such that
 for each $\mu \in \mc{P}$ and $X_1, \ldots, X_n$ i. i. d., $X_1 \sim \mu$, it holds 
\begin{equation}
G(\mu) = \E(\phi(X_1, \ldots, X_n)). 
\end{equation}

It was proved in \cite{Halmos_1946} 
that for admissible distributions $\mathcal{P}$ on $\R$ containing all 
finite discrete distributions on $\{0,1\}$ (see Appendix \ref{appStatMC}) and possibly some other distributions 
with finite $n$th moments for some $n \in \N_+$, for $Q$ being 
the $n$th moment or central moment, $G_Q$ restricted to $\mathcal{P}$ has degree 
exactly $n$. 

Let $n \in \N_+$, and $G_i$ be an estimand for $\mc{P}$, $i \in I_n.$ 
Then we call $G=(G_i)_{i=1}^n$ an $n$-dimensional or if $n$ is left unspecified simply vector-valued estimand. If 
$\phi_i$ is an [unbiased] estimator of $G_i$, $i \in I_n,$ then we call 
$\phi=(\phi_i)_{i=1}^n$ an [unbiased] estimator of $G$, where the words in square brackets in a sentence 
should be either all read or omitted. 
Error of approximation of $G$ by its unbiased estimator $\phi$ for some $\mu \in \mc{P}$ can be quantified by 
$\Var_\mu(\phi)$ for some variance for random vectors as in Section \ref{secOrthog}. 
The suitable scalar product in the definition of such a variance can depend on the estimation 
problem at hand. In Section \ref{secApprErr} we shall discuss a problem for which the standard scalar product is a natural choice. 

Let us assume that similarly as for $m=1$ in Appendix \ref{appStatMC},
to estimate some $\lambda_1,\ldots,\lambda_m \in \R$ for some $m \in \N_+$
we carry out $n$-step MC procedures using the same variable $X\sim \mu$ and 
single-step MC estimators $\phi_i$ of $\lambda_i$ for $\mu$, $i \in I_m$. 
Then we say that these quantities are estimated in the same MC procedure. 
For $i \in I_m$, for the subprocedure estimating $\lambda_i$ we use notations 
analogous as in Appendix \ref{appStatMC} but with a subscript $i$, 
like $W_{i,j}$ for the $j$th observable of the $i$th single step estimator as well as 
$\overline{W}_i$ for the observable and $\wt{\lambda}_i$ for its observed value,
 $\Var_{f,i}$ for variance and $\sigma_{f,i}$ for the standard deviation of 
the $i$th final estimator $\phi_{f,i}$.
Then $\phi=(\phi_i)_{i=1}^m$ is called a single step MC estimator of $\lambda=(\lambda_i)_{i=1}^m$ for $\mu$, and 
$\phi_f = (\phi_{f,i})_{i=1}^m$ the final or $n$-step one. 
We define the variance $\Var_s$ of a single step MC estimator and such variance $\Var_f$ of the final MC estimator 
using the same formulas as for $m=1$ in Appendix \ref{appStatMC} but with 
$\Var$ symbol denoting some variance for random variables 
as in Section \ref{secOrthog}. Note that from (\ref{varAveVec}) we still have 
\begin{equation}\label{varfsn} 
\Var_{f} = \frac{\Var_s}{n}.
\end{equation} 
We can define inefficiency constants for sequences of MC procedures for estimating 
$\lambda \in \R^m$ identically as in Section \ref{secMCIneff} for $m=1$ and thanks to (\ref{varfsn})  they enjoy the same interpretation as
in this section - if we have $\delta$-approximate equality of average duration times of two MC procedures 
then the ratio of the final MC variances is $\delta$-approximately equal to the ratio of their inefficiency constants.
Let us notice two further interpretations of the inefficiency constants using notations
as in Section \ref{secMCIneff}. 
The first is that if we have $\delta$-approximate equality of variances of the final estimators of the MC procedures
\begin{equation}\label{vdeltav}
\Var_{f}(n) \approx_{\delta} \Var_{f}'(n'),
\end{equation}
then the ratio of their average durations is $\delta$-approximately the same as of the inefficiency constants
\begin{equation}\label{varTiRatio}
\frac{\tau_{f}(n)}{\tau_{f}'(n')} \approx_{\delta} \frac{c}{c'}.
\end{equation}
Secondly, consider the approach to estimating $\lambda$ using a sequence of MC procedures 
in which for some target accuracy threshold $\epsilon>0$, one carries out the MC procedure 
with the smallest number $n(\epsilon)$ of MC steps 
for which variance of the final MC estimator $\Var_f(n(\epsilon))$ is below $\epsilon$. 
In practice one usually does not know $\Var_f(n(\epsilon))$, but can approximate it using values of estimator (\ref{varEst}).
For $x \in \R$, let $\lceil x\rceil$ denote the smallest integer $l$ such that $x \leq l$. 
It holds $n(\epsilon) = \left\lceil \frac{\Var_s}{\epsilon} \right\rceil$
and $\Var_f = \frac{\Var_s}{n({\epsilon})}$, and similarly for the primed sequence. We have  
\begin{equation}
\frac{\Var_{f}(n(\epsilon))}{\Var_{f}'(n'(\epsilon))} = \frac{\Var_s\left\lceil\frac{\Var_s'}{\epsilon}\right\rceil}{{\Var_s'}\left\lceil \frac{\Var_s}{\epsilon} \right\rceil},
\end{equation}
which tends to one as $\epsilon$ goes to zero, and thus from (\ref{cntau}) the ratio $\frac{\tau_f(n(\epsilon))}{\tau_f'(n'(\epsilon))}$ 
of average durations of these procedures tends to $\frac{c}{c'}$. 

\section{Testing methodology}
We shall use what we call $k$-$\sigma$ test for each 
of the null hypotheses that for some $b,\lambda \in \R$, 
$\lambda = b$, $\lambda \geq b$, or $\lambda \leq b$, in which for $\wt{\lambda}$ denoting observed value of 
the final MC estimator and $\wt{\sigma}_f$ estimate of its standard deviation as in Appendix \ref{appStatMC}, one 
rejects the hypothesis if $|\wt{\lambda} -b| > k\wt{\sigma}_f$, $b-\wt{\lambda} > k\wt{\sigma}_f$, 
or $\wt{\lambda} -b > k\wt{\sigma}_f$, respectively. For sufficiently large $n$  
the significance level (upper bound on the probability of rejecting wrongly the hypothesis if it is correct) 
for such $k$-$\sigma$ test can be chosen arbitrarily close to  
$2(1 - \Phi(k))$ for the equality and $1 - \Phi(k)$ for the inequalities hypotheses for $\Phi(k)$ 
being the  cumulative distribution function of the standard normal distribution (see Appendix \ref{appStatMC}).
Such significance levels are called asymptotic. 
Let the coordinates of $\lambda \in \R^2$ be estimated in the same $n$-step MC procedure and   
let $\overline{W}_d = \overline{W}_1 - \overline{W}_2$ and $\sigma_d = \sigma(\overline{W}_d)$. 
From the inequality $\sigma(X + Y) \leq \sigma(X) + \sigma(Y)$ for $X,Y \in L^2$, which follows from triangle inequality \cite{rudin1970}, 
we have $\sigma_d \leq \sigma_{f,1} + \sigma_{f,2}$. Furthermore, from CLT applied to the sequence $W_{1,j} - W_{2,j}, j \in I_n,$ for $n$
 going 
to infinity $\sqrt{n}\overline{W}_d$ converges in distribution to $\ND(\lambda_1 - \lambda_2, \Var(W_{1,1} - W_{2,1}))$. 
Thus if for the estimand $\lambda_i$ we obtained a final MC estimate $\wt{\lambda}_i \pm \wt{\sigma}_{f,i}$, $i \in I_2$, 
one can use $k$-$\sigma$ test rejecting the hypothesis $\lambda_1 = \lambda_2$ 
if $|\wt{\lambda}_1 - \wt{\lambda}_2| > k(\wt{\sigma}_{f,1} + \wt{\sigma}_{f,2})$ 
or the hypothesis $\lambda_1 \geq \lambda_2$ if $\wt{\lambda}_2 - \wt{\lambda}_1 > k(\wt{\sigma}_{f,1} + \wt{\sigma}_{f,2})$, 
with the same asymptotic significance levels as for the equalities and inequalities hypotheses discussed above. 
For two independently run $n_i$-step MC procedures estimating $\lambda_i$ and with 
observables of the final MC estimators $\overline{W}_i$ with variances $\Var_{f,i}, i \in I_2$, 
from the Lindeberg CLT \cite{billingsley1979}, 
\begin{equation}
\frac{(\overline{W}_1 - \lambda_1) + (\overline{W}_2 -\lambda_2)}{\sqrt{\Var_{f,1} + \Var_{f,2}}}  
\end{equation}
converges in distribution to $\ND(0,1)$ for $n_1$ and $n_2$ going to infinity. Thus using analogous notations as above 
one can use $\sqrt{\wt{\sigma}_{f,1}^2 + \wt{\sigma}_{f,2}^2}$ instead of 
$\wt{\sigma}_{f,1} + \wt{\sigma}_{f,2}$ in the above tests with the same asymptotic 
significance levels for $n_1$ and $n_2$ going to infinity 
as above for the same $k$. 
We often make statements about the results of our numerical experiments 
like that the estimate $\wt{\lambda}_1 \pm \wt{\sigma}_{f,1}$ is (statistically significantly) greater 
than $\wt{\lambda}_2 \pm \wt{\sigma}_{f,2}$ 
by which we mean that the null hypothesis $\lambda_1 \leq \lambda_2$ can be rejected in a $k$-$\sigma$ test 
as above for some $k \geq 3$. 
\newline

\section{\label{secUnbiased}Generalization of estimands on pairs and their estimation schemes to many functions case} 
In this section we among others generalize the concepts from Section \ref{secSchemesPrev}, like of admissible pairs, estimands, 
statistics, estimators, and estimation schemes, so that they can be used for problems of estimation of certain quantities 
defined for several functions of different sequences of random arguments.
These concepts shall be used in their full generality in Section \ref{secApprCoeff} e. g. when dealing with
orthogonal projection coefficients onto orthogonal functions of the first variable of functions of conditional moments 
given the first variable, like conditional variance,  of functions of two independent random variables.
Unfortunately, giving only the number of distributions as before 
is not sufficient to specify the type of the more general admissible pairs 
we need so we introduce a helper definition of signature containing such specification.
\begin{defin}\label{defSign} 
We call $Sg=(N,k,J,\mc{H})$ a signature (of some admissible pairs) if 
$N, k \in \N_{+}$, sequence $J = (J_i)_{i=1}^k$ consists of nonempty subsets of $I_N$ such that 
\begin{equation}\label{inkji} 
I_N = \bigcup_{i=1}^kJ_i, 
\end{equation} 
and coordinates of $\mc{H} = (\mc{H}_i)_{i=1}^k$ are measurable spaces $\mc{H}_i = (C_i, \mc{C}_i)$, $i \in I_k$. 
\end{defin} 
\begin{defin}\label{defAPairs} 
We call $\mc{V}$ admissible pairs with signature $Sg$ as in Definition \ref{defSign} or 
equivalently admissible pairs of $N$ distributions and $k$ functions with values spaces $\mc{H}$ 
and sets of arguments' indices 
$J$ as in this definition if it is a nonempty class consisting of pairs $(\mu, f) = ((\mu_i)_{i=1}^N,(f_i)_{i=1}^k)$ 
such that $\mu_i$ is a probability measure, $i \in I_N$, 
and $f_i$ is a measurable function from  $\bigotimes_{i \in J_i}\mc{S}_{\mu_i}$ to $\mc{H}_i$, $i \in I_k$. 
\end{defin} 
We identify each one-element sequence $(x)$ with $x$ (see Appendix \ref{appMath}), so that 
for $N=1$ the first coordinate in each pair from $\mc{V}$ in the above definition is a measure,
while for $k=1$, its second coordinate is a function and from (\ref{inkji}) we have $J = I_N$. 
Thus, for $k=1$ and $\mc{H}= \mc{S}(\R)$, 
the above definition reduces to definition of admissible pairs with $N$ distributions from Section \ref{secSchemesPrev}. 
Note that the class $\mc{T}_n$ (see (\ref{defMCRn})) is an example of admissible pairs of single distributions and single 
real-valued functions. 
Similarly as in Section \ref{secSchemesPrev},
an estimand on admissible pairs $\mc{V}$ is any real-valued function on it.
For instance for $Q$ such that restricted to $\mc{T}_n$ it is a real-valued function of the first $n$ so restricted moments 
as in Section \ref{secCondMoms}, e. g. for the $n$th moment or central moment, $Q_{|\mc{T}_n}$ is an estimand on $\mc{T}_n$. 
We define estimand $PR$ on the admissible pairs $\mc{V}$ of single distributions and two real-valued functions consisting of all 
possible $\alpha = (\mu,(f_1,f_2))$ such that $f_1f_2 \in L^1(\mu)$, in which case $PR(\alpha) = \E_{\mu}(f_1f_2)$. 
We now describe and illustrate by example a method for obtaining vectors of estimands, which 
will be frequently used in Section \ref{secApprCoeff}. 
Let us consider a signature $Sg$ as in Definition \ref{defSign}, signatures $Sg' =(Sg'_i)_{i=1}^n$, 
such that $Sg_i' = (N,k_i,(\mc{H}_{i,j})_{j=1}^{k_i}, (J_{i,j})_{j=1}^{k_i}), i \in I_n,$ and 
$\psi=(\psi_i)_{i=1}^n$, where $\psi_i:I_{k_i} \rightarrow \N_+$ 
are $1$-$1$ functions, $i \in I_n$. We say that such 
$Sg$ is received from $Sg'$ using $\psi$ if $J_{\psi_i(j)}=J_{i,j}$, $\mc{H}_{\psi_i(j)}=H_{i,j}$, $j \in I_{k_i}$, $i \in I_{n}$, 
and $\bigcup_{i=1}^n\psi_i[I_{k_i}]=I_k$. 
Let $G'=(G'_i)_{i=1}^n$ be such that $G'_i$ is an estimand on admissible pairs $\mc{V}'_i$ with signature $Sg'_i$, $i \in I_n,$ 
and $Sg$ be received from $Sg'$ using some $\psi$ as above. 
We say that $G=(G_i)_{i=1}^n$ are trivial extensions of $G'$ using $\psi$ if 
for each $i \in I_n$, $G_i$ is an estimand on pairs $\mc{V}_i$ with signature $Sg$ and consisting of 
all possible $(\mu,f) = ((\mu_j)_{j=1}^N,(f_i)_{i=1}^k)$ 
such that for some $\beta =(\mu, (g_1, \ldots, g_{k_i})) \in  \mc{V}_i'$, 
it holds  $f_{\psi(j)} = g_j, j \in I_{k_i}$, in which case $G_i(\alpha) = G_i'(\beta)$. 
As an example of the above construction 
we define estimands $PR^n = (PR_i)_{i=1}^n$ (identifying $PR^1$ with $PR$) to be 
trivial extensions of $(PR)_{i=1}^n$ using $\psi$ such that $\psi_i(1)=i$ and $\psi_i(2)=n+1$, $i \in I_n$. The resulting 
$PR^n$ are estimands on common admissible pairs (defined as at the beginning of
 Section \ref{secIneffSchemes}) consisting of 
$\alpha= (\mu,(f_i)_{i=1}^{n+1})$ such that  $(\mu,(f_i,f_{n+1}))\in D_{PR}$, $i \in I_n$, for which   
$PR^n_i(\alpha) = \E_\mu(f_if_{n+1})$, $i \in I_n$. 
Note that if $f_{n+1}\in L^2(\mu)$ and  $\Phi = \{f_i\}_{i=1}^n$ is an orthonormal set in $L^2(\mu)$, then for $\alpha$ as above,
$PR^n_i(\alpha)$ is the coefficient of $f_i$ in the orthogonal projection of $f_{n+1}$ onto span$(\Phi)$, $i \in I_n$. 

For $N \in \N_+$, 
let us consider a nonempty finite set $K \subset I_N \times \N_+$, 
called arguments' indices for $N$. For a sequence of measurable spaces $\mc{S} = (\mc{S}_i)_{i=1}^N$, we define 
$\mc{S}^K = \bigotimes_{(i,j)\in K} \mc{S}_i$, 
 of sets $B = (B_i)_{i=1}^N$, 
 $B^K = \prod_{(i,j)\in K} B_i$, 
and of probability distributions $\mu = (\mu_i)_{i=1}^N$, 
$\mu^K = \bigotimes_{(i,j)\in K}\mu_i$. 
Note that $\wt{X} \sim \mu^K$ means that $\wt{X} = (\wt{X}_{i,j})_{(i,j)\in K}$, random variables 
$\wt{X}_{i,j} \sim \mu_i,$ $(i,j) \in K$ being independent. 
Let $v \in \N_+^N$. We define $K_v = \{(i,j): i \in I_N, j \in I_{v_i}\}$. We identify sequences 
$((x_{i,j})_{j=1}^{v_i})_{i=1}^N$ and $(x_{\beta})_{\beta \in K_v}$. In particular for $K=K_v$, 
$\wt{X}$ as above is identified with  $((\wt{X}_{i,j})_{j=1}^{v_i})_{i=1}^N$, while for $B$, $\mc{S}$, and $\mu$ as above,
 $B^{K}$ is identified with $B^v$, $\mc{S}^{K}$ with $\mc{S}^v$ , and $\mu^{K_v}$ with $\mu^v$,
defined in Section \ref{secSchemesPrev}.  
Let $\mc{V}$ be some admissible pairs as in Definition \ref{defAPairs} and $K$ be arguments' indices for $N$. 
Sets $\mc{V}_1$ and $\mc{V}_2$ are defined analogously as in Section \ref{secSchemesPrev}.
For a measurable space $\mc{H}$, a $\mc{H}$-valued 
statistic $\phi$ for $\mc{V}$ with (arguments) indices $K$ 
is a function on $\mc{V}_2$ such that for each $(\mu, f) \in \mc{V}$, $\phi(f)$ 
is a measurable function from $\mc{S}_{\mu}^K$ to $\mc{H}$. 
For $k=1$, $K=K_v$ for some $v$, and $\mc{H}=\mc{S}(\R)$ 
this coincides with the definition of statistic for $\mc{V}$ with dimensions of arguments $v$
 from Section \ref{secSchemesPrev}. 
Analogously as in Section \ref{secSchemesPrev},
for a real-valued statistic $\phi$ for $\mc{V}$ with indices $K$, and some $Q$ as in Section \ref{secCondMoms} 
like variance $\Var$ or expectation $\E$, we denote for $\alpha=(\mu,f) \in \mc{V}$, 
\begin{equation}\label{Qmuf} 
Q_{\alpha}(\phi) =  Q_{\mu^{K}}(\phi(f)), 
\end{equation} 
whenever the expression on the right makes sense. If $\phi$ is an $\R^n$-valued statistic for $\mc{V}$ with indices $K$
then we shall also use notation (\ref{Qmuf}) for $Q=\E$ when $\phi(f) \in L^1_n(\mu^{K})$ or for 
$Q=\Var$ for some variance for random vectors as in Section \ref{secOrthog} and $\phi(f) \in L^2_n(\mu^{K})$. 
Let $G$ be an estimand on $\mc{V}$. 
We call any real-valued statistic $\phi$ for $\mc{V}$ with indices $K$ estimator of $G$ if 
for each $\alpha = (\mu,f)\in \mc{V}$, we consider values of 
$\phi(f)(X)$ for each $X \sim \mu^K$ to be certain approximations of $G(\mu,f)$, and 
analogously as in Section \ref{secSchemesPrev} such $\phi$ is further called unbiased if 
\begin{equation}\label{genUnb} 
\E_{\alpha}(\phi) = G(\alpha), \quad \alpha \in \mc{V}.
\end{equation}
We shall now introduce a number of notations needed to define estimation schemes for the above 
estimands.
Let us consider some signature $Sg$ as in Definition \ref{defAPairs}. A sequence of finite sets 
$A =(A_i)_{i=1}^k$ such that $A_i \subset \N_{+}^{J_i}$, $i \in I_k$, and at least one of these sets 
is nonempty is called sets of evaluation vectors. For $k =1$this reduces to evaluation vectors for $N$
from Section \ref{secSchemesPrev}.
We define the arguments' indices of $A$ as
\begin{equation}\label{pal} 
\begin{split} 
p_A &= \{(i,j) \in I_N\times\N_+: \text{ for some } l \in I_k \text{ such that } i \in J_l,\\ 
 &\text{ there exists } v \in A_l \text{ such that } v_i = j \}. 
\end{split} 
\end{equation} 
Let $\mc{V}$ be admissible pairs with signature $Sg$. 
For each $i \in I_k$ and $v \in A_i$, we define evaluation operator  
$g_{\mc{V},A,i,v}$ to be a $\mc{H}_i$-valued statistic for $\mc{V}$ with indices $p_A$ 
such that for each $(\mu,f) \in \mc{V}$ and $x \in B_\mu^{p_{A}}$, it holds 
\begin{equation}
g_{\mc{V},A,i,v}(f)(x) = f_i(x_v), 
\end{equation}
where 
\begin{equation}\label{xveq} 
x_v = (x_{l,v_l})_{l\in J_i}.  
\end{equation}  
For $k=1$ we omit subscript $i$ in the above or below alternative notations for evaluation operators, so that 
if further $p_A = K_w$ for some $w \in \N_+^N$ and $\mc{H}=\mc{S}(\R)$, 
the new $g_{\mc{V},A,v}$ coincides with the definition of evaluation
operator from Section \ref{secSchemesPrev}. Similarly as in Section \ref{secSchemesPrev},  
$\mc{V}$ and $A$ in the subscripts are omitted when known from the context.
If for some $l \in \N_+, l \leq N$, it holds $J_i = I_l$, then we use a C-array-like notation  
\begin{equation}\label{clikeg} 
g_i[v_1-1]\ldots[v_l-1] = g_{i,v}, 
\end{equation} 
while for $J_i = \{i\}$ we use notation 
\begin{equation}\label{ridef} 
r_i[v_i-1] = g_{i,v}. 
\end{equation}
For each $i \in I_k$ for which $A_i$ is nonempty, we define the following $\mc{H}_i^{|A_i|}$-valued statistic for $\mc{V}$ with indices $p_A$, 
$g_{\mc{V},A,i} = (g_{\mc{V},A,i,v})_{|v \in A_i}$ (see \ref{ordnot}). 
Let 
$\delta(A) = |\{i\in I_k: A_i \neq \emptyset\}|$, that is the number of nonempty coordinates of $A$, 
and for each $i \in I_{\delta(A)}$, let
$\gamma_A(i)$ be the index of the $i$th nonempty coordinate of $A$. 
Let 
\begin{equation}
\mc{H}_A = (C_A, \mc{C}_A) =\bigotimes_{i=1}^{\delta(A)}\mc{H}_{\gamma_A(i)}^{|A_{\gamma_A(i)}|}. 
\end{equation} 
We define the following $\mc{H}_A$-valued statistic for $\mc{V}$ with indices $p_A$, 
$g_{\mc{V},A}=(g_{\mc{V},A,\gamma_A(i)})_{i=1}^{\delta(A)}$. 
For a signature $Sg$, let $A$
be sets of evaluation vectors for $Sg$ and $t$ be a measurable real-valued function on $\mc{H}_A$.
Let $\kappa=(t, A)$, which we call a scheme for $Sg$. 
This coincides with the previous definition of a scheme for $k=1$ and $\mc{H}= \mc{S}(\R)$.
We define arguments' indices of $\kappa$ as $p_\kappa = p_A$. 
The statistic $\phi_{\kappa,\mc{V}}$ given by $\kappa$ and $\mc{V}$
is defined using the same formula (\ref{phiAF}) as in Section \ref{secSchemesPrev}.
Let $G$ be an estimand on $\mc{V}$. 
Similarly as in Section \ref{secSchemesPrev} $\kappa$ is called an [unbiased] (estimation) scheme for $G$  
if $\phi_{\kappa,\mc{V}}$ is an [unbiased] estimator of $G$. 

Let now for some $n \in \N_+$, $\kappa = (\kappa_i)_{i=1}^n=(t_i,A_i)_{i=1}^n$ be a sequence of schemes for $Sg$, called 
an ($n$-dimensional) scheme for $Sg$. We define the vector of sets of evaluation vectors of $\kappa$ as  
\begin{equation}\label{Akappa}
A_\kappa = (\bigcup_{i=1}^n A_{i,j})_{j=1}^k. 
\end{equation}
Let $N \in \N_+$, $K$  be arguments' indices for $N$, $L \subset K$, $L\neq \emptyset$, and 
$B=(B_1,\ldots,B_N)$ be a sequence of nonepmty sets. For $x \in B^K$, we define 
\begin{equation}\label{xL}
x_L = (x_\beta)_{\beta \in L}, 
\end{equation}
while for $x \in B^K$ and $L=\emptyset$, we define $x_L=\emptyset$. 
We also define arguments' indices $p_\kappa$ of $\kappa$ to be equal to $p_{A_\kappa}$ defined as in (\ref{pal}). 
A statistic given by $\kappa$ and $\mc{V}$, denoted as $\phi_{\kappa,\mc{V}}$, 
 is defined as an $\R^n$-valued statistic for $\mc{V}$ with indices $p_{A_\kappa}$
such that for each $(\mu,f) \in \mc{V}$ and $x \in B_\mu^{p_{A_\kappa}}$
\begin{equation} \label{phikappamany}
\phi_{\kappa,\mc{V}}(f)(x) = (\phi_{\kappa_i,\mc{V}}(f)(x_{p_{\kappa_i}}))_{i=1}^n, 
\end{equation} 
which for $n=1$ coincides with the previous definition. 
Let $G = (G_i)_{i=1}^n$ be a sequence of estimands, each on some (possibly different) admissible pairs but all with the same 
signature $Sg$.  
Let us assume that $\kappa_i$ is an [unbiased] estimation scheme for $G_i$, $i \in I_n$, 
in which case we call the above $\kappa$ an [unbiased] estimation scheme for $G$. 
Similarly as in Section \ref{secSchemesPrev}
we denote $\wh{G}_{\kappa,i}=\phi_{\kappa_i,D_{G_i}}$, $i \in I_n$, and use for it 
analogous shorthand notations in that section in analogous situations.

Let us now move on to examples. For an estimand $Q_{|\mc{T}_n}$ as above, if there exists an estimator $\phi_{Q}$ of $G_Q$ 
in $m$ dimensions (see Section \ref{secStat}), 
then an unbiased estimation scheme $SQR=(t,A)$ for $Q_{|\mc{T}_n}$ is given by $t = \phi_Q$ and $A = I_m$. Using 
notation (\ref{clikeg}), the estimator of this scheme can be written as 
\begin{equation}\label{gqsq} 
\wh{G}_{Q,SQR}= \phi_Q((g[i])_{i=0}^{m-1}). 
\end{equation}
The fact that this estimator is unbiased follows from the fact that for each $(\mu,f) \in \mc{T}_n$, 
\begin{equation}
\begin{split}
\E_{\mu^m}(\wh{G}_{Q,SQR}(f))&= \E_{(\mu f^{-1})^m}(\phi_Q) = G_Q(\mu f^{-1}) \\
&= Q(\mu f^{-1},\id_\R) = Q(\mu,f), \\
\end{split}
\end{equation}
where in the first equality we used the change of variable Theorem \ref{thchvar}, 
in the second and third the definitions of $\phi_Q$ and $G_Q$ (see (\ref{defgqmu})), respectively, and in the last 
(\ref{qz}) and again Theorem \ref{thchvar}. 
An unbiased estimation scheme $SPR=(t,(A_i)_{i=1}^2)$ for $PR$ is given by $A_{1} = A_{2}=\{1\}$ and 
$t(x_1,x_2) = x_1x_2$, so that, using notation (\ref{clikeg}), its estimator can be written as 
\begin{equation}\label{defspr} 
\wh{PR}_{SPR} = g_1[0]g_{2}[0], 
\end{equation} 
and for each $(\mu, f) \in \mc{V}$ and  $X \sim \mu$, it holds 
\begin{equation} 
\wh{PR}_{SPR}(f)(X) = f_1(X)f_{2}(X). 
\end{equation} 
If $Sg$ is received from $Sg'$ using $\psi$ as above and we are given 
schemes $\kappa' = (\kappa'_i)_{i=1}^n$ such that 
$\kappa'_i$ is a scheme for $Sg'_i$, $i \in I_n$, then trivial extensions of $\kappa'$ 
using $\psi$ are defined as a scheme $\kappa = (\kappa_i)_{i=1}^n$ for $Sg$ 
such that for each $i \in I_n$, $t_{\kappa_i} = t_{\kappa'_i}$ and for $j \in I_k$, if $j \in \psi_i[I_{k_i}]$, then 
$A_{\kappa_i,j} = A_{\kappa'_i,\psi_i^{-1}(j)}$, and otherwise  $A_{\kappa_i,j} = \emptyset$. It is easy to check that 
if $\kappa'_i$ is an unbiased scheme for estimation $G'_i$, $i \in I_n$, as above, 
and $G$ are trivial extensions of $G'$ using $\psi$, then $\kappa$ is an 
unbiased estimation scheme for $G$. An unbiased estimation scheme $SPR^n$ for $PR^n$ is defined as trivial extensions of 
$(SPR)_{i=1}^n$ using the same $\psi$ as when extending $(PR)_{i=1}^n$ to $PR^n$. 
With the help of notation (\ref{clikeg}), estimator of its $i$th subscheme can be written as 
\begin{equation}\label{E1g} 
\wh{PR}^n_{i,SPR^n} = g_i[0]g_{n+1}[0]. 
\end{equation} 
We shall use formulas for estimators like (\ref{gqsq}) and (\ref{E1g}) to define previously undefined schemes
analogously as in Section \ref{secSchemesPrev}.

\section{\label{secineffgen}Generalization of the inefficiency constants of schemes}
Let us make some generalizations of the definitions of inefficiency constants of schemes 
from Section \ref{secIneffSchemes} so that they can be used for the more general schemes from the previous section 
and for quantifying the inefficiency of estimation of several estimands in the same sequence
of Monte Carlo procedures using a given scheme.
If $\kappa=(\kappa_i)_{i=1}^n= (t_i,A_i)_{i=1}^n$ is an estimation scheme for 
estimands $G=(G_i)_{i=1}^n$ on some common admissible pairs $\mc{V}$ as in Definition \ref{defAPairs}, then
$\kappa$ can be used to generate estimates of coordinates of $G(\alpha)$ for some $\alpha =(\mu,f) \in \mc{V}$ as follows. 
For a $X \sim \mu^{p_\kappa}$, one 
computes the quantities $g_{\mc{V},A_i,j,v}(f)(\wt{X}_{p_{A_i}})=f_j(X_v)$, $i \in I_n$, $j \in I_k$, $v \in A_{i,j}$, 
bearing in mind that they are equal for the same $j$ and $v$ and different $i$, so that they are computed only once, and one 
evaluates $t_i$ on $g_{\mc{V},A_i}(f)(\wt{X}_{p_{A_i}})$ to obtain an estimate of $G_i(\alpha)$, $i \in I_n$. 
Note that this time, for each $i \in I_k$, $|A_{\kappa,i}|$ (see (\ref{Akappa})) 
is the number of all evaluations of $f_i$ made in such a computation. 
Let further $\kappa$ be unbiased for estimation of $G$ and $\Var_{\alpha}(\phi_{\kappa_i,\mc{V}})<\infty$, $i \in I_n$. Then
we can use the above estimate of $G(\alpha)$ in a single step of a MC procedure. 
Let $J \subset I_n$ be nonempty. We define subvector of $G$ consisting of its estimands with indices in $J$, 
as $G_J=(G_j)_{|j\in J}$ and an analogous subvector of $\kappa$ as $\kappa_J=(\kappa_j)_{|j \in J}$.  
Note that from (\ref{phikappamany}) and discussion below (\ref{Qmuf}), quantity 
$\Var_{\alpha}(\phi_{\kappa_J,\mc{V}}) \in \overline{\R}$ is well-defined for $|J|=1$ for all $\alpha \in \mc{V}$, 
while for $|J|>1$, for which symbol $\Var$ in this quantity is some variance for random vectors as in Section 
\ref{secOrthog}, it is well-defined only for $\alpha \in \mc{V}$ for which  $\Var_{\alpha}(\phi_{\kappa_i,\mc{V}})<\infty$, 
$i \in J$. 
We define an inefficiency constant $d_{G,J,i,\kappa}$ of $\kappa$ with respect to the
$i$th function for estimating the subvector of $G$ with indices in $J$ 
to be an $\overline{\R}$-valued function defined for each $\alpha \in \mc{V}$ for which $\Var_{\alpha}(\phi_{\kappa_J,\mc{V}})$ 
is well-defined, in which case it is given by formula
\begin{equation}\label{dIneffgen} 
d_{G,J,i,\kappa}(\alpha) = \Var_{\alpha}(\phi_{\kappa_J,\mc{V}})|A_{\kappa,i}|. 
\end{equation} 
This is an extension of the definition from Section \ref{secIneffSchemes}
which coincides with the above one for $k = 1$ and $|J|=1$. When $|J|=1$ and the index $i$ of the function
 is known from the context and omitted in the subscript, we shall use the same simplified notations 
as in Section \ref{secIneffSchemes}. 
The above defined inefficiency constants have analogous interpretation as the less general 
ones in Section \ref{secIneffSchemes}. However, using notations as in this section, 
one now needs to assume that for estimands $G$ and $G'$ it holds $(G_{J})_j(\alpha)=(G_{J'}')_{j}(\alpha'),$ $j=1,\ldots,|J|$, and that 
the ratio of positive average durations $\tau_{s}$ to $\tau'_{s}$ of single steps of sequences of MC procedures 
using $\kappa$ and $\kappa'$, computing $G(\alpha)$ and $G'(\alpha')$ fulfills
\begin{equation}\label{tauAratiogen} 
\frac{\tau_s}{\tau'_s} \approx_\delta \frac{|A_{\kappa,i}|}{|A_{\kappa',i'}|}, 
\end{equation} 
which can be the case for small $\delta$ e. g. when the most time-consuming part of both sequences of MC procedures 
are computations of only the $i$th and $i'$th functions.
Similarly as in Section \ref{secSchemesPrev} in our numerical experiments these functions will 
be constructions of outputs of MRs.
Then we receive that the ratio of inefficiency constant $c = \Var_{\alpha}(\phi_{\kappa_J,\mc{V}})\tau_s$
for estimation of $G_J(\alpha)$ (see Section \ref{secStat}) using $\kappa_J$ to an analogous constant 
for the primed procedure, fulfills
\begin{equation} 
\frac{c}{c'} \approx_{\delta} \frac{d_{G,J,i,\kappa}(\alpha)}{d_{G,J',i',\kappa'}(\alpha')}. 
\end{equation}

Similarly as for the inefficiency constants of sequences of MC procedures in Section \ref{secStat}, the ratio of positive
real values of 
inefficiency constants (\ref{dIneffgen}) of $\kappa$ and $\kappa'$ for 
estimating the subvectors of $G(\alpha)$ and $G'(\alpha')$ with indices $J$ and $J'$ as as above, 
is $\delta$-approximately equal to the ratio of variances of the appropriate final MC estimators  
for $\delta$-approximately the same number of $i$th and $i'$th functions evaluations made in the respective MC procedures 
or 
to the ratio of the 
number of these functions evaluations in the MC procedures for $\delta$-approximately equal variances of the final MC estimators, 
and it is also equal to the limit of ratios of minimum numbers of respective functions evaluations needed for the variances 
of the final MC estimators to be below $\epsilon$ for $\epsilon$ tending to zero.

\section{The possibility of a better performance of translation-invariant estimators}
In this section we provide certain criteria for verifying that some estimators of estimands on pairs which are in a sense invariant 
under translations can 
in some situations significantly outperform their certain counterparts without this property. 
Let $\mc{V}$ be some admissible pairs as in Definition \ref{defAPairs} 
such that $\mc{H}_i = \mc{S}(\R)$ for some $i \in I_k$. For $f=(f_j)_{j=1}^k \in \mc{V}_2$ and $c\in \R$, 
we denote $\tr_{i}(f,c)=(f_1,\ldots f_i+c,\ldots,f_k)$.
\begin{defin}\label{definvestimand} 
We say that an estimand $G$ on $\mc{V}$ 
is translation-invariant in the $i$th function (or simply translation-invariant if $k=1$), 
if for each $(\mu,f) \in \mc{V}$ and $c \in \R$ such that 
$(\mu,\tr_{i}(f,c)) \in \mc{V}$, it holds $G(\mu,f) = G(\mu,\tr_{i}(f,c))$. 
\end{defin} 
\begin{lemma}\label{invLem} 
For an estimand $G$ on $\mc{V}$, translation-invariant in the $i$th function, 
suppose that there exists $\alpha = (\mu,f) \in \mc{V}$ and a real sequence 
$(c_l)_{l=1}^\infty$, $\lim_{l \to \infty}|c_l| = \infty$ such that 
for each  $l \in \N_+$, $(\mu,\tr_{i}(f,c_l)) \in \mc{V}$. Suppose further that 
for some unbiased estimator $\phi$ of $G$ with indices $K$ and each $\wt{X} \sim \mu^K$, there 
exist $n \in \N_+$ and $Z_j \in L^2$, $j = 0,\ldots, n$, where $\E(Z_n^2)>0$,
such that 
for each $l \in \N_+$, 
\begin{equation} 
 R(c_l) = \phi(\tr_{i}(f,c_l))(\wt{X}) =\sum_{j=0}^n c^j_l Z_j 
\end{equation} 
a. s. Then 
\begin{equation}\label{limnVar} 
\lim_{l \to\infty} \Var_{\mu,\tr_{i}(f,c_l)}(\phi) = \infty. 
\end{equation} 
\end{lemma} 
\begin{proof} 
For certain random variables  $W_1, \ldots, W_{2n-1} \in L^2$, it holds a. s. 
\begin{equation} 
R^2(c_l) =  c^{2n}_l Z_n^2 + \sum_{j=0}^{2n -1} c^j_lW_j. 
\end{equation} 
Thus, from $\E(Z_n^2)>0$, we receive  
\begin{equation} 
\lim_{l \to \infty}\E(R(c_l)^2) = \infty 
\end{equation} 
and (\ref{limnVar}) follows from the fact that  
\begin{equation} 
\Var_{\mu,\tr_{i}(f,c_l)}(\phi) = \E(R^2(c_l)) - G^2(\alpha). 
\end{equation} 
\end{proof} 
In all situations in which we use the above lemma its assumptions are satisfied for each 
unbounded real sequence $(c_l)_{l=1}^{\infty}$, so we further only specify the required $\alpha$. 
\begin{defin}\label{definvestimator} 
A statistic for $\mc{V}$ with indices $K$ 
is translation-invariant in the $i$th function or simply translation-invariant 
if $k=1$, if for each $(\mu,f) \in \mc{V}$ and $c \in \R$ 
such that $(\mu,\tr_{i}(f,c)) \in \mc{V}$, and each $\wt{X} \sim \mu^K$, it holds 
\begin{equation}\label{thesame} 
\phi(f)(\wt{X}) = \phi(\tr_{i}(f,c))(\wt{X}). 
\end{equation} 
\end{defin} 
Note that if an unbiased estimator of an estimand $G$ is translation-invariant in the $i$th function, then $G$ must also be 
translation-invariant in this function. 
\begin{theorem}\label{thtransl} 
Let $G$ be an estimand on $\mc{V}$.
Let $\phi$ be an unbiased estimator of $G$, translation-invariant in the $i$th function,
and let the unbiased estimator $\phi'$ of $G$ 
satisfy the assumptions of Lemma \ref{invLem}. Then for each $(c_l)_{l=1}^{\infty}$ and $\alpha = (\mu,f)$  as
in this lemma for which further $\Var_{\alpha}(\phi)$ is finite, for each $B > 0$, there exists $n\in \N_+$ such that
for $\alpha = (\mu,\tr_{i}(f,c))$, 
\begin{equation}\label{isworse} 
\Var_{\alpha_n}(\phi') > \Var_{\alpha_n}(\phi) + B. 
\end{equation} 
In particular, both the difference and ratio of variances of $\phi'$ and $\phi$ can be arbitrarily large. 
\end{theorem}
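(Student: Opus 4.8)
The plan is to exploit the sharp contrast in how the two estimators respond to translating the $i$th function: the variance of $\phi$ is frozen by its translation-invariance, while the variance of $\phi'$ is driven to infinity by Lemma \ref{invLem}. First I would fix the element $\alpha=(\mu,f)\in\mc{V}$ and the unbounded real sequence $(c_l)_{l=1}^\infty$ supplied by the hypothesis that $\phi'$ satisfies the assumptions of Lemma \ref{invLem}, and set $\alpha_n=(\mu,\tr_i(f,c_n))$. Note that along this sequence the measure component $\mu$ (and hence the law $\mu^K$ governing $\wt{X}$) never changes; only the function is translated. This observation is the linchpin of the whole argument.

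Next I would dispose of $\phi$. Since $\phi$ is translation-invariant in the $i$th function, Definition \ref{definvestimator} gives, for every $n$ and every $\wt{X}\sim\mu^K$, the pointwise identity $\phi(\tr_i(f,c_n))(\wt{X})=\phi(f)(\wt{X})$. Because the statistics $\phi(\tr_i(f,c_n))$ and $\phi(f)$ thus agree as measurable functions and are evaluated under the same distribution $\mu^K$, their variances coincide, so that $\Var_{\alpha_n}(\phi)=\Var_{\alpha}(\phi)$ for all $n$. By assumption this common value is a finite constant, independent of $n$.

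For $\phi'$ I would simply invoke Lemma \ref{invLem} as a black box: its conclusion is exactly $\lim_{n\to\infty}\Var_{\mu,\tr_i(f,c_n)}(\phi')=\infty$, i.e. $\Var_{\alpha_n}(\phi')\to\infty$. Subtracting the constant from the previous paragraph yields $\Var_{\alpha_n}(\phi')-\Var_{\alpha_n}(\phi)\to\infty$, so for any prescribed $B>0$ there is an $n$ with $\Var_{\alpha_n}(\phi')>\Var_{\alpha_n}(\phi)+B$, which is the claimed inequality (\ref{isworse}). The \emph{in particular} clause then follows at once: the difference is already shown to be unbounded, and for the ratio one notes that the denominator $\Var_{\alpha_n}(\phi)$ is a fixed finite number. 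If it is positive the ratio diverges since the numerator does; if it is zero the ratio is infinite as soon as $\Var_{\alpha_n}(\phi')>0$, which holds for all large $n$.

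There is no genuine obstacle here once Lemma \ref{invLem} is available; the proof is essentially a one-line comparison. The only point demanding care—and the step I would state explicitly—is the transfer from the \emph{pointwise} invariance of the statistic in Definition \ref{definvestimator} to the \emph{equality of variances}, which is legitimate precisely because translating $f$ leaves $\mu^K$ untouched, so the two statistics induce the same pushforward law.
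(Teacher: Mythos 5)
Your proof is correct and follows exactly the paper's argument: translation-invariance (together with the fact that translating $f$ does not alter $\mu^{K}$) freezes $\Var_{\alpha_n}(\phi)$ at the finite constant $\Var_{\alpha}(\phi)$, while Lemma \ref{invLem} sends $\Var_{\alpha_n}(\phi')$ to infinity, forcing (\ref{isworse}) for large $n$. Your explicit remark on passing from pointwise invariance of the statistic to equality of variances, and your handling of the ratio when the denominator may vanish, are welcome clarifications of steps the paper leaves implicit.
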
 
\begin{proof} 
From (\ref{thesame}),  $\Var_{\alpha_n}(\phi)=\Var_{\alpha}(\phi)$, $n \in \N_+$,
while from Lemma \ref{invLem}, as $n$ goes to infinity, the lhs of (\ref{isworse}) goes to infinity. 
\end{proof} 
Let us apply the above theory to certain estimators defined in Section \ref{secSymIneq}.
Estimator $\widehat{V}_{1,s2}^{tot}$ is translation-invariant. For estimator $\wh{V}_{1,a2}^{tot}$ let us take 
$\alpha=(\mu,f) \in D_{V_1^{tot}}$ such that for each random variables $X_i \sim \mu_i, i \in I_2,$ we have $X_i \in L^2$, $i \in I_2$, 
$\Var(X_1)>0$, $\E(X_2^2)>0$, and $f(X_1,X_2) = X_1X_2$. Then for $\wt{X} \sim \mu^{p_{a2}}$, 
the assumptions of Lemma \ref{invLem} are satisfied for $n = 1$ and 
$Z_1 = (\wt{X}_1[0]-\wt{X}_1[1])\wt{X}_2$, since  $\E(Z_1^2) = 2\Var(X_1)\E(X^2_2) > 0$. Thus 
$\widehat{V}_{1,a2}^{tot}$ can have much higher variance 
than $\widehat{V}_{1,s2}^{tot}$ in the sense of Theorem \ref{thtransl},
or equivalently  $d_{V_1^{tot},a2}$ can be much higher than $d_{V_1^{tot},s2}$ (in the above sense). 
Notice that $\wh{V}_{1,s4}$ is translation-invariant and 
$\wh{V}_{1,a3}$ satisfies the conditions of Lemma \ref{invLem} for some $(\mu,f) \in D_{V_1}$ such that for each $X_i \sim \mu_i$, 
$i \in I_2$, 
$f(X_1,X_2) = X_1$, $X_1 \in L^2$, and $\Var(X_1)> 0$, since then for $\wt{X} \sim \mu^{p_{a3}}$ 
we have in Lemma \ref{invLem}, $n = 1$ and $\E(Z^2_1) = \E((\wt{X}_1[0] - \wt{X}_1[1])^2) = 2\Var(X_1)>0$. 
Thus $d_{V_1,a3}$ can be much higher than $d_{V_1,s4}$. 

\section{\label{secAveSchemes}Averaging of estimators and schemes} 
Let $\mc{V}$ be some admissible pairs with a signature $Sg$ as in Definition \ref{defAPairs}. 
Let $\pi \in \Theta^N$ (see Section \ref{secSymIneq}). We define a function 
$\wt{\pi}$ on $I_N\times\N_+$ by formula $\wt{\pi}(i,j) = (i,\pi_i(j))$. 
Let $K$ be some arguments' indices for $N$. 
The image under $\wt{\pi}$ (see Appendix \ref{appMath}) of $K$ is 
\begin{equation} 
\wt{\pi}[K] =  \{\wt{\pi}(\beta):\beta \in  K\}. 
\end{equation}
Let $B = (B_i)_{i=1}^N$ be a sequence of nonempty sets and 
the function $\sigma_{B,K,\pi}:B^{K}\rightarrow B^{\wt{\pi}[K]}$ be such that for each $x \in B^K$ and $\beta \in K$,  
\begin{equation}\label{sigmadef} 
(\sigma_{B,K,\pi}(x))_{\wt{\pi}(\beta)} = x_{\beta}. 
\end{equation} 
Note that for each $\mu \in \mc{V}_1$ and $X \sim \mu^K$, we have 
\begin{equation}\label{sigmapix} 
\sigma_{B_\mu,K,\pi}(X) \sim \mu^{\wt{\pi}[K]}. 
\end{equation} 
For a statistic $\phi$ for $\mc{V}$ with indices $K$, a permutation of $\phi$ given by 
$\pi$, denoted as $\A_\pi(\phi)$, is defined as a statistic for $\mc{V}$ with 
indices $\wt{\pi}[K]$ such that for each $(\mu,f) \in \mc{V}$ and $x \in B_\mu^{\wt{\pi}[K]}$, 
\begin{equation}\label{apidef} 
\A_\pi(\phi)(f)(x) = \phi(f)(\sigma_{B_\mu,K,\pi}^{-1}(x)). 
\end{equation} 
From (\ref{sigmapix}) and (\ref{apidef}) it follows that for each $(\mu,f) \in \mc{V}$, $X \sim \mu^K$, and $Y \sim \mu^{\wt{\pi}[K]}$, 
\begin{equation}\label{sameDistr} 
\A_\pi(\phi)(f)(Y) \sim \phi(f)(X). 
\end{equation} 
For a function $h:\R^n\rightarrow\R$, like e. g. summation $h(x)= \sum_{i=1}^nx_i$, 
and real-valued statistics $\phi_i$ for $\mc{V}$ with indices $K_i$, $i \in I_n$,   
we define $h((\phi_i)_{i=1}^n)$ to be a real-valued statistic for $\mc{V}$ with indices $K = \bigcup_{i=1}^n K_i$ 
such that for each $(\mu,f) \in \mc{V}$ and $x \in B_\mu^K$, 
\begin{equation} \label{hstat} 
h((\phi_i)_{i=1}^n)(f)(x)  = h((\phi_i(f)(x_{K_i}))_{i=1}^n) 
\end{equation} 
(see (\ref{xL})).
Let $K$ be some arguments' indices for $N$ and $\Pi$ be a nonempty finite subset of $\Theta^N$. We define
\begin{equation}
\wt{\Pi}[K] = \bigcup_{\pi \in \Pi}\wt{\pi}[K]. 
\end{equation}
Let $\phi$ be an $\R^n$-valued statistic for $\mc{V}$ with indices $K$. 
We define an average of $\phi$ given by $\Pi$ as the following statistic for $\mc{V}$ with indices $\wt{\Pi}[K]$, 
\begin{equation}\label{defAvestat} 
\A_{\Pi}(\phi) = \frac{1}{|\Pi|}\left(\sum_{\pi \in \Pi} \A_\pi(\phi)\right). 
\end{equation} 
From (\ref{sameDistr}) it follows that for each $(\mu,f) \in \mc{V}$, $X \sim \mu^{\wt{\Pi}[K]}$, and 
$Y \sim \mu^{K}$, $\A_{\Pi}(\phi)(f)(X)$ is an average of $|\Pi|$ random variables with the same distribution as $\phi(f)(Y)$. 
In particular, if $\phi$ is an estimator of some estimand $G$ on $\mc{V}$, then so is $\A_{\Pi}(\phi)$. 
For $p > 0$, we write $\phi \in L^p(\mc{V})$ if $\phi(f) \in L^p(\mu^K)$ for each 
$(\mu,f) \in \mc{V}$. From Lemma \ref{lemVarAve} in Appendix \ref{appStatMC} it follows that for each $\phi \in L^1(\mc{V})$, 
$\A_{\Pi}(\phi)$ has uniformly not higher variance than $\phi$, that is for each $\alpha \in \mc{V}$, 
\begin{equation}\label{uniformalpha} 
\Var_{\alpha}(\A_{\Pi}(\phi)) \leq \Var_{\alpha}(\phi). 
\end{equation} 
Let $\pi \in \Theta^N$. For each nonempty $I \subset I_N$, 
we identify each sequence $v = (v_i)_{i \in I} \in \N_+^{I}$ with the set $\{(i,v_i):i \in I\} \subset I_N \times \N_+$,  
so that $\wt{\pi}[v] = (\pi_i(v_i))_{i \in I}$. In particular for 
$v \in \N_{+}^N$ we receive $\wt{\pi}[v] = \wh{\pi}(v)$ (see Section \ref{secSymIneq}). 
Let $W \subset \mc{P}(I_N\times\N_+)$ (see Appendix \ref{appMath}). 
For $\wt{\pi}^{\rightarrow}$ denoting the image function of $\wt{\pi}$ (see Appendix \ref{appMath}), we have 
\begin{equation} 
\wt{\pi}^{\rightarrow}[W] = \{\wt{\pi}[v]:v \in W\}. 
\end{equation} 
For $\Pi$ as above we define 
\begin{equation} 
\wt{\Pi}^{\rightarrow}[W] = \bigcup_{\pi \in \Pi} \wt{\pi}^{\rightarrow}[W]. 
\end{equation} 
In particular for $W \subset \N_{+}^N$ we receive $\wt{\Pi}^{\rightarrow}[W] = \wh{\Pi}[W]$ (see Section \ref{secSymIneq}).
Let $A=(A_i)_{i=1}^k$ be some sets of evaluation vectors for $Sg$. 
We define 
\begin{equation} 
\wt{\Pi}^{\rightarrow}[A] = (\wt{\Pi}^{\rightarrow}[A_i])_{i=1}^k. 
\end{equation} 
Let $\pi \in \Theta^N$. We denote $\wt{\{\pi\}}^{\rightarrow}[A]$ simply as $\wt{\pi}^{\rightarrow}[A]$. 
For convenience we shall write $\delta$ and $\gamma$ instead of $\delta(A)$ and $\gamma_A$ defined in the previous section.
We define function $\rho_{C,A,\pi}:C_A \rightarrow C_{A}$ to be such that for each 
\begin{equation}\label{z1example} 
z = ((y_{i,v})_{|v \in (\wt{\pi}^{\rightarrow}[A])_{\gamma(i)}})_{i=1}^\delta \in C_{\wt{\pi}^{\rightarrow}[A]} =C_{A}, 
\end{equation} 
it holds
\begin{equation}\label{defRho} 
\rho_{C,A,\pi}(z) = ((y_{i,\wt{\pi}[v]})_{|v \in A_{\gamma(i)}})_{i=1}^\delta. 
\end{equation} 
Let $t: C_{A} \rightarrow \R$. We define function $\ave_{C,A,\Pi}(t): C_{A} \rightarrow \R$,
called permutation of $t$ given by $\pi$ and $A$, to be such that for each $z$ as in (\ref{z1example}),
\begin{equation}
\ave_{C,A,\pi}(t)(z) = t(\rho_{C,A,\pi}(z)). 
\end{equation} 
Let further $\eta_{C,A,\Pi,\pi}:C_{\wt{\Pi}^{\rightarrow}[A]} \rightarrow C_A$ be such that for each 
\begin{equation}\label{zexample}
z = ((y_{i,v})_{|v \in (\wt{\Pi}^{\rightarrow}[A])_{\gamma(i)}})_{i=1}^\delta \in C_{\wt{\Pi}^{\rightarrow}[A]}, 
\end{equation} 
it holds
\begin{equation} 
\eta_{C,A,\Pi,\pi}(z) = ((y_{i,v})_{|v \in (\wt{\pi}^{\rightarrow}[A])_{\gamma(i)}})_{i=1}^\delta. 
\end{equation} 
We define $\ave_{C,A,\Pi}(t):C_{\wt{\Pi}^{\rightarrow}[A]} \rightarrow \R$, called average of $t$ given by $\Pi$ and $A$, 
to be such that for each $z$ as in (\ref{zexample}),
\begin{equation}\label{tcap}
\ave_{C,A,\Pi}(t)(z) =  \frac{1}{|\Pi|} \sum_{\pi \in \Pi} \ave_{C,A,\pi}(t)(\eta_{C,A,\Pi,\pi}(z)). 
\end{equation}
For the special case of $k=1$, $\mc{H} = \mc{S}(\R),$ and $\Pi$ being a subgroup of $\Theta^N$,
$\ave_{C,A,\Pi}(t)$ is equal to $\ave_{A,\Pi}(t)$ given by formula (\ref{symOp}) from Section \ref{secSymIneq}.
Let $\kappa = (t,A)$ be a scheme for $Sg$. Its average given by $\Pi$ is defined as a scheme 
\begin{equation} 
\ave_{\Pi}(\kappa) = (\ave_{C,A,\Pi}(t),\wt{\Pi}^{\rightarrow}[A]). 
\end{equation} 
This coincides with definition (\ref{avePiKO}) from Section \ref{secSymIneq}
for the same special case as discussed below (\ref{tcap}). 
When $\Pi = \{\pi\}$, $\ave_{\Pi}(\kappa)$ is denoted as $\ave_{\pi}(\kappa)$ 
and called permutation of $\kappa$ given by $\pi$. 
We have a following theorem, which we prove in Appendix \ref{appcoeffsLem}. 
\begin{theorem}\label{thEquAves} 
Under the preceding assumptions, 
\begin{equation}\label{avephikappa} 
\phi_{\ave_{\Pi}(\kappa),\mc{V}} = \A_{\Pi}(\phi_{\kappa,\mc{V}}). 
\end{equation} 
\end{theorem}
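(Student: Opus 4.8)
The plan is to fix an arbitrary $(\mu,f) \in \mc{V}$ and show that the two statistics in (\ref{avephikappa}) are equal as functions, by evaluating them at each admissible $x$ and comparing. Before comparing values, the first step is pure bookkeeping: I must check that both sides are statistics with the \emph{same} arguments' indices, so that pointwise comparison is even meaningful. By (\ref{defAvestat}) the statistic $\A_\Pi(\phi_{\kappa,\mc{V}})$ has indices $\wt{\Pi}[p_A]$, since $\phi_{\kappa,\mc{V}}$ has indices $p_A$, whereas $\phi_{\ave_{\Pi}(\kappa),\mc{V}}$ has indices $p_{\wt{\Pi}^{\rightarrow}[A]}$. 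So I would first prove the auxiliary identity
\begin{equation}\label{indexid}
p_{\wt{\Pi}^{\rightarrow}[A]} = \wt{\Pi}[p_A],
\end{equation}
which follows by unwinding the definition (\ref{pal}): if $(i,j)\in p_A$ is witnessed by $l$ with $i\in J_l$ and $v\in A_l$, $v_i=j$, then $\wt{\pi}[v]\in\wt{\pi}^{\rightarrow}[A_l]$ has $i$th coordinate $\pi_i(j)$, giving $\wt{\pi}(i,j)\in p_{\wt{\Pi}^{\rightarrow}[A]}$; the converse direction is read off the same witnesses. I would also record that $\delta$ and $\gamma$ are unchanged under $\wt{\Pi}^{\rightarrow}$ (a coordinate $A_i$ is empty iff $\wt{\pi}^{\rightarrow}[A_i]$ is), so the spaces $C_A$ in the definitions of $\rho$, $\eta$ and $t$ are the correct common codomains.

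The second step is to expand both sides. Using (\ref{phiAF}) and the definition of the average scheme, the left-hand side evaluated at $x\in B_\mu^{p_{\wt{\Pi}^{\rightarrow}[A]}}$ is $\ave_{C,A,\Pi}(t)\bigl(g_{\mc{V},\wt{\Pi}^{\rightarrow}[A]}(f)(x)\bigr)$, which by (\ref{tcap}) is an average over $\pi\in\Pi$ of terms $t\bigl(\rho_{C,A,\pi}(\eta_{C,A,\Pi,\pi}(z))\bigr)$ with $z = g_{\mc{V},\wt{\Pi}^{\rightarrow}[A]}(f)(x)$. The right-hand side, by (\ref{defAvestat}) and (\ref{apidef}), is the average over the same $\pi$ of the terms $\phi_{\kappa,\mc{V}}(f)\bigl(\sigma_{B_\mu,p_A,\pi}^{-1}(x)\bigr) = t\bigl(g_{\mc{V},A}(f)(\sigma_{B_\mu,p_A,\pi}^{-1}(x))\bigr)$. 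Since $t$ is a fixed but arbitrary function occurring on both sides, it suffices to prove the per-$\pi$ equality of the $\mc{H}_A$-valued arguments of $t$, namely
\begin{equation}\label{coreid}
\rho_{C,A,\pi}\bigl(\eta_{C,A,\Pi,\pi}\bigl(g_{\mc{V},\wt{\Pi}^{\rightarrow}[A]}(f)(x)\bigr)\bigr) = g_{\mc{V},A}(f)\bigl(\sigma_{B_\mu,p_A,\pi}^{-1}(x)\bigr).
\end{equation}

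The third and central step is to verify (\ref{coreid}) coordinatewise. On the left, the evaluation $z$ has $(i,v)$-component $f_{\gamma(i)}(x_v)$ for $v \in \wt{\Pi}^{\rightarrow}[A_{\gamma(i)}]$; then $\eta_{C,A,\Pi,\pi}$ restricts to the indices $v\in\wt{\pi}^{\rightarrow}[A_{\gamma(i)}]$ (see (\ref{zexample})), and $\rho_{C,A,\pi}$ reindexes $v\mapsto\wt{\pi}[v]$ by (\ref{defRho}), so the $(i,v)$-component with $v\in A_{\gamma(i)}$ becomes $f_{\gamma(i)}(x_{\wt{\pi}[v]})$. On the right the corresponding component is $f_{\gamma(i)}\bigl((\sigma_{B_\mu,p_A,\pi}^{-1}(x))_v\bigr)$. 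Using (\ref{xveq}) together with the identification $\wt{\pi}[v] = (\pi_l(v_l))_{l\in J_{\gamma(i)}}$, one has $x_{\wt{\pi}[v]} = (x_{l,\pi_l(v_l)})_{l\in J_{\gamma(i)}}$; on the other hand (\ref{sigmadef}) gives $(\sigma_{B_\mu,p_A,\pi}^{-1}(x))_\beta = x_{\wt{\pi}(\beta)}$, so $(\sigma_{B_\mu,p_A,\pi}^{-1}(x))_{l,v_l} = x_{(l,\pi_l(v_l))}$. The two argument tuples therefore coincide componentwise, which establishes (\ref{coreid}); averaging over $\pi$ and dividing by $|\Pi|$ yields (\ref{avephikappa}).

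The conceptual content here is trivial --- relabelling the independent arguments by $\pi$ before evaluating a function is the same as evaluating on the relabelled input --- so I expect no genuine mathematical difficulty. The main obstacle is entirely notational: there are four nested indexing levels ($i$ ranging over the functions, $v$ over evaluation vectors, $l$ over argument slots in $J_{\gamma(i)}$, and $\beta=(l,j)$ over arguments' indices), and I would need to carry the maps $\rho$, $\eta$ and $\sigma$ through all of them without a clerical slip, while keeping the identity (\ref{indexid}) in force to guarantee that every composed map has the domain claimed in its definition.
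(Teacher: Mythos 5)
Your proposal is correct and follows essentially the same route as the paper's own proof: fix $(\mu,f)$ and $x$, expand both sides as averages over $\pi \in \Pi$ using (\ref{tcap}), (\ref{defAvestat}), and (\ref{apidef}), reduce to a per-permutation equality of the $\mc{H}_A$-valued arguments of $t$, and verify it coordinatewise via $(\sigma^{-1}(x))_{(l,v_l)} = x_{(l,\pi_l(v_l))} = (x_{\wt{\pi}[v]})_l$. Your explicit check of the index identity $p_{\wt{\Pi}^{\rightarrow}[A]} = \wt{\Pi}[p_A]$ (and of the invariance of $\delta$ and $\gamma$) is bookkeeping the paper leaves implicit, but it does not alter the argument.
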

For an $n$-dimensional scheme $\kappa = (\kappa_i)_{i=1}^n$ for $Sg$, 
we define its average as $\ave_{\Pi}(\kappa) = (\ave_{\Pi}(\kappa_i))_{i=1}^n$. 
If $\Pi$ is a subgroup of $\Theta^N$, then an average of a scheme or a statistic given by $\Pi$ 
is called their symmetrisation. 
From Theorem \ref{thEquAves} and a similar fact concerning estimators stated above, it 
follows that an average of an unbiased estimation scheme for some estimand $G$ remains an unbiased scheme for its estimation and its 
estimator has uniformly not higher variance. 
Let us consider an $n$-dimensional scheme $\kappa = (t_i,A_i)_{i=1}^n$ for $\mc{V}$, and $m \in \N_+$. 
We define an $m$-step MC scheme $\kappa(m)$ using scheme $\kappa=(\kappa_i)_{i=1}^n$  
to be an average of $\kappa$ given by any  $\Pi \subset \Theta^N$, $|\Pi| = m$, 
such that for each $\pi_1, \pi_2 \in \Pi, \pi_1 \neq \pi_2,$ schemes $\A_{\pi_i}(\kappa), i \in I_2$, have disjoint arguments' indices, that is 
$\wt{\pi_1}(p_{\kappa}) \cap \wt{\pi_2}({p_\kappa}) = \emptyset$. Note that $|A_{\kappa(m),i}| = m |A_{\kappa,i}|, i \in I_k$, 
and for each $\alpha = (\mu,f) \in \mc{V}$, $i \in I_n$, 
$X \sim \mu^{p_{\kappa_i}}$, and $Y \sim \mu^{p_{\kappa(m)_i}}$, $\phi_{\kappa(m)_i,\mc{V}}(f)(Y)$ 
is an average of $m$ independent random variables with the same distribution 
as $\phi_{\kappa_i,\mc{V}}(f)(X)$. Let us further assume that $\Var_\alpha(\phi_{\kappa_i,\mc{V}}) < \infty$, $i \in I_n$, 
so that $\Var_\alpha(\phi_{\kappa(m)_i,\mc{V}}) = \frac{\Var_\alpha(\phi_{\kappa_i,\mc{V}})}{m},$ $i \in I_n$. 
If $\kappa$ is further an 
unbiased estimation scheme for estimands $G=(G_i)_{i=1}^n$ with common admissible pairs $\mc{V}$, 
then $\phi_{\kappa_i,\mc{V}}(f)$ and $\phi_{\kappa(m)_i,\mc{V}}(f)$ 
can be identified with the single-step and final MC estimators of $G_i(\alpha), i \in I_n,$ respectively, and 
we have equality of inefficiency constants of the schemes 
\begin{equation}\label{equdIneffMC} 
d_{G,i,j,\kappa} = d_{G,i,j,\kappa(m)},\ i \in I_n,\ j \in I_k. 
\end{equation}

\section{\label{secGenIneq}Some general inequalities between variances of estimators and inefficiency constants of schemes}
We will now prove some general inequalities between variances of estimators of 
estimands on pairs and inefficiency constants of schemes, the latter including as a special 
case the inequality from Theorem \ref{thineqOld},
but first we need some helper facts and definitions. 
\begin{lemma}\label{lemcomb}
For $N \in \N_+$, let $X= (X_i)_{i=1}^N$ be a random vector with independent coordinates, and let us consider independent random variables 
$Y_{i,j} \sim X_i, i \in I_N, j \in \N_+$. For $v \in \N_+^N$, let $Y_v=(Y_{i,v_i})_{i=1}^N$. For some measurable function 
$f$ such that $Z = f(X) \in L^2$ and a finite nonempty set $A \subset \N_+^N$, let 
\begin{equation}
\overline{Z}= \frac{1}{|A|}\sum_{v \in A}f(Y_v).
\end{equation}
Then it holds 
\begin{equation}\label{ineqaz}
\frac{1}{|A|}\Var(Z) \leq \Var(\overline{Z}) \leq \Var(Z).
\end{equation}
\end{lemma}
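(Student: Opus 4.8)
The plan is to expand the variance of $\overline{Z}$ via formula (\ref{sumcov}) and then bound the resulting sum from both sides. First I would observe that for every $v \in A$ the vector $Y_v = (Y_{i,v_i})_{i=1}^N$ has independent coordinates with $Y_{i,v_i}\sim X_i$, so $Y_v$ has the same distribution as $X$; consequently $W_v := f(Y_v)$ satisfies $W_v \in L^2$, $\E(W_v)=\E(Z)$, and $\Var(W_v)=\Var(Z)$. Writing $\overline{Z}=\frac{1}{|A|}\sum_{v\in A}W_v$ and applying (\ref{sumcov}) gives
\begin{equation}
\Var(\overline{Z}) = \frac{1}{|A|^2}\left(|A|\,\Var(Z) + 2\sum_{v<w}\Cov(W_v,W_w)\right),
\end{equation}
where the sum runs over unordered pairs of distinct elements of $A$.

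For the upper bound, since all $W_v$ share the common distribution of $Z$, their average cannot have larger variance than the common one; this is exactly Lemma \ref{lemVarAve}, yielding $\Var(\overline{Z})\le \Var(Z)$. (Alternatively, Cauchy--Schwarz gives $\Cov(W_v,W_w)\le\Var(Z)$, and substituting this into the displayed identity together with $2\binom{|A|}{2}=|A|^2-|A|$ produces the same bound.)

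For the lower bound, the key step is to show that every covariance $\Cov(W_v,W_w)$ with $v\neq w$ is non-negative; dropping these non-negative terms from the identity then leaves $\Var(\overline{Z})\ge \frac{1}{|A|^2}\,|A|\,\Var(Z)=\frac{1}{|A|}\Var(Z)$. To prove the non-negativity, I would set $S=\{i\in I_N: v_i=w_i\}$ and regroup the coordinates: let $A_1=(Y_{i,v_i})_{i\in S}$ be the shared variables, and $A_2=(Y_{i,v_i})_{i\notin S}$, $A_2'=(Y_{i,w_i})_{i\notin S}$. For $i\notin S$ we have $v_i\neq w_i$, so $Y_{i,v_i}$ and $Y_{i,w_i}$ are distinct and hence independent; thus $A_1,A_2,A_2'$ are mutually independent with $A_2\sim A_2'$. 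Since the placement of indices depends only on $S$ and not on the values, there is a single measurable function $\tilde f$ (namely $f$ composed with the reassembly of coordinates determined by $S$) with $W_v=\tilde f(A_1,A_2)$ and $W_w=\tilde f(A_1,A_2')$. Applying Theorem \ref{thCond} with first variable $A_1$, second variable $A_2$, independent copy $A_2'$, and $g=h=\tilde f$ then gives $\Cov(W_v,W_w)=\Var(\E(\tilde f(A_1,A_2)\mid A_1))\ge 0$.

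The hard part will be carrying out the regrouping cleanly: one must package the shared coordinates as one argument and the disjoint coordinates as another so that both $W_v$ and $W_w$ become the \emph{same} function $\tilde f$ of (shared, own) arguments — this identical-function property is precisely what lets the $g=h$ corollary of Theorem \ref{thCond} express the covariance as a genuine variance. The boundary cases are consistent: $S=I_N$ would force $v=w$ and is excluded, while $S=\emptyset$ makes $A_1$ empty, so $W_v,W_w$ are independent and the covariance vanishes, matching $\Var(\E(\tilde f\mid A_1))=0$. Integrability is immediate throughout, since $Z\in L^2$ forces each $W_v\in L^2$, so all covariances and conditional expectations are well defined.
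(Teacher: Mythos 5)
Your proposal is correct and follows essentially the same route as the paper: both expand $\Var(\overline{Z})$ into covariances and use Theorem \ref{thCond} on the regrouped arguments (shared coordinates $Y_{v,w}$ versus the rest) to write $\Cov(f(Y_v),f(Y_w))$ as $\Var(\E(f(Y_v)|Y_{v,w}))\geq 0$, keeping the diagonal terms for the lower bound. The only cosmetic difference is the upper bound, where the paper bounds each covariance by $\Var(Z)$ via (\ref{aveVarError}) while you invoke Lemma \ref{lemVarAve} (or Cauchy--Schwarz); your spelled-out regrouping via $S$, $A_1$, $A_2$, $A_2'$ is exactly the step the paper leaves implicit.
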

\begin{proof}
We have 
\begin{equation}
\Var(\overline{Z}) =  \left(\frac{1}{|A|^2}\sum_{v,w \in A}\Cov(f(Y_v),f(Y_w)\right).
\end{equation}
For $v,w \in \N_+^N$, let $c(v,w) = \{i \in I_N:v_i = w_i\}$, and $Y_{v,w} = (Y_{i,v_i})_{i \in c(v,w)}$. Then from Theorem 
\ref{thCond}, for each $v,w \in A$, $\Cov(f(Y_v),f(Y_w)) = \Var(\E(f(Y_v)|Y_{v,w}))$. Inequalities (\ref{ineqaz}) follow from the fact 
that $\Var(\E(f(Y_v)|Y_{v,w}))$ is nonnegative and from (\ref{aveVarError}) it is not higher than $\Var(Z)$ and equal to it for $v=w$.
\end{proof}

Let $\mc{V}$ be some admissible pairs as in Definition \ref{defAPairs}. 
For each arguments' indices $K$ for $N$, we define $n_K$ to be a vector from $\N^N$ whose $i$th coordinate is 
\begin{equation} 
n_{K,i} = \max(\{j: (i,j) \in K\} \cup \{0\}). 
\end{equation} 
For sets of evaluation vectors $A$ or a scheme $\kappa$ for $Sg$, we define $n_{A} = n_{p_A}$ and $n_{\kappa} = n_{p_\kappa}$, 
and for a statistic $\phi$ for $\mc{V}$ with indices $K$, $n_\phi = n_K$. 
Let $I$ be a nonempty subset of $I_N$ and $m\in \N_+$ 
For symmetrisations given by $\Theta_{N,I,m}$ (see (\ref{thetanim})) e. g. of some scheme for $Sg$ or a statistic for $\mc{V}$, 
we use the same nomenclature as for the less general schemes in Section \ref{secSymIneq}.
For some finite subgroup $\Pi \subset \Theta^N$, we say that a scheme for $Sg$ or a statistic for $\mc{V}$ 
is $\Pi$-symmetric if it is equal 
to its symmetrisation given by $\Pi$. 
Suppose that $\psi$ is a statistic for $\mc{V}$ or 
a scheme for $Sg$  such that 
$n_{\psi,i}=n$, $i \in I$, and $\psi$ is $\Theta_{N,I,n}$-symmetric. 
Then for each $n'\in \N_+$,  $n'\geq n$, symmetrisation of $\psi$ given by $\Theta_{N,I,n'}$ is called 
its symmetrisation from $n$ to $n'$ dimensions (simply symmetrisation if $n=n'$) 
in the argument given by $I$ (or in the $i$th argument if $I= \{i\}$). 
For some arguments' indices $K$ for $N$, sequence of sets $B= (B_i)_{i=1}^n$, 
and $x \in B^K$, for 
$L \subset I_N$ and $K_L = \{(i,j)\in K:i \in L\}$, we denote $x_L = x_{K_L}$ (see (\ref{xL})),
while for $L \subset \N_+$ and $K:L = \{(i,j)\in K:j \in L\}$, 
we denote $x:L = x_{K:L}$. For arguments' indices $K_i$ for $N$, $i \in I_2$, 
$K = K_1\cup K_2$, and $K_1\cap K_2 = \emptyset$, in the proof of the below theorem we identify $x_K$ with $(x_{K_1},x_{K_2})$. 
\begin{theorem}\label{thIneqEsts} 
Let $\phi'$ be a symmetrisation of a statistic $\phi \in L^1(\mc{V})$ from $n$ to $n'$ dimensions in the argument given by some $I$
as above.
Then for each $\alpha \in \mc{V}$ such that $\Var_{\alpha}(\phi)< \infty$, 
\begin{equation}\label{VarN1N2} 
\frac{1}{{n' \choose n}} \Var_{\alpha}(\phi) \leq \Var_{\alpha}(\phi') \leq \Var_{\alpha}(\phi). 
\end{equation} 
\end{theorem}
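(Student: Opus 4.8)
The plan is to realise the symmetrised statistic $\phi'$ as an equally weighted average of ${n' \choose n}$ copies of $\phi$, and then to bound the variance of this average by controlling its pairwise covariances, in the spirit of Lemma \ref{lemcomb}. The upper bound is then immediate from (\ref{uniformalpha}): $\phi'$ is the average of $\phi \in L^1(\mc{V})$ given by the finite group $\Theta_{N,I,n'}$, so $\Var_{\alpha}(\phi') \leq \Var_{\alpha}(\phi)$ for every $\alpha$ with $\Var_{\alpha}(\phi)<\infty$. The work is therefore in the left inequality, which reduces to showing that every covariance in the expansion is nonnegative.

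First I would record a combinatorial reduction. Let $K$ be the indices of $\phi$ and $K'$ those of $\phi'$. By definition $\phi' = \A_{\Theta_{N,I,n'}}(\phi) = \frac{1}{n'!}\sum_{\theta \in \Theta_{n'}}\A_{\pi_{N,I,\theta}}(\phi)$. Since $n_{\phi,i} = n$ for $i \in I$, the statistic $\A_{\pi_{N,I,\theta}}(\phi)$ uses, in each $I$-coordinate, precisely the arguments whose second index lies in $S := \theta(\{1,\ldots,n\})$. Because $\phi$ is $\Theta_{N,I,n}$-symmetric it equals its symmetrisation by $\Theta_{N,I,n}$, and since $\A$ respects the group structure this forces $\A_{\pi_{N,I,\sigma}}(\phi) = \phi$ for every $\sigma \in \Theta_n$; hence $\A_{\pi_{N,I,\theta}}(\phi)$ depends only on the set $S$ and not on the particular $\theta$ realising it (if $\theta' = \theta\sigma$ with $\sigma \in \Theta_n$ then $\A_{\pi_{N,I,\theta'}}(\phi) = \A_{\pi_{N,I,\theta}}(\A_{\pi_{N,I,\sigma}}(\phi)) = \A_{\pi_{N,I,\theta}}(\phi)$). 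Grouping the $n'!$ permutations by their image set $S$, each $n$-element subset of $\{1,\ldots,n'\}$ arising from $n!(n'-n)!$ of them, yields
\begin{equation}
\phi' = \frac{1}{{n' \choose n}}\sum_{S}\phi_S, \qquad \phi_S := \A_{\pi_{N,I,\theta_S}}(\phi),
\end{equation}
where $S$ ranges over the $n$-element subsets of $\{1,\ldots,n'\}$ and $\theta_S$ is any permutation with $\theta_S(\{1,\ldots,n\}) = S$. I expect this reduction to be the main obstacle, since it requires careful bookkeeping of the action of $\Theta_{N,I,n'}$ on the arguments' indices and of how the $\Theta_{N,I,n}$-symmetry collapses the permutations sharing a common image subset to a single statistic.

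For the lower bound, write $M = {n' \choose n}$ and $Z_S = \phi_S(f)(X)$ for $X \sim \mu^{K'}$, so that by bilinearity of the scalar product (\ref{covGen}) we have $\Var_{\alpha}(\phi') = \frac{1}{M^2}\sum_{S,S'}\Cov(Z_S, Z_{S'})$. By (\ref{sameDistr}) each $Z_S$ has the same distribution as $\phi(f)(Y)$ for $Y \sim \mu^{K}$, so every diagonal term equals $\Var_{\alpha}(\phi)$. The heart of the argument is that every off-diagonal covariance is nonnegative: the statistics $\phi_S$ and $\phi_{S'}$ share the non-$I$ arguments together with the $I$-arguments indexed by $S\cap S'$ — denote the corresponding shared random variables $W$ — while their remaining arguments, indexed in the $I$-coordinates by $S\setminus S'$ and $S'\setminus S$ respectively, are disjoint and mutually independent. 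Conditioning on $W$ renders $Z_S$ and $Z_{S'}$ conditionally independent, and the $\Theta_{N,I,n}$-symmetry of $\phi$ makes them conditionally identically distributed, so $\E(Z_S| W) = \E(Z_{S'}| W)$; the vector-valued form of Theorem \ref{thCond} used to obtain (\ref{thCondVect}) then gives $\Cov(Z_S,Z_{S'}) = \Var(\E(Z_S| W)) \geq 0$. Discarding the nonnegative off-diagonal terms leaves $\Var_{\alpha}(\phi') \geq \frac{M}{M^2}\Var_{\alpha}(\phi) = \frac{1}{{n' \choose n}}\Var_{\alpha}(\phi)$, which is the desired left inequality. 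This covariance computation is a routine vector adaptation of the proof of Lemma \ref{lemcomb}, the only care being the invocation of the vector version of Theorem \ref{thCond} for the conditional covariance.
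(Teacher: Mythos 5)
Your proposal is correct and takes essentially the same route as the paper: the paper's proof derives exactly your decomposition of $\phi'$ as the equally weighted average of the ${n' \choose n}$ statistics indexed by $n$-element subsets (its equation (\ref{phiprimform})), via the same grouping of permutations by their image set using the $\Theta_{N,I,n}$-symmetry, and then concludes by invoking Lemma \ref{lemcomb}, whose proof is precisely your covariance computation. The only differences are presentational: you inline that covariance argument (usefully making explicit the role of the symmetry in equating the conditional expectations given the shared arguments, which the paper leaves implicit in its appeal to the lemma) and you obtain the upper bound from (\ref{uniformalpha}) rather than from the bound of each covariance by $\Var_{\alpha}(\phi)$.
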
 
\begin{proof} 
Since $\phi$ is $\Theta_{N,I,n}$-symmetric statistic, 
for each $\theta_1, \theta_2 \in \Theta_{n'}$ such that $\theta_1[I_n] = \theta_2[I_n]$, or equivalently $\theta_2^{-1}\theta_1[I_n]= I_n$, 
it holds 
$\A_{\pi_{N,I,\theta_2^{-1}\theta_1}}(\phi)=\phi$, and thus  $\A_{\pi_{N,I,\theta_2^{-1}}}\A_{\pi_{N,I,\theta_1}}(\phi)=\phi$ and 
$\A_{\pi_{N,I,\theta_1}}(\phi)=\A_{\pi_{N,I,\theta_2}}(\phi)$.  We denote $\sim I= I_N \setminus I$. 
Let $\alpha = (\mu,f) \in \mc{V}$, $K' = \wt{\Theta}_{N,I,n'}[K]$ be the arguments 
indices of $\phi'$, $X \sim \mu_{K'}$, 
$U = X_{\sim I}$, and $V = X_I$, so that $X= (U,V)$. 
We denote $\wt{\phi}'= \phi'(f)$. 
Let further $\mc{W} = \{L \subset I_{n'}, |L|= n\}$. For each $L \in \mc{W}$, let us choose certain 
$\theta_L \in \Theta_{n'}$ such that $\theta_L[I_n] = L$, and denote 
$\wt{\phi}_L= \A_{\pi_{N,I,\theta_L}}(\phi)(f)$. From (\ref{defAvestat}) and the above remarks we have 
\begin{equation}\label{phiprimform} 
\wt{\phi}'(X) = \frac{1}{{n' \choose n}} \sum_{L \in \mc{W}}\wt{\phi}_L(U, V:L). 
\end{equation} 
Thus (\ref{VarN1N2}) follows from Lemma \ref{lemcomb}. 
\end{proof}
\begin{theorem}\label{thIneqds}
Let $G=(G_i)_{i=1}^m$ be estimands with common admissible pairs $\mc{V}$. 
If scheme $\kappa'$ is created from an unbiased estimation scheme $\kappa = (\kappa_i)_{i=1}^m$ for $G$ 
by its symmetrisation from $n$ to $n'$ dimensions in the argument corresponding to some $I$ 
then for each $j \in I_m$, 
$i \in I_k$ such that $J_i \cap I\neq \emptyset$, and $\alpha \in D_{G_j}$ for which $d_{G,j,i,\kappa}(\alpha)< \infty$, it holds 
\begin{equation}\label{d1d2n}
\frac{\frac{n'}{n}}{{n' \choose n}} d_{G,j,i,\kappa}(\alpha) \leq  d_{G,j,i,\kappa'}(\alpha) \leq \frac{n'}{n} d_{G,j,i,\kappa}(\alpha). 
\end{equation} 
\end{theorem}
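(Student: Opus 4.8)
The plan is to factor each inefficiency constant as a product of a variance and an evaluation count, $d_{G,j,i,\kappa}(\alpha) = \Var_\alpha(\phi_{\kappa_j,\mc{V}})\,|A_{\kappa,i}|$ (and likewise with primes), and to control the two factors separately. First I would identify $\phi_{\kappa'_j,\mc{V}}$ with a symmetrisation of $\phi_{\kappa_j,\mc{V}}$. Since the average of an $n$-dimensional scheme is taken coordinatewise, $\kappa'_j = \ave_{\Theta_{N,I,n'}}(\kappa_j)$, and Theorem \ref{thEquAves} gives
\begin{equation}
\phi_{\kappa'_j,\mc{V}} = \A_{\Theta_{N,I,n'}}(\phi_{\kappa_j,\mc{V}}).
\end{equation}
Because the whole scheme is $\Theta_{N,I,n}$-symmetric with $n_{\kappa,i'}=n$ for $i'\in I$, and symmetrisation acts coordinatewise, each relevant subscheme estimator $\phi_{\kappa_j,\mc{V}}$ is itself $\Theta_{N,I,n}$-symmetric and uses exactly the dimension $n$ in the argument $I$; hence $\phi_{\kappa'_j,\mc{V}}$ is precisely a symmetrisation of $\phi_{\kappa_j,\mc{V}}$ from $n$ to $n'$ dimensions in the argument given by $I$.

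With this identification the variance factor is handled directly by Theorem \ref{thIneqEsts}: for every $\alpha$ with $\Var_\alpha(\phi_{\kappa_j,\mc{V}})<\infty$ (which follows from $d_{G,j,i,\kappa}(\alpha)<\infty$ whenever the evaluation count is positive),
\begin{equation}
\frac{1}{\binom{n'}{n}}\Var_\alpha(\phi_{\kappa_j,\mc{V}}) \leq \Var_\alpha(\phi_{\kappa'_j,\mc{V}}) \leq \Var_\alpha(\phi_{\kappa_j,\mc{V}}).
\end{equation}

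The second step, which I expect to be the main obstacle, is the counting identity
\begin{equation}
|A_{\kappa',i}| = \frac{n'}{n}\,|A_{\kappa,i}|, \qquad i \in I_k,\ J_i \cap I \neq \emptyset,
\end{equation}
where $A_{\kappa',i} = \wt{\Theta}_{N,I,n'}^{\rightarrow}[A_{\kappa,i}]$. Here I would analyse the orbits of the $f_i$-evaluation vectors under the diagonal action of $\Theta_{n'}$ on the copy indices of the arguments in $J_i\cap I$. The $\Theta_{N,I,n}$-symmetry of $\kappa$ together with $n_{\kappa,i'}=n$ means $A_{\kappa,i}$ is already closed under $\Theta_n$ acting on the copies $\{1,\dots,n\}$ and uses exactly these $n$ copies in the block $I$; enlarging the group to $\Theta_{n'}$ makes $n'$ copies available. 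The delicate point is that this must multiply the number of \emph{distinct} evaluation vectors of $f_i$ by exactly $n'/n$ rather than by a larger combinatorial factor; this is what a careful orbit count must establish, using that in these symmetric schemes each evaluation vector selects its $I$-block copies through a single shared selector (as for the schemes built by symmetrisation from one dimension in Section \ref{secSymIneq}), so that each orbit has size proportional to the number of available copies.

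Finally I would multiply the two factors. Using the counting identity and the upper variance bound,
\begin{equation}
d_{G,j,i,\kappa'}(\alpha) = \Var_\alpha(\phi_{\kappa'_j,\mc{V}})\,|A_{\kappa',i}| \leq \Var_\alpha(\phi_{\kappa_j,\mc{V}})\cdot\frac{n'}{n}|A_{\kappa,i}| = \frac{n'}{n}\,d_{G,j,i,\kappa}(\alpha),
\end{equation}
while the lower variance bound gives
\begin{equation}
d_{G,j,i,\kappa'}(\alpha) \geq \frac{1}{\binom{n'}{n}}\Var_\alpha(\phi_{\kappa_j,\mc{V}})\cdot\frac{n'}{n}|A_{\kappa,i}| = \frac{n'/n}{\binom{n'}{n}}\,d_{G,j,i,\kappa}(\alpha),
\end{equation}
which is exactly (\ref{d1d2n}). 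As a sanity check, the case $n=1$, $n'=2$ recovers Theorem \ref{thineqOld}, since there $\frac{n'/n}{\binom{n'}{n}}=1$ and $\frac{n'}{n}=2$.
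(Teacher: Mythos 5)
Your overall strategy coincides with the paper's: factor $d_{G,j,i,\kappa}(\alpha) = \Var_\alpha(\phi_{\kappa_j,\mc{V}})\,|A_{\kappa,i}|$, identify $\phi_{\kappa'_j,\mc{V}}$ as the symmetrisation of $\phi_{\kappa_j,\mc{V}}$ from $n$ to $n'$ dimensions (coordinatewise averaging plus Theorem \ref{thEquAves}), bound the variance factor by Theorem \ref{thIneqEsts}, and multiply against the ratio of evaluation counts. The variance half of your argument and the final combination are exactly what the paper does.

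The gap is the step you yourself call the main obstacle: the counting identity $|A_{\kappa',i}| = \frac{n'}{n}|A_{\kappa,i}|$. You do not prove it; you appeal to the claim that in these schemes "each evaluation vector selects its $I$-block copies through a single shared selector," which is not among the hypotheses of the theorem --- nothing in the definition of a $\Theta_{N,I,n}$-symmetric scheme forces the coordinates $v_l$, $l \in I\cap J_i$, of an evaluation vector to be equal. Your worry about "a larger combinatorial factor" is in fact well founded: if $|I\cap J_i|\geq 2$ and $n\geq 2$, an off-diagonal vector such as $(1,2)$ has orbit of size $n'(n'-1)$ under the diagonal action of $\Theta_{n'}$, not $n'$; for instance $A_{\kappa,i}=\{(1,1),(1,2),(2,1),(2,2)\}$ symmetrised from $2$ to $3$ dimensions in the argument given by both coordinates produces $9$ evaluation vectors, not $\frac{3}{2}\cdot 4 = 6$. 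So orbit counting alone cannot deliver the factor $n'/n$. The paper closes the step by fiber counting rather than orbit counting: equation (\ref{A1A2n}) fixes one $l \in I\cap J_i$ and uses $\Theta_{N,I,n}$-symmetry to conclude that the $n$ fibers $\{v \in A_{\kappa,i}: v_l = j\}$, $j \in I_n$, are pairwise equinumerous and partition $A_{\kappa,i}$, so $|A_{\kappa,i}| = n\,|\{v \in A_{\kappa,i}: v_l = 1\}|$; the same argument gives $|A_{\kappa',i}| = n'\,|\{v \in A_{\kappa',i}: v_l = 1\}|$, and identifying the two $v_l=1$ fibers yields $|A_{\kappa',i}|/n' = |A_{\kappa,i}|/n$. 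Note that this identification --- that symmetrisation creates no new vectors with $v_l = 1$ --- is itself automatic only when $n=1$ or when $I\cap J_i$ is a single index, which are precisely the situations in which the theorem is used in the paper (Theorem \ref{thineqOld}, and the symmetrisations in the single argument given by $R$ in Section \ref{secPolynEst}). So the delicate point you flagged is real, but what is missing from your argument is exactly the fiber identity (\ref{A1A2n}); without it, or some substitute hypothesis, the proposal does not go through.
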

\begin{proof}
Since $\kappa$ is $\Theta_{N,I,n}$-symmetric, it holds for each $l \in I\cap J_i$,
\begin{equation}\label{A1A2n}
\frac{|A_{\kappa,i}|}{n} = |\{v \in A_{\kappa,i}: v_l = 1\}| = \frac{|A_{\kappa',i}|}{n'}. 
\end{equation}
Now (\ref{d1d2n}) follows from (\ref{A1A2n}), the fact that for each $j \in I_m$, $\phi_{\kappa_j',\mc{V}}$ is a symmetrisation of 
$\phi_{\kappa_j,\mc{V}}$ from $n$ to $n'$ dimensions, Theorem \ref{thIneqEsts}, and formula
(\ref{dIneffgen}) defining an inefficiency constant. 
\end{proof}
Taking $k=m=n=1$ and $n'=2$, we receive the thesis of Theorem \ref{thineqOld}.

\section{\label{secPolynEst}Schemes for the sensitivity indices of functions of conditional moments}
For some $n \in \N_+$, let us consider a function $f$ and independent random variables $P=(P_i)_{i=1}^{N_P}$ and $R$ as 
at the beginning of Section \ref{secMany} but with 
$f(P,R) \in L^n$.  
Let $Q$ restricted to $\mc{T}_n$ be a function of 
the first $n$ so restricted moments as in Section \ref{secCondMoms}, like $Q = \Var$ for $n=2$. 
Suppose that there exists an unbiased estimator $\phi_Q$ of $G_Q$ in 
$m \in \N_+$ dimensions (see Appendix \ref{appStatMC}). 
For $\wt{R} \sim \mu_R^m$ and independent of $P$, let 
\begin{equation}\label{hQ}
h_Q(f)(P,\wt{R}) = \phi_Q((f(P,\wt{R}_i))_{i=1}^m)  
\end{equation}
and let us assume that $h_Q(f)(P,\wt{R}) \in L^1$.
Then it holds 
\begin{equation}\label{phiqcq}
\begin{split}
Q(f(P,R)|P) &= (Q(f(p,R)))_{p=P}\\
&=(\E(\phi_Q((f(p,\wt{R}_i))_{i=1}^{m})))_{p=P}\\
&= \E(h_Q(f)(P,\wt{R})|P),
\end{split}
\end{equation}
where in the first equality we used expression (\ref{CQKP}), in the second the fact that $\phi_Q$
is an unbiased estimator of $G_Q$
and that from Fubini's theorem \cite{rudin1970} 
$f(p,\wt{R}_i) \in L^n$ for $\mu_P$ a. e. $p$, and in the last Theorem \ref{indepCond} and (\ref{hQ}).
In particular, expected values and variance-based sensitivity 
indices of $Q(f(P,R)|P)$ and $\E(h_Q(f)(P,\wt{R})|P)$ coincide 
(whenever both are well-defined). Since the latter is a conditional expectation of the function $h_Q(f)$ of independent 
random variables $P$, $\wt{R}$ given the first variable, its sensitivity indices can be estimated with the help of 
estimators from Section \ref{secMany}, e. g. in a way we describe below. 
Let $r_Q$ be the degree of $G_Q$ and as $\phi_Q$ let us take the unique
symmetric estimator of $G_Q$ in $r_Q$ dimensions (see Section \ref{secStat}). 
For instance for $Q = \Var$, we have  $r_{Var}=2$ and 
\begin{equation}
\phi_{Var}(x_1,x_{2}) = \frac{1}{2}(x_1 - x_2)^2,
\end{equation}
so that 
\begin{equation}\label{hvar}
h_{Var}(f)(P,\wt{R}) = \frac{1}{2}(f(P,\wt{R}_1) -f(P,\wt{R}_2))^2.
\end{equation}
Let us now reinterpret
different quantities from the end of Section \ref{secCondMoms} 
like $AveQ$, $VQ_k$, or $VQ_k^{tot}, k \in I_{N_P}$, as estimands 
on admissible pairs $\alpha_{\mu_P,\mu_R,f}$ defined analogously as
 in Section \ref{secMany}, but for $f(P,R) \in L^n$ and   
$h_Q(f)(P,\wt{R})\in L^p$, 
where $p =1$ for $AveQ$ and $p=2$ for other estimands (this condition will be needed for our estimators to be integrable).
The values of such estimands on such $\alpha_{\mu_P,\mu_R,f}$ 
are defined identically as in Section \ref{secCondMoms} treating $Z=f(P,R)$ as output an MR. 
 For $l \in \N_+$, we call a pair $\pi = (J_1, J_2)$ equal partition of the set $I_{2l}$, if 
 for $i \in I_2$, we have $J_i \subset I_{2l}$,  $|J_1| = |J_2| = l$, 
$J_1 \cap J_2 = \emptyset$, and $1 \in J_1$ (note that $J_1\cup J_2 = I_{2l}$). 
 Let $\Psi_Q$ be the set of all equal partitions of 
 $I_{2r_Q}$. We have  $|\Psi_Q| = \frac{{2r_Q \choose r_Q}}{2}$.  
 Consider $\wt{P}$ corresponding to $P$ as in Section \ref{secMany}, 
 and let $\wt{R} \sim \mu_R^{2r_Q}$ be independent of $\wt{P}$. 
 For a partition $\psi = (\psi_1, \psi_2) \in \Psi_Q$, 
 we denote $\wt{R}_{\psi} = (\wt{R}_{\psi_1}, \wt{R}_{\psi_2})$, where $\wt{R}_{\psi_i} = (\wt{R}_{j})_{j \in \psi_i}, i \in I_2$. 
We shall now define a scheme $SQ$ whose subschemes yield estimators $\wh{\lambda Q}_{SQ}$ for different estimands $\lambda Q$ for $Q$, 
like $AveQ$, $VQ_k$, and $VQ^{tot}_k$, corresponding to such estimands $\lambda E$ for $E$. These estimators 
 evaluated on each appropriate function $f$ and random vector 
$(\wt{P}, \wt{R})$ as above are equal to the average over $\psi \in \Psi_Q$ 
of the corresponding estimators $\wh{\lambda E}_{SE}$ from Section \ref{secMany} 
evaluated on the function $h_Q(f)$ and random vector $(\wt{P}, \wt{R}_{\psi})$, that is
 \begin{equation}\label{lQ} 
 \widehat{\lambda Q}_{SQ}(f)(\wt{P},\wt{R}) = \frac{1}{|\Psi_Q|} \sum_{\psi \in \Psi_Q}\widehat{\lambda E}_{SE}(h_Q(f))(\wt{P}, \wt{R}_{\psi}). 
 \end{equation}
For instance for the main sensitivity index and $Q = \Var$ we have  
 \begin{equation} 
 \widehat{VVar}_{k,SVar}(f)(\wt{P},\wt{R}) = \frac{1}{3} \sum_{\psi \in \Psi_{Var}}\widehat{V}_{k,SE}(h_{Var}(f))(\wt{P}, \wt{R}_{\psi}). 
 \end{equation} 
Formulas like (\ref{lQ}) for different estimands $\lambda Q$ for some $Q$ can be easily expanded in terms of  
evaluation operators $s[i][j]$ and $s_k[i][j], i \in I_2, j \in I_{2r_Q}$, from Section \ref{secMany}, in which form they define  
the sought scheme $SQ$ in the sense discussed at the end of Section \ref{secSchemesPrev}. 
From Schwartz inequality, it is sufficient that $h_Q(f)(P,\wt{R}) \in L^4$ for the estimators of subschemes of $SQ$ to have finite variance.  
In particular, from (\ref{hvar}), for scheme $SVar$ it is sufficient that $f(P,R) \in L^8$. 
Such defined scheme $SQ$ uses 
together $4r_Q(N_P +1)$ evaluation vectors for $N_P > 2$. For scheme $SVar$ this is $8(N_P + 1)$, that is two times more than for 
scheme $SE$ for the same $N_P$. 
By analogy to discussion in Appendix C of \cite{badowski2011} for scheme $SE$, 
for $N_P=3$ one can construct schemes with lower inefficiency constants  
for estimation of sensitivity indices of $Q(f(P,R)|P)$ than for the subschemes of $SQ$. 
For some $Q$ and $Q'$ as above, such that $r = r_{Q} \leq r' = r_{Q'}$, an 
unbiased $n$-dimensional estimation scheme $SQ = (SQ_i)_{i=1}^n$ and an $n'$-dimensional one $SQ'= (SQ'_i)_{i=1}^{n'}$ for  
sequences of estimands 
$G=(G_i)_{i=1}^n$ and $G'=(G_i)_{i=1}^{n'}$, respectively, one can add 
symmetrisation of certain subscheme $SQ_i$ of $SQ$ from 
 $2r$ to $2r'$ dimensions in the argument given by $R$ as the $n+1$st subscheme to $SQ'$ and $G_i$ as such $n+1$st estimand to $G'$.  
We then have the following inequality of inefficiency constants of schemes in the sense of Theorem \ref{thIneqds}, 
\begin{equation}\label{CQd}
\frac{\frac{r'}{r}}{{2r' \choose 2r}}d_{G,i,SQ} \leq  d_{G',n+1,SQ'} \leq \frac{r'}{r} d_{G,i,SQ}
\end{equation}
and analogously for the inefficiency constants of the subschemes 
due to proportionality of the number of evaluation vectors used by the subschemes and the whole schemes. 
Let us add in this way to scheme $SVar$ all subschemes of $SE$, like ones for  
estimation of $Ave$, $AveVar$, as well as $V_k$ and $\wt{V}_k^{tot}$ for 
$k \in I_{N_P}$, symmetrised from two to four dimensions in the argument given by $R$. 
Then for each of such 
estimands $\lambda$, we have from (\ref{CQd}) for $r = 1$ and $r' =  2$, 
\begin{equation}\label{ineqDVar}
\frac{1}{3}d_{\lambda,SE} \leq d_{\lambda,SVar} \leq 2 d_{\lambda,SE}.
\end{equation} 
Considering in addition to relations (\ref{ineqDVar}) also inequalities (\ref{EMComp}) and (\ref{VitotComp}), we receive for 
$N_P >2$ and $\lambda$ equal to $V_k$ (and arguments for which these relations were proved), 
\begin{equation}\label{compVkVar}
\frac{1}{3}d_{V_k,EM} \leq d_{V_k,SVar}  \leq 4 d_{V_k,EM},
\end{equation} 
while for $\lambda=\wt{V}_k^{tot}$ we obtain  
\begin{equation}\label{compVkTotVar}
\frac{1}{3}d_{\wt{V}_k^{tot},ET} \leq d_{\wt{V}_k^{tot},SVar}  \leq 4 d_{\wt{V}_k^{tot},ET}.
\end{equation}  
We shall compute numerical estimates of Sobol's sensitivity indices $SQ_{k}$ and $SQ_{k}^{tot}$ for $Q$ equal to $\Var$ and $\E$, defined in 
Section \ref{secVBSA}, 
using scheme 
$SVar$ 
 by inserting the final MC estimates obtained using the above defined subschemes for estimands like 
$VQ_k$, $VQ_k^{tot}$, and $VQ_P$ 
instead of exact values into appropriate definitions. 

\section{\label{secApprCoeff}Schemes for products, covariances, and orthogonal projection coefficients}
For some $n \in \N_+$, 
let $\psi=(\psi)_{i=1}^{n+1}$ be functions such that $\psi_i: I_2 \rightarrow\N_+,\psi_i(1)=i$, $\psi_i(2)=n+1$, $i \in I_n$, 
and $\psi_{n+1}:\{1\}\rightarrow\N_+:\psi_{n+1}(1)=n+1$. 
For the estimand $PR$ defined in Section \ref{secUnbiased} and $Ave = \E_{|\mc{T}_1}$ (that is estimand $Ave$ from Section 
\ref{secSchemesPrev} for $N=1$), 
estimands $PRA = (PRA)_{i=1}^{n+1}$ are defined as trivial extensions (see Section \ref{secUnbiased}) 
of $n+1$ estimands $(PR,\ldots,PR, Ave)$  ($PR$ appearing $n$-times at the beginning of this sequence) 
using $\psi$. Informally, this means that $PRA$ is 
equal to $PR^n$ from Section \ref{secUnbiased} extended by adding to it average of the $n+1$st function as the last estimand. 
Let $\wt{Cov}$ be an estimand on admissible pairs $\mc{V}$ of single distributions and two functions 
consisting of all possible $\alpha= (\mu, (f_1, f_2))$ such that $f_1$, $f_2$, $f_1f_2 \in L^1(\mu)$, in which case for any $X \sim \mu$,
$\wt{Cov}(\alpha) = \Cov(f_1(X),f_2(X))$. We define $CovA=(CovA_i)_{i=1}^{n+1}$ as trivial extensions of $n+1$ estimands 
$(\wt{Cov}, \ldots, \wt{Cov},Ave)$ using $\psi$. 
We define estimands $b=(b_i)_{i=1}^{n+1}$ 
to be equal to $CovA$ or equivalently $PRA$ with each coordinate restricted to admissible pairs of  
single distributions and $n+1$ functions consisting of $\alpha = (\mu,(f_i)_{i=1}^{n+1}),$ 
such that for $X \sim \mu$,  $\{f_i(X)\}_{i=1}^n$ is nonzero orthogonal in $L^2$, $\E(f_i(X))=0$, $i \in I_{n}$, and $f_{n+1}(X) \in L^2$. 
From discussion in Section \ref{secAppr}, $b_i(\alpha)$ is the coefficient of 
$f_i'(X) = \frac{f_i(X)}{\Var(f_i(X))}$, $i \in I_n$, and 
$b_{n+1}(\alpha)$ of $\I$ (see Appendix \ref{appMath}), 
in the orthogonal projection of $f(X)$ onto span($\{f_i'(X)\}_{i=1}^n\cup\{\I\}$). 
We define $c$ to be a restriction of $b$ to admissible pairs $\alpha$ as above, except that for each above $X$ the set 
$\{f_i(X)\}_{i=1}^n$ is orthonormal in $L^2$. 
Note that each unbiased estimation scheme for $PRA$ or for $CovA$ is also an unbiased estimation scheme for $b$ and $c$. 
Let us consider an unbiased estimation scheme $SAve$ for $Ave$ given by the formula for estimator 
\begin{equation}\label{save} 
\wh{Ave}_{SAve} = g[0] 
\end{equation}
and the following formula for estimator giving an $N$-step MC scheme $SAve(N)$ using $SAve$ 
\begin{equation}\label{saven}
\wh{Ave}_{SAve(N)} = \frac{1}{N}\sum_{i=0}^{N-1}g[i].
\end{equation}
For $N \in \N, N>1,$ let us consider scheme $SCov(N)$ given by the following formula for estimator of $\wt{Cov}$
\begin{equation}\label{scov}
\wh{\wt{Cov}}_{SCov(N)} = \frac{1}{N-1}\sum_{i=0}^{N-1}g_{1}[i]g_2[i] - \frac{1}{N(N-1)} (\sum_{i=0}^{N-1} g_{1}[i]\sum_{i=0}^{N-1}g_2[i]). 
\end{equation}
We define an unbiased estimation scheme $P1$ for $PRA$ as trivial extensions of $n+1$ schemes $(SPR,\ldots,SPR,SAve)$ 
using $\psi$, and an unbiased estimation scheme $C1(N)$ for $CovA$ as trivial extensions of 
$n+1$ schemes $(SCov(N),\ldots, SCov(N), SAve(N))$ also using $\psi$. 
Each estimand $CovA_k$ and estimator $\wh{CovA}_{k,C1(N)}, k \in I_{n},$ is translation-invariant in all functions 
(see Definitions \ref{definvestimand} and \ref{definvestimator}), while for each $k \in I_{n},$ estimator 
$\wh{c}_{k,P1}$ satisfies the assumptions of Lemma \ref{invLem} (and so does $\wh{b}_{k,P1}$) in the $n+1$st 
function for $n=1$ in this lemma and each $(\mu,f) \in D_{CovA}$, since for $X \sim \mu$, we have $Z_1 = f_k(X)$ and $\E(Z_1^2) = 1$. 
Thus $P1$ can have much higher inefficiency constant than $C1(N)$ for 
estimation of $c_k$ and $b_k,$ $k \in I_{n}$, in the sense of Theorem \ref{thtransl}. 
However, as we shall now show, for each $k \in I_{n}$ and $N \in \N_+, N > 2$, there exists $\alpha \in \mc{V}$ such that 
\begin{equation}\label{dckpless} 
d_{c_k,P1}(\alpha) < d_{c_k,C1(N)}(\alpha). 
\end{equation} 
For a MC scheme $P1(N)$ using scheme $P1$ in $N$ steps, we have from (\ref{equdIneffMC}) that (\ref{dckpless}) is equivalent to 
$d_{c_k,P1(N)}(\alpha) < d_{c_k,C1(N)}(\alpha)$, and since both schemes use the same number of evaluation vectors for the $n+1$st function 
and both are unbiased, this is equivalent to 
\begin{equation}\label{MCleqC1N}
\E_{\alpha}(\wh{c}^2_{k,P1(N)}) < \E_{\alpha}(\wh{c}^2_{k,C1(N)}).
\end{equation}
We will need the following lemma which we prove in Appendix \ref{appcoeffsLem}. 
\begin{lemma}\label{lemPrBetterCov} 
For a random variable  $X \in L^4$, $\E(X) = 0$, $0<\E(X^2)$, 
let $Y \sim \mu_X^N$. Let us denote for $l \in I_2$,
\begin{equation} 
\overline{Y^l} = \frac{1}{N}\left(\sum_{i=1}^NY_i^l\right). 
\end{equation} 
Then it holds 
\begin{equation} 
\E(\overline{Y^2}^2) < \left(\frac{N}{N-1}\right)^2\E((\overline{Y^2} - \overline{Y}^2)^2). 
\end{equation} 
\end{lemma}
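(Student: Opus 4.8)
The plan is to reduce everything to the moments $m_2=\E(X^2)$, $m_3=\E(X^3)$, and $m_4=\E(X^4)$, which are finite since $X\in L^4$, and then to exploit $\E(X)=0$ to annihilate most of the terms that arise in the expansions. First I would observe that $\frac{N}{N-1}(\overline{Y^2}-\overline{Y}^2)$ is precisely the unbiased sample variance, so the claimed inequality is exactly the statement that the mean square of $\overline{Y^2}$ is strictly smaller than the mean square of this unbiased sample variance; I will establish it by computing both sides in closed form.

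Writing $\overline{Y^2}=\frac1N\sum_i Y_i^2$ and $\overline{Y}^2=\frac1{N^2}\sum_{i,j}Y_iY_j$, the three expectations $\E(\overline{Y^2}^2)$, $\E(\overline{Y^2}\,\overline{Y}^2)$, and $\E(\overline{Y}^4)$ are each sums, over tuples of indices, of expectations of products of the $Y$'s. Using independence of the $Y_i$ together with $\E(X)=0$, only those tuples in which every distinct index occurs with multiplicity at least two contribute: a lone first-power factor forces a vanishing mean, and a triple-plus-single pattern contributes $m_3\cdot0$. In particular the cross term loses all of its $i\neq j$ contributions, and in $\E(\overline{Y}^4)$ only the fully diagonal tuples (count $N$, weight $m_4$) and the two-distinct-pairs tuples (the three pairings of the four positions, count $3N(N-1)$, weight $m_2^2$) survive.

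Carrying out this bookkeeping I expect to obtain $\E(\overline{Y^2}^2)=\frac{m_4}{N}+\frac{N-1}{N}m_2^2$, and, after assembling $\E((\overline{Y^2}-\overline{Y}^2)^2)=\E(\overline{Y^2}^2)-2\E(\overline{Y^2}\,\overline{Y}^2)+\E(\overline{Y}^4)$ and multiplying through by $\left(\frac{N}{N-1}\right)^2$, the value $\frac{m_4}{N}+\frac{N^2-2N+3}{N(N-1)}m_2^2$ for the right-hand side. Subtracting the two sides gives $\frac{2m_2^2}{N(N-1)}$, which is strictly positive because $m_2=\E(X^2)>0$ and $N\geq2$; this is exactly the desired strict inequality, with the $m_4$ contributions cancelling and $m_3$ never appearing at all.

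The only genuine obstacle is the combinatorial accounting: I must correctly identify which index patterns survive under $\E(X)=0$ and count the surviving ordered tuples, the most error-prone point being the factor of three from the pairings of the four positions in $\E(\overline{Y}^4)$ and the verification that the $i\neq j$ part of the cross term vanishes through $m_3\cdot\E(X)=0$. Once these counts are correct, the cancellation of $m_4$ and the clean form $\frac{2m_2^2}{N(N-1)}$ of the difference are forced, and the strictness follows immediately from $m_2>0$.
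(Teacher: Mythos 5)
Your proposal is correct and follows essentially the same route as the paper's own proof: expanding $\E(\overline{Y^2}^2)$, $\E(\overline{Y^2}\,\overline{Y}^2)$, and $\E(\overline{Y}^4)$ in terms of $\E(X^2)$ and $\E(X^4)$ via the same index-pattern bookkeeping (including the factor $3N(N-1)$ for the paired tuples and the vanishing of the odd-moment terms through $\E(X)=0$), after which the fourth-moment contributions cancel and the difference reduces to a positive multiple of $\E^2(X^2)$. Your closed form $\frac{2m_2^2}{N(N-1)}$ for the gap agrees, up to the normalization factor $\left(\frac{N}{N-1}\right)^2$, with the paper's final expression $\frac{2(N-1)}{N^3}\E^2(X^2)$, so the two arguments are the same computation presented in slightly different order.
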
 
Thus for (\ref{MCleqC1N}) to hold it is sufficient to take $\alpha= (\mu,(f_i)_{i=1}^{n+1}) \in D_c$ such that 
for $X$ as in the above lemma, for which further $\E(X^2)=1$ (e. g. $\PR(X=1)=\PR(X=-1)= \frac{1}{2}$), 
it holds $X \sim \mu$ and $f_{k}(X)=f_{n+1}(X)= X$. 

For $m \in \N_+$, let us consider a $Q$ whose restriction to $\mc{T}_m$ is a function of the first $m$ so restricted moments as in Section 
\ref{secCondMoms} and such that 
$G_Q$ has degree $r_Q$. We define estimand $PRQ$ on 
admissible pairs of two distributions and two real-valued functions with sets of 
arguments' indices $(\{1\},\{1,2\})$, consisting 
of all possible $\alpha = ((\mu_1,\mu_2), (f_1,f_2))$, such that for $P \sim \mu_1$ and $R \sim \mu_2$ it holds 
$f_2(P,R) \in L^m$  and for 
$h_Q$ corresponding to  
the symmetric unbiased estimator of $G_Q$ in $r_Q$ dimensions as in (\ref{hQ}), for $\wt{R} \sim \mu_R^{r_Q}$ and 
independent of $P$, it holds 
\begin{equation}\label{hql2} 
h_Q(f_2)(P,\wt{R})f_1(P), h_Q(f_2)(P,\wt{R}) \in L^1 
\end{equation} 
(this will be needed for our estimators to be integrable), 
in which case 
\begin{equation}
PRQ(\alpha) = \E(f_1(P)Q(f_2(P,R)|P)).
\end{equation}
We also define estimand $CovQ$ on pairs $\alpha$ as above for which additionally for the above $P$ it holds $f_1(P) \in L^1$, in which case 
\begin{equation}
CovQ(\alpha)= \Cov(f_1(P),Q(f_2(P,R)|P)). 
\end{equation}
Let the estimand $AveQ$ be defined as in the previous section, but for $N_P=1$, for which  
it is an estimand on admissible pairs of two distributions and single functions. 
We define $n+1$ estimands $PRAQ$ as trivial extensions of 
$n+1$ estimands $(PRQ,\ldots,PRQ,AveQ)$ using the above $\psi$, and $n+1$ estimands $CovAQ$ 
as trivial extensions of $n+1$ estimands  $(CovQ,\ldots,CovQ,AveQ)$ also using $\psi$. 
We define $n+1$ estimands $bQ$ and $cQ$ whose coordinates are equal to these of coordinates of $PRAQ$ or equivalently of $CovAQ$, restricted to  
$\alpha=((\mu_1,\mu_2),(f_i)_{i=1}^{n+1})$ such that for $P\sim \mu_1$ and $R\sim\mu_2$, the set 
$\{f_i(P)\}_{i \in I_{n}}$ is nonzero orthogonal in $L^2$ for $bQ$ or orthonormal for $cQ$, $\E(f_i(P)) = 0$, $i \in I_{n}$, and 
$Q(f_{n+1}(P,R)|P) \in L^2$, 
so that $bQ_i(\alpha)$ is the coefficient of 
$f_i'(P) = \frac{f_i(P)}{\Var(f_i(P))}$, $i \in I_n$, and 
$bQ_{n+1}(\alpha)$ of $\I$, in the orthogonal projection of $Q(f_{n+1}(P,R)|P)$ onto span($\{f'_i(P)\}_{i=1}^n\cup\{\I\}$). 
Let us consider an unbiased estimation scheme $SCovE(N)$ for $CovE$, such scheme
$SPRE$ for $PRE$, as well as $SAveE$ and $SAveE(N)$ for $AveE$, 
which are counterparts of the above schemes 
$SCov$, $SPR$, $SAve$, and $SAve(N)$ and whose formulas for their respective estimators are analogous as for their counterparts 
but with $g_{2}[i]$ on the rhs of (\ref{scov}) and (\ref{defspr}) replaced by $g_2[i][i]$ for $SCovE(N)$ and $SPRE$, respectively, and with 
each $g[i]$ on the rhs of (\ref{save}) and (\ref{saven}) replaced by $g[i][i]$ for $SAveE$ and $SAveE(N)$. 
The fact that such schemes are unbiased is 
an easy consequence of Theorem \ref{condexpX} and (\ref{hql2}) (note that $h_E(f_2)(P,R) = f_2(P,R)$). 
We also define counterparts of schemes $P1$ and $C1(N)$ - an 
unbiased estimation scheme $P1E$ for $PRAE$ defined as trivial extensions of $n+1$ schemes $(SPRE,\ldots,SPRE,SAveE)$ 
and scheme $C1E(N)$ for $CovAE$ as such extensions of $n+1$ schemes $(SCovE(N),\ldots,SCovE(N),SAveE(N))$, both using $\psi$. 
We define another unbiased estimation scheme $SCov2E$ for $CovE$ given by the formula for estimator  
\begin{equation}\label{CovC2} 
\wh{CovE}_{SCov2E} = \frac{1}{2}(g_2[0][0]-g_2[1][0])(g_1[0] - g_1[1]),  
\end{equation} 
and a scheme $SAve2E$ for $AveE$ given by formula
\begin{equation}\label{Ave2E} 
\wh{AveE}_{SAve2E} = \frac{1}{2}(g[1][0] + g[0][0]). 
\end{equation} 
Scheme $SAve2E$ is a symmetrisation of scheme $SAveE$ in the first argument from one to two dimensions 
and thus from Theorem \ref{thIneqds}, 
\begin{equation}\label{dAve2E} 
d_{AveE,SAveE} \leq d_{AveE,SAve2E} \leq 2d_{AveE, SAveE}. 
\end{equation} 
We define an unbiased estimation scheme $C2E$ for $CovAE$ as trivial extensions of $n+1$ schemes $(SCov2E,\ldots,SCov2E,SAve2E)$ 
using $\psi$. 
Analogously as above for schemes $P1$ and $C1(N)$, by arguments based on Theorem \ref{thtransl} one shows that scheme $P1E$ can have much 
higher inefficiency constants for estimation of $cE_k$ (and thus also $bE_k$) than schemes $C1E(N)$ and $C2E$ do, 
and also by an analogous argument as for the former schemes there exist $\alpha \in D_{cE_k}$ such that 
\begin{equation}
d_{cE_k,P1E}(\alpha) < d_{cE_k,C1E(N)}(\alpha). 
\end{equation} 
We will now prove that scheme $C1E(N)$ can have arbitrarily higher inefficiency constant for estimation of $cE_k$ (and thus also $bE_k$ 
and $CovAE$), $k \in I_n$, than scheme $C2E$, from which it also follows that scheme 
$SCovE(N)$ can have arbitrarily higher inefficiency constant for estimation 
of $CovE$ than $SCov2E$. We have the following lemma, the proof of which is given in Appendix \ref{appcoeffsLem}.
\begin{lemma}\label{lemC1C2} 
For some $k \in I_n$, let us consider random variables $P$ and $R$, function $f_P \in L^2(\mu_P)$, and functions 
$f_{R,l} \in L^2(\mu_R), l\in \N_+$, such that $\lim_{l \to \infty}\Var(f_{R,l}(R)) = \infty$ 
and for each $l \in \N_+,$ there exists 
$\alpha_l = ((\mu_P, \mu_R), f_l) \in D_{cE}$,  
such that $f_{l,k} = f_P$ and 
$f_{l,n+1}(P,R) = f_P(P) + f_{R,l}(R)$. 
Then we have $\lim_{l \to \infty}\E_{\alpha_l}(\wh{cE}_{k,C1E(N)}^2) = \infty$. 
\end{lemma}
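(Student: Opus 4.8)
The plan is to write out the estimator $\wh{cE}_{k,C1E(N)}$ explicitly and recognise it as an unbiased sample covariance whose variance is driven by $\Var(f_{R,l}(R))$. Recall that the $k$th subscheme of $C1E(N)$ is $SCovE(N)$ applied to the first function $f_{l,k}=f_P$ (a function of $P$) and the $(n+1)$st function $f_{l,n+1}=f_P+f_{R,l}$ (a function of $(P,R)$). First I would let $A_0,\dots,A_{N-1}$ and $B_0,\dots,B_{N-1}$ denote the values of $f_P$ and of $f_{R,l}$ on the $N$ independent draws of $P$ and of $R$ used by the scheme, so that the $A_i$ are i.i.d. copies of $f_P(P)$, the $B_i$ are i.i.d. copies of $f_{R,l}(R)$, and the $A$'s and $B$'s are mutually independent. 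Since $\alpha_l\in D_{cE}$, the defining conditions give $\E(A_i)=0$ and $\E(A_i^2)=1$. With $g_1[i]=A_i$ and $g_2[i][i]=A_i+B_i$, the formula for $SCovE(N)$ shows that $\wh{cE}_{k,C1E(N)}$ is precisely the unbiased sample covariance of the pairs $(A_i,\,A_i+B_i)$.

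Next I would use bilinearity of the sample covariance to decompose $\wh{cE}_{k,C1E(N)}=S_{AA}+S_{AB}$, where $S_{AA}$ is the sample variance of $A$ (which does not depend on $l$) and $S_{AB}=\frac{1}{N}\sum_i A_iB_i-\frac{1}{N(N-1)}\sum_{i\neq j}A_iB_j$ is the sample covariance of $A$ and $B$. Writing $S_{AB}=\sum_{i,j}c_{ij}A_iB_j$ with $c_{ii}=1/N$ and $c_{ij}=-1/(N(N-1))$ for $i\neq j$, the key computation is $\E(S_{AB}^2)$. Here I would use independence together with $\E(A_iA_{i'})=\delta_{ii'}$ (from $\E(A_i)=0$, $\E(A_i^2)=1$) and $\E(B_jB_{j'})=\mu_B^2+\delta_{jj'}\sigma_B^2$, where $\mu_B=\E(B_0)$ and $\sigma_B^2=\Var(f_{R,l}(R))$. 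The crucial point is that every row sum $\sum_j c_{ij}$ vanishes, so the $\mu_B^2$ contribution cancels and one is left with $\E(S_{AB}^2)=\sigma_B^2\sum_{i,j}c_{ij}^2=\sigma_B^2/(N-1)$.

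Finally I would conclude. Since $\sigma_B^2=\Var(f_{R,l}(R))\to\infty$, we obtain $\E(S_{AB}^2)\to\infty$, while $\E(S_{AA}^2)$ is a constant in $l$ (possibly $+\infty$). If $f_P\notin L^4$ then $S_{AA}\notin L^2$, hence $\wh{cE}_{k,C1E(N)}=S_{AA}+S_{AB}\notin L^2$ and $\E_{\alpha_l}(\wh{cE}_{k,C1E(N)}^2)=\infty$ for every $l$, so the limit is trivially infinite. Otherwise $S_{AA}\in L^2$, and from the pointwise bound $(S_{AA}+S_{AB})^2\geq\frac12 S_{AB}^2-S_{AA}^2$ I would get $\E_{\alpha_l}(\wh{cE}_{k,C1E(N)}^2)\geq\frac12\E(S_{AB}^2)-\E(S_{AA}^2)\to\infty$ (equivalently, a reverse triangle inequality in $L^2$ gives $\|\wh{cE}_{k,C1E(N)}\|_2\geq\|S_{AB}\|_2-\|S_{AA}\|_2\to\infty$). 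In either case $\lim_{l\to\infty}\E_{\alpha_l}(\wh{cE}_{k,C1E(N)}^2)=\infty$.

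I expect the main obstacle to be the variance computation in the second step, specifically verifying that the terms carrying the mean $\mu_B=\E(f_{R,l}(R))$ cancel. This cancellation is what guarantees that it is the \emph{variance} of $f_{R,l}(R)$, rather than its mean, that makes the estimator blow up, and it relies on two facts working together: the centering $\E(A_i)=0$ coming from $\alpha_l\in D_{cE}$, and the vanishing of the row sums of the coefficient matrix $(c_{ij})$ intrinsic to the sample-covariance form of $SCovE(N)$.
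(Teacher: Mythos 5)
Your proof is correct and follows essentially the same route as the paper's: both write $\wh{cE}_{k,C1E(N)}$ as the unbiased sample covariance of the pairs $(A_i, A_i+B_i)$, decompose it as $S_{AA}+S_{AB}$ (in the paper's notation, $\frac{1}{N-1}\sum_i B_{P,i}^2$ plus $\frac{1}{N-1}\sum_i B_{P,i}B_{R,i,l}$ with sample-centered variables), and show that $\E(S_{AB}^2)=\Var(f_P(P))\Var(f_{R,l}(R))/(N-1)\to\infty$; your row-sum cancellation is the same computation the paper carries out via $\E(B_{P,i}B_{P,j})$ and $\E(B_{R,i,l}B_{R,j,l})$. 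The only divergence is the finishing step: the paper gets the clean bound $\E_{\alpha_l}(\wh{cE}_{k,C1E(N)}^2)\geq\E(S_{AB}^2)$ by dropping the nonnegative term $S_{AA}^2$ and observing that the cross term has zero expectation ($\E(B_{P,i}^2B_{P,j}B_{R,j,l})=0$), whereas you split into cases according to whether $f_P\in L^4$ and otherwise use $(x+y)^2\geq \tfrac12 y^2-x^2$. Your finish is marginally weaker as a bound but somewhat more careful: the paper's cross-term cancellation tacitly requires integrability of $B_{P,i}^2B_{P,j}B_{R,j,l}$, i.e.\ third moments of $f_P(P)$, which the hypothesis $f_P\in L^2(\mu_P)$ does not supply — though, as your first case shows, when those moments fail the expectation is $+\infty$ anyway, so both arguments reach the stated conclusion.
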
 
For notations as in the above lemma we have from independence of $f_P(P)$ and $f_{R,l}(R)$ and $\E(f_P(P))=0$ that 
$cE_{k}(\alpha_l)=\E(f_P^2(P))$ and it does not depend on $l$, so that 
\begin{equation}
\lim_{l \to \infty}\Var_{\alpha_l}(\wh{cE}_{k,C1E(N)})=\infty.
\end{equation}
On the other hand the value of 
$\Var_{\alpha_l}(\wh{cE}_{k,C2E})$ does not depend on $l$ as the evaluations of $f_{R,l}$ cancel out when evaluating its estimator. 

For some $Q$ as above, distributions $\mu_1$, $\mu_2$, 
and some $m_1, m_2 \in \N_+$, let us define the corresponding independent random vectors 
with i. i. d. coordinates $\wt{P} \sim \mu_1^{m_1}$ and 
$\wt{R} \sim \mu_2^{m_2r_Q}$, and denote 
$\wt{R}_{Q} = ((\wt{R}_{r_Q i+l})_{l=1}^{r_Q})_{i=0}^{m_2-1}$. Analogously as when defining the 
subschemes of $SQ$ in the previous section, let us define an unbiased estimation scheme $SPRQ$ for  
$PRQ$ such that for each $\wt{P}$, $\wt{R}$, and $\wt{R}_Q$ 
corresponding to  $m_1=1$, $m_2=1$ and each $((\mu_1,\mu_2),(f_1,f_2)) \in D_{PRQ}$,  
the estimator given by $SPRQ$ fulfills 
\begin{equation}\label{gencq} 
\wh{PRQ}_{SPRQ}(f_1,f_2)(\wt{P},\wt{R}) = \wh{Cov}_{k,SPRE}(f_1,h_Q(f_2))(\wt{P},\wt{R}_Q). 
\end{equation} 
We analogously define unbiased scheme $SCovQ(N)$ for $CovQ$ but for $m_1 = m_2=N \in \N_+,$ $N >1$, and using 
$SCovE$ on the rhs of condition analogous to (\ref{gencq}), scheme $SCov2Q$ for $CovQ$, for  $m_1 = 2, m_2=1$, and 
using $SCov2E$ on the rhs of such condition, and the following schemes for $AveQ$ - 
scheme $SAveQ$ for $m_1 = m_2 = 1$ and using $SAveE$, 
$SAveQ(N)$ for $m_1 = m_2 = N$ and using $SAveE(N)$, and $SAve2Q$ for $m_1 =2, m_2 = 1$ and using $SAve2E$ in the condition. 
The fact that the 
above defined schemes are unbiased for estimation of $PRQ$ or $CovQ$ is an easy consequence of 
(\ref{phiqcq}), Theorem \ref{condexpX}, and (\ref{hql2}), while for $AveQ$ it is consequence of (\ref{phiqcq}) and the 
iterated expectation property. 
Unbiased schemes $P1Q$ for $PRAQ$, 
and such schemes $C1Q(N)$ and $C2Q$ for $CovAQ$ are defined as trivial extensions using $\psi$ of 
$n+1$ schemes $(SPRQ, \ldots, SPRQ, SAveQ)$,  $(SCovQ(N), \ldots, SCovQ(N), SAveQ(N))$, and $(SCov2Q, \ldots, SCov2Q, SAve2Q)$, respectively. 
We have a generalization of inequality of inefficiency constants analogous to (\ref{dAve2E}) and with the same justification 
\begin{equation}\label{dAve2Q} 
d_{AveQ,SAveQ} \leq d_{AveQ,SAve2Q} \leq 2d_{AveQ, SAveQ}. 
\end{equation} 

For some $Q$ as above let us now consider a random vector with independent coordinates $P = (P_i)_{i=1}^{N_P}$, $N_P \in \N_+$,
$0< \Var(P_i) < \infty$, random variable $R$ independent of $P$, $f$ measurable with $f(P,R) \in L^m$, 
and $h_Q(f)(P,\wt{R}) \in L^2$ for $\wt{R} \sim \mu_R^{r_Q}$ and 
independent of $P$(\ref{hql2}). 
%
In our numerical experiments we will be using different schemes defined below for estimation of  
coefficients of elements of the orthogonal set $\Phi = \{P_i - \E(P_i)\}_{i=1}^N \cup \{\I\}$ and the orthonormal one 
$\Phi' = \left \{\frac{P_i - \E(P_i)}{\sigma(P_i)}\right\}_{i=1}^{N_P} \cup \{\I\}$ 
in the orthogonal projection 
of $Q(f(P,R)|P)$ onto span($\Phi$) for $Q$ equal to $\E$ and $\Var$. As these schemes are unbiased for estimation of some more 
general estimands we shall start by introducing them. 
Let us define $N_P+1$-dimensional vectors of estimands $\wt{PRAQ}$ and $\wt{CovAQ}$  
whose each $i$th coordinate $\wt{\lambda}$ corresponding to such $i$th coordinate $\lambda$ of 
$CovAQ$ or $PRAQ$, respectively, for $n = N_P$, is such that $\wt{\lambda}$ is defined on all 
admissible pairs $\wt{\alpha} = ((\mu_{1,1},\ldots,\mu_{1,N_P},\mu_2),(f_i)_{i=1}^{N_P+1})$ 
of $N_P+1$ distributions and $N_P+1$ functions with sets of arguments' indices $(\{1\},\{2\},\ldots,\{N_P\},I_{N_P+1})$ 
such that for $\mu_1 = \bigotimes_{i=1}^{N_P} \mu_{1,i}$ 
and $\pi_i$ denoting projection from $B_{\mu_1}\times\ldots\times B_{\mu_N}$ onto the $i$th coordinate, $i \in I_{N_P}$,
it holds $\alpha = ((\mu_1,\mu_2),(f_1(\pi_1),\ldots,f_{N_P}(\pi_{N_P}), f_{N_P+1})) \in D_{\lambda}$, 
in which case 
\begin{equation} \label{wtlambda} 
\wt{\lambda}(\wt{\alpha}) = \lambda(\alpha). 
\end{equation} 
We analogously define estimands $\wt{bQ}$ and $\wt{cQ}$ corresponding to $bQ$ and $cQ$, respectively. 
For $\mu_i \sim P_i$, $i \in I_{N_P}$, and $\mu = (\mu_i)_{i=1}^{N_P}$, the coefficients of respective elements 
 of $\Phi$ as above in the orthogonal 
projection of $Q(f(P,R)|P)$ 
onto span($\Phi$) are equal to the consecutive coordinates of 
$\wt{bQ}(\alpha)$ for $\alpha = (\mu,(\phi_1, \ldots, \phi_{N_P},f))$, 
$\phi_i(x) = \frac{x - \E(P_i)}{\Var(P_i)}, i \in I_{N_P}$, 
and the coefficients of such elements $\Phi'$ in this projection are equal to the coordinates of $\wt{cQ}(\alpha')$ for 
\begin{equation}\label{alphaprim} 
\alpha' = (\mu,(\phi_1', \ldots, \phi_{N_P}',f)), 
\end{equation} 
where $\phi_i(x)' = \frac{x - \E(P_i)}{\sigma(P_i)}, i \in I_{N_P}$. 
We define evaluation vectors $s_{(l)}[i][j]$ and $s_{(l),k}[i][j]$ 
as  $s[i][j]$ and $s_{k}[i][j]$ in Section \ref{secMany} but using $g_l[v_1]\ldots[v_{N_P+1}]$
rather than $g[v_1]\ldots[v_{N_P+1}]$ for the same $v$ on the right hand sides of expressions defining them. 
We define unbiased estimation schemes $\wt{P1Q}$ for $\wt{PRAQ}$ as well as 
$\wt{C1Q}(N)$ and $\wt{C2Q}$ for $\wt{CovAQ}$ (and thus all three also unbiased 
for $\wt{bQ}$ and $\wt{cQ}$) as obvious modifications of the 
schemes $P1Q,$ $C1Q(N)$, and $C2Q$, respectively, whose formulas for estimators have 
each occurrence of $g_l[i]$ replaced by $r_{l}[i]$ (see (\ref{ridef})), $l \in I_{N_P}$, and $g_{N_P+1}[i][j]$ by 
$s_{(N_P+1)}[i][j]$ (see Section \ref{secMany}). For instance for some $i \in I_{N_P}$, the estimator of $CovAQ_i$ given by $\wt{C2Q}$ is 
\begin{equation} 
\wh{\wt{CovAQ}}_{i, \wt{C2Q}} = \frac{1}{2}(r_{i}[1]-r_{i}[0])(s_{(N_P + 1)}[1][0]-s_{(N_P + 1)}[0][0]). 
\end{equation} 
We shall now introduce a new unbiased estimation scheme $SQCov$ for $(\wt{CovAQ}_i)_{i=1}^{N_P}$, 
that is the first $N_P$ coordinates of $\wt{CovAQ}$. 
Let $\phi_{Q,t}$ be the unbiased symmetric estimator of $G_Q$ in $t = 2r_Q$ dimensions,
where $r_Q$ denotes the degree of $G_Q$ as in the previous section. 
For $k \in I_{N_P}$, let 
\begin{equation}
\wh{\wt{CovAQ}}_{k,0} = \frac{1}{2}(\phi_{Q,t}((s_{(N_P + 1)}[0][j])_{j=1}^{t})
-\phi_{Q,t}((s_{(N_P + 1)}[1][j])_{j=1}^{t}))(r_k[0] - r_k[1])
\end{equation}
and introducing a C language-like notation 
\begin{equation}\label{CNot} 
(a == b)?c:d = \begin{cases} 
      c & \text{ if $a=b$,} \\ 
      d & \text{otherwise,} \\ 
      \end{cases} 
\end{equation}
for $l \in I_{N_P}$, let  
\begin{equation} 
\wh{\wt{CovAQ}}_{k,l} = 
\frac{1}{2}(\phi_{Q,t}((s_{(N_P + 1),l}[0][j])_{j=1}^{t}) -\phi_{Q,t}((s_{(N_P + 1),l}[1][j])_{j=1}^{t}))(r_k[0] - r_k[1])(k==l?-1:1). 
\end{equation} 
The unbiased subscheme of $SQCov$ for estimation of $CovAQ_{k}$, $k \in I_{N_P}$, is given by the formula for estimator 
\begin{equation} 
\wh{\wt{CovAQ}}_{k,SQCov} = \frac{1}{N_P + 1}\sum_{l=0}^{N_P}\wh{\wt{CovAQ}}_{k,l}. 
\end{equation} 
We define scheme $\wt{SQ}$ as a one consisting of trivial extensions 
of subschemes from $SQ$ from the previous section, for which coordinates of $\psi$ defining the extensions are equal 
to $\phi_1:\{1\}\rightarrow \N_+:\phi_1(1) = N_P+1$, and also of subchemes of $SQCov$ for which such coordinates are equal to  
$\phi_2=\id_{I_{N_P+1}}$. 
Intuitively, scheme $\wt{SQ}$ is created by adding 
to $SQCov$ subschemes of $SQ$ applied to the $N_P+1$st function. Such scheme is unbiased for estimation
of estimands $\wt{\lambda}$ also created by 
trivial extensions using the above $\psi$ of the corresponding estimands $\lambda$ of scheme $SQ$ 
for which coordinates of $\psi$ are  $\phi_1$ and estimands 
$(\wt{CovAQ}_{k})_{k=1}^{N_P}$ for which these coordinates are $\phi_2$. Similarly as in the previous section, let us further 
add to $\wt{SVar}$ subschemes from $SECov$ 
for estimation of $\wt{CovAE}_l$, $l \in I_{N_P},$ symmetrised from two to four dimensions in the $N_P +1$st argument. 
Let us consider the following set 
of symmetries in different first $N_P$ arguments in two dimensions $\Pi_1 = \bigcup_{j=1}^{N_P}\Theta_{{N_P+1},j,2}$ 
and set of symmetries in the $N_P+1$st argument in four dimensions $\Pi_2 = \Theta_{{N_P+1},{N_P+1},4}$ (see 
definitions below (\ref{defthetam})). 
Subschemes of $\wt{SVar}$ for estimation of $\wt{CovAE}_l, l \in I_{N_P+1}$, (note that $\wt{CovAE}_{N_P+1} = \wt{Ave}$) 
are averages of subschemes of $\wt{C2E}$ with respect to $\Pi_1\Pi_2 = \{\pi_1\pi_2:\pi_1\in \Pi_1, \pi_2\in\Pi_2\}$, 
and they use both individually and together $4(N_P + 1)$ times more evaluation vectors for the last function 
than the latter, so that we have 
\begin{equation}\label{ineqCovd} 
d_{\wt{CovAE}_l,\wt{SVar}} \leq 4(N_P + 1)d_{\wt{CovAE}_l, \wt{C2E}},\quad l \in I_{N_P +1}. 
\end{equation} 
Subscheme of $\wt{SVar}$ for estimation of $\wt{Ave}$ uses $8(N_P + 1)$ times more evaluation vectors for the last function 
than the $N_P+1$st subscheme of $\wt{P1E}$ using one such vector, and it is also an
average of the latter with respect to $\Theta_{{N_P+1},I_{N_P},2}\Pi_1\Pi_2$, so that from Lemma \ref{lemcomb} it easily follows that 
\begin{equation}\label{ineqCovd2} 
d_{\wt{Ave}, \wt{P1E}} \leq d_{\wt{Ave},\wt{SVar}} \leq 8(N_P + 1)d_{\wt{Ave}, \wt{P1E}}. 
\end{equation} 
An estimand corresponding to the 
nonlinearity coefficient (\ref{DNJ}) of $Q(f(P,R)|P)$ in $P_k$, $k \in I_{N_P}$, for $Q = \Var, \E$, is 
defined as  
\begin{equation}
DNQ_k = \wt{VQ}_k^{tot} - \wt{cQ}_k^2  
\end{equation}
for arguments $\alpha'$ as in (\ref{alphaprim}).
We use for its estimation an unbiased scheme 
which can be treated as an additional subscheme of a scheme $\wt{SVar}(2)$ using  $\wt{SVar}$ in two independent steps, 
$\wt{SVar}_i = \A_{\pi_i}(\wt{SVar})$  (see Section \ref{secAveSchemes}), $i \in I_2$, 
and which is given by formula 
\begin{equation}
\wh{DNQ}_{k,\wt{SVar}(2)} = \frac{1}{2}\sum_{i=1}^2\wh{VQ}^{tot}_{k,\wt{SVar}_i} - \Pi_{i=1}^2\wh{cQ}_{k,\wt{SVar}_i}. 
\end{equation}
For the estimand corresponding to the nonlinearity coefficient (\ref{dnc}) of $Q(f(P,R)|P)$
in all coordinates of $P$, $DNQ = \wt{VQ}_P - \sum_{i=1}^{N_P}\wt{cQ}_i^2$ for $Q = \Var, \E$, 
we use a scheme given by
\begin{equation} 
\wh{DNQ}_{\wt{SVar}(2)} = \frac{1}{2}\sum_{i=1}^2\wh{VQ}^{tot}_{P,\wt{SVar}_i} - \sum_{k=1}^{N_P}\Pi_{i=1}^2\wh{cQ}_{k,\wt{SVar}_i}. 
\end{equation} 
In our numerical experiments the above schemes for nonlinearity coefficients were 
used to obtain estimates once per each two steps of a MC procedure using scheme $\wt{SVar}$ and thus the 
final MC estimator was computed by averaging over two times fewer estimates than for subschemes of $\wt{SVar}$. 
Correlations between a given function of conditional moments $Q(f(P,R))$ and coordinates of $P$
are $\N_P$ estimands $corrQ=(corrQ_i)_{i=1}^{N_P}$ on common admissible pairs, such that (see (\ref{correqu})) 
\begin{equation}
corrQ_i = \frac{\wt{bQ}_i}{\wt{VQ_P}}, \quad i \in I_{N_P},
\end{equation}
is defined on the intersection of domains of the divided estimands. One can compute 
the estimates of it for $Q=\E$ or $\Var$ e. g. by dividing the final 
MC estimates of $bQ_i$ and $DQ$ obtained using scheme $\wt{SVar}$ and one can use 
analogously defined schemes $\wt{SQ'}$ for estimating $corrQ_i$ for other $Q$ and $Q'$. 
Note that similarly as for schemes for variance-based sensitivity indices of conditional expectation in Section \ref{secMany}, 
estimation schemes for estimands like $PRA$, $CovA$, $PRE$, $CovE$, and $DNE_k$ can be easily generalized to functions 
with values in $\R^m$, $m \in \N_+$, 
by using appropriate scalar product of vectors instead of function multiplication in the formulas for estimators. 
When this should not cause any misunderstandings, 
to simplify notations we often drop the tilde sign over the symbols
of schemes or estimands introduced in this section, e. g. write $SVar$ instead of $\wt{SVar}$.  

\section{\label{secApprErr}Schemes for the mean squared error of approximation} 
Let us consider a Hilbert space $H$ with some scalar product $<,>$, inducing norm $|\cdot|$. 
Let $v \in H$,  $\Psi = \{\psi_i\}_{i=1}^l$ be an orthonormal set in $H$, $V = \text{span}(\Psi)$, $P_{V}$ 
be an orthogonal projection from $H$ onto $V$, 
and $c=(c_i)_{i=1}^l$ be the Fourier's coefficients of $v$ relative to $\Psi$, that is 
\begin{equation} 
P_V(v) = \sum_{i=1}^lc_i\psi_i. 
\end{equation} 
For example we can have $H= L^2$, with scalar product (\ref{scall2}), $v= Q(f(P,R)|P)$ for $f(P,R)$ being some construction of 
an output of an MR and $Q$ being a function which restricted to $\mc{T}_m$ is a function of the first $m$ so restricted moments 
as in Section \ref{secCondMoms}, and $\psi_i = \phi_i(P)$ 
for some functions $\phi_i$, $i \in I_l$, orthonormal in $L^2(\mu_P)$. 
Let $h=(h_i)_{i=1}^l \in \R^l$. Squared error of the 
approximation of $v$ using $\Psi\cdot h = \sum_{i=1}^{l}h_i\psi_i$ in $H$, denoted as $\err(h)$, fulfills 
\begin{equation}\label{errh}
\begin{split}
\err(h) &= |v - \Psi\cdot h|^2 \\
&= |P_V(v) - \Psi\cdot h|^2 + |v - P_V(v)|^2, 
\end{split} 
\end{equation} 
where in the second equality we used the fact that $v - P_V(v)$ is orthogonal to $V$ 
and in the last the fact that $\Psi$ is orthonormal. 
Let us consider an unbiased estimator 
$w = (w_i)_{i=1}^l$ of $c$ for some distribution $\nu$ so that 
$\E_\nu(w)=c$ 
and let us further assume that $w_i \in L^2(\nu), i \in I_n$. For instance for the above example, 
some unbiased estimation scheme $\kappa = (\kappa_i)_{i=1}^{l}$, $l >1$,  for $cQ$ for $n = l-1$ as in the previous section, 
$\mc{V}=D_{cQ}$, and for 
$(\mu,g) \in \mc{V}$ such that $\mu=(\mu_P,\mu_R)$ and $g =(\phi_1,\ldots,\phi_n,f)$, 
we can take $\nu = \mu^{p_{\kappa}}$ and $w = \phi_{\kappa,\mc{V}}(g)$ (see (\ref{phikappamany})). 

From (\ref{errh}), the average squared error of approximation of $v$ using estimates of $c$ given by $w(X)$, $X \sim \nu$, fulfills 
\begin{equation}
\begin{split}
\E(\err(w(X))) &= \sum_{i=1}^l\Var(w_i(X)) + |v - P_V(v)|^2 \\
&= \Var(w(X)) + |v - P_V(v)|^2, 
\end{split}
\end{equation}
where by variance in the last term we mean variance for random vectors defined as in Section \ref{secOrthog} using the 
standard scalar product in $\R^l$. Since for a fixed $v$ and orthonormal set 
$\Psi$, $|v - P_V(v)|$ is constant, we get lower mean squared approximation error when using 
estimator of orthogonal projection coefficients onto $\Psi$ with lower sum of variances of its coordinates.  
Thus standard scalar product is here a natural choice for defining variance used to quantify error of approximation of 
$c$ by $w$ for $\nu$. 

We define an estimand $ErrQ$ on all admissible pairs $\alpha = (\mu,s)=((\mu_1,\mu_2), (s_1,s_2))$ 
of two distributions and two functions with sets of arguments' indices $(\{1,2\},\{1\})$, 
such that for each $P \sim \mu_1$ and $R \sim \mu_2$, 
it holds $s_1(P,R) \in L^m$, $s_2(P) \in L^2$, and for $\wt{R} \sim \mu_2^{r_Q}$ and independent of $P$, it holds 
$h_Q(s_1)(P,\wt{R})\in L^2$, in which case 
\begin{equation} 
ErrQ(\alpha) = \E((Q(s_1(P,R)|P) - s_2(P))^2). 
\end{equation} 
For the above discussed example in which $v= Q(f(P,R)|P)$, for $s_1 = f$, $h \in \R^l$, and $s_2(P) = \sum_{i=1}^lh_i\phi_i(P)$, 
$ErrQ(\alpha)$ is equal to $\err(h)$. 
Let us consider an unbiased estimation scheme $SErrE$ for $ErrE$, defined by formula 
\begin{equation}
\wh{ErrE}_{SErrE} = (g_1[0][0] - g_2[0])(g_1[0][1] - g_2[0]).
\end{equation}
The fact that it is unbiased follows from formula (\ref{ggxy}) in Theorem \ref{thCond} since for 
$\alpha = (\mu,s)\in D_{ErrE}$, $X \sim \mu^{p_{SErrE}}$, and $p(x_1,x_2) = s_1(x_1,x_2) -s_2(x_1)$, we have
\begin{equation}
\begin{split}
\wh{ErrE}_{SErrE}(f)(X) &= \E(p(X_1,X_{2}[0])p(X_1,X_{2}[1])) \\ 
&=\E((\E(p(X_1,X_{2}[0])|X_1))^2)\\
&= \E((\E(s_1(X_1,X_{2}[0])|X_1) - s_2(X_1))^2).  
\end{split}
\end{equation}
Analogously in Section \ref{secPolynEst}, we define scheme $SErrQ$ 
giving an unbiased estimator of $ErrQ$ such that for $(\mu,s) \in D_{ErrQ}$, $P\sim\mu_1$, $R\sim\mu_2$, 
and $\wt{R}\sim \mu_2^{2r_Q}$ and independent of $P$, it holds 
\begin{equation} 
\widehat{ErrQ}_{SErrQ}(s)(P,\wt{R}) = \frac{1}{|\Psi_Q|} \sum_{\pi \in \Psi_Q}\widehat{ErrE}_{SErr}(h_Q(s_1),s_2)(P, \wt{R}_{\pi}), 
\end{equation} 
and we add to scheme $SErrVar$ 
subschemes for estimation of $ErrE$ which are created by symmetrisation of subschemes from 
$SErrE$ from $2r_{E} = 2$ to $2r_{Var} = 4$ dimensions in the second argument. 

\section{\label{secVarDiffNew}Variances of the new estimators for the RTC and GD methods}
In our numerical experiments which we describe in the further sections,  
the estimates of variances of various unbiased estimators of main and total sensitivity indices of conditional 
variance from Section \ref{secPolynEst} as well as such estimators of
orthogonal projection coefficients of conditional variance and expectation onto the span of model parameters and constants 
significantly depended on whether the GD or the RTC method was used and on the order of reactions 
in the GD method.
Using notations as in Section \ref{secVarDiff},  
for $s$ such that $\alpha = ((\mu_P,\mu_R),(s,f)) \in D_{CovE}$ and  
 $\wt{P}\sim \mu_P^2$ and independent of $R$, it holds 
\begin{equation} 
\begin{split} 
4\E_{\alpha}(\wh{CovE}^2_{SCov2E}) &= \E((f(\wt{P}[0],R)- f(\wt{P}[1],R))^2(s(\wt{P}[0]) - s(\wt{P}[1]))^2)\\ 
&= \E(\msd(\wt{P}[0],\wt{P}[1])(s(\wt{P}[0]) - s(\wt{P}[1]))^2), \\ 
\end{split} 
\end{equation} 
so the inequalities between the variances $\Var_\alpha(\wh{CovE}_{SCov2E})$ 
should also be the same as for $\msd(p_1, p_2)$ for all appropriate $p_1$, $p_2$ 
as in Section \ref{secVarDiff} depending on the method used. 
By an analogous argument the same applies to the variances of estimators $\wh{CovAE}_{k,C2E}$, 
$k \in I_n$ for appropriate admissible pairs. 
For some  $l, n, m \in \N_+$, let us now consider two functions $\phi: I_l\rightarrow I_n$ and $\psi: I_l\rightarrow I_m$, 
a random vector with not necessarily independent coordinates  $P' = (P'_i)_{i=0}^{n}, P'_i \sim P, i \in I_n$, 
and a random vector $\wt{R} \sim \mu_R^m$, independent of $P'$. We have 
\begin{equation}
\E((\sum_{i=1}^{l}f(P_{\phi(i)}',R_{\psi(i)}))^2) = \sum_{i, j \in I_l}\E(f(P_{\phi(i)}',R_{\psi(i)})f(P_{\phi(j)}', R_{\psi(j)})).
\end{equation}
By a proof similar as of Theorem \ref{thCond}, for $\psi(i)\neq\psi(j)$, it holds 
\begin{equation}\label{efpij} 
\E(f(P_{\phi(i)}',R_{\psi(i)})f(P_{\phi(j)}', R_{\psi(j)})) = \E(\E(f(P_{\phi(i)}',R)|P_{\phi(i)}')\E(f(P_{\phi(j)}',R)|P_{\phi(j)}')), 
\end{equation} 
which, given $f(P,R)=g(h(P,R))$,  
is determined by the distribution of $(P_{\phi(i)}',P_{\phi(j)}')$, $g$, and the reaction network $RN$ used 
in the definition of MR, and thus its value should not depend on the construction of MR being used. 
For $\psi(i)=\psi(j)$, 
\begin{equation} 
\E(f(P_{\phi(i)}',R_{\psi(i)})f(P_{\phi(j)}', R_{\psi(j)}) = \E(f(P,R)^2) - \frac{1}{2}\msd(P_{\phi(i)}',P_{\phi(j)}'). 
\end{equation} 
From the above calculations it easily follows that the inequality between the variances of estimators like 
$\wh{AveE}_{SAve2E}$, $\wh{CovAE}_{n+1,C2E}$, and $\wh{Ave}_{\wt{SVar}},$ 
for the appropriate admissible pairs corresponding to $f$, $P$, $R$, for different constructions  
should be opposite than the inequalities between $\msd(p_1,p_2)$ for all $p_1$ and $p_2$ as in Section \ref{secVarDiff}. 
In our numerical experiments discussed in the next section 
the estimates of variances of estimators $\wh{cE}_{k,C2E}$ were often much lower and of estimators 
$\wh{CovAE}_{n+1,C2E}$ and $\wh{Ave}_{\wt{SVar}}$ higher when using the RTC than the GD method. However, for the 
MBMD model, as discussed in Section \ref{secMBMD},
some estimates of variance of $\wh{cE}_{k,C2E}$ in our experiments were statistically significantly higher while of 
$\wh{CovAE}_{n+1,C2E}$ (denoted there as $\wh{Ave}_{C2E}$) smaller 
for the RTC than the GD method, from which it follows that for this model and its output, 
similarly as for the ones  from Section E of 
\cite{Anderson2013}, there exist parameters 
$p_1$, $p_2$, for which $\msd(p_1,p_2)$ is higher for the RTC than the GD method. 

\section{Numerical experiments}\label{chapNumExp}
\subsection{Implementation extensions and tests of validity of the inefficiency constants of schemes}\label{secImpl}
The numerical experiments in this work were run using the same hardware and operating system as described in Section 
\ref{secExSoft}. The program described in that section was extended by adding implementations of 
MC procedures using the new estimators described in sections \ref{secPolynEst}, \ref{secApprCoeff}, and \ref{secApprErr}.
Figure \ref{pic1} describes basic specification process 
and the corresponding results of computations with our extended program. 
We carried out a numerical experiment comparing the average execution times of MC procedures using schemes 
$\wt{SVar}$, $SE$, $C1E$, and $C2E$, 
and the same number of simulations of the RTC or GD methods for the outputs of the SB, GTS, and
MBMD models defined in sections \ref{SBPrev}, \ref{GTSPrev}, and \ref{MBMDPrev}. 
For $N_P$ denoting the number of parameters of a given model, $k=50$ for the GTS model and $k=500$ for the other models, 
for each model we performed a $50$-step MC procedure measuring in each step the execution time of $k$ MC-steps 
using scheme $\wt{SVar}$, $2k$ MC-steps of scheme $SE$, one MC-step of $\wt{C1E}(8(N_P+1)k)$, and $4(N_P +1)k$ MC-steps of scheme $\wt{C2E}$. 
The computed mean execution times are presented in Table \ref{tabTimeDiff}.  
From the table we can see that the mean execution times of our implementations of 
the procedures using different schemes and simulation methods and a given model 
for the same number of process simulations 
are comparable. 
For this reason and to make our analysis independent of the implementation or computer architecture used, rather than comparing 
the estimates of inefficiency constants of sequences of MC procedures, in our numerical experiments
we shall focus on comparing the estimates of variances of the final MC estimators for the same number of process simulations carried out 
in the MC procedures, 
the ratio of such variances being equal to the ratio of appropriate 
inefficiency constants of the schemes used, as discussed in sections \ref{secIneffSchemes} and \ref{secineffgen}. 
\begin{figure}
\input{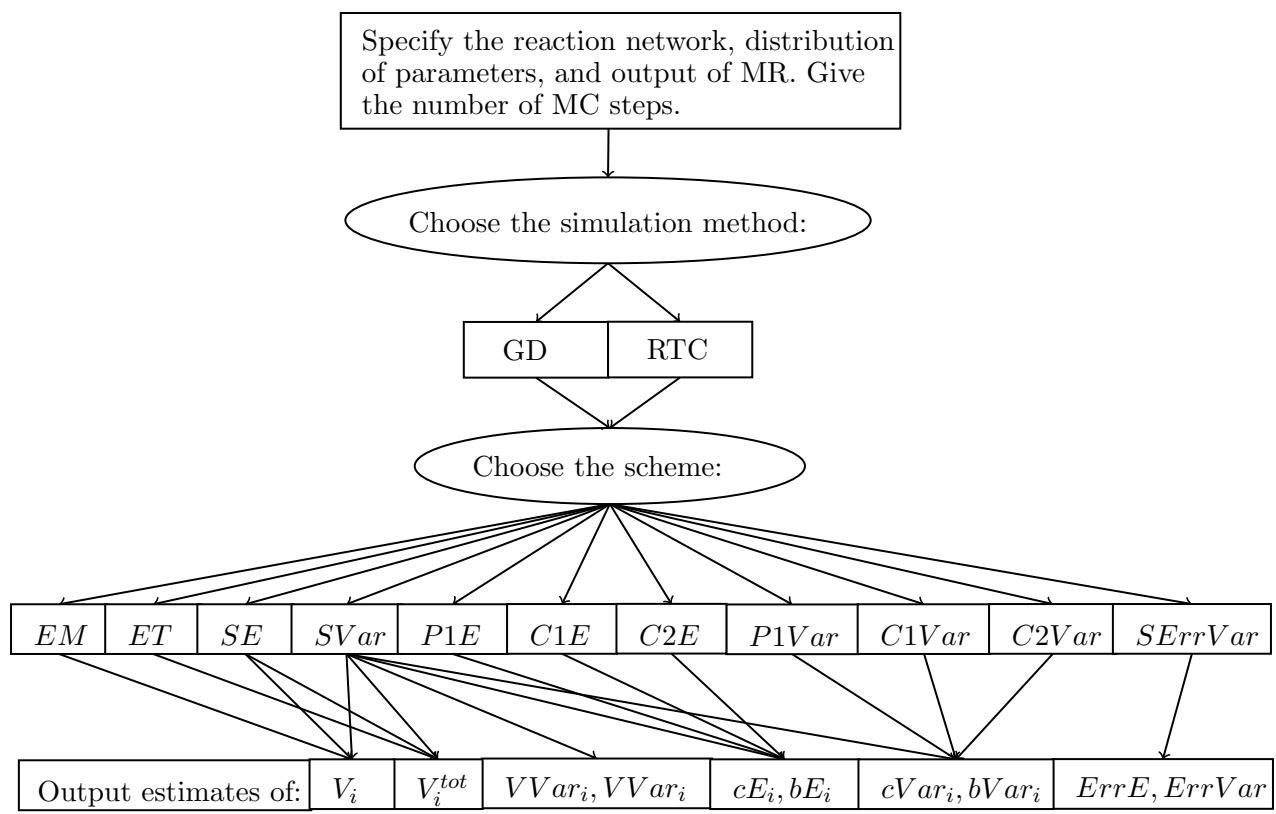}
\caption{\label{pic1} Diagram describing the basic specification process and the corresponding results of computations carried 
out with our program. 
See Section \ref{secExSoft} for details on specification of MR and sections 
\ref{secMany}, \ref{secPolynEst}, \ref{secApprCoeff}, and \ref{secApprErr} for definitions of the above schemes and estimands.}
\end{figure}

\begin{table}[h]
\begin{tabular}{|l|c|c|c|c|}
\hline
{\multirow{2}{*}{MR}}
& $SE$ & $SVar$ & $C1E$& $C2E$\\
\cline{2-5}
& \multicolumn{4}{c|}{RTC} \\
\cline{2-5}
\hline
SB &$1.8845 \pm 0.0014$&$1.9893 \pm 0.0020$&$2.1302 \pm 0.0015$&$1.9872 \pm 0.0015$\\ 
GTS &$2.616 \pm 0.014$&$2.612 \pm 0.010$&$2.6934 \pm 0.0045$&$2.6419 \pm 0.0058$\\ 
MBMD &$2.1798 \pm 0.0068$&$2.5835 \pm 0.0048$&$2.2885 \pm 0.0039$&$2.2357 \pm 0.0049$\\ 
\hline
MR & \multicolumn{4}{c|}{GD} \\
\hline
SB &$1.9782 \pm 0.0029$&$2.1013 \pm 0.0023$&$2.4037 \pm 0.0028$&$2.1594 \pm 0.0028$\\
GTS & $2.734 \pm 0.010$&$2.705 \pm 0.016$&$2.8295 \pm 0.0070$&$2.7709 \pm 0.0079$\\
MBMD & $2.1564 \pm 0.0090$&$2.6173 \pm 0.0088$&$2.2450 \pm 0.0081$&$2.2058 \pm 0.0082$\\
\hline
\end{tabular}
\caption{\label{tabTimeDiff} 
Estimates of mean execution times in seconds computed from 50 runs of 
the MC procedures involving the same number of simulations of the GD or RTC constructions for each model and using different schemes 
as explained in Section \ref{secImpl}.} 
\end{table}

\subsection{SB model}\label{secSB} 
Let us consider the SB model and its output from Section \ref{SBPrev}. 
See \cite{badowski2011} and Appendix \ref{appd} 
for derivation of some analytical expressions for the sensitivity indices and orthogonal projection coefficients  
in this model. Some values obtained from these expressions are presented in Table \ref{exactVal} and the main 
Sobol's indices of conditional expectation and variance are also shown on pie charts in Figure \ref{pieSB}. 
For computations with this model we used only the RTC method 
since for a reaction network with one reaction 
there is no difference in variance of our estimators using the GD and RTC methods. 
We performed a one-million-step MC procedure using scheme $SVar$. The computed sensitivity indices, orthogonal projection coefficients, 
and nonlinearity coefficients are presented in Table \ref{tabSBRTC}, while the 
mean value and average variance of the model output are given in Table \ref{tabDisp}. The results  
of computations are in good agreement with the analytically computed values in Table \ref{exactVal} and Appendix \ref{appd}. 
We performed a ten-million-step MC procedure using scheme $SErrVar$ to estimate the mean squared error of approximation of 
the conditional expectation and conditional variance of the output using linear combinations of elements of the set 
of centered model parameters and constant one, that is the set 
$\Phi = \{P_i - \E(P_i)\}_{i=1}^{N_P}\cup\{\I\}$, 
 using as coefficients the estimates of 
 $(bE_1, \ldots, bE_{N_P}, Ave)$ from Tables \ref{tabSBRTC} and \ref{tabDisp} when approximating the  conditional expectation, and 
 estimates of $(bVar_1, \ldots, bVar_{N_P}, AveVar)$ from these tables when approximating the conditional variance. 
We obtained estimates of mean squared approximation error $0.027 \pm 0.022$ for the conditional expectation and $1 \pm 15$ for the variance, both 
being in good agreement with the values of these errors we computed analytically, approximately equal to $0.00051$ and $0.0062$, respectively. 
We also performed a numerical experiment comparing the estimates of variances of the final MC estimators of different indices using  
scheme $SVar$ in $25000$ steps, scheme $SE$ in $50000$, and 
schemes $EM$ and $ET$ in $100000$ steps, 
so that each above MC procedure used the same number of one million process simulations. 
We ran each above procedure five times collecting in each run the estimate of variance of the final MC estimator (\ref{varEst}), 
and finally computing the estimate of mean and standard deviation of the estimates of variances as described in Appendix \ref{appStatMC}. 
The results are presented in Table \ref{tabCompareMB} and in Figure \ref{barSBV2},
from which we can see that the estimates of variances of the final MC estimators given by 
scheme $EM$ are approximately two times lower than for scheme $SE$ and four times lower than for scheme $SVar$ for the main sensitivity
indices of all parameters except $K3$
and analogously for schemes $ET$, $SE$, and $SVar$ for the total sensitivity indices of these parameters. 
Such proportions correspond to equalities in the rhs 
inequalities of  
relations (\ref{EMComp}), (\ref{ineqDVar}), and (\ref{compVkVar}) for the main as well as in the 
relations (\ref{VitotComp}), (\ref{ineqDVar}), and (\ref{compVkTotVar}) 
for the total sensitivity indices of conditional expectation. 
For $\Phi' = \{\frac{P_i - \E(P_i)}{\sigma(P_i)}\}_{i=1}^{N_P}\cup\{\I\}$, that is the set of normalized centred parameters and constant one, we 
also performed a numerical experiment comparing the variances of MC methods estimating the coefficients of orthogonal projection 
of the conditional expectation and conditional variance onto span($\Phi'$). 
We used scheme $SVar$ in $k = 100$ steps for estimating the orthogonal projection coefficients 
of both conditional expectation and conditional variance. For $N_P = 4$ 
denoting the number of parameters, we also carried out MC procedures using $4(N_P+1)k$ steps of scheme $C2E$, $8(N_P+1)k$ 
steps of $P1E$, and a single step of $C1E(8(N_P+1)k)$ for the conditional expectation. For the conditional variance we applied besides 
scheme $SVar$ also $4(N_P+1)k$ steps of scheme $P1Var$, single step of $C1Var(4(N_P+1)k)$, and $2(N_P+1)k$ steps of $C2Var$.  
The same number of $4000$ process evaluations was used in each above method. 
We performed a $200$ step procedure to compute the mean variances of the final MC estimators. For schemes 
$C1E$ and $C1Var$, the variance in each step was computed using unbiased estimator of variance computed 
from a sample of ten runs of the method, while for the 
other methods this was an estimate of variance of the mean computed in the method using estimator (\ref{phiAveVar}). 
Let $\Sigma Q$ for $Q = Var,E$ be defined as a sum of variances 
of the final MC estimators of all the coefficients of orthogonal projection of $Q(f(P,R)|P)$ onto span($\Phi'$) given by certain scheme.
From discussion in Section \ref{secApprErr}, using the coefficients computed with a scheme with lower value of $\Sigma Q$ should 
lead to lower average error of approximation of $Q(f(P,R)|P)$. Furthermore, from discussion in sections \ref{secineffgen} 
and \ref{secApprErr} the ratio of values of $\Sigma Q$ when using different schemes and the same number $l$ of 
process simulations is equal to the ratio of inefficiency constants
of these schemes for estimating the vector of projection 
coefficients, with variances in definitions of the constants being given by standard scalar product. 
In each of the above $200$ steps we also obtained estimates of $\Sigma Q$ for different schemes by 
summing the estimates of variances of the estimators of the coefficients and then computed the mean from all steps. 
The results of the above numerical experiment are presented in Table \ref{tabCovSB} and Figure \ref{figSBBar}.
From the table we can see that $\Sigma E$ is lowest for scheme $C1E$, followed by $C2E$, $SVar$, and $P1E$, while 
$\Sigma Var$ is lowest for $SVar$ followed by $P1Var$, $C2Var$, and $C1Var$. 
The reader can easily confirm that the results in Table \ref{tabCovSB} are in good agreement with various inequalities 
between variances of estimators of orthogonal projection coefficients given in Section \ref{secApprCoeff}. 

\begin{table}[h]
\begin{tabular}{|l|c|c|c|c|c|}
\hline
$i $ & $\widetilde{V}_i$ & $\widetilde{V}_i^{tot} $ &$\widetilde{S}_i$ &$\widetilde{S}_i^{tot}$ & $bE_i$\\
\hline
$C $ &$ 310 $& $310$  &$0.451$&$0.451$ & $ 1 $\\
$K_1 $ &$ 300 $& $300$ & $0.436  $ & $0.436 $&$ 100$\\
$K_2 $ &$ 75 $& $75$ & $0.109  $ & $0.109  $&$ 100 $\\
$K_3 $ &$ 3 $& $3$ & $0.0044 $ & $0.0044 $&$ 100 $\\
\hline
$i$& $V_i$ & $V_i^{tot} $ & $S_i$ & $S_i^{tot}$ &\multicolumn{1}{c}{} \\
\cline{1-5}
$P $ & $ 688 $& & $0.80$ & &\multicolumn{1}{c}{} \\
$R $ & &$170$  & &$0.20$ &\multicolumn{1}{c}{}\\
$P,R $ &$ 858 $& $ 858 $ & $1$ & $0$ &\multicolumn{1}{c}{}\\
\hline
$i$ &  $VVar_i$ &$VVar^{tot}_i$ & $SVar_i$ &$SVar_i^{tot}$ & $bVar_i$\\
\hline
$C $& $0$ & $0$ &$0$ & $0$ & $0$ \\
$K_1 $&$300$ & $300$ & $0.794$&$0.794$&$ 100 $\\
$K_2 $&$75$& $75$& $0.198 $& $0.198 $ &$ 100$\\
$K_3 $&$3$& $3$&$ 0.00794$&$0.00794$ & $ 100$\\
\hline
$K_3 $&$3$& $3$&$ 0.00794$&$0.00794$ & $ 100$\\
\hline
$P $ & $ 378 $& $1$& $1$ & $1$&\multicolumn{1}{c}{} \\
\cline{1-5}
\end{tabular}
\caption{\label{exactVal}Values of sensitivity indices and orthogonal projection coefficients 
onto model parameters of conditional expectation and variance in the SB model, obtained using analytical expressions
 from Appendix \ref{appd} and \cite{badowski2011}.} 
\end{table}

\begin{figure}[h]%
  \centering
  \subfloat[]{\includegraphics[width=0.46\textwidth]{./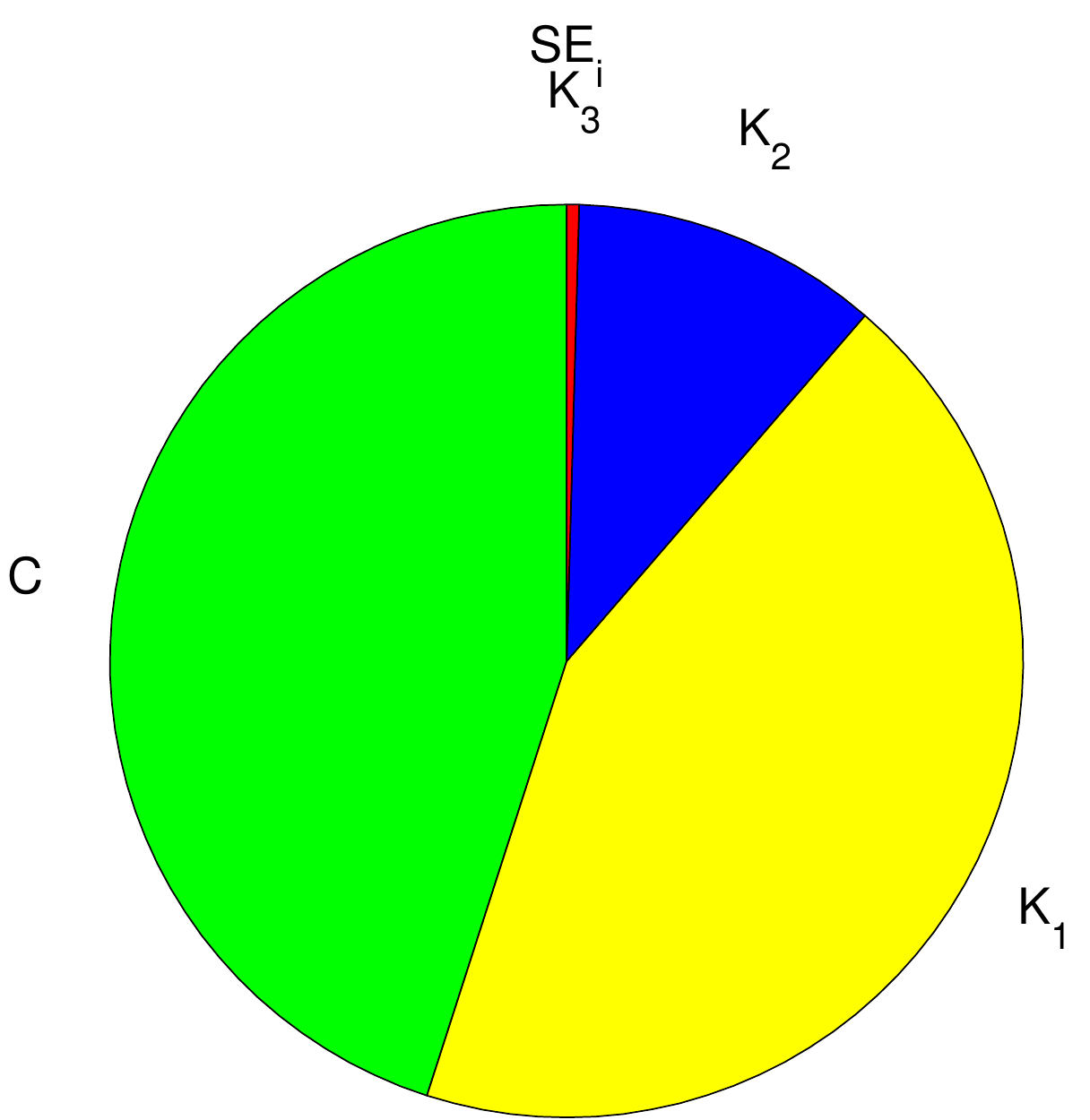}}
\qquad  
\subfloat[]{\includegraphics[width=0.39\textwidth]{./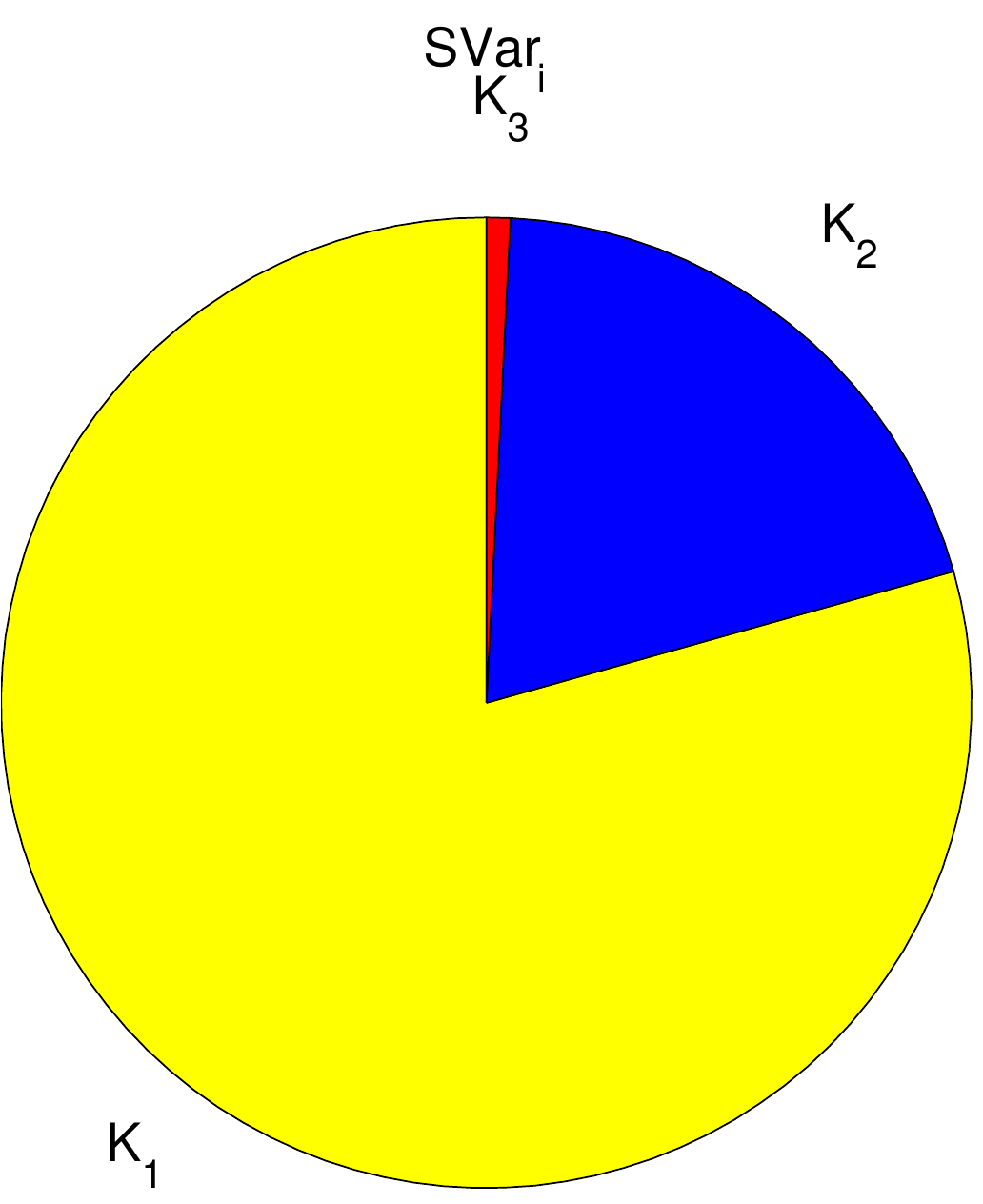}} 
\caption{The proportion of the total arc length occupied by a sector of a pie chart is equal to the main Sobol's 
sensitivity index of conditional expectation in (a) or conditional variance in (b) of the output of the SB model with respect to its 
given parameter.}
\label{pieSB}
\end{figure}

\begin{table}[h]
\begin{tabular}{|l|c|c|c|c|c|c|}
\hline
$i $ & $\widetilde{V}_i$ & $\widetilde{V}_i^{tot} $ &$\widetilde{S}_i$ &$\widetilde{S}_i^{tot}$& $bE_i$ & $ DNE_i$\\ 
\hline
$C $& $309.83 \pm 0.37$ & $309.83 \pm 0.37$ &$ 0.45 $&$ 0.45$& $0.9994 \pm 0.0012$ & $-0.45 \pm 0.72$ \\  
$K1 $& $299.80 \pm 0.36$ & $299.80 \pm 0.36$ &$ 0.44 $&$ 0.44$& $99.93 \pm 0.12$ & $-0.05 \pm 0.70$ \\  
$K2 $& $74.895 \pm 0.091$ & $74.895 \pm 0.091$ &$ 0.11 $&$ 0.11$& $99.83 \pm 0.13$ & $-0.16 \pm 0.18$ \\  
$K3 $& $2.9991 \pm 0.0039$ & $2.9999 \pm 0.0039$ &$ 0.0044 $&$ 0.0044$& $100.15 \pm 0.33$ & $-0.0173 \pm 0.0073$ \\  
\hline
$i $ & $V_i$ & $V_i^{tot} $ & $S_i$ & $S_i^{tot}$ &$ DNE $& $-0.27 \pm 0.56$\\
\hline
$P $& $687.43 \pm 0.54$ & $698.59 \pm 0.55$ &$ 0.8 $&$ 0.81$&\multicolumn{2}{c}{}\\
$R $& $158.79 \pm 0.13$ & $169.95 \pm 0.13$ &$ 0.19 $&$ 0.2$&  \multicolumn{2}{c}{} \\
$P, R$& $857.38 \pm 0.56$ & $857.38 \pm 0.56$ &$ 1 $&$ 1$ & \multicolumn{2}{c}{} \\
\hline
$i$ &  $VVar_i$ & $VVar^{tot}_i$ & $SVar_i$ &$SVar_i^{tot}$& $bVar_i$ & $DNVar_i$\\
\hline
$C $& $0$ & $0$ &$ 0 $&$ 0$& $-7.2 \pm 6.5\cdot 10^{-4}$ & $0$ \\  
$K1 $& $301.5 \pm 5.6$ & $301.1 \pm 5.6$ &$ 0.79 $&$ 0.79$& $99.62 \pm 0.36$ & $2.9 \pm 7.8$ \\  
$K2 $& $76.2 \pm 2.4$ & $75.5 \pm 2.4$ &$ 0.2 $&$ 0.2$& $100.34 \pm 0.44$ & $0.7 \pm 3.1$ \\  
$K3 $& $3.27 \pm 0.42$ & $3.08 \pm 0.43$ &$ 0.0086 $&$ 0.0081$& $101.85 \pm 0.90$ & $-0.08 \pm 0.53$ \\  
\cline{6-7}
$P$& $380.7 \pm 6.1$ & $380.7 \pm 6.1$ &$ 1 $&$ 1$& $DNVar $ & $4.1 \pm 4.3$ \\
\hline
\end{tabular}
\caption{\label{tabSBRTC} Estimates of various indices and coefficients
 for the SB model computed in a one-million-step MC procedure using the RTC algorithm 
and scheme $SVar$.}
\end{table}
\begin{table}[h]
\begin{tabular}{|l|c|c|c|c|}
\hline
{\multirow{2}{*}{i}}
 &$SE$ &  $EM$  & $ET$ & $SVar$ \\
\cline{2-5}
& \multicolumn{4}{c|}{$V_i$ } \\
\cline{2-5}
\hline
$C$ &$2.6922 \pm 0.0061$ & $1.3437 \pm 0.0019$ & &$5.356 \pm 0.024$\\
$K_1$  &$2.696 \pm 0.012$ & $1.3687 \pm 0.0053$ & &$5.221 \pm 0.023$\\
$K_2$  &$0.1754 \pm 0.0016$ & $0.09212 \pm 0.00037$ & &$0.33392 \pm 0.00043$\\
$K_3$  &$3.839 \pm 0.037\cdot 10^{-4}$ &$2.490 \pm 0.010\cdot 10^{-4}$ & &$6.279 \pm 0.024\cdot 10^{-4}$\\
\hline
$i$ & \multicolumn{4}{c|}{$\widetilde{V}^{tot}_i$ } \\
\hline
$C$ &$2.6922 \pm 0.0061$ &  &$1.3474 \pm 0.0039$ &$5.356 \pm 0.024$\\
$K_1$ &$2.696 \pm 0.012$ &  &$1.3684 \pm 0.0030$ &$5.221 \pm 0.024$\\
$K_2$  &$0.1754 \pm 0.0016$ &  &$0.09266 \pm 0.00016$ &$0.33393 \pm 0.00046$\\
$K_3$  &$3.842 \pm 0.033\cdot 10^{-4}$ & & $2.498 \pm 0.012\cdot 10^{-4}$ &$6.276 \pm 0.031\cdot 10^{-4}$\\
\hline
\end{tabular}
\caption{\label{tabCompareMB} Estimates of variances of the final MC estimators of the sensitivity indices of conditional expectation
 using different schemes and the SB model.}
\end{table}

\begin{figure}[h]%
  \centering
  \subfloat[]{\includegraphics[width=0.45\textwidth]{./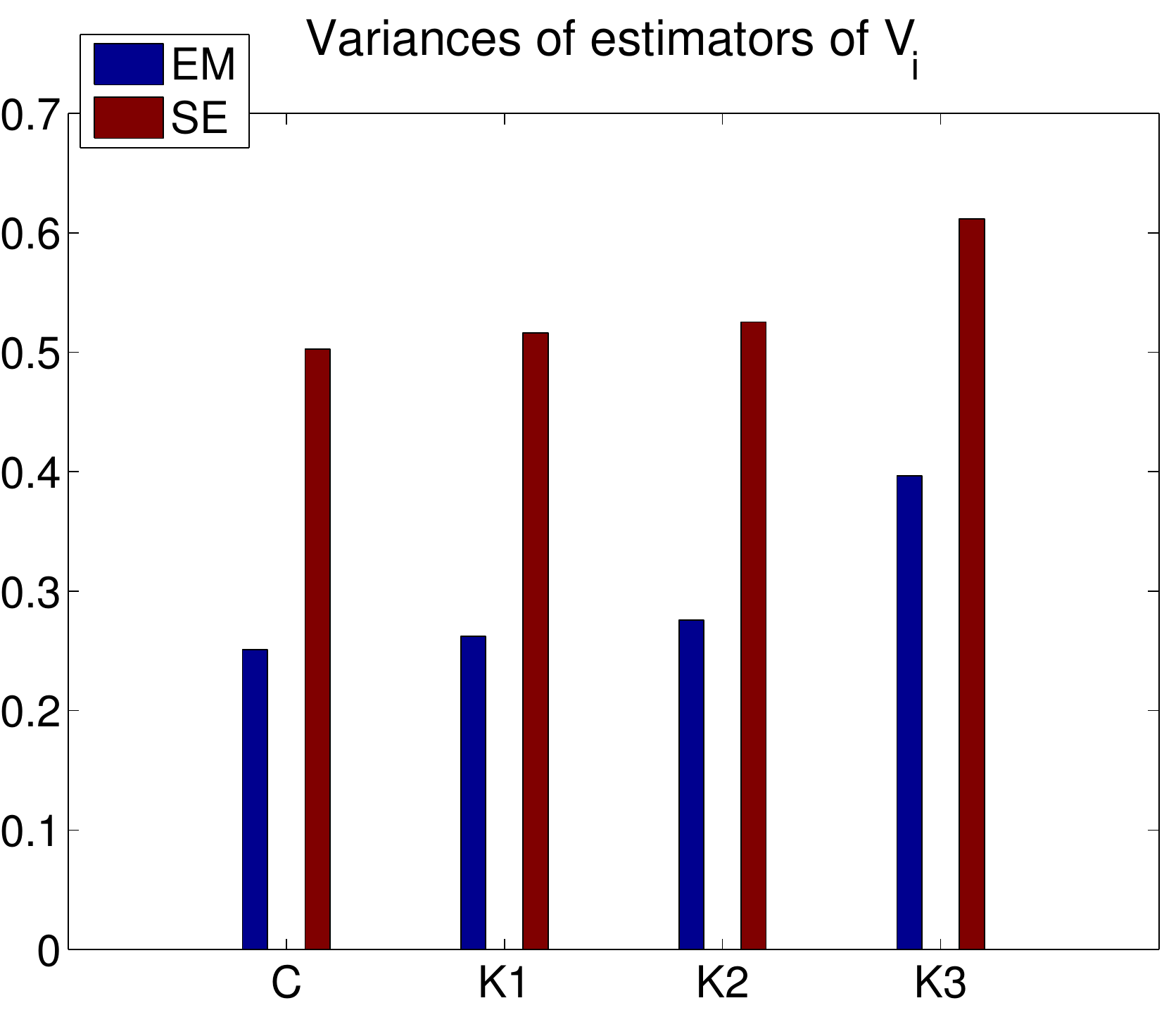}}
\qquad  
\subfloat[]{\includegraphics[width=0.45\textwidth]{./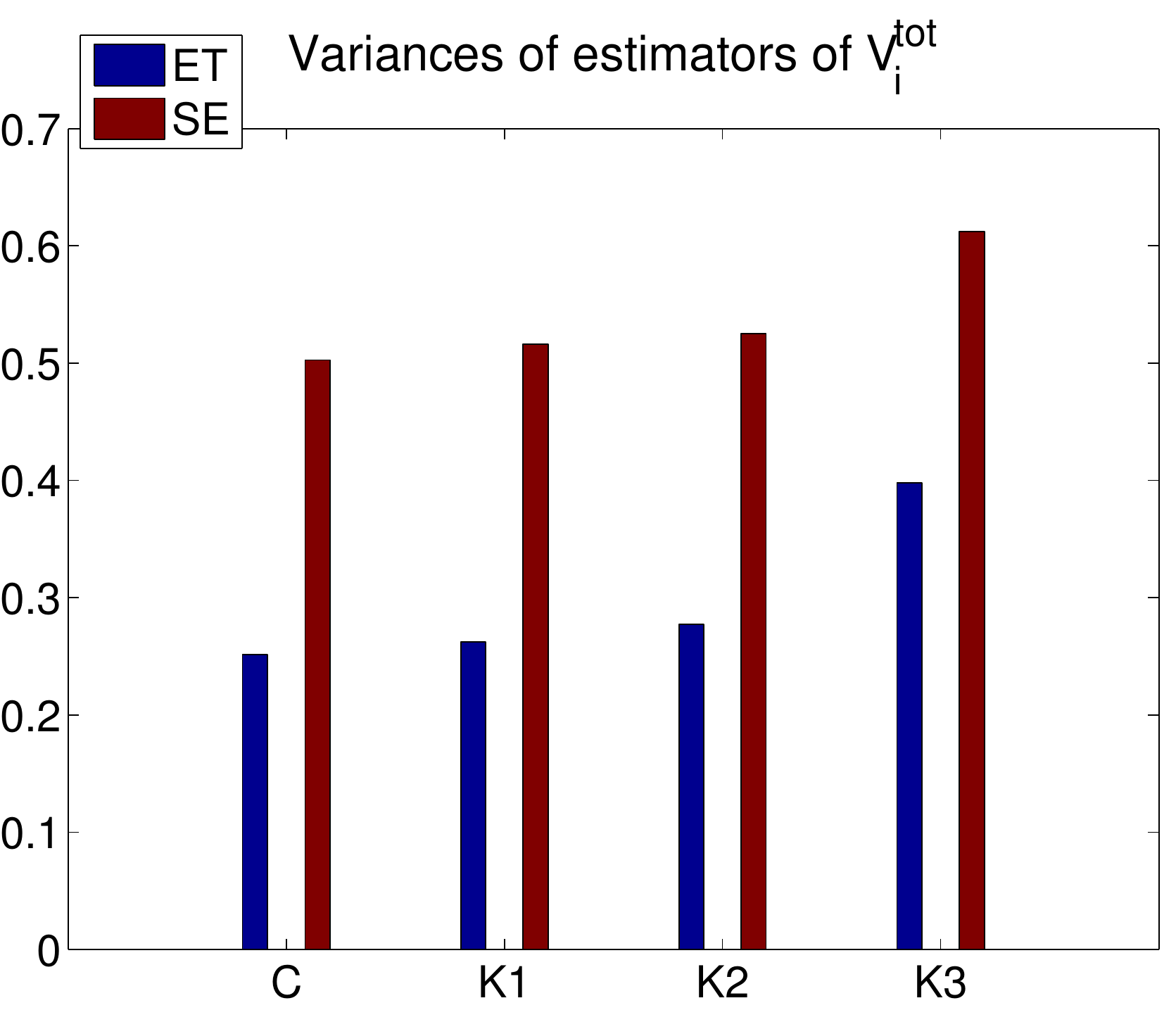}}
\caption{Chart (a) shows the ratios of estimates of variances of estimators of the main sensitivity indices 
given by schemes $EM$ and $SE$ to the estimate of variance of such estimator given by scheme $SVar$ for the SB model output,
and chart (b) of estimates of variances of estimators of the total sensitivity indices given by schemes $ET$ and $SE$ 
to such estimate for scheme $SVar$. See Section \ref{secSB} for details.}
\label{barSBV2}
\end{figure}

\begin{table}[h]
\begin{tabular}{|l|c|c|c|c|}
\hline
{\multirow{2}{*}{i}}
 & \multicolumn{4}{c|}{$cE_i$ } \\
\cline{2-5}
 & $P1E$ & $C1E$ & $C2E$ & $SVar$ \\
\hline
$ C $ &$13.414 \pm 0.014$ &$0.2039 \pm 0.0073$ & $0.4107 \pm 0.0015$ & $4.473 \pm 0.065$\\
$ K1 $ &$13.417 \pm 0.014$ &$0.1913 \pm 0.0068$ & $0.4118 \pm 0.0015$ & $4.366 \pm 0.060$\\
$ K2 $ &$13.456 \pm 0.014$ &$0.2158 \pm 0.0079$ & $0.3635 \pm 0.0014$ & $1.318 \pm 0.022$\\
$ K3 $ &$13.436 \pm 0.015$ &$0.2198 \pm 0.0075$ & $0.3491 \pm 0.0013$ & $0.3232 \pm 0.0057$\\
$Ave$ &$2.1403 \pm 0.0033\cdot 10^{-1}$ &$0.1930 \pm 0.0064$ & $2.5423 \pm 0.0055\cdot 10^{-1}$ & $3.851 \pm 0.034$\\
$\Sigma E$ &$53.936 \pm 0.033$ &$1.024 \pm 0.016$ & $1.7892 \pm 0.0043$ & $14.33 \pm 0.11$\\
\hline
{\multirow{2}{*}{i}}
& \multicolumn{4}{c|}{$cVar_i$ } \\
\cline{2-5}
& $P1Var$  & $C1Var$ & $C2Var$ & $SVar$ \\
\hline
$ C $ &$232.2 \pm 1.3$ &$4.60 \pm 0.15\cdot 10^{4}$ &$279.1 \pm 2.0$ &$1.308 \pm 0.038$\\
$ K1 $ &$145.06 \pm 0.94$ &$2.753 \pm 0.095$ &$137.8 \pm 1.2$ &$38.1 \pm 1.1$\\
$ K2 $ &$144.96 \pm 0.86$ &$0.669 \pm 0.023$ &$134.1 \pm 1.1$ &$13.88 \pm 0.43$\\
$ K3 $ &$144.48 \pm 0.91$ &$2.820 \pm 0.092\cdot 10^{-2}$ &$133.7 \pm 1.1$ &$2.443 \pm 0.076$\\
$AV$ &$92.01 \pm 0.44$ &$90.1 \pm 2.8$ &$116.94 \pm 0.95$ &$184.6 \pm 3.0$\\
$\Sigma V$ &$758.7 \pm 3.8$ &$4.61 \pm 0.15\cdot 10^{4}$ &$801.7 \pm 4.5$ &$240.3 \pm 3.4$\\
\hline
\end{tabular}
\caption{\label{tabCovSB} Estimates of variances of the final MC estimators of orthogonal projection coefficients 
of conditional expectation and conditional 
variance for the SB model as explained in Section \ref{secSB}. $AV$ is an abbreviation for $AveVar$, and $\Sigma V$ for $\Sigma Var$.}
\end{table}

\begin{figure}[h]%
\centering
\subfloat[]{\includegraphics[width=0.45\textwidth]{./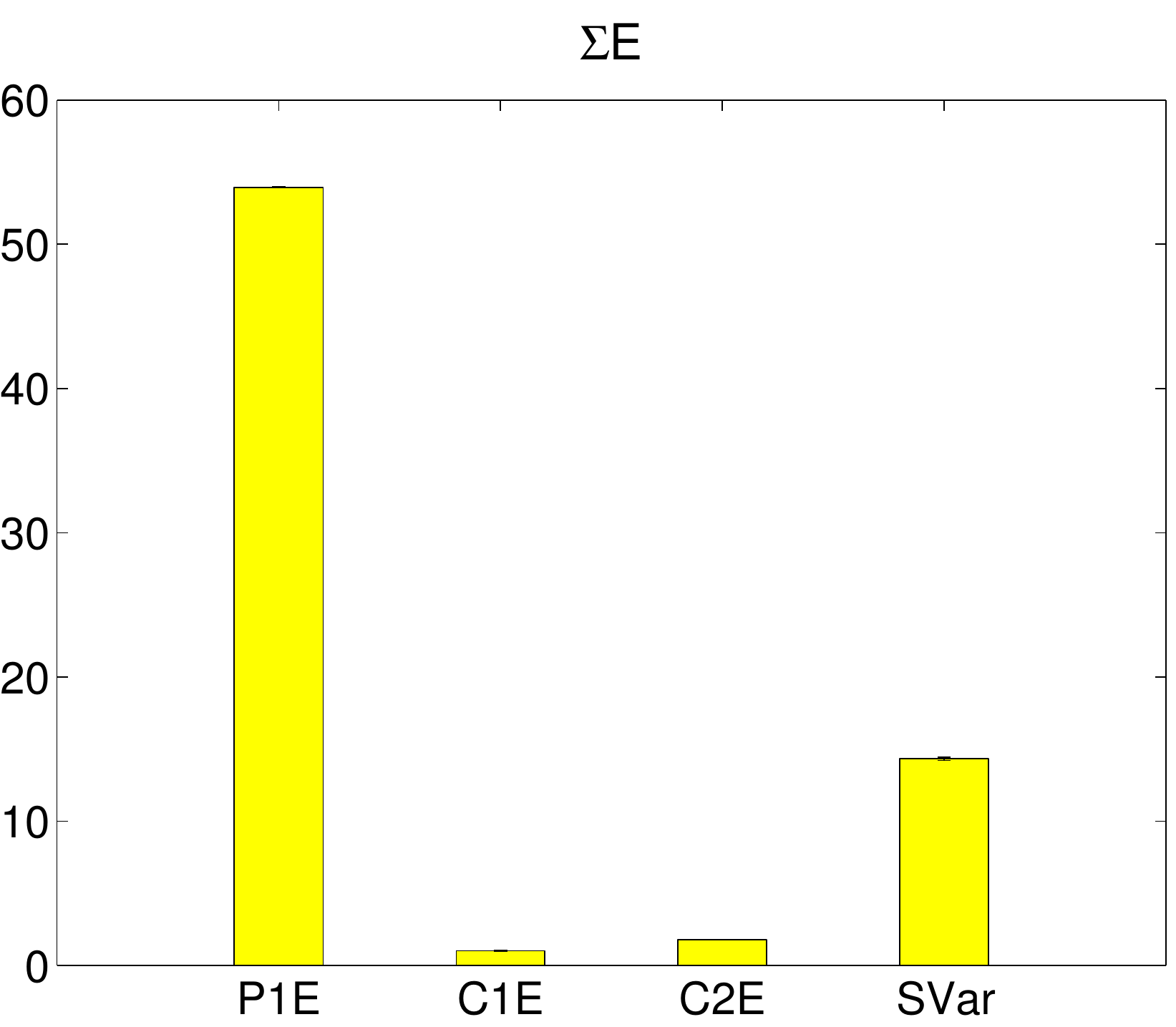}}
\qquad  
\subfloat[]{\includegraphics[width=0.45\textwidth]{./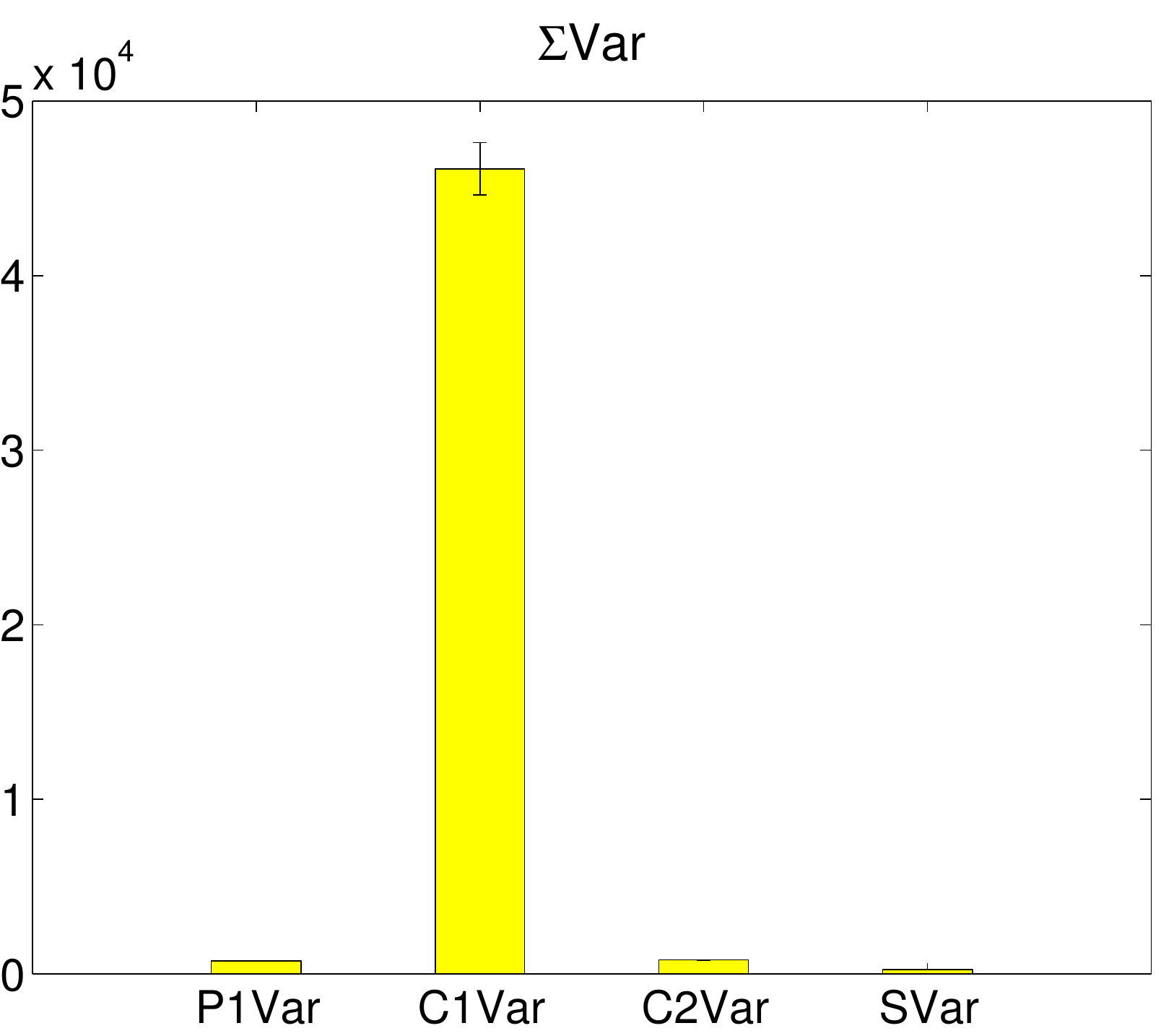}} 
  \caption{Estimates for the SB model of quantities $\Sigma E$ in chart (a) and $\Sigma Var$ in chart (b) for different 
estimation schemes 
  of orthogonal projection coefficients of conditional expectation in (a) and variance in (b) 
for the computations described in Section \ref{secSB}.}
\label{figSBBar}
\end{figure}


 \begin{table}[h]
 \begin{tabular}{|l|c|c|}
 \hline
 MR& Ave & AveVar \\
 \hline
 SB & $230.004 \pm 0.020$ & $169.95 \pm 0.13$ \\
 MBMD& $10.0298 \pm 0.0019$ & $7.1035 \pm 0.0051$ \\
 GTS & $30.244 \pm 0.021$ & $370.05 \pm 0.40$\\
 \hline
 \end{tabular}
 \caption{\label{tabDisp} Estimates of the means and average variances 
of outputs of the SB and MBMD models computed in a one-million-step and of the 
GTS model in a $250000$-step MC procedure using scheme $SVar$ and the RTC algorithm.}
 \end{table}

\subsection{GTS model}\label{secGTS} 
Let us consider the GTS model and its output from Section \ref{GTSPrev}. 
We performed a $250000$ step MC procedure using scheme $SVar$ and the RTC method. 
The estimates of different coefficients and sensitivity indices obtained in this procedure are given in Table 
\ref{tabGTS} and figures \ref{pieGTS} and \ref{barGTSVVtot}. 
The estimates of mean and average variance of the output from the  procedure are given in Table \ref{tabDisp}. 
Note that the sum of Sobol's interaction indices, equal to the proportion of total arc lenght of each pie chart 
occupied by the empty sector in Figure \ref{pieGTS} (see also formula \ref{sumSob}), 
is much higher for the conditional variance than the conditional expectation. From Table \ref{tabGTS} and Figure \ref{barGTSVVtot}
we can also see that the total sensitivity indices of conditional variance are significantly higher than the main sensitivity 
indices, especially for the parameters $\gamma$ and $\alpha_2$,
and that the order of the parameters with respect to the total indices of conditional variance is different than 
with respect to its main indices. 

We performed a $2.5$-million-step MC procedure using scheme $SErrVar$ to estimate the mean squared error of approximation of 
conditional expectation and variance of the output using linear combinations of centered parameters and constant one as in the previous section, 
taking as coefficients the estimates of $bE_i$ and $bVar_i$ from Table \ref{tabGTSComp} and estimates of mean and mean variance 
from Table \ref{tabDisp}. 
We obtained estimates of error for conditional expectation $1.779 \pm 0.100$ and for variance $2.27 \pm 0.14\cdot 10^{3}$, 
both being significantly higher than zero 
and not significantly different from the estimates of squares of 
the best possible linear approximation errors, equal to the values of 
$DNE$ and $DNVar$ given in Table \ref{tabGTS}. 
In Table \ref{tabGTSComp} we present estimates of variances of the final MC estimators of the procedures using the RTC and the GD methods 
and $1000$ steps of scheme $SVar$, $2000$ of $SE$, and $4000$ of $EM$ and $ET$, so that the variances are computed for the same number of 
process simulations used by the schemes. The mean estimates of variances for each method 
were computed analogously as in the previous section, except that 
fifty rather than five runs of each procedure were carried out to compute the means and standard deviations. 
Note that the estimates of variances of estimators from scheme $SVar$ for estimation 
of some main sensitivity indices in Table \ref{tabGTSComp} are significantly lower than these of the subschemes of scheme $EM$ introduced in 
\cite{badowski2011} and analogously for the 
total sensitivity indices and scheme $ET$. For instance the estimate of variance of the total sensitivity index with 
respect to the parameter $\beta$ computed using scheme $SVar$ is about $2.45$ times lower than the one from scheme $ET$, 
both using the GD method, which is not far from the theoretical bound of $3$ 
corresponding to equality in the lhs of relation (\ref{compVkTotVar}). 
From Table \ref{tabGTSComp} we can also see that  
the estimates of variances of estimators from scheme $SVar$ are lower for the RTC than the GD method for all the main and total indices of 
the conditional 
expectation. They are even over $4$ times lower for the total and main sensitivity index with respect to the parameter $\beta$. 

We carried out a numerical experiment comparing the variances of estimation schemes for orthogonal projection coefficients 
which was analogous as in the previous 
section, except that here for schemes $C2E$, $C2Var$, and $SVar$ we tested the GD and RTC methods separately. 
The results are presented in Table \ref{tabCovGTS} and values of $\Sigma E$ and $\Sigma Var$ also in Figure \ref{figGTSBar}. 
From Table \ref{tabCovGTS} 
we can see that the estimates of variances of orthogonal projection coefficients onto 
normalized centred parameters computed with the use of schemes $C2E$ and 
$SVar$ for conditional expectation and $C2Var$ 
for conditional variance are significantly 
lower for the RTC than the GD method for most coefficients. On the other hand, these variances are higher 
for the RTC than the GD method for all the coefficients of projections onto constant one (that is the averages $AveE$ and $AveVar$), 
for these schemes. 
From Table \ref{tabCovGTS} and Figure \ref{figGTSBar} 
we can also see that $\Sigma E$ is similarly as for the SB model lowest for the scheme $C1E$, 
followed by scheme $C2E$ using RTC and then GD methods, but in contrast to the SB model now next comes 
scheme $P1E$ and then  
$SVar$ using RTC and GD methods. As opposed to the SB model, for the GTS model $\Sigma Var$ is lowest for 
scheme $C2Var$ for the RTC method and for scheme $P1Var$, followed by $C2Var$ for the GD method, $SVar$ for the RTC and GD methods, 
and finally $C1Var$. 

Let us now illustrate the theory from Section \ref{secInterv}, using notations from there. 
Let the distribution of vector $Y$ be as of the parameter vector of the GTS model defined above and let the new parameter vector $X$ have
distributions of all coordinates as in $Y$, except for the $i$th coordinate, for certain
 $i \in I_4$, which has distribution U($0.8v_i, 1v_i$) for $v_i$ equal to the fixed value of that parameter 
in \cite{Rathinam_2010}. 
We have $\mu_Y(B_X) = \frac{1}{2}$ and $P(A)=1$. 
Using inequality (\ref{probChange})
for a perturbation $\Delta_i = 0.2v_i$ of only the $i$th parameter and values of $DNE_i$ and $bE_i$ 
from Table \ref{tabGTS}, we receive an estimate of the lower bound on the probability that the effect of 
this perturbation on the mean number of particles has the same sign as $bE_i$, equal to $95\%$, $84\%$, $89\%$, and $88\%$,
for the consecutive $i \in I_4$. 
Let now $Y$ and $X$ both have distributions as the parameter vector of the GTS model and consider a perturbation 
$\Delta_i = 0.1v_i$ only of the $i$th parameter. We now have $\mu_Y(B_X)=1, P(A) = \frac{3}{4}$, 
and the estimates of bounds on the probabilities as above are equal to $64\%$, $43\%$, $52\%$, and $50\%$, for 
the consecutive $i \in I_4$. 


\begin{figure}[h]%
  \centering
  \subfloat[]{\includegraphics[width=0.45\textwidth]{./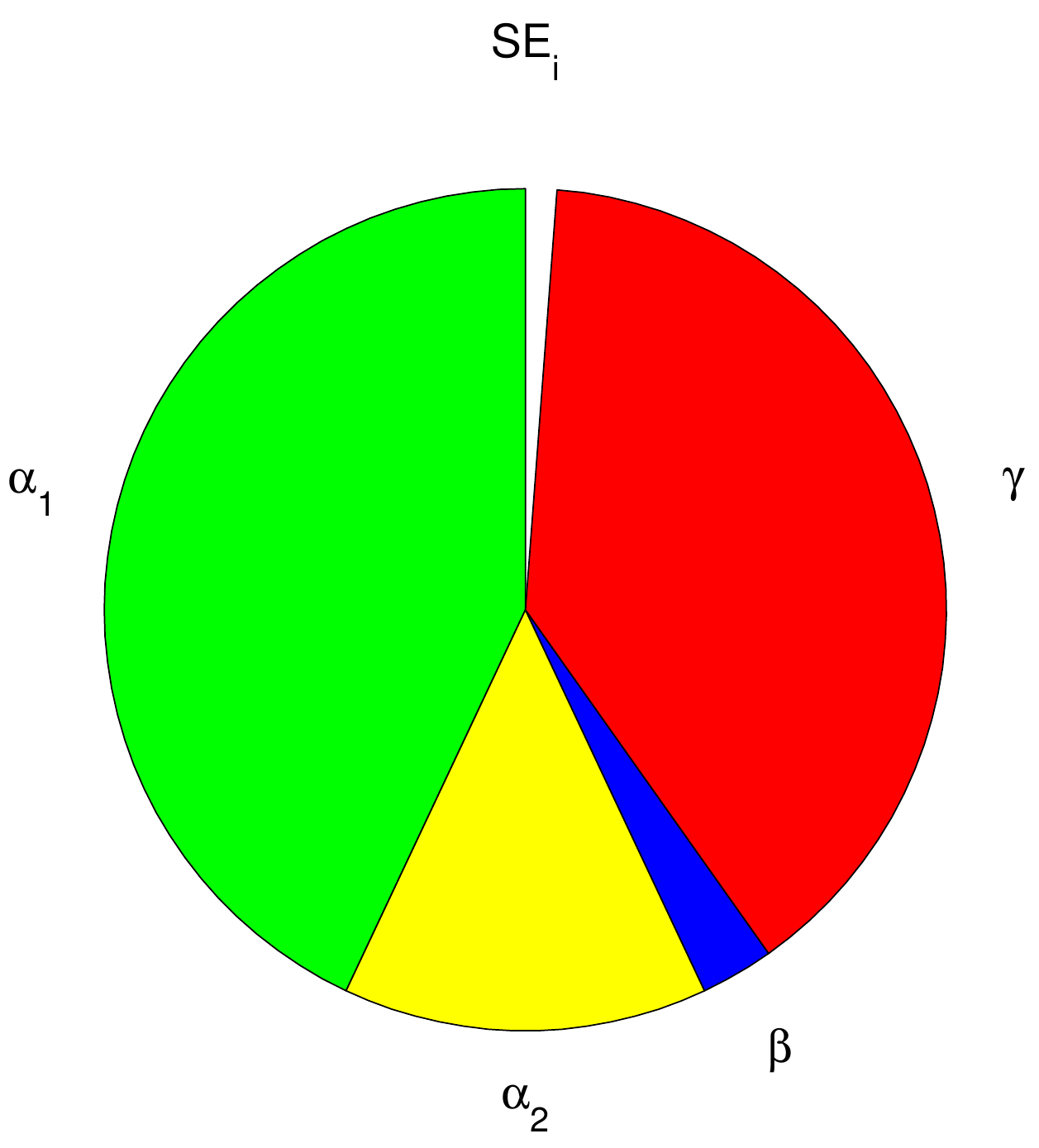}}
\qquad  
\subfloat[]{\includegraphics[width=0.45\textwidth]{./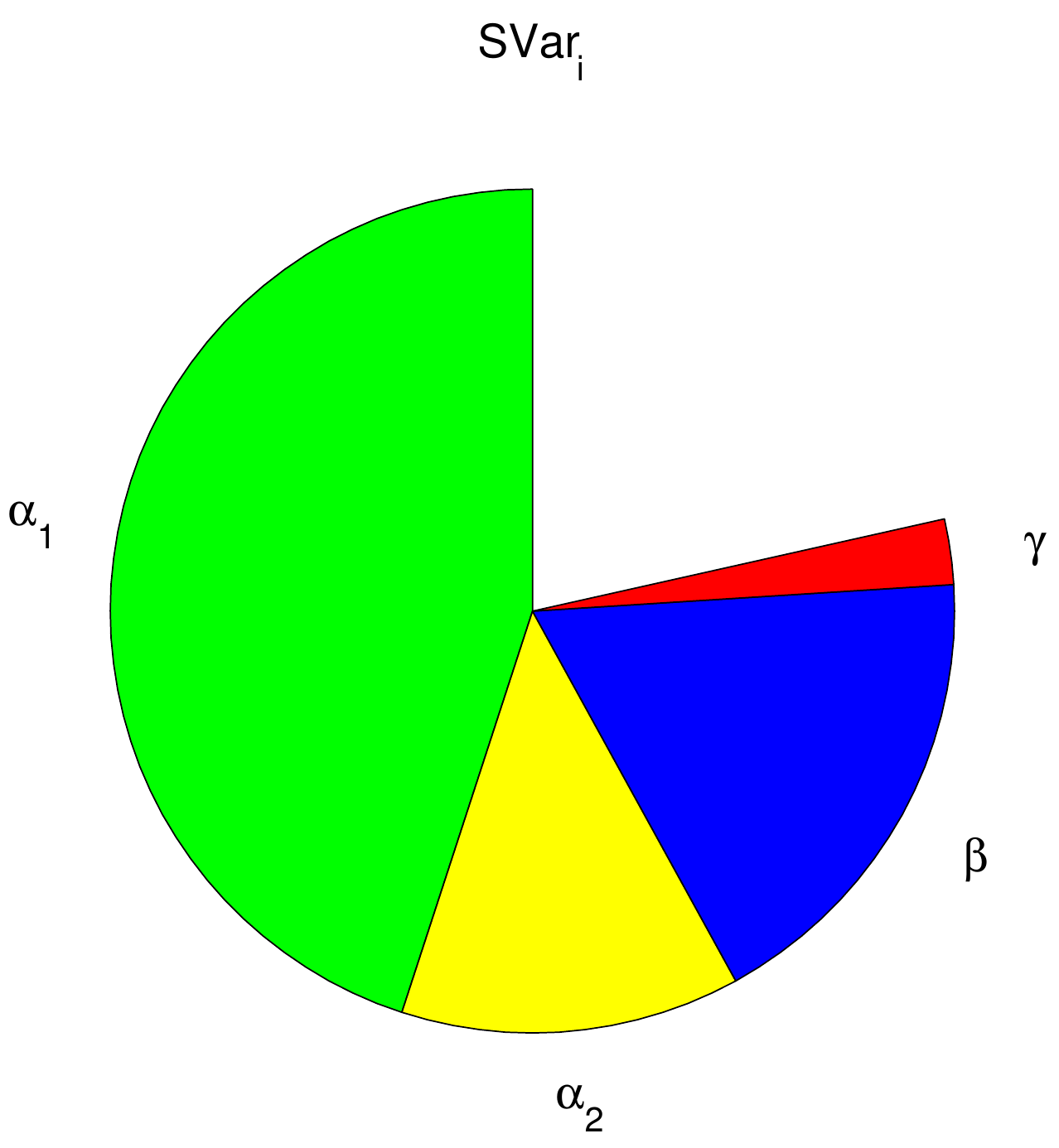}} 
\caption{Pie charts analogous as in Figure \ref{pieSB} but for the GTS model.}
\label{pieGTS}
\end{figure}

\begin{figure}[h]%
  \centering
  \subfloat[]{\includegraphics[width=0.45\textwidth]{./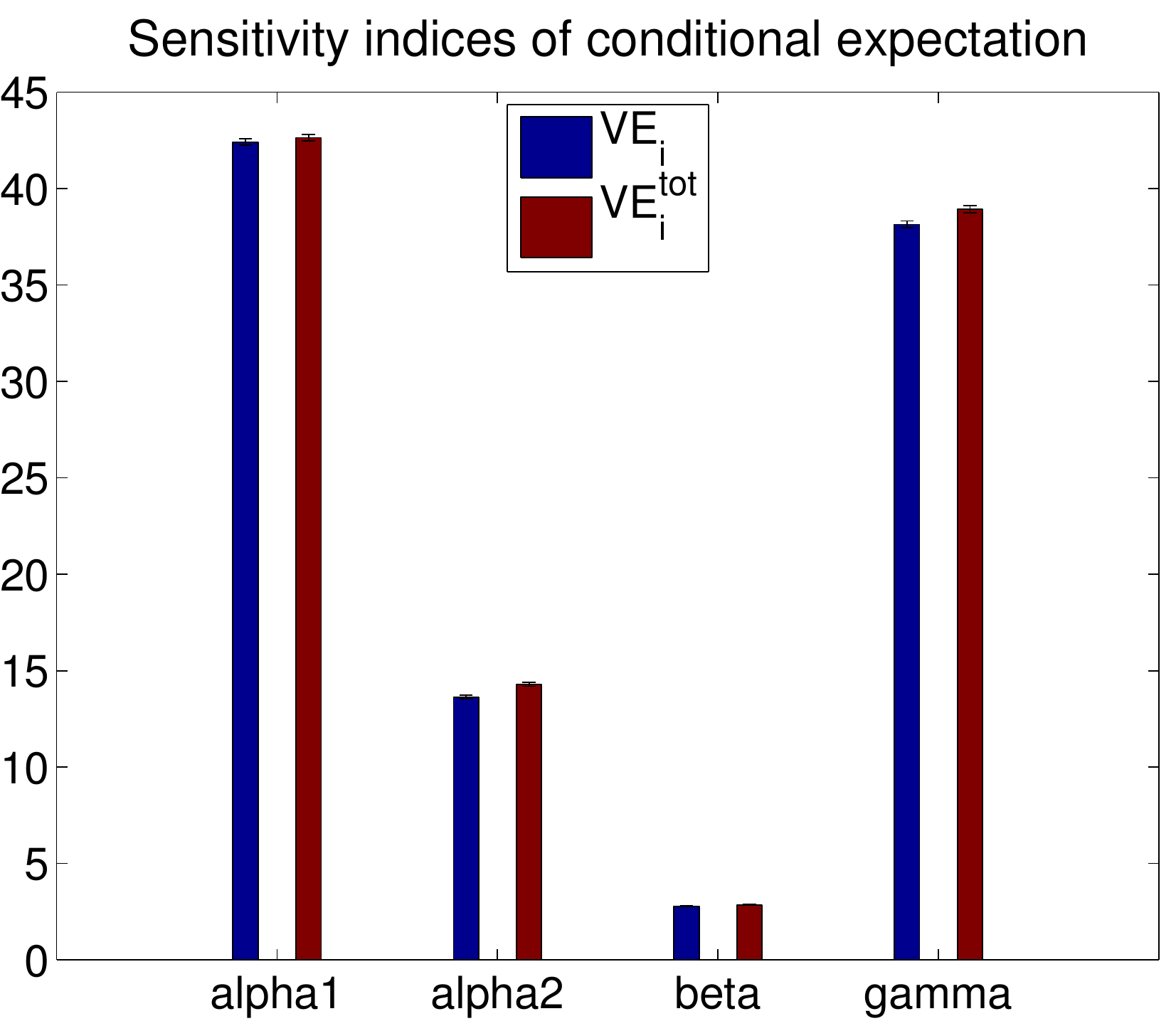}}
\qquad  
\subfloat[]{\includegraphics[width=0.47\textwidth]{./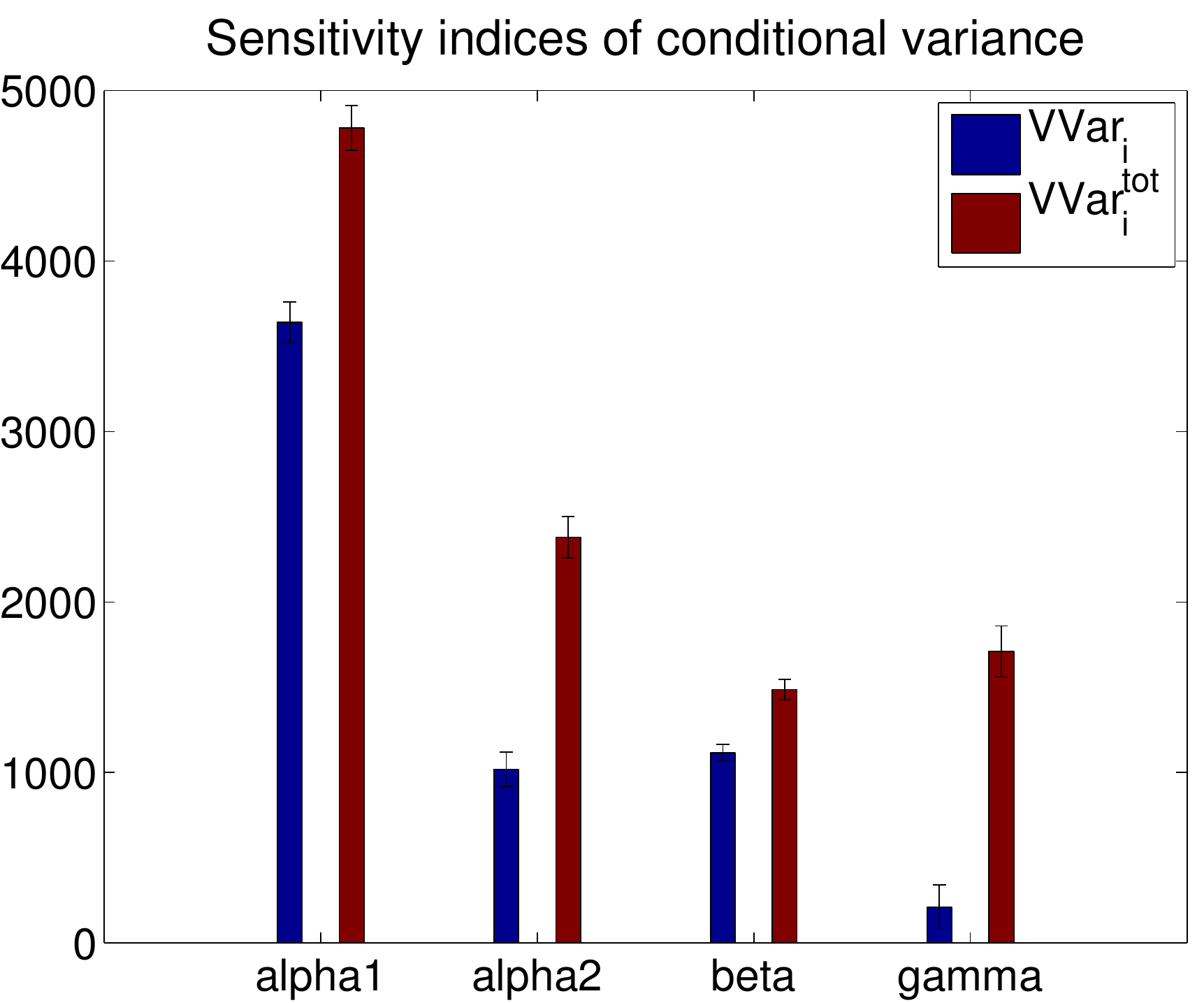}} 
  \caption{Estimates of the main and total sensitivity indices of conditional expectation in chart (a) and variance in chart (b)
of the GTS model output as discussed in Section \ref{secGTS}.}
\label{barGTSVVtot}
\end{figure}

\begin{table}[h]
\begin{tabular}{|l|c|c|c|c|c|c|}
\hline
$i $ & $\widetilde{V}_i$ & $\widetilde{V}_i^{tot} $ &$\widetilde{S}_i$ &$\widetilde{S}_i^{tot}$& $bE_i$ & $DNE_i $\\ 
\hline
$\alpha_1 $& $42.41 \pm 0.17$ & $42.64 \pm 0.17$ &$ 0.43 $&$ 0.44$& $1.1255 \pm 0.0031$ & $0.86 \pm 0.25$ \\  
$\alpha_2 $& $13.628 \pm 0.097$ & $14.289 \pm 0.100$ &$ 0.14 $&$ 0.15$& $-1.9959 \pm 0.0067$ & $0.81 \pm 0.13$ \\  
$\beta $& $2.780 \pm 0.034$ & $2.859 \pm 0.035$ &$ 0.028 $&$ 0.029$& $-5.738 \pm 0.028$ & $0.116 \pm 0.047$ \\  
$\gamma $& $38.14 \pm 0.18$ & $38.92 \pm 0.18$ &$ 0.39 $&$ 0.4$& $52.90 \pm 0.16$ & $1.68 \pm 0.26$ \\  
\hline
$i $ & $V_i$ & $V_i^{tot} $ & $S_i$ & $S_i^{tot}$ &$ DNE $& $1.45 \pm 0.13$\\
\hline
$P $& $97.73 \pm 0.26$ & $248.77 \pm 0.36$ &$ 0.21 $&$ 0.53$&  \multicolumn{2}{c}{} \\
$R $& $219.01 \pm 0.42$ & $370.05 \pm 0.40$ &$ 0.47 $&$ 0.79$&  \multicolumn{2}{c}{} \\
$P, R$& $467.79 \pm 0.44$ & $467.79 \pm 0.44$ &$ 1 $&$ 1$ & \multicolumn{2}{c}{} \\
\hline
$i$ &  $VVar_i$ & $VVar^{tot}_i$ & $SVar_i$ &$SVar_i^{tot}$& $bVar_i$ & $DNVar_i$\\
\hline
$\alpha_1 $& $3.64 \pm 0.12\cdot 10^{3}$ & $4.78 \pm 0.13\cdot 10^{3}$ &$ 0.45 $&$ 0.59$& $10.385 \pm 0.058$ & $1.25 \pm 0.15\cdot 10^{3}$ \\  
$\alpha_2 $& $1.02 \pm 0.10\cdot 10^{3}$ & $2.38 \pm 0.12\cdot 10^{3}$ &$ 0.13 $&$ 0.29$& $17.88 \pm 0.15$ & $1.29 \pm 0.13\cdot 10^{3}$ \\  
$\beta $& $1116 \pm 50$ & $1487 \pm 61$ &$ 0.14 $&$ 0.18$& $109.98 \pm 0.71$ & $467 \pm 67$ \\  
$\gamma $& $2.1 \pm 1.3\cdot 10^{2}$ & $1.71 \pm 0.15\cdot 10^{3}$ &$ 0.025 $&$ 0.21$& $-33.3 \pm 3.0$ & $1.75 \pm 0.18\cdot 10^{3}$ \\  
\cline{6-7}
$P$& $8.16 \pm 0.16\cdot 10^{3}$ & $8.16 \pm 0.16\cdot 10^{3}$ &$ 1 $&$ 1$& $DNVar $ & $2.63 \pm 0.10\cdot 10^{3}$ \\
\hline
\end{tabular}
\caption{\label{tabGTS} Estimates of different sensitivity indices and coefficients for the GTS model computed in a 
$250000$-step MC procedure using scheme $SVar$.}
\end{table}

\begin{table}[h]
\resizebox{17cm}{!} {
\begin{tabular}{|l|c|c|c|c|c|c|c|c|}
\hline
{\multirow{3}{*}{i}}& \multicolumn{2}{c|}{$SE$} & \multicolumn{2}{c|}{$EM$} 
& \multicolumn{2}{c|}{$ET$} & \multicolumn{2}{c|}{$SVar$} \\
\cline{2-9}
& GD & RTC & GD & RTC & GD & RTC & GD & RTC \\
\cline{2-9}
 & \multicolumn{8}{c|}{$\widetilde{V}_i$} \\
\hline
$ \alpha_1$ &$16.05 \pm 0.14$ &$7.697 \pm 0.086$ &$13.820 \pm 0.089$ &$6.307 \pm 0.053$ & & &$11.56 \pm 0.14$ &$7.086 \pm 0.087$ \\
$ \alpha_2$ &$5.350 \pm 0.068$ &$4.067 \pm 0.050$ &$5.098 \pm 0.053$ &$3.636 \pm 0.034$ & & &$2.876 \pm 0.051$ &$2.428 \pm 0.039$ \\
$ \beta$ &$3.118 \pm 0.058$ &$0.640 \pm 0.022$ &$2.960 \pm 0.043$ &$0.586 \pm 0.016$ & & &$1.239 \pm 0.021$ &$0.283 \pm 0.013$ \\
$ \gamma$ &$11.41 \pm 0.11$ &$9.950 \pm 0.098$ &$9.733 \pm 0.074$ &$7.635 \pm 0.051$ & & &$8.69 \pm 0.13$ &$8.28 \pm 0.10$ \\
 \hline
 i & \multicolumn{8}{c|}{$\widetilde{V}_i^{tot}$} \\
 \hline
$ \alpha_1 $ &$16.27 \pm 0.15$ &$7.721 \pm 0.085$ & & &$13.755 \pm 0.082$ &$6.349 \pm 0.058$ &$11.60 \pm 0.14$ &$7.110 \pm 0.090$\\
$ \alpha_2 $ &$5.720 \pm 0.075$ &$4.223 \pm 0.052$ & & &$5.463 \pm 0.054$ &$3.869 \pm 0.036$ &$3.076 \pm 0.053$ &$2.536 \pm 0.043$\\
$ \beta $ &$3.311 \pm 0.058$ &$0.658 \pm 0.025$ & & &$3.188 \pm 0.053$ &$0.582 \pm 0.019$ &$1.301 \pm 0.024$ &$0.303 \pm 0.014$\\
$ \gamma $ &$11.92 \pm 0.11$ &$10.315 \pm 0.099$ & & &$10.181 \pm 0.070$ &$7.942 \pm 0.061$ &$9.09 \pm 0.12$ &$8.53 \pm 0.11$\\
 \hline
\end{tabular}
}
\caption{\label{tabGTSComp} Estimates of variances of the final MC 
estimators of the sensitivity indices of conditional expectation for the 
GTS model, computed using the RTC and GD methods and different schemes as described in Section \ref{secGTS}.} 
\end{table}

\begin{table}[h]
\resizebox{17cm}{!} {
\begin{tabular}{|l|c|c|c|c|c|c|}
\hline
{\multirow{2}{*}{i}}
& \multicolumn{6}{c|}{$cE_i$ } \\
\cline{2-7}
&P1ERTC & C1ERTC & C2EGD & C2ERTC &  SVarGD & SVarRTC  \\
\hline
$ \alpha_1 $ &$3.4223 \pm 0.0085\cdot 10^{-1}$ &$0.1284 \pm 0.0082$ & $1.7937 \pm 0.0098\cdot 10^{-1}$ & $1.3828 \pm 0.0099\cdot 10^{-1}$ & $0.914 \pm 0.026$ & $0.762 \pm 0.019$\\
$ \alpha_2 $ &$3.4342 \pm 0.0082\cdot 10^{-1}$ &$0.1168 \pm 0.0085$ & $1.6615 \pm 0.0098\cdot 10^{-1}$ & $1.3328 \pm 0.0080\cdot 10^{-1}$ & $0.404 \pm 0.014$ & $0.402 \pm 0.011$\\
$ \beta $ &$3.4620 \pm 0.0083\cdot 10^{-1}$ &$0.1156 \pm 0.0086$ & $1.6423 \pm 0.0098\cdot 10^{-1}$ & $1.2710 \pm 0.0074\cdot 10^{-1}$ & $0.2342 \pm 0.0081$ & $0.1478 \pm 0.0051$\\
$ \gamma $ &$3.2824 \pm 0.0077\cdot 10^{-1}$ &$0.1047 \pm 0.0072$ & $1.7466 \pm 0.0094\cdot 10^{-1}$ & $1.4427 \pm 0.0072\cdot 10^{-1}$ & $0.795 \pm 0.022$ & $0.777 \pm 0.022$\\
$A $&$1.1706 \pm 0.0015\cdot 10^{-1}$ &$0.1060 \pm 0.0076$ & $1.5464 \pm 0.0037\cdot 10^{-1}$ & $1.7194 \pm 0.0031\cdot 10^{-1}$ & $1.015 \pm 0.015$ & $1.144 \pm 0.015$\\
$\Sigma E$&$1.4772 \pm 0.0025$ &$0.572 \pm 0.019$ & $0.8390 \pm 0.0025$ & $0.7149 \pm 0.0023$ & $3.362 \pm 0.036$ & $3.232 \pm 0.038$\\
\hline
{\multirow{2}{*}{i}}
& \multicolumn{6}{c|}{$cVar_i$} \\
\cline{2-7}
&P1VarRTC & C1VarRTC & C2VarGD & C2VarRTC &  SVarGD & SVarRTC \\
\hline
$ \alpha_1 $ &$186.2 \pm 1.4$ &$4.36 \pm 0.33\cdot 10^{3}$ &$210.5 \pm 2.7$ &$173.3 \pm 2.1$ &$335 \pm 11$ &$264.4 \pm 9.0$\\
$ \alpha_2 $ &$180.9 \pm 1.3$ &$438 \pm 31$ &$197.8 \pm 1.7$ &$163.7 \pm 1.7$ &$176.4 \pm 7.1$ &$196.2 \pm 6.0$\\
$ \beta $ &$181.5 \pm 1.2$ &$8.27 \pm 0.52$ &$192.9 \pm 2.2$ &$157.5 \pm 1.8$ &$154.6 \pm 5.7$ &$98.8 \pm 4.6$\\
$ \gamma $ &$177.9 \pm 1.2$ &$1.383 \pm 0.094$ &$199.7 \pm 2.3$ &$174.1 \pm 1.9$ &$262.8 \pm 8.2$ &$285.5 \pm 9.4$\\
$AV $&$117.70 \pm 0.59$ &$120.3 \pm 7.5$ &$141.94 \pm 0.82$ &$159.2 \pm 1.0$ &$340.0 \pm 4.9$ &$395.1 \pm 6.2$\\
$\Sigma V$ &$844.3 \pm 4.6$ &$4.93 \pm 0.33\cdot 10^{3}$ &$942.8 \pm 6.8$ &$827.8 \pm 6.0$ &$1269 \pm 17$ &$1240 \pm 16$\\
\hline
\end{tabular}
}
\caption{\label{tabCovGTS} Estimates of variances of the final estimators of the 
orthogonal projection coefficients of conditional expectation and 
conditional variance given the parameters using different schemes for the GTS model as described in Section \ref{secGTS}. The suffix 
RTC or GD of the scheme means that the RTC or the GD method was applied.} 
\end{table}
\begin{figure}[h]%
\centering
\subfloat[]{\includegraphics[width=0.45\textwidth]{./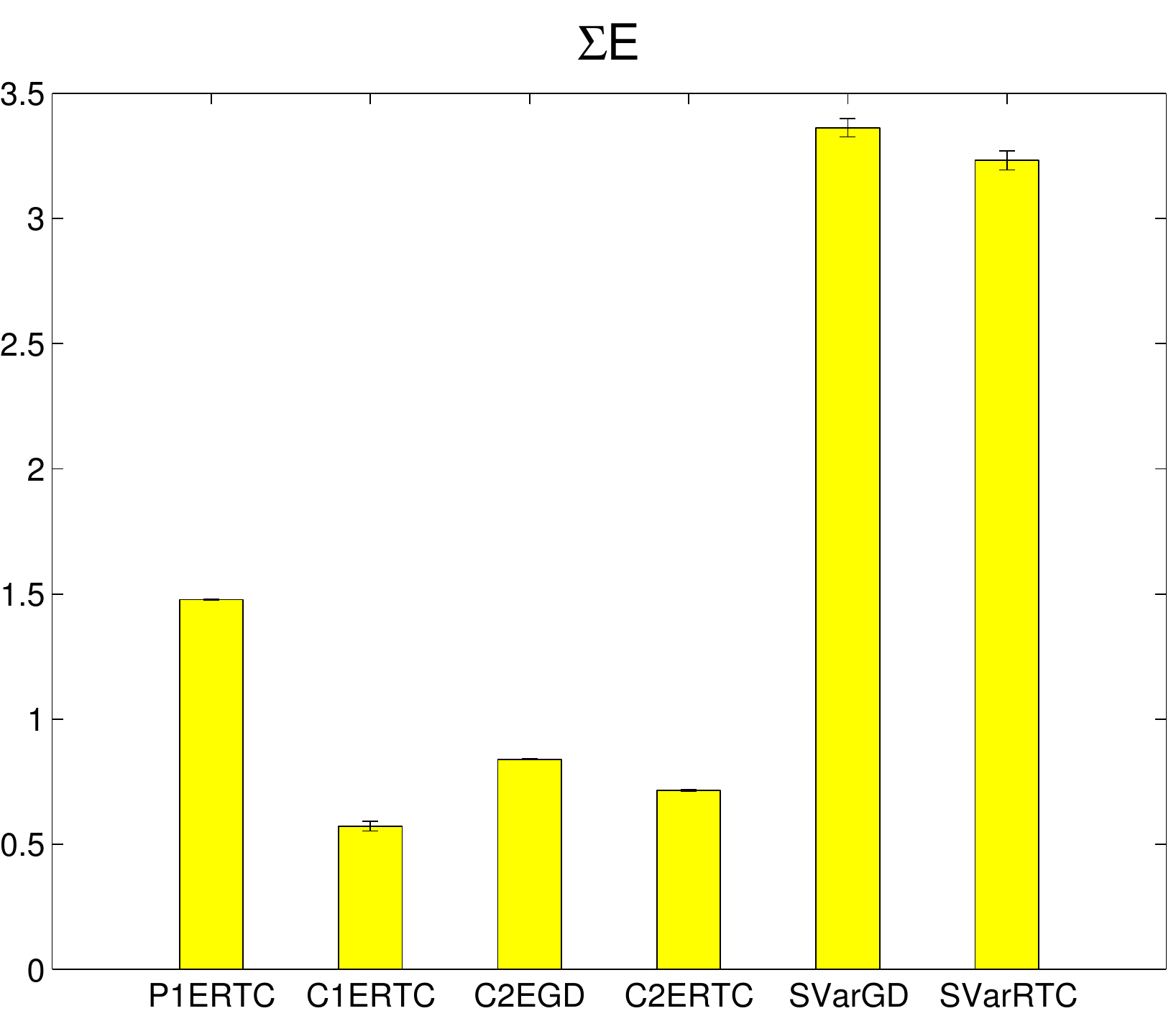}}
\qquad  
\subfloat[]{\includegraphics[width=0.45\textwidth]{./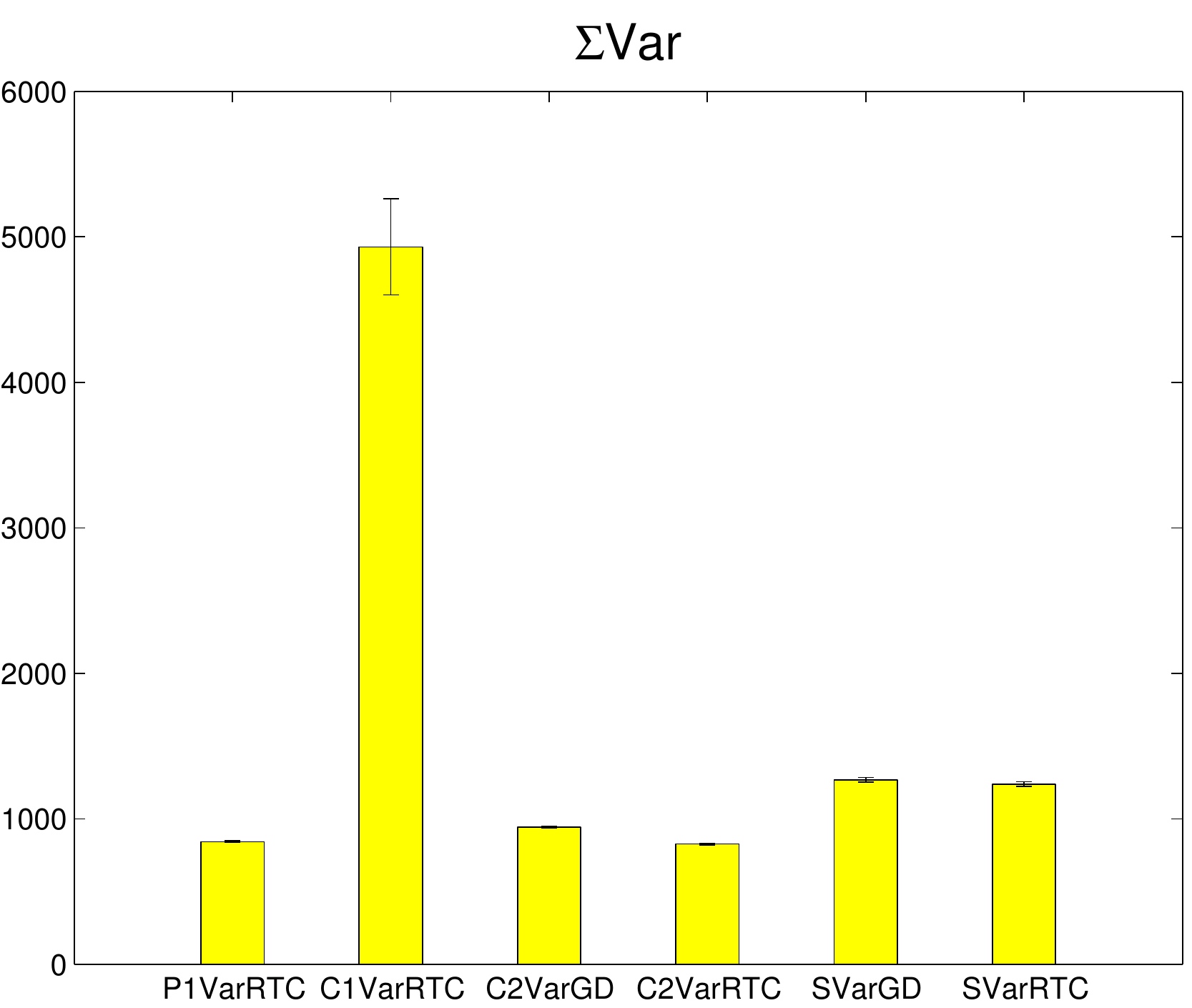}} 
  \caption{
Estimates for the GTS model of quantities $\Sigma E$ in chart (a) and $\Sigma Var$ in chart (b) for different 
estimation schemes 
  of orthogonal projection coefficients of conditional expectation in (a) and variance in (b) 
for the computations described in Section \ref{secGTS}. The suffix 
RTC or GD of the scheme means that the RTC or the GD method was applied.}
\label{figGTSBar}
\end{figure}

\subsection{MBMD model}\label{secMBMD} 
Let us finally consider the MBMD model and its output from Section \ref{MBMDPrev}. 
We performed a one-million-step MC procedure computing various indices and coefficients 
using the RTC method and scheme $SVar$. The results are presented in Table \ref{tabMBMD} and on Figure \ref{pieMBMD}. 
We carried out a ten-million-step MC procedure using scheme $SErrVar$ to estimate the mean squared error of approximation of 
conditional expectation and variance using linear combinations of centred parameters and constants and 
estimates of $bE_i$ and $bVar_i$ from Table \ref{tabMBMD} and mean and mean variance 
from Table \ref{tabDisp} as coefficients, analogously as in the previous sections. We obtained estimates of error for conditional expectation 
$3.091 \pm 0.094\cdot 10^{-2}$ and for variance $0.000 \pm 0.027$, which are 
not significantly different from estimates of the squared best theoretical errors $DNE$ and $DNVar$ in Table \ref{tabMBMD}. 
We carried out $500$ independent runs of $250$-step MC procedures using scheme $SVar$ 
and RTC, GDI, and GDR methods described in Section \ref{MBMDPrev} to get estimates of variances of the final MC  estimators 
of the sensitivity indices of conditional variances from this scheme, 
analogously as for the indices of conditional expectations in the previous sections. The results are presented in Table \ref{tabCompMB} 
and Figure \ref{fig3SVarMBMD}. 
For all the parameters except $C$ the estimates of variances  
are lowest for the RTC method, followed by the GDI, and then the GDR 
method, while for $C$ they are lower for the GDR than the GDI method, with the RTC method still 
yielding the smallest variance. The estimate of variance of the final MC estimator of the main sensitivity index 
of conditional variance with respect to $K_{d1}$ is even about 48 times higher for 
the RTC than the GDI method. Qualitatively the same results 
were obtained for variances of estimators of total sensitivity indices using this scheme (data not shown). 

We also performed an experiment comparing the variances of estimators of orthogonal projection coefficients from 200 independent 
runs of MC procedures using different above constructions of the MBMD model 
and $100$ runs of scheme $SVar$ and procedures using schemes $C2E$ and $C2Var$ using the same number of process evaluations, 
similarly as in the previous sections. 
The results are presented in Table \ref{tabCompMBCov} and Figure \ref{figSigmaMBMD}. 
We can see that for schemes $C2E$ and $SVar$ for the coefficients 
of conditional expectation, as well as for scheme $C2Var$ for the coefficients of conditional variance, the GDR method 
yields highest variance of the estimators of coefficients of orthogonal 
projection onto normalized centered parameters and the lowest variance for 
the averages for both conditional expectations and variances. For all of the schemes, using the 
GDR method leads to highest estimates of $\Sigma E$ and $\Sigma Var$, followed by 
the RTC method, and finally by the GDI method. 
Note that for the parameters $K_{b1}$ and $K_{d1}$ the estimates of variances of estimators of the orthogonal projection coefficients 
of conditional expectation from scheme $C2E$ are statistically significantly 
higher when using the RTC than the GDI method, while the opposite sharp inequality holds for 
the estimand $Ave$, which, as discussed in Section \ref{secVarDiff}, shows that for this model 
the value of $\msd(p_1,p_2)$ defined by (\ref{errSqr}) must be higher for certain parameter values 
when using the RTC than the GD method, both with the initial order of indices.

\begin{table}[h]
\resizebox{16cm}{!} {
\begin{tabular}{|l|c|c|c|c|c|c|}
\hline
$i $ & $\widetilde{V}_i$ & $\widetilde{V}_i^{tot} $ &$\widetilde{S}_i$ &$\widetilde{S}_i^{tot}$& $bE_i$ & $DNE_i $\\ 
\hline
$C $& $2.8913 \pm 0.0038$ & $2.9183 \pm 0.0038$ &$ 0.73 $&$ 0.74$& $5.3781 \pm 0.0070\cdot 10^{-1}$ & $0.0304 \pm 0.0069$ \\  
$K_{b1} $& $1.0351 \pm 0.0026\cdot 10^{-1}$ & $1.0371 \pm 0.0026\cdot 10^{-1}$ &$ 0.026 $&$ 0.026$& $3.724 \pm 0.017$ & $5.891 \pm 380.717\cdot 10^{-6}$ \\  
$K_{d1} $& $1.0507 \pm 0.0025\cdot 10^{-1}$ & $1.1129 \pm 0.0026\cdot 10^{-1}$ &$ 0.026 $&$ 0.028$& $-37.26 \pm 0.17$ & $6.10 \pm 0.36\cdot 10^{-3}$ \\  
\hline
$i $ & $V_i$ & $V_i^{tot} $ & $S_i$ & $S_i^{tot}$ &$ DNE $& $0.0306 \pm 0.0038$\\
\hline
$P $& $3.9668 \pm 0.0049$ & $6.0810 \pm 0.0052$ &$ 0.36 $&$ 0.55$&  \multicolumn{2}{c}{} \\
$R $& $4.9893 \pm 0.0050$ & $7.1035 \pm 0.0051$ &$ 0.45 $&$ 0.64$&  \multicolumn{2}{c}{} \\
$P, R$& $11.0703 \pm 0.0071$ & $11.0703 \pm 0.0071$ &$ 1 $&$ 1$ & \multicolumn{2}{c}{} \\
\hline
$i$ &  $VVar_i$ & $VVar^{tot}_i$ & $SVar_i$ &$SVar_i^{tot}$& $bVar_i$ & $DNVar_i$\\
\hline
$C $& $0.606 \pm 0.012$ & $0.602 \pm 0.012$ &$ 0.53 $&$ 0.53$& $0.2454 \pm 0.0011$ & $-0.012 \pm 0.016$ \\  
$K_{b1} $& $0.1066 \pm 0.0051$ & $0.1075 \pm 0.0052$ &$ 0.093 $&$ 0.094$& $3.750 \pm 0.037$ & $0.0109 \pm 0.0070$ \\  
$K_{d1} $& $0.0016 \pm 0.0041$ & $-0.0023 \pm 0.0044$ &$ 0.0014 $&$ -0.002$& $-7.88 \pm 0.36$ & $-0.0064 \pm 0.0057$ \\  
\cline{6-7}
$P$& $1.147 \pm 0.028$ & $1.147 \pm 0.028$ &$ 1 $&$ 1$& $DNVar $ & $0.043 \pm 0.016$ \\
\hline
\end{tabular}
}
\caption{\label{tabMBMD} Estimates of different sensitivity indices and coefficients in the MBMD model computed in a one-million-step
MC procedure using the RTC method and scheme $SVar$.}
\end{table}

\begin{figure}[h]
  \centering
  \subfloat[]{\includegraphics[width=0.41\textwidth]{./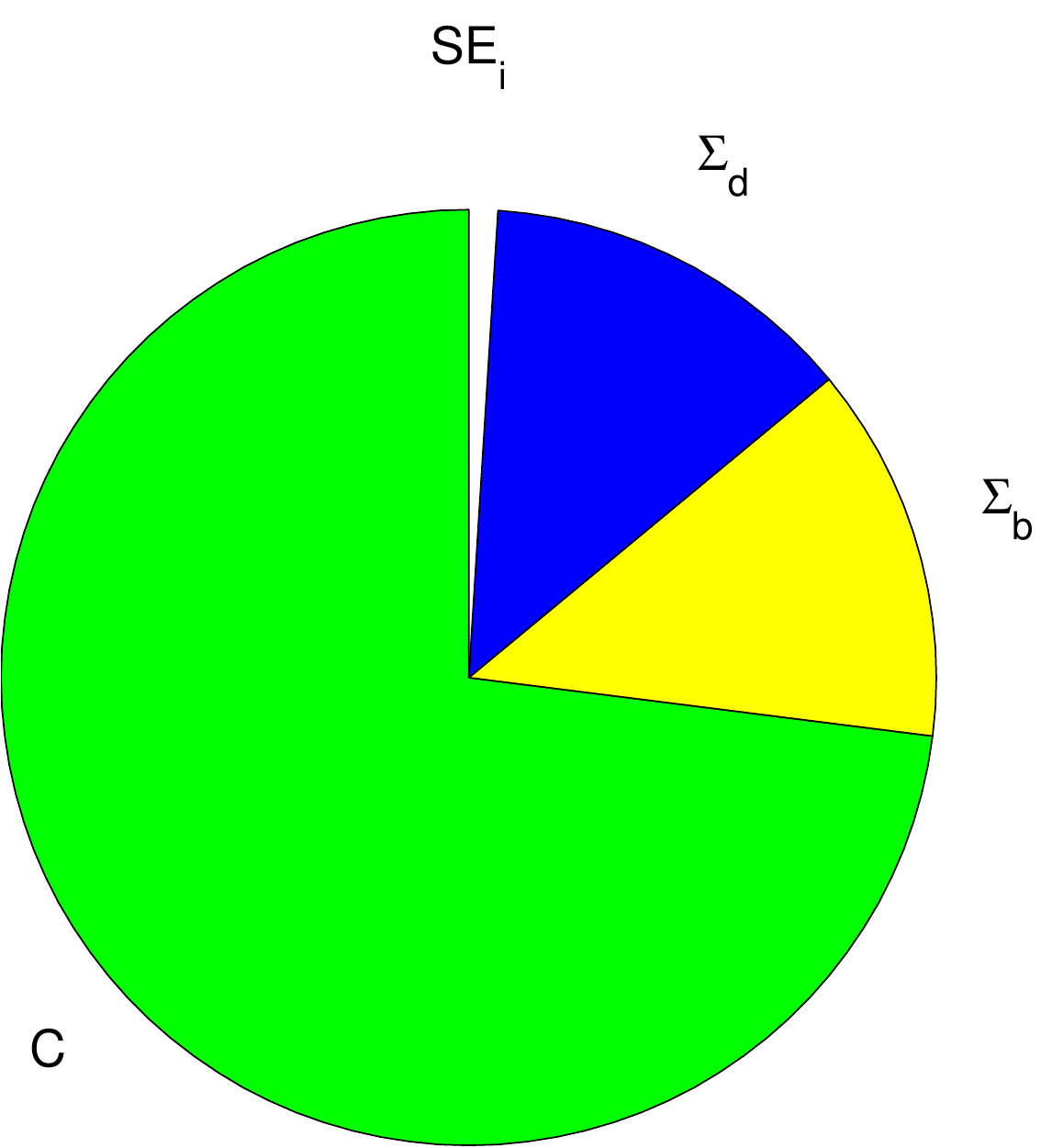}}
\qquad  
\subfloat[]{\includegraphics[width=0.46\textwidth]{./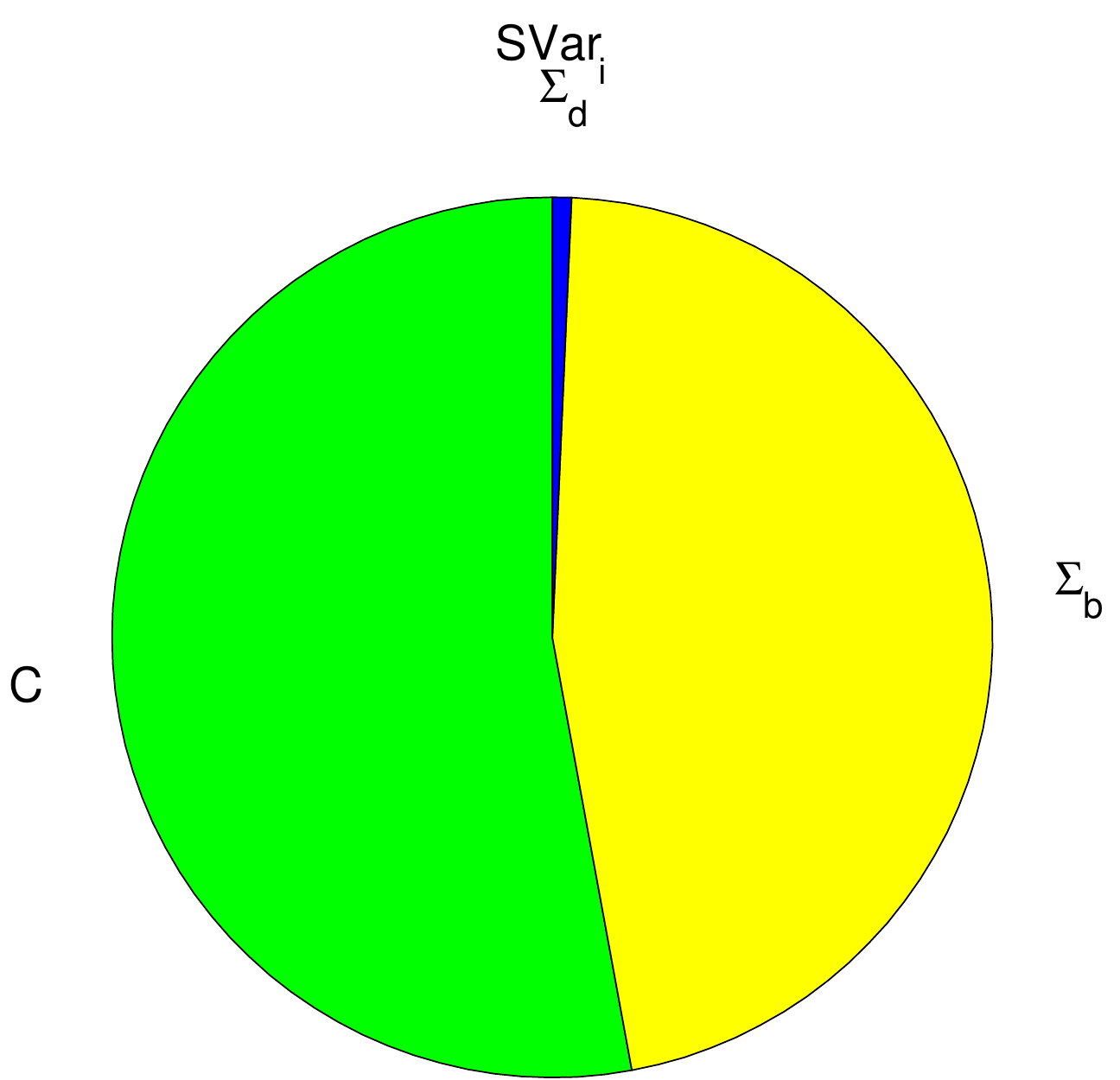}} 
  \caption{Pie charts analogous as in Figure \ref{pieSB} but for the MBMD model. The portions of total arc lengths 
occupied by segments with symbols $\Sigma_b$ and $\Sigma_d$ are equal to the sums of the 
Sobol's main sensitivity indices with respect to all parameters $K_{b,i}$, $i\in I_5$, for $\Sigma_b$ and 
$K_{b,i}$, $i\in I_5$, for $\Sigma_d$.}
\label{pieMBMD}
\end{figure}

\begin{table}[h]
\begin{tabular}{|l|c|c|c|}
\hline
{\multirow{2}{*}{i}} & GDR & GDI & RTC \\
\cline{2-4}
 & \multicolumn{3}{c|}{$VVar_i$, $SVar$ } \\
\hline
$C $ & $2.160 \pm 0.045$ & $2.523 \pm 0.078$ & $0.580 \pm 0.019$ \\ 
$K_{b1}$ & $3.573 \pm 0.084$ & $0.3063 \pm 0.0072$ & $0.1057 \pm 0.0037$ \\ 
$K_{d1} $ & $3.223 \pm 0.081$ & $0.2335 \pm 0.0077$ & $0.0669 \pm 0.0021$ \\ 
\hline
\end{tabular}
\caption{Estimates of variances of the final MC estimators of the main sensitivity indices of conditional variance 
using scheme $SVar$ and the GDR, GDI, and RTC methods as described in Section \ref{secMBMD}.}
\label{tabCompMB}
\end{table}

\begin{figure}[h]%
\includegraphics[width=0.45\textwidth]{./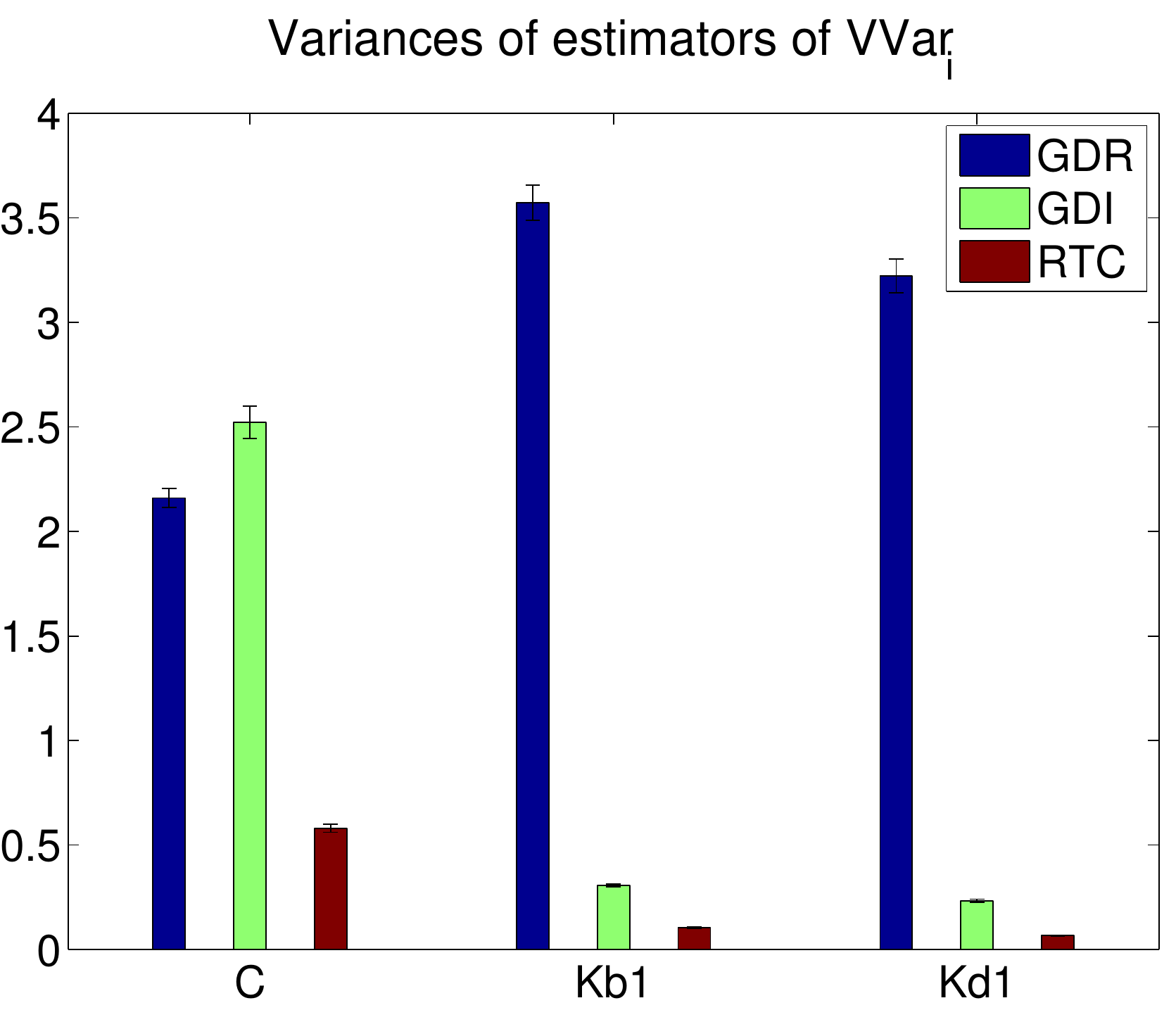}
  \caption{Chart illustrating data from Table \ref{tabCompMB}.}
\label{fig3SVarMBMD}
\end{figure}

\begin{table}[h]
\begin{tabular}{|l|c|c|c|}
\hline
{\multirow{2}{*}{i}} & GDR & GDI & RTC \\
\cline{2-4}
& \multicolumn{3}{c|}{$\wh{cE}_{i,C2E}$ } \\
\hline
$ C $& $ 2.3531 \pm 0.0012$&$1.64792 \pm 0.00088$ & $1.54710 \pm 0.00078$  \\
$ K_{b1} $ &$2.1750 \pm 0.0011$ &$1.27239 \pm 0.00069$& $1.30307 \pm 0.00067$ \\
$ K_{d1} $ &$2.1300 \pm 0.0011$ &$1.26781 \pm 0.00068$& $1.29614 \pm 0.00066$ \\
$Ave $ &$1.26760 \pm 0.00037$&$1.68266 \pm 0.00050$  & $1.67421 \pm 0.00050$\\
$\Sigma E $ &$24.8995 \pm 0.0078$&$16.0320 \pm 0.0050$ & $16.2203 \pm 0.0048$\\
\hline
i & \multicolumn{3}{c|}{$\wh{cE}_{i,SVar} $ } \\
\hline
$C$& $51.76 \pm 0.17$& $50.15 \pm 0.16$ & $48.73 \pm 0.15$\\
$K_{b1}$& $26.233 \pm 0.093$ &$20.747 \pm 0.074$& $21.669 \pm 0.076$\\
$K_{d1}$& $25.602 \pm 0.091$ & $20.655 \pm 0.070$& $21.434 \pm 0.073$\\
$Ave$& $27.023 \pm 0.052$&  $34.461 \pm 0.070$ & $34.654 \pm 0.069$\\
$\Sigma E$& $330.69 \pm 0.64$ & $291.88 \pm 0.56$ &$298.35 \pm 0.57$\\
\hline
i & \multicolumn{3}{c|}{$\wh{cVar}_{i,C2Var} $ } \\
\hline
$ C $ &$69.004 \pm 0.096$&$57.896 \pm 0.083$ &$59.382 \pm 0.083$ \\
$ K_{b1} $ &$59.941 \pm 0.082$&$36.303 \pm 0.051$ &$40.662 \pm 0.055$ \\
$ K_{d1} $ &$59.550 \pm 0.080$ &$36.014 \pm 0.051$&$40.422 \pm 0.055$ \\
$AV $&$32.483 \pm 0.029$ &$44.040 \pm 0.047$&$42.132 \pm 0.045$ \\
$\Sigma V$&$697.29 \pm 0.62$&$464.07 \pm 0.43$ &$507.02 \pm 0.46$ \\
\hline
i & \multicolumn{3}{c|}{$\wh{cVar}_{i,SVar} $ } \\ 
\hline 
$C$&$122.89 \pm 0.65$&$159.87 \pm 0.98$ &$125.83 \pm 0.71$\\ 
$K_{b1}$&$127.88 \pm 0.65$ &$75.79 \pm 0.45$&$102.64 \pm 0.56$\\ 
$K_{d1}$&$121.75 \pm 0.64$ &$69.65 \pm 0.41$&$95.29 \pm 0.52$\\ 
$AV$&$102.30 \pm 0.35$ &$250.70 \pm 0.98$&$258.81 \pm 0.97$\\ 
$\Sigma V$&$1400.4 \pm 4.1$ &$1140.6 \pm 3.8$&$1367.8 \pm 4.2$\\ 
\hline
\end{tabular}
\caption{\label{tabCompMBCov} Estimates of variances of the final MC estimators 
of orthogonal projection coefficients of conditional expectation and variance
using different schemes and the GDR, GDI, and RTC methods as described in Section \ref{secMBMD}.}
\end{table}

\begin{figure}[h]
  \centering
  \subfloat[]{\includegraphics[width=0.45\textwidth]{./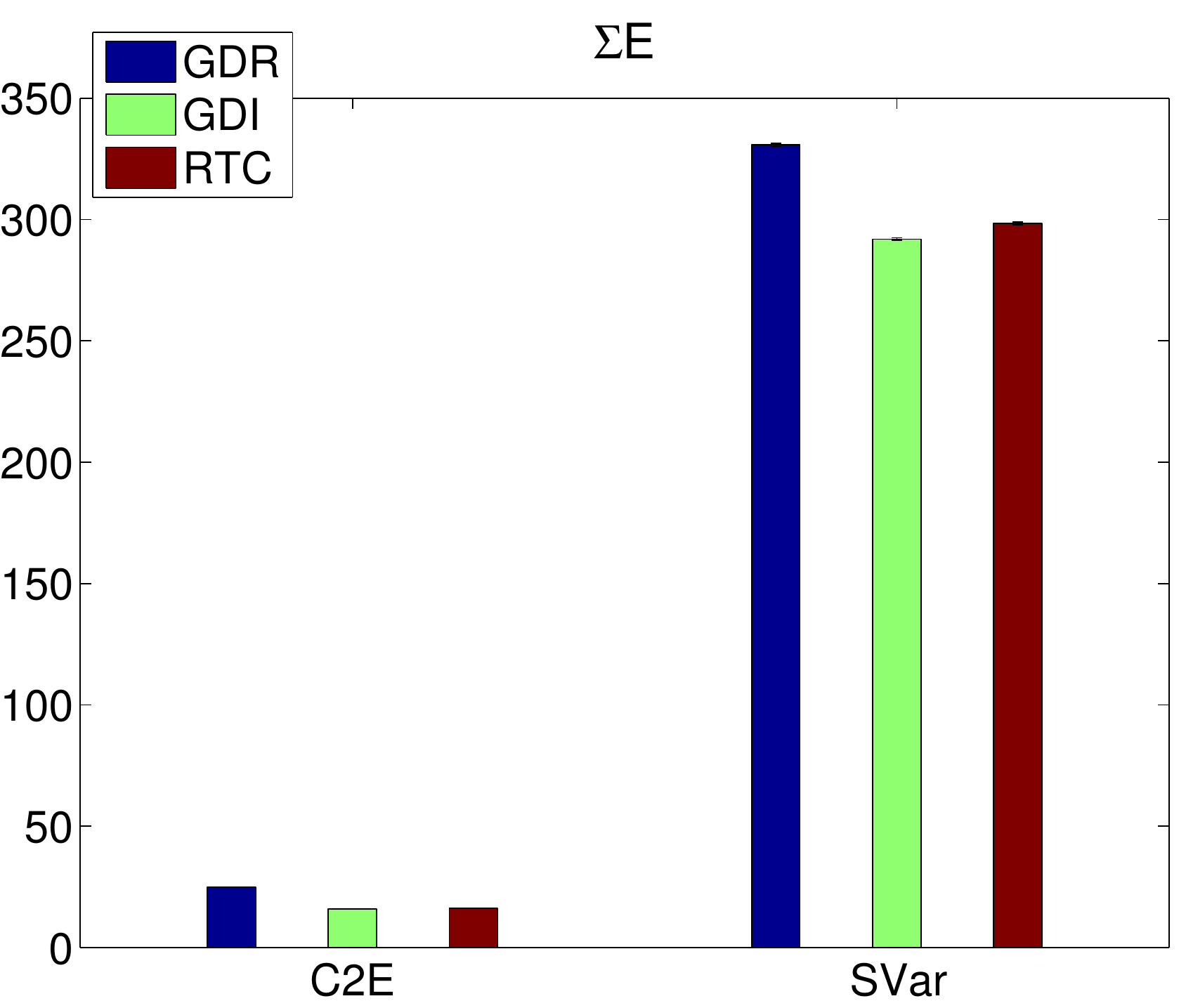}}
\qquad 
\subfloat[]{\includegraphics[width=0.45\textwidth]{./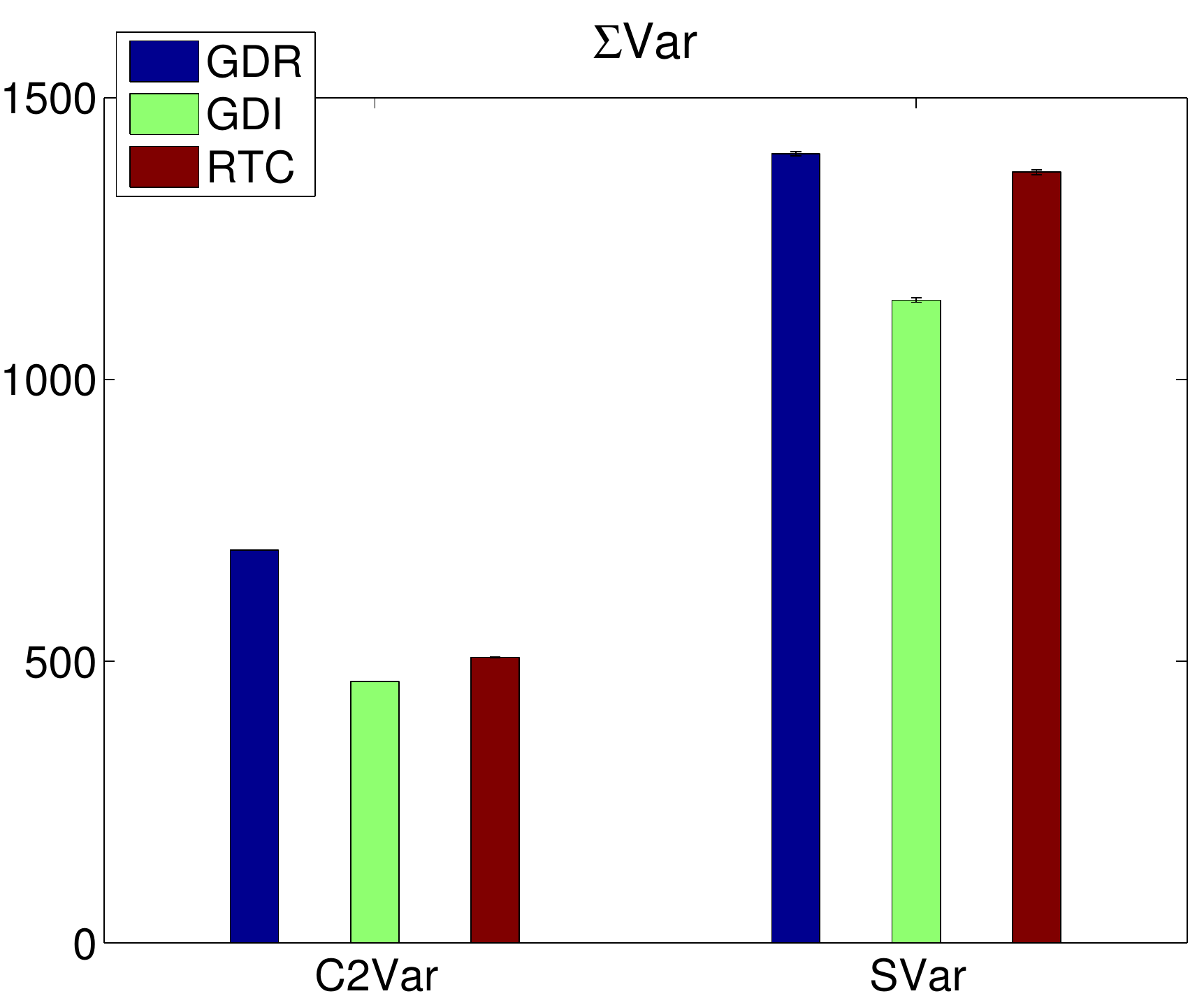}} 
  \caption{
Estimates for the MBMD model of quantities $\Sigma E$ in chart (a) and $\Sigma Var$ in chart (b) for different schemes 
and methods for the estimation of the orthogonal projection coefficients of conditional expectation in (a) and variance in (b) 
as described in Section \ref{secMBMD}.
}
\label{figSigmaMBMD}
\end{figure}

\chapter*{Conclusions}
\addcontentsline{toc}{chapter}{Conclusions}

In this work we formalized and generalized the former 
concept of an estimation scheme from our master's thesis in computer sciene \cite{badowski2011}, making it a convenient tool for defining 
estimators of vector-valued estimands depending on a number of functions. 
We also defined  inefficiency constant of such a scheme, 
which can be useful for comparing the efficiency of unbiased estimation
schemes when used in MC procedures. 
We developed new estimation schemes for various quantities defined for functions 
of two independent random variables, which can be outputs of stochastic models in function of the 
model parameters and a noise variable used to construct the random trajectories of the model process. 
In particular, we provided such first unbiased estimation 
schemes for the variance-based sensitivity indices of a large class of functions of conditional 
moments other than conditional expectation, like conditional variance, of functions of two independent random variables given the
 first variable, 
and developed some new schemes for the case of conditional expectation. We also provided first unbiased estimation schemes 
for covariances and products of functions of conditional moments 
and functions of the first variable, for
coefficients of orthogonal projection of functions of conditional moments 
onto orthogonal functions of the first variable, and of the mean squared error of approximation of functions 
of conditional moments using functions of this variable. Furthermore, we derived estimation schemes for
normalized variance-based sensitivity indices and correlations between functions of conditional moments and functions 
of the first variable. 
We defined a new nonlinearity coefficient which can be 
used for obtaining lower bounds on the probabilities of certain localizations of functions values changes, caused by 
perturbations of their independent arguments. We also 
provided unbiased estimation schemes for nonlinearity coefficients with respect to
all independent arguments and computed these coefficients numerically for the GTS model. 
One of the proposed schemes, called scheme $SVar$, allows to estimate most of the above mentioned 
indices and coefficients for conditional expectation and variance, such as 
variance-based sensitivity indices and coefficients of orthogonal projection 
onto linear combinations of coordinates of the first variable and constants. 
 It can be also easily extended to allow for the 
 estimation of coefficients of orthogonal projection onto higher polynomials of the first variable. 
Thus, it may be an efficient and diverse tool for the analysis of outputs of stochastic models. 
We derived a number of inequalities between the inefficiency 
constants of the proposed schemes.
We tested the introduced schemes and the relationships between 
their inefficiency constants using outputs of three continuous-time Markov chain models of the reaction network dynamics. 
In particular, we proved that the inefficiency constant of scheme $SVar$ for the estimation of 
the sensitivity indices of conditional expectation is no more than 
four times higher and three times lower, 
and in numerical experiments using the GTS model we showed that it can be more than two times lower
than for the best schemes introduced in \cite{badowski2011}. 
In our numerical tests the order of estimators of orthogonal projection coefficients 
with respect to the mean squared errors of the corresponding approximations of conditional expectation and variance 
was different for different models. 
We also demonstrated significant dependence of 
variances of the proposed estimators on the simulation algorithm used, as well as on the order of reactions in 
the GD method. We discussed the relationship of this effect 
with similar ones reported in \cite{badowski2011} and \cite{Rathinam_2010}. 
In practice, one can choose the simulation algorithm and the scheme adaptively  
using preliminary simulations to estimate 
the inefficiency constants of the corresponding MC sequences. 

An interesting topic for the future research is to compare the error 
when using different methods of approximation 
of functions of conditional moments of functions of two independent random variables given the first variable,
using orthogonal functions of the first variable, like 
approximation error of conditional expectation or variance of some output of an MR given the model parameters. 
One can consider the method of 
direct estimation of the coefficients of orthogonal projection proposed here 
and different methods based on double-loop sampling, or using the least squares method possibly with some regularization 
and constraints \cite{Li2002, Khammash_2012}. 
The estimates of mean squared error of approximation of functions of conditional moments 
for different methods 
could be computed using the corresponding schemes from Section \ref{secApprErr}, like $SErrVar$ in the case of conditional variances 
being approximated. 
Coefficients of orthogonal projections obtained using the above methods can be used to estimate the variance-based sensitivity 
indices similarly as in \cite{Li2002b, Li2002, blatman_2010, goutsias_2010}, and an interesting question is if 
the obtained estimates could be more accurate than the ones received using estimators from this work for the same 
computation time and for any stochastic model of practical importance.

\begin{appendices}

\chapter{\label{appMath}Mathematical background}
For a finite set $A$, we denote by $|A|$ the number of its elements. For a set $B$,
 we denote by $\id_{B}$ the identity function on $B$. 
We assume that the set of natural numbers $\N$ contains zero, and by 
$\N_{+}$ we denote the positive natural numbers. For $n \in \N_+$, we define $I_n = \{1, \ldots, n\}$ and for $n=0$, $I_n = \emptyset$. 
We denote by $\R_+$ positive real numbers, and by extended real line we mean $\overline{\R} = \R\cup\{-\infty\}\cup\{+\infty\}$. 
For $a,b \in \overline{\R}$, we write $a \leq b$ not only when $a, b \in \R$ and $a \leq b$, but also when 
 $b= \infty$ or $a = -\infty$. 
We assume an infimum over an empty set to be plus and supremum minus infinity. 
For sets $X$ and $Y$, we denote by $Y^X$ the set of all functions from $X$ to $Y$. Let $f \in Y^X$, which 
we also denote $f:X\rightarrow Y$. 
Domain of $f$, denoted as $D_f$, is the set $X$, and the image 
of some $A \subset X$ under $f$, denoted as $f[A]$, is the set $\{f(x): x \in A\}$. $f[X]$ is 
called the image of $f$. If $B\subset Y$, then preimage of $B$ under $f$, denoted as $f^{-1}[B]$, is the set $\{ x\in X: f(x)\in B\}$. 
Let for a set $A$, $\mc{P}(A)$ be its power set, that is the set of its all subsets. 
The image function of $f$ is a function $f^{\rightarrow}:\mc{P}(X)\rightarrow\mc{P}(Y)$  such that for each 
$C \in \mc{P}(X)$, $f^{\rightarrow}(C) = f[C]$. 
If $X$ is a subset of $\N$, we often write $f_l$ rather than $f(l)$ for $l \in X$, and use notation $(f_l)_{l \in X}$ for $f$.
When $X = I_n$ for some $n \in \N_+$, we often denote $f$ as $(f_1,\ldots,f_n)$. 
Measurable space is a pair $(B,\mc{B})$ consisting of a set $B$ 
and a $\sigma$-field $\mc{B}$ of its subsets. By default, the $\sigma$-field we 
associate with a set $B$ with default topology, like $B \subset \R^n$ for some $n \in \N_+$ with 
topology generated by the Euclidean distance or some countable space like $\N$ with discrete topology, is 
its Borel $\sigma$-field $\mc{B}(B)$, that is the smallest $\sigma$-field generated by open sets, and 
the default measurable space for $B$ is $\mc{S}(B)=(B,\mc{B}(B))$. 
For measurable spaces $S_i = (B_i,\mc{B}_i)$, 
$i \in I_2$, a function from $B_1$ to $B_2$ is said to be measurable from $\mc{S}_1$ to $\mc{S}_2$ if for each 
$A \in \mc{B}_2$, $f^{-1}[A] \in \mc{B}_1$. 
If $\mc{S}$ is the default measurable space for $B$, then we often use $B$ in place of $\mc{S}$, e. g. we say that 
a function is measurable from or to $B$. 
Suppose that $J$ is a countable nonempty set. 
For a family of sets $\{B_i \subset B: i \in J\}$, 
we define their Cartesian product $\prod_{i \in J}B_i$ to be the set of functions $f$ from $J$ to $B$ such that 
for each $i \in J$ it holds $f(i) \in B_i$. For  
$B_i$ all equal to $B$, it holds $\prod_{i \in J}B_i = B^J$. For $N \in \N_+$, we denote $B^{I_N}$ simply as as $B^N$, and informally 
identify $B^1$ with the set $B$. 
For some measurable spaces $\mc{S}_i = (B_i,\mc{B}_i), i\in J$, the product measurable space 
$\mc{S} = \bigotimes_{i\in J}\mc{S}_i $ is defined to be a measurable space $(\Pi_{i\in J} B_i, \bigotimes_{i\in J} \mc{B}_i)$, 
where the product $\sigma$-field $\bigotimes_{i \in J} \mc{B}_i$ is defined as the one generated by the family 
$\mc{T} = \{\Pi_{i\in J}A_i: A_i \in \mc{B}_i\text{ for } i \in J \text{, and only for finite number of $i \in J$, $A_i \neq B_i$}\}$. 
For probability distributions $\mu_i$ on $\mc{S}_i$, $i \in J$, their product $\nu = \bigotimes_{i \in J} \mu_i$ is 
defined as the unique probability distribution on $\bigotimes_{i \in J}\mc{S}_i$ 
such that for each $D= \Pi_{i\in J}A_i \in \mc{T}$, we have $\nu(D) = \prod_{i\in J}\mu_i(A_i)$. 
For a measurable space $\mc{S}$, let $\mc{F}(\mc{S})$ 
be the set of measurable functions $f$ from $\mc{S}$ to $\R$. 
For a measure $\mu$ on $\mc{S}$, let $[f]_{\mu}$ be the class of equivalence of $f \in \mc{F}(\mc{S})$ 
 with respect to relation $g \sim h$ if $f = g$, $\mu$ almost everywhere (a. e.). 
For $p > 0 $, by $L^p(\mu)$ we denote the linear space $\{[f]_\mu: f \in\mc{F}(\mc{S}),\ \int |f|^p \mathrm{d}\mu <\infty\}$ 
 (see \cite{rudin1970} Section 3.10 for more details). 
As common in the literature \cite{rudin1970}, 
for convenience we informally identify classes from $L^p(\mu)$ with their elements, e. g. by writing 
$f \in L^p(\mu)$ for $f \in \mc{F}(\mc{S})$, when it holds $[f]_\mu \in L^p(\mu)$. 
We say that that $f\in \mc{F}(\mc{S})$ is integrable with respect to a measure $\mu$ on $\mc{S}$ if
if $f \in L^1(\mu)$ and square-integrable if $f \in L^2(\mu)$.
For measurable spaces $\mc{S}_i = (B_i,\mc{B}_i)$, for $i \in I_2$, let the function $T$  be 
measurable from $\mc{S}_1$ to $\mc{S}_2$. For a measure $\mu$ on $\mc{S}_1$ we define measure $\mu T^{-1}$ on $\mc{S}_2$ by 
\begin{equation}\label{immes}
\mu T^{-1}(A) = \mu (T^{-1}(A)),\ A \in \mc{B}_2. 
\end{equation}
Below we present a change of variable theorem (\cite{billingsley1979}, Theorem 16.12) 
\begin{theorem}\label{thchvar}
$f$ is integrable with respect to $\mu T^{-1}$ if and only if $fT$ is integrable with respect to $\mu$, in which case 
\begin{equation}
\int_{B_1}\! f(T(x))\, \mu(dx) = \int_{B_2}\! f(y)\, \mu T^{-1}(dy). 
\end{equation}
\end{theorem}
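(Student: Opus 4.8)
The plan is to use the standard bootstrapping argument of measure theory, establishing the identity first for indicator functions and then extending it successively to nonnegative simple functions, to nonnegative measurable functions, and finally to general integrable functions. Throughout I rely on the fact that since $T$ is measurable from $\mc{S}_1$ to $\mc{S}_2$ and $f$ is measurable from $\mc{S}_2$ to $\mc{S}(\R)$, the composition $f\circ T$ is measurable from $\mc{S}_1$ to $\mc{S}(\R)$, so that all integrands appearing below are well-defined.

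First I would treat the case $f = \I_A$ for some $A \in \mc{B}_2$. Here $f(T(x)) = \I_A(T(x)) = \I_{T^{-1}[A]}(x)$, so that by the definition (\ref{immes}) of $\mu T^{-1}$,
\begin{equation}
\int_{B_1} \I_A(T(x))\, \mu(dx) = \mu(T^{-1}[A]) = \mu T^{-1}(A) = \int_{B_2} \I_A(y)\, \mu T^{-1}(dy),
\end{equation}
which gives the claim for indicators. By linearity of the integral, the identity then extends at once to every nonnegative simple function $f = \sum_{i=1}^n c_i \I_{A_i}$ with $c_i \geq 0$ and $A_i \in \mc{B}_2$.

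Next I would pass to an arbitrary nonnegative measurable $f$. Choosing an increasing sequence of nonnegative simple functions $s_n \uparrow f$, one has $s_n \circ T \uparrow f \circ T$ pointwise, and applying the monotone convergence theorem to both sides of the simple-function identity yields the equality in $\overline{\R}$ for all such $f$. Specializing this to $|f|$ shows that $\int_{B_2} |f|\, d(\mu T^{-1}) = \int_{B_1} |f\circ T|\, d\mu$, which establishes the stated equivalence of integrability: $f \in L^1(\mu T^{-1})$ if and only if $f\circ T \in L^1(\mu)$.

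Finally, for a general $f$ satisfying either (hence both) of these integrability conditions, I would split $f = f^+ - f^-$ into its nonnegative and negative parts, apply the nonnegative case to each (both resulting integrals being finite by the previous step), and subtract to obtain the desired identity. The argument is essentially routine and I expect no genuine obstacle; the only points requiring a little care are the measurability of the composed integrands and the remark that the integrability equivalence comes out for free by specializing the nonnegative case to $|f|$, rather than needing a separate treatment.
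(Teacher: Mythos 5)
Your proof is correct. Note that the paper offers no proof of this statement at all --- it is quoted directly from the literature (\cite{billingsley1979}, Theorem 16.12) --- and your argument is exactly the standard bootstrapping proof (indicators, simple functions, monotone convergence for nonnegative $f$, then $f = f^+ - f^-$) that the cited reference uses, so there is nothing to reconcile.
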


Probability space is denoted by default as 
 $(\Omega, \mathcal{F}, \PR)$ \cite{Durrett}, and
 $L^p(\PR)$ is denoted simply as $L^p$.
For a measurable space $\mc{S}$, an $\mc{S}$-valued 
random variable is a measurable function from $(\Omega, \mathcal{F})$  
to $\mc{S}$. 
\begin{defin}\label{distrDef}
Distribution of an $\mc{S}$-valued random variable $X$, denoted as $\mu_X$, is 
a probability distribution on $\mc{S}$ defined as $\PR X^{-1}$. In other words, for each $A \in \mc{B}$, $\mu_X(A) = \PR(X \in A)$.
\end{defin}
For two random variables $X$ and $Y$, by $X \sim Y$ we mean that $\mu_X=\mu_Y$ and for a probability 
distribution $\Lambda$, $X \sim \Lambda$ denotes $\mu_X=\Lambda$. 
For $N \in \N_+$, random variable $X=(X_i)_{i=1}^N$ with values in a product $\mc{S}= \prod_{i=1}^N\mc{S}_i$ of 
measurable spaces $\mc{S}_i$, $i\in I_N$, is also called an $\mc{S}$-valued random vector. 
The expected value of a real-valued random variable $\phi$ on the probability space with probability $\mu$ is defined as  
\begin{equation}\label{emu} 
\E_{\mu}(\phi) = \int\! \phi(x)\, \mu(dx),
\end{equation} 
where the integral on the rhs is Lebesgue integral and the subscript $\mu$ in $\E_\mu$
is usually omitted if $\mu$ is the default $\PR$. 
We say that a real-valued random variable $Z$ is integrable if it is integrable with respect to $\PR$, 
and analogously for the square-integrability.  
U($a,b$) denotes uniform distribution on the interval $[a,b]$ and 
$\Exp(\lambda)$ is exponential distribution with parameter $\lambda$ \cite{Norris1998}. 
\begin{defin}\label{defUD}
We say that a random variable $X$ has uniform discrete distribution and denote it $X\sim U_d(a,b)$ if $a,b \in \Z$, $a \leq b$,
and for each $c \in \Z$, $a \leq c \leq b$, 
\begin{equation}
\PR(X = c) = \frac{1}{b - a + 1}. 
\end{equation}
\end{defin} 
\begin{defin}\label{defSupp} 
The support \cite{lehmann1998theory} of a probability measure $\PR$ on $(\R^n, \mathcal{B}(\R^n))$ 
is the set $\{ x\in R^n: \PR(A)>0 \text{ for every open rectangle $A$ containing }x\} $. 
\end{defin} 
For $A \in \mathcal{F}$, we denote by $\I_{A}$ the indicator of $A$, that is $\I_{A}(\omega) = 1$ if $\omega \in A$ and $0$ otherwise, 
and we denote $\I = \I_{\Omega}$. 
\begin{defin}\label{condDef} 
Let $X$ be a random variable taking values in a measurable space $(B,\mc{B})$. 
Conditional expectation $\E(Y|X)$ of an integrable random variable $Y$ given $X$ 
is a random variable such that (cf. \cite{Durrett} Section 4.1 
and \cite{billingsley1979}, Theorem 20.1 ii)) 
\begin{enumerate} 
\item $\E(Y|X)$ is equal to $f(X)$ for some measurable function $f$ from $(B,\mc{B})$ to  $\R$, 
\item for each ${A \in \mc{B}},\ \E(Y\I_A(X)) =  \E(\E(Y|X)\I_A(X))$. 
\end{enumerate} 
\end{defin} 
Conditional expectation always exists, however, function $f$ yielding it is uniquely defined only up to sets of measure $\mu_X$ zero. 
Thus equalities in the theorems below hold almost surely (a. s.) \cite{Durrett}, 
but for convenience we omit writing this, and so we do often in the main text. 
%
\begin{theorem}\label{indepCond}
For a measurable function $f$ and independent random variables $X$, $Y$  such that $f(X,Y) $ is integrable, 
it holds (\cite{Durrett} Section 4.1 Example 1.5)
\begin{equation}
\E(f(X,Y)|X) = (\E (f(x, Y)))_{x = X}.
\end{equation}
\end{theorem}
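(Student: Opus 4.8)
The plan is to verify directly that the random variable $g(X)$, where $g(x) = \E(f(x,Y))$, satisfies the two defining properties of conditional expectation in Definition \ref{condDef}. The whole argument rests on the fact that independence of $X$ and $Y$ makes the joint distribution of $(X,Y)$ equal to the product measure $\mu_X \otimes \mu_Y$, so that the single expectation $\E(f(X,Y))$ and the partial expectation $g(x)$ can be related through Fubini's theorem.

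First I would set $g(x) = \E(f(x,Y)) = \int f(x,y)\, d\mu_Y(y)$ and observe, via the change of variable Theorem \ref{thchvar} together with independence, that $\E(|f(X,Y)|) = \int |f|\, d(\mu_X\otimes\mu_Y)$. Since $f(X,Y)$ is integrable this quantity is finite, so Fubini's theorem guarantees that $g(x)$ is defined and finite for $\mu_X$-almost every $x$ and that $x\mapsto g(x)$ is measurable. This secures both that $g(X)$ is a measurable function of $X$ (property 1 of Definition \ref{condDef}) and that the iterated integral below is legitimate.

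Next I would check property 2. For each measurable $A$ in the value space of $X$, I would compute both sides. On one hand, again using Theorem \ref{thchvar} and independence,
\begin{equation}
\E(f(X,Y)\I_A(X)) = \int \I_A(x) f(x,y)\, d(\mu_X\otimes\mu_Y)(x,y),
\end{equation}
which by Fubini equals $\int_A \left(\int f(x,y)\, d\mu_Y(y)\right) d\mu_X(x) = \int_A g(x)\, d\mu_X(x)$. On the other hand $\E(g(X)\I_A(X)) = \int_A g(x)\, d\mu_X(x)$ by the change of variable theorem once more. The two sides agree, so $g(X)$ fulfils property 2 and is therefore a version of $\E(f(X,Y)|X)$; since $g(X) = (\E(f(x,Y)))_{x=X}$ by definition, the claim follows.

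The main obstacle is purely measure-theoretic bookkeeping rather than any deep idea: one must confirm that the product structure of $\mu_{(X,Y)}$ really does follow from independence, and that the integrability hypothesis on $f(X,Y)$ is exactly what is needed to invoke Fubini's theorem so that $g$ is well-defined almost everywhere and measurable. Everything else reduces to matching the two iterated integrals.
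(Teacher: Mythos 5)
Your proof is correct. The paper itself gives no proof of this theorem---it is quoted directly from the literature (Durrett, Section 4.1, Example 1.5)---so there is no internal argument to compare against; your route (identify $\mu_{(X,Y)}$ with $\mu_X\otimes\mu_Y$ by independence, set $g(x)=\E(f(x,Y))$, and verify the two defining properties of conditional expectation from Definition \ref{condDef} via the change of variable theorem and Fubini) is exactly the standard proof found in the cited source, with the only bookkeeping point being that $g$ is defined only $\mu_X$-a.e., which is harmless since conditional expectation is itself determined only up to a.s. equality.
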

\begin{theorem}\label{condexpX}
For random variables $X$, $Z$, and a measurable function $f$, such that $Z$ and $f(X)Z$ are integrable, 
it holds (\cite{Durrett} Section 4.1 Theorem 1.3) 
\begin{equation}
\E(f(X)Z|X) = f(X)\E(Z|X). 
\end{equation}
In particular 
\begin{equation}\label{equexpconds}
\E(f(X)Z) = \E(f(X)\E(Z|X)). 
\end{equation}
\end{theorem}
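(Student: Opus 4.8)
The plan is to verify that the random variable $f(X)\E(Z|X)$ satisfies the two conditions characterizing $\E(f(X)Z|X)$ in Definition \ref{condDef}, and then to read off the ``in particular'' statement as an immediate consequence.

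First I would check the first condition. By Definition \ref{condDef} we have $\E(Z|X) = h(X)$ for some measurable $h:(B,\mc{B})\to\R$, and since $f$ is measurable the product $f(X)h(X) = (fh)(X)$ is again a measurable function of $X$; hence $f(X)\E(Z|X)$ has the required form. This step is routine.

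The core of the argument is the second condition, namely that for every $A \in \mc{B}$,
\begin{equation}
\E(f(X)Z\,\I_A(X)) = \E(f(X)\E(Z|X)\,\I_A(X)).
\end{equation}
I would prove this by the standard machine, building $f$ up in stages. For an indicator $f = \I_{B}$ with $B \in \mc{B}$ one has $f(X)\I_A(X) = \I_{A\cap B}(X)$, so both sides reduce precisely to the second condition of Definition \ref{condDef} applied to the set $A \cap B$; this is the one place where the defining property of $\E(Z|X)$ is invoked. Linearity then yields the identity for nonnegative simple $f$, and monotone convergence, applied to an increasing sequence of simple functions approximating $f$ and first carried out for $Z \geq 0$, extends it to nonnegative measurable $f$. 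Finally, decomposing $f = f^+ - f^-$ and $Z = Z^+ - Z^-$ and using the hypotheses that $Z$ and $f(X)Z$ are integrable to guarantee that every expectation in sight is finite, I would recombine the pieces by linearity to obtain the identity for general $f$.

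The ``in particular'' statement (\ref{equexpconds}) then follows at once: taking $A = B$ (the whole space, so $\I_A(X) = 1$) in the displayed identity gives $\E(f(X)Z) = \E(f(X)\E(Z|X))$; equivalently one applies $\E$ to both sides of the already established $\E(f(X)Z|X) = f(X)\E(Z|X)$ and uses the iterated expectation property $\E(\E(W|X)) = \E(W)$. I expect the main obstacle to be the passage from simple to general measurable $f$, where one must track integrability carefully so that the monotone-convergence limits and the $f^+,f^-$ decomposition are legitimate; by contrast the measurability check and the indicator base case are entirely straightforward.
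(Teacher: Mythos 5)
Your proof is correct and is the standard argument for this result: the paper itself offers no proof, stating Theorem \ref{condexpX} as background in Appendix \ref{appMath} with only a citation to Durrett (Section 4.1, Theorem 1.3). Your verification that $f(X)\E(Z|X)$ satisfies the two conditions of Definition \ref{condDef} via the standard machine --- indicator base case invoking the defining property on $A\cap B$, linearity for simple functions, monotone convergence for nonnegative $f$ and $Z$, and the $f^{\pm}$, $Z^{\pm}$ decomposition with integrability controlled by $|f(X)Z|$ --- is precisely the argument given in that cited reference.
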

For $s,t \in (0,\infty)$, $\frac{1}{s} + \frac{1}{t} = 1$, $X \in L^s$, and $Y \in L^t$, we have the following
 H\"{o}lder's inequality \cite{rudin1970} (called Schwartz inequality for $s = t  = 2$) 
\begin{equation}\label{hold}
\E(|XY|) \leq \sqrt[s]{\E(|X|^s)}\sqrt[t]{\E(|Y|^t)}.
\end{equation}
\begin{theorem}\label{leqpq}
For $p\geq q > 0$ and $Z \in L^p$, it holds 
\begin{equation}
\sqrt[q]{\E(|Z|^q)} \leq \sqrt[p]{\E(|Z|^p)}.
\end{equation}
In particular, $Z \in L^q$.
\end{theorem}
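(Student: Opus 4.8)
The final statement to prove is Theorem~\ref{leqpq}: for $p \geq q > 0$ and $Z \in L^p$, we have $\sqrt[q]{\E(|Z|^q)} \leq \sqrt[p]{\E(|Z|^p)}$, and in particular $Z \in L^q$. The plan is to reduce the inequality to a single application of the H\"older inequality (\ref{hold}) already stated in the excerpt, applied to the random variable $|Z|^q$ against the constant random variable $1$.

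First I would set $r = p/q \geq 1$ and introduce its conjugate exponent $s$ satisfying $\frac{1}{r} + \frac{1}{s} = 1$ (so $s = \frac{r}{r-1}$, with the degenerate case $r = 1$ handled trivially since then $p = q$ and the inequality is an equality). I would then apply H\"older's inequality in the form $\E(|XY|) \leq \sqrt[r]{\E(|X|^r)}\,\sqrt[s]{\E(|Y|^s)}$ with the choice $X = |Z|^q$ and $Y = 1$. Since $\E(|X|^r) = \E(|Z|^{qr}) = \E(|Z|^p)$, which is finite by the assumption $Z \in L^p$, and $\E(|Y|^s) = \E(1) = 1$, the right-hand side becomes $\sqrt[r]{\E(|Z|^p)}$. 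This shows $\E(|Z|^q) \leq \sqrt[r]{\E(|Z|^p)} = \E(|Z|^p)^{q/p}$, in particular establishing $Z \in L^q$.

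Finally I would raise both sides to the power $1/q$. From $\E(|Z|^q) \leq \E(|Z|^p)^{q/p}$, taking $q$-th roots (a monotone operation on $[0,\infty)$) gives $\sqrt[q]{\E(|Z|^q)} \leq \E(|Z|^p)^{1/p} = \sqrt[p]{\E(|Z|^p)}$, which is exactly the claim. An alternative route avoiding any conjugate-exponent bookkeeping would be to invoke Jensen's inequality for the convex function $\varphi(x) = x^{p/q}$ on $[0,\infty)$ applied to the nonnegative variable $|Z|^q$, yielding $\varphi(\E(|Z|^q)) \leq \E(\varphi(|Z|^q)) = \E(|Z|^p)$; but since H\"older is the tool explicitly stated just above in the excerpt, I would prefer the H\"older-based argument for self-containedness.

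I expect no substantial obstacle here, as the statement is a standard monotonicity-of-$L^p$-norms fact. The only mild care points are the boundary behaviour: verifying that the case $p = q$ (equivalently $r = 1$) is consistent, and confirming that finiteness of $\E(|Z|^p)$ genuinely transfers to finiteness of $\E(|Z|^q)$ so that the ``in particular'' clause $Z \in L^q$ is justified rather than merely asserted. Both fall out immediately from the displayed inequality, so the proof should be only a few lines.
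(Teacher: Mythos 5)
Your proof is correct and is essentially the paper's own argument: the paper likewise proves this by taking $X = 1$, $Y = Z^q$, and $t = \frac{p}{q}$ in H\"older's inequality (\ref{hold}), which is the same application you make with the roles of the two factors swapped. The extra care you take with the case $p = q$ and the final root-taking step is fine but not a different method.
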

\begin{proof}
It is sufficient to take $X = 1$, $Y = Z^q$ and $t = \frac{p}{q}$ in H\"{o}lder's inequality. 
\end{proof}

\begin{theorem}\label{jensen}
If $\phi$ is convex and $\phi(Y)$ and $Y$ are integrable, then for each random variable $X$
we have the following Jensen's inequality for conditional expectations (\cite{Durrett} 4.1.1 (d)). 
\begin{equation}
\E(\phi(Y)|X) \geq \phi(\E(Y|X)).
\end{equation}
\end{theorem}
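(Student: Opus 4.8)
The plan is to reduce Jensen's inequality to its affine case, which is immediate from linearity and monotonicity of conditional expectation, by exploiting the fact that a convex function is the pointwise supremum of a countable family of its supporting affine functions.

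First I would recall that a convex $\phi:\R\rightarrow\R$ is continuous and that at every point it possesses a supporting line. Concretely, for each rational $q\in\Q$ I would choose real numbers $a_q,b_q$ such that the affine function $\ell_q(y)=a_qy+b_q$ satisfies $\ell_q(q)=\phi(q)$ and $\ell_q(y)\leq\phi(y)$ for all $y\in\R$; such a choice exists because $\phi$ has finite left and right derivatives at $q$ and any slope between them yields a supporting line. Enumerating $\Q$ as $(q_n)_{n=1}^{\infty}$ and writing $\ell_n=\ell_{q_n}$, a standard density-plus-continuity argument then gives the pointwise identity
\begin{equation}
\phi(y)=\sup_{n\in\N_+}\ell_n(y),\quad y\in\R,
\end{equation}
with each $\ell_n$ an affine minorant of $\phi$.

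Next I would pass to conditional expectations. Since $Y$ is integrable, $\E(Y|X)$ exists, and because conditional expectation is a contraction in $L^1$ (Theorem \ref{contrac}), it is integrable, so each $\ell_n(\E(Y|X))=a_n\E(Y|X)+b_n$ is well defined a.s. For a fixed $n$ the minorant property gives $\phi(Y)\geq a_nY+b_n$ a.s.; applying $\E(\cdot|X)$ and using monotonicity together with linearity of conditional expectation (both following from Definition \ref{condDef}) yields
\begin{equation}
\E(\phi(Y)|X)\geq a_n\E(Y|X)+b_n=\ell_n(\E(Y|X))
\end{equation}
a.s. Because this holds for each of countably many $n$, it holds simultaneously for all $n$ outside a single $\PR$-null set, on the complement of which I would take the supremum over $n$ and invoke the representation above to conclude $\E(\phi(Y)|X)\geq\sup_n\ell_n(\E(Y|X))=\phi(\E(Y|X))$.

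The routine points are the existence of supporting lines and the monotonicity and linearity of conditional expectation. The one place requiring genuine care is the countability of the affine family: I expect the main obstacle to be making the representation rigorous so that the supremum can be exchanged with the a.s.\ inequalities, which is precisely why the supporting lines are indexed by the countable set $\Q$ rather than by all of $\R$ — a countable union of null sets is null, whereas an uncountable one need not be.
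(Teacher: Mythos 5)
Your proof is correct, and it is essentially the standard argument: the paper itself does not prove this statement but cites Durrett (4.1.1 (d)), whose proof is exactly this reduction of Jensen's inequality to affine minorants — a countable family whose pointwise supremum recovers $\phi$, combined with linearity and monotonicity of $\E(\cdot|X)$ and the fact that a countable union of null sets is null. The only cosmetic difference is that the cited proof indexes the affine minorants by rational coefficient pairs $(a,b)$ with $ay+b\leq\phi(y)$, whereas you use supporting lines at rational points (whose slopes need not be rational); both families are countable and both suprema equal $\phi$, so the arguments are interchangeable.
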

The following well-known theorem states that conditional expectation is contraction in $L^p$ for $p \geq 1$. 
\begin{theorem}\label{contrac}
For $Y^p$ integrable for $p \geq 1$ it holds
\begin{equation}
\E(|Y|^p) \geq \E(|\E(Y|X)|^p). 
\end{equation}
\end{theorem}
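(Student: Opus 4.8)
The plan is to obtain this as an immediate consequence of Jensen's inequality for conditional expectations (Theorem \ref{jensen}) applied to the convex function $\phi(t) = |t|^p$, followed by taking unconditional expectations and using the tower property.

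First I would verify that the hypotheses of Theorem \ref{jensen} are met. The function $\phi(t) = |t|^p$ is convex on $\R$ for every $p \geq 1$, and $\phi(Y) = |Y|^p$ is integrable by assumption. I also need $Y$ itself integrable: since $Y \in L^p$ with $p \geq 1$, Theorem \ref{leqpq} (taking $q = 1$) gives $Y \in L^1$. Thus both integrability conditions required by Jensen's inequality hold, and I may write
\begin{equation}
\E(|Y|^p \,|\, X) \geq |\E(Y|X)|^p.
\end{equation}

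Next I would apply the (unconditional) expectation to both sides of this inequality; monotonicity of the expectation preserves the inequality, and the iterated expectation property (\ref{doubleCond}) with $\E(\cdot|\emptyset) = \E(\cdot)$ yields $\E(\E(|Y|^p|X)) = \E(|Y|^p)$. Combining these gives
\begin{equation}
\E(|Y|^p) = \E\big(\E(|Y|^p \,|\, X)\big) \geq \E\big(|\E(Y|X)|^p\big),
\end{equation}
which is the claimed inequality.

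Since the argument is a direct chaining of two results already available in the excerpt, there is essentially no serious obstacle; the only point demanding a little care is the integrability bookkeeping, namely confirming that $Y \in L^1$ so that $\E(Y|X)$ is defined and Theorem \ref{jensen} legitimately applies. This is precisely what Theorem \ref{leqpq} supplies, so the proof is short and routine.
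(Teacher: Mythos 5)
Your proof is correct and follows exactly the paper's own route: Theorem \ref{leqpq} to get $Y \in L^1$, Jensen's inequality for conditional expectations (Theorem \ref{jensen}) with $\phi(x)=|x|^p$, and the iterated expectation property to pass to unconditional expectations. The paper merely states these three ingredients without spelling out the steps, so your write-up is a fleshed-out version of the same argument.
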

\begin{proof}
It follows from Theorem \ref{leqpq}, Theorem \ref{jensen} for $\phi(x) = |x|^p$, 
and the iterated expectation property (\ref{doubleCond}).
\end{proof}

Conditional probability of an event $B \subset \Omega$ given a random variable $X$ is defined as
\begin{equation}\label{PBX}
\PR(B|X) = \E(\I_{B}|X).
\end{equation}
Below we provide a definition of conditional distribution which is convenient for our needs 
(cf. \cite{borovkov1999mathematical}, Chapter 20, definitions 1 and 2). 
\begin{defin}\label{defMu} 
For two random variables $X_i, i \in I_2$,  
 with values in measurable spaces $\mc{S}_i =(B_i, \mathcal{B}_i), i \in I_2$, 
respectively, we call $\mu_{X_2|X_1}:B_1 \times \mathcal{B}_2 \rightarrow [0,1]$ conditional distribution of $X_2$ given $X_1$ 
if the following conditions are satisfied: 
\begin{enumerate} 
\item for each $x \in B_1$,  $\mu_{X_2|X_1}(x, \cdot)$ is a probability measure on $\mathcal{B}_2$, 
\item for each $A \in \mathcal{B}_2$, function $x \rightarrow \mu_{X_2|X_1}(x , A) $ is measurable from $\mc{S}_1$ to $\R$, 
\item for each $A \in \mathcal{B}_2$, $\mu_{X_2|X_1}(X_1 , A) $ is a version of $\PR(X_2 \in A|X_1)$. 
\end{enumerate} 
\end{defin} 
It turns out that for random variables $X_i, i \in I_2,$ with values in standard Borel spaces \cite{Ikeda1981} 
such as complete spaces (including $\R^n$ 
with Euclidean distance) with 
Borel $\sigma$-field, conditional distribution $\mu_{Y|X}$ of $Y$ given $X$ exists and $\mu_{Y|X}(x,\cdot)$ 
is uniquely determined for $\mu_X$ a. e. $x$, which follows from Theorem 3.3 in Chapter 1 in \cite{Ikeda1981}. 
For a random variable $Y$ with values in measurable space $\mc{S}$, a real-valued measurable function $g$ on $\mc{S}$ such that $g(Y) \in L^1$,  
and any random variable $X$ such that  
$\mu_{Y|X}$ exists, it holds (cf. \cite{borovkov1999mathematical}, Section 20, Theorem 1) 
\begin{equation}\label{condCond} 
\E(g(Y)|X) = \int\! g(y)\, \mu_{Y|X}(X,dy). 
\end{equation}

\chapter{\label{appHMC}Continuous-time Markov chains}
Let $T = [0,\infty)$ and $E$ be a countable set with discrete topology, called state space. 
Let $Y$ be a stochastic 
process on $E$ with times $T$, that is a sequence of random variables $(Y_t)_{t \in T}$ with values in $E$, 
where variable $Y_t$ describes the random state of the process at time $t$. 
By $\mc{B}(E^T)$ we denote the $\sigma$-field of subsets of $E^T$ generated by the family of sets 
$\{\{f \in \E^T: f(t) = i\}:t \in T, i \in E\}$. Process $Y$ can be identified with 
a random variable taking values in the measurable space 
$\mc{S}(E^T)=(E^T, \mc{B}(E^T))$, whose values  
$Y(\omega)$, known as trajectories of the process, 
are functions of time given by $Y(\omega)(t) = Y_t(\omega)$, $t \in T$, and they describe evolution of the process
 in time corresponding to the elementary event $\omega \in \Omega$. 
Distribution $\mu_Y$ of a process $Y$ is defined as for any random variable
(see Definition \ref{distrDef} in  Appendix \ref{appMath}). 
Let us assume that $Y$ is a right-continuous process,
which means that its trajectories are right-continuous functions of time for each $\omega \in \Omega$, so 
that we can define its jump times, jump chain, and holding times, the names being adopted from \cite{Norris1998}. 
See Section 1.2 of our previous work \cite{badowski2011} or \cite{Norris1998} for intuitive informal descriptions of these objects. 
We define jump times $J_0, J_1, \ldots$ of $Y$ inductively as 
\begin{equation} 
\begin{split} 
J_{0} &= 0, \\ 
J_{n+1} &= \begin{cases} 
 \inf\{t > J_n: Y_t \neq Y_{J_n}\} & \text{if} \ J_n < \infty, \\ 
 \infty & \text{otherwise,} \\ 
\end{cases} 
\end{split} 
\end{equation} 
its jump chain $Z_0, Z_1, \ldots$ as 
$Z_{n} = X_{J_{m(n)}}$, where $m(n) = \max\{k:  k\leq n,\ J_k < \infty\}$, 
and its holding times $S_1, S_2 \ldots$ as 
\begin{equation} 
\begin{split} 
S_{n} &= \begin{cases} 
 J_n - J_{n-1} & \text{if} \ J_n < \infty, \\ 
 \infty & \text{otherwise.} \\ 
\end{cases} 
\end{split} 
\end{equation} 
The moment of explosion $\zeta$ of $Y$  is defined as the moment when $Y$ makes infinitely many jumps for the first time, that is 
\begin{equation}\label{expTime} 
\zeta =\sup_{n} J_{n}. 
\end{equation} 
We say that $Y$ is nonexplosive if $\zeta = \infty$. 
We say that a matrix $Q = (q_{x,y})_{x,y \in E}$ is a $Q$-matrix (on $E$) if 
for each $x, y \in E, x \neq y$, $0 \leq q_{x,y} < \infty$, and for each $x \in E$, 
\begin{equation}\label{qxx} 
-q_{x,x}  = \sum_{y \in E}\ q_{x,y} < \infty. 
\end{equation} 
Entries of a $Q$-matrix are called intensities, and thanks to (\ref{qxx}) it is sufficient to specify the 
off-diagonal intensities to specify the whole $Q$-matrix. 
Continuous-time homogeneous Markov chain (HMC) \cite{pierre1999markov} $Y$ with $Q$-matrix $Q$ on $E$ with times $T$ 
and initial distribution $\Lambda$
is a right-continuous stochastic process with such $E$ and $T$, such that $Y_0 \sim \Lambda$, and for certain function $p$ fulfilling for each 
$x ,y \in E$ and $h \geq 0$, 
\begin{equation} 
p(x,y,h) = q_{x,y}h + o(h), 
\end{equation} 
 for each $h \geq 0$, $k \in \N_{+}$, $x_1, x_2, \ldots, x_{k+1} \in E$, and $0 \leq t_1 \leq \ldots \leq t_{k}$, it holds 
\begin{equation} 
\PR(Y_{t_k+h}= x_{k+1}|Y_{t_1}=x_1, \ldots, Y_{t_k}=x_k) =  p(x_k,x_{k+1},h), 
\end{equation} 
whenever the event we condition on has positive probability. 
Distribution of each nonexplosive HMC with a $Q$-matrix $Q = (q_{x,y})_{x,y \in E}$ and initial distribution $Y_0 \sim \Lambda$ 
is uniquely determined by $Q$ and $\Lambda$. 

Poisson process $N$ with rate $\lambda > 0$ is defined as a nonexplosive HMC on state space $\N$ whose jump chain fulfills 
$Z_n = n$ for $n \in \N$ and whose holding times $S_1, S_2, \ldots$ are i. i. d., $S_1 \sim \Exp(\lambda)$.

\chapter{\label{appProc}Proofs of new theorems for MRCP and MR}
To prove Theorem \ref{thNonexpl} we need the following easy consequence of Theorem 4.3.6 from \cite{stroock05markov}. 
\begin{theorem}\label{thStroock}
For a state space $E$, let $(F_N)_{N=1}^{\infty}$ be finite sets such that $F_N \subset E$, $F_N \subset F_{N+1}$ and 
$\bigcup_{N=1}^{\infty}F_N = E$. If there exists a nonnegative function $u$ on $E$, such that 
$\inf_{j \notin F_N}u(j) \rightarrow \infty$ as $N \rightarrow \infty$, and for some $\alpha > 0$, for a
$Q$-matrix $Q = (q_{ij})_{i,j \in E}$, for each $i \in E$, 
\begin{equation}
\sum_{j \in E, j \neq i}q_{ij}(u(j) - u(i)) \leq \alpha u(i), 
\end{equation}
then for each probability distribution $\Lambda$ on $E$ there exists a nonexplosive HMC 
with initial distribution $\Lambda$ and $Q$-matrix $Q$. 
\end{theorem}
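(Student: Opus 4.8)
The plan is to read Theorem \ref{thStroock} as a Foster--Lyapunov (Reuter-type) non-explosion criterion and to obtain it from Theorem 4.3.6 of \cite{stroock05markov} in two moves: first apply the cited result to the minimal chain associated with $Q$ started from an arbitrary fixed state, and then average over the prescribed initial distribution $\Lambda$. The hypotheses placed on $u$, the exhaustion $(F_N)_{N=1}^\infty$, and the drift inequality $\sum_{j\neq i}q_{ij}(u(j)-u(i))\leq\alpha u(i)$ are exactly the data that make $u$ a coercive Lyapunov function for the $Q$-matrix $Q$, so the heart of the argument is simply matching our notation to that of the cited theorem; the remaining work is the passage from a single starting state to a general $\Lambda$.

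First I would construct, via the jump--hold recipe of Appendix \ref{appHMC}, the minimal right-continuous process $Y$ with off-diagonal intensities $q_{x,y}$ and $Y_0\sim\Lambda$, whose explosion time $\zeta$ is given by (\ref{expTime}); up to $\zeta$ this process has $Q$ as its $Q$-matrix and is an HMC in the sense of Appendix \ref{appHMC}, and the only thing in doubt is whether $\zeta=\infty$ a.\,s. Next I would invoke Theorem 4.3.6 of \cite{stroock05markov}, whose mechanism is that $e^{-\alpha t}u(Y_t)$ is a nonnegative $\PR_i$-supermartingale for $t<\zeta$ (this is precisely what the drift inequality encodes), so that optional stopping at $\zeta\wedge T$ bounds its expectation by $u(i)$; the coercivity $\inf_{j\notin F_N}u(j)\to\infty$ then forces $u(Y_t)\to\infty$ along any explosive trajectory, which is incompatible with the uniform bound unless $\PR_i(\zeta\leq T)=0$. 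Letting $T\to\infty$ yields $\PR_x(\zeta=\infty)=1$ for every fixed starting state $x\in E$.

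Finally I would lift this to the initial law $\Lambda$: since $E$ is countable, conditioning on $Y_0$ gives $\PR(\zeta=\infty)=\sum_{x\in E}\Lambda(\{x\})\,\PR_x(\zeta=\infty)=1$, so the minimal chain never reaches the cemetery and is a genuine nonexplosive HMC on $E$ for all times with $Q$-matrix $Q$ and initial distribution $\Lambda$, as required.

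The step I expect to be the main obstacle is purely one of reconciliation rather than of genuine difficulty: I must confirm the exact form in which \cite{stroock05markov} states Theorem 4.3.6 — in particular whether it delivers non-explosion for a fixed initial state (so that the averaging over $\Lambda$ above is needed) or directly for an arbitrary initial law, and whether it is stated with the drift condition in the displayed summation form or in an equivalent operator form $Qu\leq\alpha u$ that I would first have to rewrite. A secondary point to verify carefully is that the minimal jump--hold construction indeed reproduces the transition-rate identity $p(x,y,h)=q_{x,y}h+o(h)$ and the Markov property demanded by the definition of HMC in Appendix \ref{appHMC}, rather than merely being a pure-jump process carrying the correct generator; this is standard for the minimal nonexplosive chain, but it is the place where the formal matching to our definitions must be made explicit.
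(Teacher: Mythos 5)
Your proposal is correct and takes essentially the same route as the paper: the paper in fact gives no proof of this statement at all, presenting it only as ``an easy consequence of Theorem 4.3.6 from \cite{stroock05markov}'', which is precisely the reduction you carry out (minimal chain, the $e^{-\alpha t}u(Y_t)$ supermartingale/coercivity mechanism behind Stroock's criterion, then summing over the countable state space to pass from a fixed starting state to a general initial law $\Lambda$). Your write-up simply makes explicit the details that the paper delegates entirely to the cited reference.
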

Below we provide the proof of Theorem \ref{thNonexpl}.  
\begin{proof}
For $m$ as in Theorem \ref{thNonexpl}, for assumptions of Theorem \ref{thStroock} to be fulfilled it is sufficient to take 
$F_N = \{x \in E: mx \leq N\}$, $\alpha = 1$, and for $A$ denoting the lhs of (\ref{supmx}), $u(x) = \max\{A,0\} + mx$. 
\end{proof}

\begin{theorem}\label{aveSecFin}
Using notations as in Section \ref{genParSec}, if for 
$\mu_P$ a. e. $p = (k, c)$ we have 
\mbox{$h(p, R) \sim \mu_{MRCP}(RN(k), c)$}, then
$\wt{\mu}$ is conditional distribution of $h(P,R)$ given $P$.
\end{theorem}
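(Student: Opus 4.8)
The plan is to verify the three defining conditions of a conditional distribution from Definition~\ref{defMu} directly for the function $\wt{\mu}$, using the characterization of conditional distributions and the independence structure $Y = h(P,R)$ with $P$ and $R$ independent. Recall that $\wt{\mu}(p,\cdot)$ was constructed in Section~\ref{genParSec} to equal $\mu_{MRCP}(RN(k),c)$ whenever an MRCP corresponding to $p=(k,c)$ exists, and to equal the fixed distribution $\mu_0$ otherwise. The measurability conditions (parts 1 and 2 of Definition~\ref{defMu}) come essentially for free: each $\wt{\mu}(p,\cdot)$ is a probability measure by construction, and measurability of $p\mapsto\wt{\mu}(p,A)$ in $p$ is a structural hypothesis already built into the definition of $\wt{\mu}$ as a map $B_{RN,E}\times\mc{B}(E^T)\to\R$. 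So the real content is condition 3.

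First I would fix $A\in\mc{B}(E^T)$ and show that $\wt{\mu}(P,A)$ is a version of $\PR(h(P,R)\in A\,|\,P)$. The natural tool is Theorem~\ref{indepCond}: since $P$ and $R$ are independent and $h$ is measurable, for the indicator function $\I_{A}(h(\cdot,\cdot))$ we have
\begin{equation}
\E(\I_A(h(P,R))\,|\,P) = (\E(\I_A(h(p,R))))_{p=P} = (\PR(h(p,R)\in A))_{p=P}.
\end{equation}
Thus the conditional probability $\PR(h(P,R)\in A\,|\,P)$ equals, for $\mu_P$ a.e.\ $p$, the number $\PR(h(p,R)\in A)=\mu_{h(p,R)}(A)$, the distribution of the fixed-parameter construction evaluated at $A$. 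The hypothesis of the theorem states precisely that for $\mu_P$ a.e.\ $p=(k,c)$ we have $h(p,R)\sim\mu_{MRCP}(RN(k),c)$, so $\mu_{h(p,R)}(A)=\mu_{MRCP}(RN(k),c)(A)=\wt{\mu}(p,A)$ for those $p$. Combining, $\PR(h(P,R)\in A\,|\,P)=\wt{\mu}(P,A)$ a.s., which is condition 3.

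The step I expect to require the most care is matching the almost-everywhere qualifiers and the case distinction in the definition of $\wt{\mu}$. The function $\wt{\mu}(p,\cdot)$ only agrees with $\mu_{MRCP}(RN(k),c)$ on the set of $p$ for which an MRCP exists; on the complementary set it equals $\mu_0$. I would argue that this discrepancy is harmless because the hypothesis $h(p,R)\sim\mu_{MRCP}(RN(k),c)$ holding for $\mu_P$ a.e.\ $p$ already presupposes (via the standing assumptions of Definition~\ref{MRdef}, where MR is only defined when MRCP exists for $\nu$ a.e.\ $p$) that the set where no MRCP exists is $\mu_P$-null. Hence the value of $\wt{\mu}$ on that null set is irrelevant to the almost-sure identity, and the version of the conditional distribution is unaffected. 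Once this is observed, the three conditions of Definition~\ref{defMu} are all verified and the conclusion follows; formally one should note that condition~3 is being checked for every fixed $A\in\mc{B}(E^T)$, with the exceptional $\mu_P$-null set of $p$ allowed to depend on $A$, which is exactly what Definition~\ref{defMu} demands.
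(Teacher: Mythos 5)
Your verification of condition 3 of Definition \ref{defMu} is exactly the paper's argument: apply Theorem \ref{indepCond} to $\I_A(h(P,R))$ to get $\PR(h(P,R)\in A\,|\,P)=(\PR(h(p,R)\in A))_{p=P}$, then use the hypothesis of the theorem to identify $\PR(h(p,R)\in A)$ with $\wt{\mu}(p,A)$ for $\mu_P$-a.e.\ $p$; your remark that the exceptional null set may depend on $A$ and that the values of $\wt{\mu}$ on the (null) set where no MRCP exists are irrelevant is also correct.

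The gap is in condition 2. You assert that measurability of $p\mapsto\wt{\mu}(p,A)$ is ``a structural hypothesis already built into the definition of $\wt{\mu}$.'' It is not: the definition in Section \ref{genParSec} only prescribes the \emph{values} of $\wt{\mu}$ --- equal to $\mu_{MRCP}(RN(k),c)$ on the set $B$ of parameters for which an MRCP exists, and to $\mu_0$ off $B$ --- and a piecewise-prescribed map need not be measurable; in particular, nothing in that prescription tells you that the set $B$ on which the case distinction is made is itself a measurable set. The paper devotes roughly half of its proof to precisely this point: $B=\{p:\PR(\zeta(p,R)=\infty)=1\}$, where the initial explosion time $\zeta$ is measurable as a supremum of the measurable initial jump times, so that $p\mapsto\PR(\zeta(p,R)=\infty)$ is measurable and hence $B\in\mc{B}_{RN,E}$; then, using the fact built into the constructions of Section \ref{secMRCP} that whenever an MRCP corresponding to $p$ exists the constructed process $h(p,R)$ is such an MRCP, one has for each $A\in\mc{B}(E^T)$ the decomposition
\begin{equation}
\wt{\mu}(p,A) \;=\; \I_{B}(p)\,\PR(h(p,R)\in A) \;+\; \I_{B_{RN,E}\setminus B}(p)\,\mu_0(A),
\end{equation}
whose first term is measurable in $p$ by joint measurability of $h$ and Fubini's theorem. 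Without an argument of this kind, your proof does not establish that $\wt{\mu}$ is a conditional distribution in the sense of Definition \ref{defMu}: condition 2 is a substantive claim about $\wt{\mu}$ that could fail for an arbitrary piecewise-defined kernel, so it cannot be waved through as part of the definition.
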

\begin{proof}
Point 1 in Definition \ref{defMu} obviously holds. Let $\zeta(p,R)$ denote the initial explosion 
time of a process given by the considered construction 
of MRCP using noise variable $R$ and parameters $p \in B_{RN,E} $. 
The set $B \subset B_{E,RN}$ on which MRCP exists consists of $p$ such that 
 $\PR(\zeta(p,R) = \infty)=1$ and hence 
from measurability of $\zeta$ (which is measurable as a supremum of measurable initial jump times), we have $B  \in \mc{B}_{RN,E}$. 
Point 2 now follows from the fact 
that $h$ is measurable and for each $A \subset \mc{B}(E^T)$, it holds 
\begin{equation}
\wt{\mu}(p, A) = \I_B(p)\mu_0(A) + \I_{B_{RN, E}\setminus B}(p) \PR(h(p,R) \in A).
\end{equation}
Proof of point 3 is analogous as such proof of a less general Theorem 18 in \cite{badowski2011}. For each $A \in \mc{B}(E^T),$ 
\begin{equation}
\begin{split}
\PR(h(P,R) \in A |P) &= \E(\I_{A}(h(P,R))|P) \\
&= (\PR(h(p,R) \in A))_{p = P} \\
&= \wt{\mu}(P,A),\\
\end{split}
\end{equation}
where in the second equality we used Theorem \ref{indepCond} and in the third the assumption of this theorem.  
\end{proof}
Below we provide the proof of Theorem \ref{thMomsExist}. 
\begin{proof}
Let $P = (K,C) \sim \nu$. From $ \E(A(K)^n) < \infty$ it follows that $A(K)$ is finite a. s. 
The assumptions of Theorem \ref{thNonexpl} are satisfied for $\mu_K$ a. e. $k$ with the same $m$ as here, as the lhs of 
(\ref{supmx}) is bounded from above by 
\begin{equation}
L\max(\{0\}\cup\{ms_l:l \in I_L\})A(k). 
\end{equation}
Thus MRCP corresponding to RN and $p$ exists for $\nu$ a. e. $p$. 
For an MR $(P,Y)$ corresponding to RN, with $P$ as above and $Y$ 
 built with the help of the RTC construction it holds a. s. for each $t \in T$ and $i \in I_N$ (see formula \ref{intEqu}) 
\begin{equation}
m_iY_{t,i} \leq  mY_{t} \leq Cm + \sum_{l \in L_m}s_lm N_l(tA(K)).
\end{equation}
From Minkowski's inequality \cite{rudin1970},
 for $\E(Y_{t,i}^n) < \infty$ to hold it is therefore sufficient that $\E(C_i^n) < \infty$ for $i \in I_N$ and for any unit rate Poisson
process $N_1$,  $\E(N_1(tA(K)^n) < \infty.$
For $i \in \N_+$ we define polynomial $x^{\underline{i}} = x(x-1)\ldots(x-i+1)$ and
let the sequence $(b_i)_{i=1}^n$ be such that
\begin{equation}
x^n = \sum_{i=1}^nb_i x^{\underline{i}}.
\end{equation}
For each $\lambda \geq 0$, it holds
\begin{equation}
\begin{split}
\E(N_1^n(\lambda)) &= \sum_{k=0}^{\infty} k^n\frac{\lambda^k}{k!}e^{-\lambda} \\
&= \sum_{k=0}^{\infty} \sum_{i=1}^n(b_i k^{\underline{i}})\frac{\lambda^k}{k!}e^{-\lambda} \\
&=  \sum_{i=1}^n(b_i\lambda^i e^{-\lambda}\sum_{k=0}^{\infty}\frac{k^{\underline{i}}}{k!}\lambda^{k-i}) \\
&= \sum_{i=1}^n b_i\lambda^i,\\
\end{split}
\end{equation}
where in the fourth equality we used the fact that for $i \in \N_+$,
\begin{equation}
\begin{split}
\sum_{k=0}^{\infty}\frac{k^{\underline{i}}}{k!}\lambda^{k-i}) &= \sum_{k=i}^{\infty}\frac{1}{(k-i)!}\lambda^{k-i} \\
&= \sum_{l=0}^{\infty}\frac{1}{l!}\lambda^{l} \\
&= e^{\lambda}. \\
\end{split}
\end{equation} 
Thus, from $\E(A(K)^n) < \infty$ we have
\begin{equation}\label{Nln}
\begin{split}
\E(N_1(tA(K))^n) &= \E((\E(N^n_1(tA(k))))_{k=K}) \\
& \leq \E((\E(\sum_{i=1}^n |b_i| (tA(k))^i))_{k=K}) \\
&= \sum_{i=1}^n |b_i|\E((tA(K))^i) < \infty,
\end{split}
\end{equation}
where in the first and last equalities we used Fubini's theorem, and in the last inequality Theorem \ref{leqpq}. 
\end{proof}

\chapter{\label{appHilb}Hilbert spaces}
We introduce below some definitions and facts from Hilbert space theory, which are used in the main text 
(see \cite{rudin1970}  and \cite{KolmogorovFomin60} for proofs and more details) and prove some new facts. 
Hilbert space is a pair $(H, (,))$ consisting of a linear space $H$ and a scalar product $(,)$ in it, such that 
for metric $d$ and norm $||\cdot||$ defined as
\begin{equation}
d(x,y) = ||x-y|| = \sqrt{(x -y,x-y)},
\end{equation}
 $(H,d)$ is a complete metric space. For simplicity we also say that $H$ is a Hilbert space (with scalar product $(,)$).
We say that a set $\{v_i \in H: i \in I_n\}$ is orthogonal in $H$ 
if $(v_i,v_j) =0$, $i,j \in I_n$, $i \neq j$, nonzero orthogonal if further $v_i \neq 0$, $i \in I_n$, and 
orthonormal if it is orthogonal with $||v_i||=1$, $i \in I_n$. 
For linear subspaces $W_1,\ldots,W_n$ of a certain linear space, we define 
\begin{equation}
\sum_{i=1}^n W_i =  \{\sum_{i=1}^{n}w_i:\ \forall i \in I_n,\ w_i \in W_i\}.
\end{equation}
\begin{defin}\label{defHilb}
 Hilbert space $H$ is direct sum of its linear subspaces $H_1,\ldots, H_n$, which we denote
\begin{equation}
H = \bigoplus_{i=1}^n H_i = H_1 \oplus \ldots \oplus H_n
\end{equation}
 if the following conditions are fulfilled
\begin{enumerate}
  \item subspaces $H_1, \ldots, H_n$  are closed in $H$,
  \item 
\begin{equation}
H = \sum_{i=1}^n H_i,
\end{equation}
  \item these subspaces are mutually orthogonal, which means that for each $i,j \in I_n$, $i \neq j$ for each $v_i \in H_i$ and $v_j \in H_j$
\begin{equation}
 (v_i,v_j) = 0.
\end{equation}
\end{enumerate}
\end{defin}
From point 3 it follows that for $v \in H$, elements $v_i \in H_i$ for $i \in I_n$ such that
\begin{equation}
v = \sum_{i=1}^n v_i
\end{equation}
are uniquely determined.
Let $<,>$ be a scalar product in $\R^n$ and $(a_{ij})_{i,j \in I_n}$ be real numbers for which, for 
each $x,y \in \R^n$, it holds 
\begin{equation}\label{scpr}
<x,y> = \sum_{i,j \in I_n}a_{ij}x_iy_j. 
\end{equation}
We say that 2 norms $|\cdot|_1, |\cdot|_2$ on a linear space $V$ are equivalent, if there exist $\alpha$ and $\beta$ real positive such that 
for each $x \in V$
\begin{equation}
\alpha|x|_1 \leq |x|_2 \leq \beta|x|_1.
\end{equation}
\begin{theorem}\label{genDSpace}
For a Hilbert space $H$ with a scalar product $(,)$, and for a scalar product $<,>$ in $\R^n$ as in (\ref{scpr}), the Cartesian product space 
 $H^n = \{(v_i)_{i=1}^n: v_i \in H,\ i \in I_n\} $ 
with function $(,)_n:H^n\times H^n\rightarrow\R$  
\begin{equation}\label{scDir}
 (v,w)_n = \sum_{i,j \in I_n} a_{ij}(v_i,w_j)
\end{equation}
is a Hilbert space, which we call the direct sum of $H$ given by $<,>$ and denote by $\bigoplus_{<,>}H$. 
Norms $||\cdot ||_n$ induced by scalar products (\ref{scDir}) corresponding to different scalar products $<,>$ in $\R^n$ are equivalent. 
\end{theorem}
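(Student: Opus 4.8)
The plan is to verify first that $(,)_n$ is a genuine scalar product on $H^n$, then to establish the norm-equivalence claim, and finally to deduce completeness of $(H^n,(,)_n)$ from that equivalence. Since $<,>$ is a scalar product on $\R^n$, the matrix $(a_{ij})$ in (\ref{scpr}) may be taken symmetric (replace it by its symmetric part, which defines the same bilinear form) and is then positive definite. Bilinearity of $(,)_n$ is immediate from bilinearity of $(,)$ on $H$, and the symmetry $(v,w)_n = (w,v)_n$ follows from symmetry of $(,)$ together with $a_{ij} = a_{ji}$ after relabelling the summation indices.

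The crux is positive-definiteness, which I would handle via the spectral theorem. Write $A = (a_{ij}) = \sum_{k=1}^n \lambda_k u_k u_k^T$ with eigenvalues $\lambda_k > 0$ and an orthonormal eigenbasis $u_1,\ldots,u_n$ of $\R^n$. For $v = (v_i)_{i=1}^n \in H^n$ set $w_k = \sum_{i=1}^n (u_k)_i v_i \in H$. Substituting $a_{ij} = \sum_k \lambda_k (u_k)_i (u_k)_j$ into (\ref{scDir}) and interchanging the finite sums gives
\begin{equation}\label{propdiag}
(v,v)_n = \sum_{k=1}^n \lambda_k ||w_k||^2.
\end{equation}
Since each $\lambda_k > 0$, the right side is nonnegative and vanishes only when every $w_k = 0$; as the matrix whose rows are the $u_k$ is orthogonal, hence invertible, this forces $v_i = 0$ for all $i \in I_n$. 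Thus $(,)_n$ is positive definite and is a scalar product.

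For the equivalence of norms I would compare each $||\cdot||_n$ with the standard norm $||v||_0^2 = \sum_{i=1}^n ||v_i||^2$ arising from $a_{ij} = \delta_{ij}$. Using $\sum_k u_k u_k^T = I$ (completeness of the orthonormal eigenbasis) one checks $\sum_k ||w_k||^2 = ||v||_0^2$, so (\ref{propdiag}) yields
\begin{equation}\label{proptwoside}
\lambda_{\min} ||v||_0^2 \leq ||v||_n^2 \leq \lambda_{\max} ||v||_0^2,
\end{equation}
where $\lambda_{\min},\lambda_{\max}$ are the extreme eigenvalues of $A$. Hence each $||\cdot||_n$ is equivalent to $||\cdot||_0$, and since equivalence of norms is transitive, any two such norms (for different $<,>$) are equivalent. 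Finally, $(H^n,||\cdot||_0)$ is complete as a finite product of the complete space $H$; equivalent norms share the same Cauchy and convergent sequences, so $(H^n,||\cdot||_n)$ is complete as well, making it a Hilbert space. I expect the main obstacle to be the positive-definiteness step, since it requires passing to the eigenbasis and the change of variables $v \mapsto (w_k)_k$; once that is set up, both the scalar-product axioms and the two-sided bound (\ref{proptwoside}) drop out of the single computation (\ref{propdiag}).
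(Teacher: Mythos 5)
Your proof is correct and follows essentially the same route as the paper's: both diagonalize the symmetric positive-definite matrix $A$ by an orthogonal change of basis, obtain $(v,v)_n = \sum_k \lambda_k \|w_k\|^2$, use orthogonality to identify $\sum_k \|w_k\|^2$ with the standard norm squared, and conclude the two-sided eigenvalue bound and completeness via equivalence with the standard direct-sum norm. The only cosmetic differences are that you spell out the positive-definiteness and completeness steps explicitly, which the paper leaves implicit or delegates to a citation.
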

\begin{proof}
For $<,>$ equal to the standard scalar product on $\R^n$,  $\bigoplus_{<,>}H$ is the $n$-fold direct sum of
Hilbert spaces known from the literature \cite{KolmogorovFomin60}, which is a Hilbert space, and 
whose norm let us denote $||\cdot||_{st}$. 
For general $<,>$ function $(,)_n$ defined by \ref{scDir} is bilinear and symmetric so for the thesis to hold 
it is sufficient to show that it is positive definite and 
function $||\cdot||_n$ given by  $||x||_n = \sqrt{(x,x)}$ is a norm equivalent to $||\cdot||_{st}$.
Since the matrix $A = (a_{ij})_{i,j \in I_n}$ is real symmetric and positive definite, 
there exists an orthogonal matrix $B = (b_{ij})_{i,j \in I_n}$ and diagonal matrix $C = (c_{ij})_{i,j \in I_n}$ such that 
$c_{ii} > 0$ for $i \in I_n$  
and $A = B^TCB$ (\cite{strang2006linear}, sections 5.6 and 6.2). For each $v \in H^n$, we have 
\begin{equation}\label{vn2}
\begin{split}
(v,v)= \sum_{i,j \in I_n} a_{ij}(v_i,v_j) = \sum_{i,j,k \in I_n} b_{ki}c_{kk}b_{kj}(v_i,v_j) \\
= \sum_{k=1}^n c_{kk} ||\sum_{i=1}^n b_{ki} v_i ||^2 
 \end{split}
\end{equation}
From orthogonality of $B$,  $\sum _{k}b_{ki}b_{kj} = \delta_{ij}$, so that  
\begin{equation}\label{vst2}
\begin{split}
\sum_{k=1}^n ||\sum_{i=1}^n b_{ki} v_i ||^2 = \sum_{i,j,k \in I_n} b_{ki}b_{kj}(v_i,v_j) \\
= \sum_{i,j \in I_n}(v_i, v_j) = ||v||_{st}^2. 
\end{split}
\end{equation}
From (\ref{vn2}) and (\ref{vst2}) it holds 
\begin{equation}
 \min_{i \in I_n}(c_{ii}) ||v||^2_{st} \leq ||v||^2_n \leq \max_{i \in I_n}(c_{ii}) ||v||^2_{st}, 
\end{equation}
which completes the proof. 
\end{proof} 
If $M$ is a closed subspace of $H$, then the orthogonal complement of $M$ in $H$, defined as 
 $ M^{\perp} = \{v \in H: \forall w \in M\quad v \perp w\}$ is closed and it holds
\begin{equation}\label{mDirect}
H = M \oplus M^{\perp}.
\end{equation}
Projection $P$ onto $M$ in the above direct sum is called orthogonal. For $v \in H$, $P(v)$
is the unique element of $M$ minimizing the distance from $v$, which we also call error of approximation of $v$, 
\begin{equation}\label{infProp} 
d(v,P(v)) = \inf_{w \in M} ||v-w||.
\end{equation}
Furthermore, it holds  
\begin{equation}\label{distP}
||v||^2 = ||v - P(v)||^2 + ||P(v)||^2.
\end{equation}
\begin{lemma}\label{lemP1P2}
If $M_2\subset M_1$ are closed subspaces of $H$ and $P_i$ is orthogonal projection from $H$ onto $M_i$, $i \in I_2$, then
\begin{equation}
P_2P_1= P_1P_2 = P_2.
\end{equation}
 \end{lemma}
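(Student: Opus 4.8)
The plan is to prove the two identities $P_2 P_1 = P_2$ and $P_1 P_2 = P_2$ separately, since each relies on a different basic fact about orthogonal projections. The key structural observation is that $M_2 \subset M_1$, so elements of $M_2$ are already in $M_1$, and the orthogonal decomposition (\ref{mDirect}) applies to both subspaces.

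First I would establish $P_1 P_2 = P_2$. For any $v \in H$, we have $P_2(v) \in M_2 \subset M_1$. The orthogonal projection $P_1$ restricted to $M_1$ is the identity, since for $w \in M_1$ the unique element of $M_1$ minimizing distance from $w$ is $w$ itself (by the characterization (\ref{infProp}), as $d(w,w)=0$). Applying this to $w = P_2(v) \in M_1$ gives $P_1(P_2(v)) = P_2(v)$. Since $v$ was arbitrary, $P_1 P_2 = P_2$.

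Next I would establish $P_2 P_1 = P_2$. Fix $v \in H$ and write, using (\ref{mDirect}) for $M_1$, the decomposition $v = P_1(v) + (v - P_1(v))$ with $P_1(v) \in M_1$ and $v - P_1(v) \in M_1^{\perp}$. The plan is to show $P_2(v - P_1(v)) = 0$, i.e. that $v - P_1(v)$ lies in $M_2^{\perp}$; then by linearity of $P_2$ we get $P_2(v) = P_2(P_1(v)) + P_2(v - P_1(v)) = P_2(P_1(v))$, which is exactly $P_2 P_1 = P_2$. To see $v - P_1(v) \in M_2^{\perp}$, take any $w \in M_2$; since $M_2 \subset M_1$, we have $w \in M_1$, and because $v - P_1(v) \in M_1^{\perp}$ it is orthogonal to every element of $M_1$, in particular to $w$. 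Hence $v - P_1(v) \perp w$ for all $w \in M_2$, so $v - P_1(v) \in M_2^{\perp}$, and its orthogonal projection onto $M_2$ vanishes.

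The argument is essentially a two-line consequence of the facts that a projection acts as the identity on its own range and that $M_1^{\perp} \subset M_2^{\perp}$ follows from $M_2 \subset M_1$. The only point requiring any care — and the mild obstacle — is being explicit that $P_2$ annihilates $M_2^{\perp}$, which itself comes from the uniqueness clause in (\ref{infProp}): for $u \in M_2^{\perp}$ the closest point in $M_2$ is $0$, so $P_2(u) = 0$. With linearity of $P_2$ (a standard property of orthogonal projections), these observations assemble directly into both equalities, completing the proof.
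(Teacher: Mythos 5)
Your proof is correct, but it takes a different route than the paper. The paper's proof introduces the orthogonal complement $M_2^{\perp_{1}}$ of $M_2$ \emph{inside} $M_1$ and verifies the three-way orthogonal decomposition $H = M_1^{\perp}\oplus M_2^{\perp_{1}} \oplus M_2$; both identities then follow at once by reading off components (for $v = a+b+c$ with $a \in M_1^{\perp}$, $b \in M_2^{\perp_{1}}$, $c \in M_2$, one has $P_1(v) = b+c$ and $P_2(v)=c$, so applying either projection after the other yields $c$). You instead handle the two identities separately, using only the two-way decomposition $H = M_1 \oplus M_1^{\perp}$ together with three elementary facts: a projection fixes its range, a projection annihilates the orthogonal complement of its range, and $M_2 \subset M_1$ implies $M_1^{\perp}\subset M_2^{\perp}$. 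Your version is more self-contained (it never constructs the relative complement, and you carefully justify the annihilation property from the uniqueness clause of (\ref{infProp})), at the cost of splitting into two arguments and invoking linearity of $P_2$; the paper's version is slicker and treats the two products symmetrically, but leaves the verification of the triple decomposition to the reader. Both are complete and correct proofs of the lemma.
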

 \begin{proof}
 Denoting $M_2^{\perp_{1}}$ the orthogonal complement of $M_2$ in $M_1$, one can easily check that
$H= M_1^{\perp}\oplus M_2^{\perp_{1}} \oplus M_2$, from which the thesis easily follows. 
 \end{proof}
A well-known example of orthogonal projection is conditional expectation, which we prove below for the reader's convenience
 (cf. \cite{Durrett}, Section 4.1 Theorem 1.4).
\begin{lemma}\label{condort}
Conditional expectation given $X$ is an orthogonal projection from Hilbert space $L^2$ onto $L^2_X$ (defined in Section \ref{secOrthog}).
\end{lemma}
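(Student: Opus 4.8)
The plan is to verify the two defining properties of an orthogonal projection onto the closed subspace $L^2_X$, namely that $\E(Y|X) \in L^2_X$ and that the residual $Y - \E(Y|X)$ is orthogonal to $L^2_X$, which together characterize the orthogonal projection by the direct sum decomposition (\ref{mDirect}) and the infimum property (\ref{infProp}). First I would note that $L^2_X$ is a closed subspace of $L^2$, as already established in Section \ref{secOrthog} via the linear isometry $[f(X)]_{\PR} \to [f]_{\mu_X}$ onto the complete space $L^2(\mu_X)$, so the orthogonal projection onto it is well-defined.

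The key steps are as follows. For $Y \in L^2$, by Theorem \ref{contrac} (contraction in $L^p$ for $p=2$) we have $\E(|\E(Y|X)|^2) \leq \E(|Y|^2) < \infty$, so $\E(Y|X) \in L^2$, and by point 1 of Definition \ref{condDef} it has the form $f(X)$ for some measurable $f$, hence $\E(Y|X) \in L^2_X$. Next I would show the residual is orthogonal to every element of $L^2_X$: take an arbitrary $g(X) \in L^2_X$ and compute, using the scalar product (\ref{scall2}),
\begin{equation}
(Y - \E(Y|X), g(X)) = \E((Y - \E(Y|X))g(X)) = \E(Yg(X)) - \E(\E(Y|X)g(X)).
\end{equation}
By Theorem \ref{condexpX}, specifically the consequence (\ref{equexpconds}) that $\E(g(X)Y) = \E(g(X)\E(Y|X))$, the two terms on the right cancel, giving orthogonality. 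Here one must check integrability of $Yg(X)$, which follows from Schwartz inequality (\ref{hold}) since both $Y$ and $g(X)$ lie in $L^2$.

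Having shown $\E(Y|X) \in L^2_X$ with $Y - \E(Y|X) \perp L^2_X$, the uniqueness of the decomposition in $L^2 = L^2_X \oplus (L^2_X)^{\perp}$ from (\ref{mDirect}) forces $\E(Y|X)$ to equal the orthogonal projection $P(Y)$ onto $L^2_X$. I expect the main obstacle to be purely bookkeeping rather than conceptual: one needs to be careful that the conditional expectation, which is only defined a.s. and for integrable random variables, genuinely lands in $L^2$ and that all the expectations appearing are finite so that Theorem \ref{condexpX} applies: these are handled respectively by the contraction Theorem \ref{contrac} and by Schwartz inequality. No deeper difficulty should arise, since the entire argument reduces to the two defining properties of conditional expectation combined with standard Hilbert space characterization of the projection.
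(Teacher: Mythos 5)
Your proposal is correct and follows essentially the same route as the paper's proof: membership $\E(Y|X)\in L^2_X$ via the contraction Theorem \ref{contrac} together with point 1 of Definition \ref{condDef}, then orthogonality of the residual to every $g(X)\in L^2_X$ using Schwartz inequality for integrability and Theorem \ref{condexpX} (i.e.\ (\ref{equexpconds})) for the cancellation. The only cosmetic difference is that you spell out the final appeal to the uniqueness of the decomposition $L^2 = L^2_X \oplus (L^2_X)^{\perp}$, which the paper leaves implicit.
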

\begin{proof}
From the definition of conditional expectation and Theorem \ref{contrac}, $\E(Z|X) \in L^2_X$.
It is sufficient to prove that for each random variable $Z \in L^2$, $Z-\E(Z|X) \in (L^2_X)^{\perp}$.
For each $f(X) \in L^2$ for some measurable $f$, we have from Schwartz inequality $Zf(X),\E(Z|X)f(X)\in L^1$. 
Thus, from Theorem \ref{condexpX}, 
\begin{equation}
\E((Z -\E(Z|X))f(X)) = 0. 
\end{equation}
\end{proof}
Let $M$ be a closed subspace of $H$, then $\bigoplus_{<,>}M$ is a complete space,
 so it is a closed subspace of $\bigoplus_{<,>}H$.
\begin{theorem}\label{projDS}
If $P$ is orthogonal projection of $H$ onto $M$, then the function $P_n:H^n\rightarrow H^n,$ 
given by $P_n(v) = (P(v_i))_{i=1}^n$ is an orthogonal projection from $\bigoplus_{<,>}H$ onto $\bigoplus_{<,>}M$. 
\end{theorem}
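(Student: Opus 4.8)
The plan is to verify the three defining conditions of a direct-sum orthogonal projection from Definition \ref{defHilb} and the characterization of orthogonality, namely that $P_n$ maps $H^n$ into $\bigoplus_{<,>}M$, that $P_n$ is idempotent and linear, and that $v - P_n(v)$ is orthogonal to $\bigoplus_{<,>}M$ with respect to the scalar product $(,)_n$ of (\ref{scDir}). Once these are in place, the standard fact (\ref{mDirect}) that a closed subspace induces an orthogonal projection identified by the vanishing of the inner product against the subspace will finish the argument, since $\bigoplus_{<,>}M$ is already known to be a closed subspace of $\bigoplus_{<,>}H$ by the remark preceding the theorem.

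First I would record that $P_n$ is well-defined with image in $\bigoplus_{<,>}M$: for $v = (v_i)_{i=1}^n \in H^n$ each $P(v_i) \in M$ since $P$ projects onto $M$, hence $P_n(v) = (P(v_i))_{i=1}^n \in M^n = \bigoplus_{<,>}M$ as a set. Linearity of $P_n$ is immediate from linearity of $P$ applied coordinatewise. Idempotency follows from $P^2 = P$: indeed $P_n(P_n(v)) = (P(P(v_i)))_{i=1}^n = (P(v_i))_{i=1}^n = P_n(v)$. So $P_n$ is a linear idempotent whose range is $\bigoplus_{<,>}M$.

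The key computation is the orthogonality of the residual. For arbitrary $v = (v_i)_{i=1}^n \in H^n$ and arbitrary $w = (w_j)_{j=1}^n \in \bigoplus_{<,>}M$, I would compute, using the bilinearity of $(,)_n$ and formula (\ref{scDir}),
\begin{equation}
(v - P_n(v), w)_n = \sum_{i,j \in I_n} a_{ij}(v_i - P(v_i), w_j).
\end{equation}
Now each $w_j \in M$, and since $P$ is the orthogonal projection of $H$ onto $M$, the residual $v_i - P(v_i)$ lies in $M^{\perp}$ by (\ref{mDirect}), so $(v_i - P(v_i), w_j) = 0$ for every pair $i,j$. Hence every term of the double sum vanishes and $(v - P_n(v), w)_n = 0$, i.e. $v - P_n(v) \in \left(\bigoplus_{<,>}M\right)^{\perp}$. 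Combined with $P_n(v) \in \bigoplus_{<,>}M$, this exhibits the decomposition $v = P_n(v) + (v - P_n(v))$ as the unique splitting of $v$ relative to the direct sum $\bigoplus_{<,>}H = \bigoplus_{<,>}M \oplus \left(\bigoplus_{<,>}M\right)^{\perp}$ guaranteed by (\ref{mDirect}), so $P_n$ is precisely the orthogonal projection onto $\bigoplus_{<,>}M$.

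I do not anticipate a serious obstacle here; the argument is essentially a coordinatewise reduction, and the only point needing a little care is that the cross terms with $i \neq j$ in the scalar product (\ref{scDir}) also vanish. This is handled uniformly by the observation that orthogonality $(v_i - P(v_i), w_j) = 0$ holds for \emph{all} $i$ and $j$ separately, since $w_j \in M$ and $v_i - P(v_i) \in M^{\perp}$ regardless of whether the indices match; thus no interaction between the coefficients $a_{ij}$ and the geometry of $H$ is needed, and positivity of the matrix $(a_{ij})$ (which secured completeness in Theorem \ref{genDSpace}) plays no further role in this final step.
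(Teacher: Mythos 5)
Your proposal is correct and follows essentially the same route as the paper's own proof: the decisive step in both is the computation $(v - P_n(v), w)_n = \sum_{i,j \in I_n} a_{ij}(v_i - P(v_i), w_j) = 0$, which holds because each residual $v_i - P(v_i)$ lies in $M^{\perp}$ while each $w_j \in M$. The additional remarks on linearity and idempotency are harmless but not needed, since exhibiting $P_n(v) \in \bigoplus_{<,>}M$ together with $v - P_n(v) \in \bigl(\bigoplus_{<,>}M\bigr)^{\perp}$ already characterizes $P_n$ as the orthogonal projection.
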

\begin{proof} 
For each $v \in \bigoplus_{<,>}H$ we have $P_n(v) \in \bigoplus_{<,>}M$. Furthermore, for each $w \in \bigoplus_{<,>}M$, it holds
\begin{equation} 
(v - P_n(v),w)_n = \sum_{i,j\in I_n} a_{ij}(v_i - P(v_i),w_j) = 0, 
\end{equation} 
since for each $i \in I_n$ it holds $v_i - P(v_i)\in M^{\perp}$. Thus $v - P_n(v) \in (\bigoplus_{<,>}M)^{\perp}$. 
\end{proof}

\chapter{\label{appStatMC}Statistics and Monte Carlo background}
In this section we introduce certain definitions and facts from statistics 
and Monte Carlo simulations (cf. \cite{lehmann1998theory, asmussen2007stochastic, badowski2011}), which are used throughout the text. 
Let us consider a nonempty set of probability distributions $\mathcal{P}$ defined on the same measurable space $\mc{S}$, called 
(set of) admissible distributions (on $\mc{S}$, cf. \cite{kolmogorov1992selected}, Section 38).  
For a measurable space $\mc{H}$, a measurable function from $\mathcal{S}$ to $\mc{H}$ 
is called $\mc{H}$-valued (simply real-valued if $\mc{H} = \mc{S}(\R)$) statistic for $\mc{P}$. 
For a given $\mu \in \mathcal{P}$, random variable $X \sim \mu$ is called an observable. 
A real-valued function $G$ on $\mathcal{P}$ is called an estimand on $\mc{P}$. 
We say that a probability distribution $\mu$ on $\R$ has finite $n$-th moment, $n \in \N_+$, if 
\begin{equation}
\int |x|^n\! d\mu\, < \infty. 
\end{equation}
Let us define estimand $G_E$  
on all probability distributions $\mu$ on $\R$ with finite first moments, for which 
$G_E(X)= \E(X), X \sim \mu$, and 
estimand $G_{Var}$ on distributions $\mu$ on $\R$ with finite second moments, for which
$G_{Var}(\mu)=\Var(X)$, $X \sim \mu$.
We say that a real-valued statistic $\phi$ for $\mc{P}$ is an estimator of an estimand $G$ on $\mc{P}$ 
if for each $\mu \in \mathcal{P}$, for observables $X \sim \mu$, we think of 
random values of $\phi(X)$ as estimates of $G(\mu)$, that is its certain approximations. 
For each $\mu \in \mc{P}$, average error of this approximation can be measured using 
mean squared error 
\begin{equation}\label{meanSqrErr} 
\E_{\mu} ((\phi - G(\mu))^2). 
\end{equation} 
Let estimator $\phi$ of $G$ be unbiased, that is for each $\mu \in \mathcal{P}$, 
\begin{equation}\label{muPhi} 
\E_\mu(\phi) = G(\mu). 
\end{equation} 
Then from (\ref{muPhi}) we have that for $\mu \in \mc{P}$, variance $\Var_\mu(\phi)$ of  $\phi$ is  
equal to the mean squared error (\ref{meanSqrErr}). 
For $n \in \N_+$, we define 
\begin{equation}\label{mcpn}
\mc{P}^n = \{\mu^n:\mu \in \mc{P}\}, 
\end{equation}
where $\mu^n$ is the $n$-fold product of distribution $\mu$. 
For an estimand $G$ on $\mathcal{P}$ and $n \in \N_+$, we define 
estimand $G_n$ on $\mc{P}^n$ by formula $G_n(\mu^n) = G(\mu)$, and call it $G$ in $n$ dimensions. 
Let $n \in \N_+$ and $\Pi_n$ denote the group of all permutations of $I_n$. 
For some set $B$ and function $\phi$ from $B^n$ to $\R$, we define symmetrisation of $\phi$ to be 
a function from $B^n$ to $\R$ such that for each $x  \in B^n$, 
\begin{equation}
\Sym(\phi)(x) = \frac{1}{n!}\sum_{\pi \in \Pi_n}\phi((x_{\pi(i)})_{i=1}^n). 
\end{equation}
We say that $\phi$ as above is symmetric if it is equal to its symmetrisation. 
For some admissible distributions $\mc{P}$, for 
each $\mu \in \mc{P}$,  $X \sim \mu^n$, and $\phi$ being a real-valued statistic for $\mc{P}^n$, 
 $\Sym(\phi)(X)$ is an average of random variables with the same distribution as $\phi(X)$. 
In particular, if $\phi$ is an estimator of some estimand $G$ on $\mc{P}^n$, then so is 
$\Sym(\phi)$, and from the lemma below 
it immediately follows that it has uniformly not higher variance, that is for each $\mu \in \mc{P}$ it holds
\begin{equation}
\Var_{\mu^n}(\Sym(\phi)) \leq \Var_{\mu^n}(\phi). 
\end{equation}
We proved the below lemma as Theorem 11 in \cite{badowski2011}, but this time we provide a different simpler proof. 
\begin{lemma} \label{lemVarAve} 
For some $n \in \N_+$, let $X_1, \ldots, X_n$ be real-valued square-integrable 
random variables with the same distribution. Then 
\begin{equation}\label{varSum} 
\Var\left(\frac{1}{n}\sum_{i=1}^{n}X_i\right) \leq \Var(X_1), 
\end{equation} 
and equality in (\ref{varSum}) holds if and only if for each $i,j \in I_n$, $X_i=X_j$ a. s. 
\end{lemma}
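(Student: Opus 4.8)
The plan is to prove the inequality by expanding the variance of the average in terms of variances and covariances, then bounding each covariance using the Cauchy--Schwarz inequality together with the hypothesis that all the $X_i$ share a common distribution. Since the $X_i$ need not be independent, I cannot simply invoke $\Var(\frac{1}{n}\sum X_i) = \frac{1}{n}\Var(X_1)$; instead the covariance terms must be controlled directly.

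First I would write, using formula (\ref{sumcov}) (the $n=1$ case),
\begin{equation}
\Var\left(\frac{1}{n}\sum_{i=1}^n X_i\right) = \frac{1}{n^2}\left(\sum_{i=1}^n \Var(X_i) + 2\sum_{1\leq i < j \leq n}\Cov(X_i,X_j)\right).
\end{equation}
Because the $X_i$ all have the same distribution, $\Var(X_i)=\Var(X_1)$ for every $i$, so the first sum equals $n\Var(X_1)$. For each covariance term I would apply the Cauchy--Schwarz inequality (\ref{hold}) with $s=t=2$ to the centered variables $X_i - \E(X_i)$ and $X_j-\E(X_j)$, giving $\Cov(X_i,X_j) \leq \sigma(X_i)\sigma(X_j) = \Var(X_1)$, again using equality of distributions. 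Summing over the $\binom{n}{2}$ pairs yields $2\sum_{i<j}\Cov(X_i,X_j) \leq n(n-1)\Var(X_1)$, and combining gives $\Var(\frac{1}{n}\sum X_i) \leq \frac{1}{n^2}(n + n(n-1))\Var(X_1) = \Var(X_1)$, which is (\ref{varSum}).

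For the equality case, I would track when each of the applied inequalities is tight. Equality in the overall bound forces equality in the Cauchy--Schwarz step for every pair $i,j$, i.e. $\Cov(X_i,X_j)=\sigma(X_i)\sigma(X_j)$. Since the $X_i$ have the same (finite, positive-variance) distribution, equality in Cauchy--Schwarz for square-integrable variables means $X_j - \E(X_j)$ is a.s. a nonnegative scalar multiple of $X_i-\E(X_i)$; matching the equal standard deviations forces that scalar to be $1$, so $X_i = X_j$ a.s. Conversely, if all $X_i$ coincide a.s. then the average equals $X_1$ and (\ref{varSum}) is trivially an equality.

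The main obstacle will be the degenerate-variance subtlety in the equality argument: if $\Var(X_1)=0$ then both sides vanish and every pair satisfies $X_i=X_j$ a.s. automatically (each being a.s. constant), so this case must be handled separately, and the Cauchy--Schwarz equality characterization (which presupposes a nonzero vector to scale against) applies cleanly only when $\Var(X_1)>0$. I would therefore split on whether $\Var(X_1)$ is zero, noting that the positive-variance case is where the scalar-multiple-equals-one deduction does the real work, while the zero-variance case is immediate.
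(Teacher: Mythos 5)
Your proof is correct, but it follows a genuinely different route from the paper's. You expand $\Var\bigl(\frac{1}{n}\sum_{i=1}^n X_i\bigr)$ into variances and covariances and then bound each $\Cov(X_i,X_j)$ by the Schwartz inequality, invoking its equality characterization (linear dependence of the centered variables) to handle the equality case, with a separate treatment of the degenerate situation $\Var(X_1)=0$ where that characterization breaks down. The paper instead starts from the pointwise arithmetic--quadratic mean inequality $\left(\frac{1}{n}\sum_{i=1}^n x_i\right)^2 \leq \frac{1}{n}\sum_{i=1}^n x_i^2$, observes that the defect is exactly $\frac{1}{n^2}\sum_{1\leq i<j\leq n}(x_i-x_j)^2$, substitutes the random variables, and takes expectations, using identical distributions to equate second moments and identical means to pass from second moments to variances. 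The paper's route buys a more self-contained argument in which the inequality and the equality case come out simultaneously: the gap between the two sides of (\ref{varSum}) equals $\frac{1}{n^2}\E\bigl(\sum_{i<j}(X_i-X_j)^2\bigr)$, which vanishes precisely when $X_i=X_j$ a.s. for all pairs, so no split on whether $\Var(X_1)=0$ is ever needed. Your route buys modularity and a clearer conceptual reading (the loss relative to the i.i.d.\ case is entirely in the correlations), at the cost of the Cauchy--Schwarz equality subtlety that you correctly identified and patched.
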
 
\begin{proof} 
For $x_1,\ldots,x_n$ real positive, from the well-known inequality between arithmetic and quadratic 
means we have 
\begin{equation}\label{avesqr} 
\left(\frac{1}{n}\sum_{i=1}^{n}x_i\right)^2  \leq  \frac{1}{n}\sum_{i=1}^{n}x_i^2, 
\end{equation} 
which is equivalent to 
\begin{equation} 
\sum_{1 \leq i<j \leq n}(x_i -x_j)^2  \geq  0, 
\end{equation} 
so equality in (\ref{avesqr}) holds only if all $x_i$ are equal. 
Replacing $x_i$ by $X_i$ in (\ref{avesqr}), taking expected value of both sides, and using the fact that
each $X_i$ has the same expected value as their average, we receive the thesis.   
\end{proof} 
We say that a distribution $\mu$ on a measurable space $(B, \mc{B})$ is finite discrete (on $D$) 
if for some finite set $D \in \mc{B}$, $\mu(D) = 1$. 
In \cite{Halmos_1946} it was proved that if $\mc{P}$ contains all finite discrete distributions on $\R$, and if $\phi$ is an 
unbiased estimator 
of some estimand $G$ on $\mc{P}^n$, then $\Sym(\phi)$ is the unique symmetric unbiased estimator of $G$. 
In particular for any other unbiased estimator $\phi'$ of $G$ we have $\Sym(\phi)=\Sym(\phi')$, 
so $\Sym(\phi)$ has uniformly not higher variance than $\phi'$. 
For instance, the unique symmetric unbiased estimator of $G_E$ 
in $n \geq 1$ dimensions is given for each $x \in \R^n$ by formula 
\begin{equation}\label{phiAve} 
\phi_{E,n}(x) = \frac{1}{n} \sum_{i=1}^n x_i, 
\end{equation} 
and of $G_{Var}$ in $n \geq 2$ dimensions by formula 
\begin{equation}\label{phiVar} 
\begin{split}
\phi_{Var,n}(x) = \frac{1}{n-1} (\sum_{i=1}^n x_i^2 -n(\phi_{E,n}(x))^2).
\end{split} 
\end{equation} 
For admissible distributions $\mc{P}$ consisting of all probability distributions on $\R$ having second moments 
and $n \geq 2$, we define estimand $G_{VarAve,n}$  of variance of the mean 
on $\mc{P}^n$ by formula $G_{VarAve,n}(\mu^n) = \Var_{\mu^n}(\phi_{E,n})$. Its symmetric unbiased estimator is 
given by formula 
\begin{equation}\label{phiAveVar} 
\phi_{VarAve,n}(x) =  \frac{\phi_{Var,n}(x)}{n}. 
\end{equation} 
For admissible distributions $\mc{P}=\{\mu\}$, 
let $\phi$ be an unbiased estimator of an estimand $G_{\lambda,\mu}$ on $\mc{P}$ defined by $G_{\lambda,\mu}(\mu)=\lambda$. 
We call such $\phi$ unbiased estimator of $\lambda$ (for $\mu$).
If further $\phi \in L^2(\mu)$,
we call it a single-step MC estimator of $\lambda$. 
For some $n \in \N_{+}$, for a random vector $X \sim \mu^n$, i. e. one with independent 
coordinates with distribution $\mu$, in each $i$th step of an $n$-step 
MC procedure one computes a value of a random variable 
$W_i = \phi(X_i)$, called the $i$th observable of the single-step MC estimator. 
For $W = (W_i)_{i=1}^n$, we use the values of 
\begin{equation}\label{meanMC}
\overline{W} = \phi_{E,n}(W)
\end{equation}
as final MC estimates of $\lambda$. Function given by formula 
\begin{equation} 
\phi_{f}(x) =  \phi_{E,n}((\phi(x_i))_{i=1}^n), \ x \in B^n,
\end{equation} 
for which we have $\overline{W}=\phi_{f}(X)$, is an unbiased estimator of $G_{\lambda,\mu}$ in $n$ dimensions, and 
we call it an $n$-step or final MC estimator of $\lambda$ (for $\mu$) and call (\ref{meanMC}) its observable. 
Let us denote variance of the single-step estimator as $\Var_s = \Var_\mu(\phi)$ and its standard deviation as $\sigma_s = \sqrt{\Var_s}$, 
while for the $n$-step estimator as $\Var_{f} = \Var_f(n) = \Var_{\mu^n}(\phi_{n})$ and $\sigma_{f} =\sigma_{f}(n)= \sqrt{\Var_f}$. 
It holds
\begin{equation}\label{varfs} 
\Var_{f} = \frac{\Var_s}{n}.
\end{equation} 
We use the values of 
\begin{equation}\label{varEst} 
\phi_{VarAve,n}(W) 
\end{equation} 
as estimates of $\Var_{f}$ for $n \geq 2$, and the values of 
\begin{equation}\label{sigmaEst}
\widehat{\sigma}_{f,n}(W)= \sqrt{\phi_{VarAve,n}(W)}
\end{equation}
as such estimates of $\sigma_f$. 
For some such obtained estimates $\wt{\lambda}=\overline{W}(\omega)$ of $\lambda$, 
and $\wt{\sigma}_f = \widehat{\sigma}_{f,n}(W)(\omega)$ of $\sigma_f$, 
we report the results of a MC procedure in form 
$\wt{\lambda} \pm \wt{\sigma}_f$ 
(cf. Chapter 3, Section 1 in \cite{asmussen2007stochastic}). 
From the central limit theorem (CLT) \cite{billingsley1979}, as $n$ goes to infinity in the above described  
MC procedure, $\sqrt{n}(\overline{W} - \lambda)$ converges in distribution to $\ND(\lambda, \Var_s)$, 
that is normal distribution with mean $\lambda$ and variance $\Var_s$, 
and from the law of large numbers $\sqrt{n}\widehat{\sigma}_{f,n}(W)$ converges a. s. and thus in probability to $\sigma_s$. 
In particular for 
$k > 0$, and $\Phi$ being the cumulative distribution function of standard normal distribution, i. e. $\Phi(x) = P(Z \leq x)$, 
$Z \sim N(0,1)$, the probability $\PR(|\overline{W} - \lambda| < k\widehat{\sigma}_{f,n}(W))$ converges to $2(1 - \Phi(k))$, 
which is approximately $68\%$ for $k=1$ and $99,73\%$ for $k=3$. 

\chapter{\label{appcoeffsLem}Proofs of Theorem \ref{thEquAves} and lemmas \ref{lemPrBetterCov} and \ref{lemC1C2}}
Below we provide a proof of Theorem \ref{thEquAves}. 
\begin{proof} 
Let $(\mu,f) \in \mc{V}$ and $x \in B_\mu^{\wt{\Pi}[p_A]}.$ We have 
\begin{equation} 
\begin{split} 
\phi_{\ave_{\Pi}(\kappa),\mc{V}}(x) &= \ave_{C,A,\Pi}(t)(g_{\mc{V},\wt{\Pi}^{\rightarrow}[A]}(f)(x))\\
&= \frac{1}{|\Pi|} \sum_{\pi \in \Pi} \ave_{C,A,\pi}(t)(\eta_{C,A,\Pi,\pi}(g_{\mc{V},\wt{\Pi}^{\rightarrow}[A]}(f)(x))).\\
\end{split}
\end{equation}
We denote $x_\pi = x_{\wt{\pi}[p_A]},$ $\pi \in \Pi$.
For $\pi \in \Pi$ we have 
\begin{equation}\label{equvec}
\begin{split}
\ave_{C,A,\pi}(t)(\eta_{C,A,\Pi,\pi}(g_{\mc{V},\wt{\Pi}^{\rightarrow}[A]}(f)(x))) 
&= \ave_{C,A,\pi}(t)(g_{\mc{V},{\wt{\pi}^{\rightarrow}[A]}}(f)(x_{\pi})) \\
&= t(\rho_{C,A,\pi}(g_{\mc{V},\wt{\pi}^{\rightarrow}[A]}(f)(x_\pi))) \\
&=t(((g_{\mc{V}, \wt{\pi}^{\rightarrow}[A],i,\wt{\pi}[v]}(f)(x_\pi))_{|v \in A_{\gamma(i)}})_{i=1}^\delta) \\
&=t(((f_i(x_{\wt{\pi}[v]}))_{|v \in A_{\gamma(i)}})_{i=1}^\delta). \\
\end{split}
\end{equation}
On the other hand, 
\begin{equation}
\A_{\Pi}(\phi_{\kappa,\mc{V}})(x) =\frac{1}{|\Pi|} \sum_{\pi \in \Pi} \A_{\pi}(\phi_{\kappa,\mc{V}})(f)(x_{\pi}) \\
\end{equation}
and for $\pi \in \Pi$ we have 
\begin{equation}\label{equapik}
\begin{split}
\A_{\pi}(\phi_{\kappa,\mc{V}})(f)(x_{\pi}) &= \phi_{\kappa,\mc{V}}(f)(\sigma_{B_\mu,\wt{\pi}[p_A],\pi}^{-1}(x_{\pi}))\\
&= t(((f_i((\sigma_{B_\mu,\wt{\pi}[p_A],\pi}^{-1}(x))_v))_{|v \in A_{\gamma(i)}})_{i=1}^\delta).
\end{split}
\end{equation}
Let $i \in I_k$ be such that $A_i \neq \emptyset$, and let $v \in A_i$.
Comparing the last terms in (\ref{equvec}) and (\ref{equapik}) we can see that it is sufficient to prove that 
\begin{equation}
(\sigma_{B_\mu,\wt{\pi}[p_A],\pi}^{-1}(x))_v = x_{\wt{\pi}[v]}.
\end{equation}
Indeed, for $l \in J_i$ it holds
\begin{equation}
\begin{split}
((\sigma_{B_\mu,\wt{\pi}[p_A],\pi}^{-1}(x))_v)_l &= (\sigma_{B_\mu,\wt{\pi}[p_A],\pi}^{-1}(x))_{(l,v_l)}\\
&= x_{(l,\pi_l(v_l))}\\
&= (x_{\wt{\pi}[v]})_l.\\  
\end{split}
\end{equation}
\end{proof}

Below we provide the proof of Lemma \ref{lemPrBetterCov}. 
\begin{proof}
The thesis of is equivalent to  
\begin{equation}\label{sharp0E1C1} 
0 < \frac{2N-1}{N^2} \E(\overline{Y^2}^2) - 2\E(\overline{Y^2}\overline{Y}^2) + \E(\overline{Y}^4). 
\end{equation} 
We have 
\begin{equation}\label{ey22} 
\E(\overline{Y^2}^2) = \frac{1}{N^2}(N\E(X^4) + N(N-1)\E^2(X^2)). 
\end{equation} 
Using the fact that $\E(X) = 0$ we receive 
\begin{equation}\label{ey2y2} 
\begin{split} 
\E(\overline{Y^2}\overline{Y}^2) &= \frac{1}{N^3}\E(NY_1^2(Y_1^2 +(N-1)(Y_2^2))) \\
&= \frac{1}{N^2}(\E(X^4) + (N-1)\E^2(X^2))
\end{split}
\end{equation}
and 
\begin{equation}\label{ey4}
 \E(\overline{Y}^4) = \frac{1}{N^4}(N\E(X^4) + N3(N-1)\E^2(X^2)),
\end{equation}
where the coefficient $N3(N-1)$ appears since to get a product of squares $Y_i^2Y_j^2$ for some $i \neq j$ when performing multiplication in  
$(\sum_{i=1}^N Y_i)^4$ one can choose some $i$th of $N$ summands from the first sum, the same summand from one of three other sums, 
and some $j$th of $N-1$ remaining summands in the two remaining sums. 
Substituting (\ref{ey22}), (\ref{ey2y2}), and (\ref{ey4}) into (\ref{sharp0E1C1}), we receive 
\begin{equation}
\begin{split}
0 &< \left(\frac{2N-1}{N^3} -\frac{2}{N^2} + \frac{1}{N^3}\right)\E(X^4) \\
&+ \left(\frac{2N-1}{N^2}\frac{N-1}{N} - \frac{2(N-1)}{N^2} + \frac{3(N-1)}{N^3}\right)\E^2(X^2), 
\end{split}
\end{equation}
which is equivalent to 
\begin{equation}
0<\frac{2(N-1)}{N^3}\E^2(X^2). 
\end{equation} 
\end{proof} 

 Below we provide the proof of Lemma \ref{lemC1C2}. 
\begin{proof} 
Let $\wt{P} \sim \mu_P^N$, $\wt{R} \sim \mu_R^N$, and for $i \in I_N$, 
 \begin{equation}\label{bpi} 
 B_{P,i} = f_P(\wt{P}_i)- \frac{1}{N}\sum_{j=1}^{N} f_P(\wt{P}_j),
 \end{equation} 
 and 
 \begin{equation}
 B_{R,i,l} = f_{R,l}(\wt{R}_i)- \frac{1}{N}\sum_{j=1}^{N} f_{R,l}(\wt{R}_j). 
 \end{equation} 
 We have 
 \begin{equation}
 \wh{cE}_{k,C1E(N)}(f_l)(\wt{P}, \wt{R}) = \frac{1}{N-1}\sum_{i=0}^{N-1} (B_{P,i} + B_{R,i,l})B_{P,i}, 
 \end{equation}
 and 
 \begin{equation}\label{covl2} 
 \begin{split} 
 \E(\wh{cE}_{k,C1E(N)}^2(\wt{P}, \wt{R})) &\geq \frac{1}{(N-1)^2}\E(\sum_{i,j \in I_N}B_{P,i}B_{R,i,l}B_{P,j}B_{R,j,l}) \\ 
 & = \frac{1}{N-1}\Var(f_P(P))\Var(f_{R,l}(R)), 
 \end{split} 
 \end{equation} 
 where in the first inequality we used the fact that   
 $\E(B_P[i]^2B_P[j]B_R[j]) = 0$, $i,j \in I_N,$ and in the last equality the easy to check equalities 
 $\E(B_P[i]^2) = \frac{N-1}{N}\Var(f_P(P))$, $i \in I_N$, and $\E(B_P[i]B_P[j]) = \frac{1}{N}\Var(f_P(P))$, $i,j \in I_N, i \neq j$. 
 If $\Var(f_{R,l}(R))\rightarrow \infty$ as $l \rightarrow \infty$, then so does the rhs of (\ref{covl2}). 
\end{proof}

\chapter{\label{appd}New analytical expressions for the SB model} 
As we justified in Appendix D in \cite{badowski2011}, in the SB model one can replace the considered 
one birth process with rate equal to the sum of coordinates of random vector $K = (K_i)_{i=1}^3$  with 
three birth processes with rates equal to its consecutive coordinates without changing the conditional distribution of the model output 
given the parameters and thus the quantities computed here. We will perform the computations 
using construction of a process of MR given by integral equation (\ref{intEqu}) but with random parameters  
\begin{equation}\label{sumIndep}
Y_t = C + \sum_{i=1}^3 N_i(K_it).
\end{equation}
The values of the main and total sensitivity indices of conditional expectation of output given the parameters were computed in Appendix D of 
\cite{badowski2011} and we provide them along with results of below computations in Table \ref{exactVal}. 
For each $\lambda > 0$,
\begin{equation}
\E(N(\lambda)) = \lambda 
\end{equation}
 and 
 \begin{equation}\label{Poiss2}
 \E(N(\lambda)^2) = \lambda^2 + \lambda. 
 \end{equation}
Furthermore, from Theorem \ref{indepCond}, for $i \in I_3$, 
\begin{equation}\label{NCond}
\E(N_i(K_i t)|K_i) =  (\E(N_i(k_it)))_{k_i = K_i} = K_it,
\end{equation}
and thus 
\begin{equation}\label{condexpSB}
\E(Y_t|P) = C + t\sum_{i=1}^3 K_i.
\end{equation}
We can see that the conditional expectation is linear in the model parameters, so its nonlinearity coefficients with respect to 
all subvectors of the parameter vector are zero. 
From the iterated expectation property we have 
\begin{equation}
\E(Y_t) = \E(C) + t\sum_{i=1}^3\E(K_i) = 60 + 100(0.6 + 1 + 0.1) = 230.
\end{equation}
From (\ref{condexpSB}), the coefficient of $C - \E(C)$ in the orthogonal projection of the mean output onto span of the 
centred parameters and constants fulfills 
\begin{equation}
 bE_{C} = \frac{\Cov(\E(Y_t|P),C)}{\Var(C)}  =1
\end{equation}
and for the kinetic rates we have
 \begin{equation}
 \Cov(\E(Y_t|P), K_i)  = \Cov(tK_i,K_i) = t\Var(K_i)
 \end{equation}
and thus 
\begin{equation}
bE_{K_i} = t =100,\ i \in I_3. 
\end{equation}
Due to (\ref{CQKP}) and (\ref{Poiss2}), the conditional variance of $Y_t$ given $P$ is equal to 
\begin{equation}
 \begin{split}
 \Var(Y_t|P) &= (\Var(c + \sum_{i=1}^3N_i(k_it)))_{p = P} \\
= \sum_{i=1}^3K_it.
 \end{split}
\end{equation}
Thus, similarly as for the conditional expectation, the
nonlinearity coefficients of the conditional variance with respect to all subvectors of the parameter vector 
are equal zero. 
Furthermore, $AveVar = \E(\Var(Y_t|P)) = 170$,  $VVar_C =VVar_C^{tot} = bVar_C = 0$, and from (\ref{condexpSB}), 
$VVar_{K_i} = VVar_{K_i}^{tot} = V_{K_i},\ i \in I_3$. Using the values of 
$\wt{V}_{K_i}, i \in I_3$,  
computed in \cite{badowski2011} (see Table \ref{exactVal}), we receive 
\begin{equation}
VVar_{P} = \sum_{i=1}^{3} V_{K_i} = 378.
\end{equation}
We also have $\Cov(K_i,\Var(Y_t|P)) = t\Var(K_i),$ and thus $bVar_{K_i} = 100$, $i \in I_3$.
\end{appendices}
\bibliographystyle{plain}
\bibliography{pub}

\end{document}